\newtheorem{theorem}{Theorem}
\newtheorem{definition}[theorem]{Definition}
\newtheorem{corollary}[theorem]{Corollary}
\newtheorem{lemma}[theorem]{Lemma}
\newtheoremstyle{note}{0.5cm}{0.5cm}{\upshape}{}
   {\bfseries}{}{\newline}{#1\, #2\, #3}
\theoremstyle{note}
\newtheorem{remark}[theorem]{Remark}
\definecolor{orange}{rgb}{1.0,0.5,0}
\definecolor{violet}{rgb}{0.6,0,0.8}
\definecolor{darkgreen}{rgb}{0,0.5,0}
\definecolor{verydarkgreen}{rgb}{0,0.3,0}
\definecolor{darkblue}{rgb}{0,0,0.6}
\definecolor{darkred}{rgb}{0.75,0,0}
\definecolor{grey}{rgb}{0.35,0.35,0.35}
\newcommand{\opt}{\mbox{\scriptsize\rm OPT}}
\newcommand{\lp}{\mbox{\scriptsize\rm LP}}
\newcommand{\val}{\textnormal{value}}
\def\cupp{\stackrel{.}{\cup}}
\renewcommand{\epsilon}{\varepsilon}
\renewcommand{\phi}{\varphi}
\newcommand{\sfrac}[2]{{\textstyle\frac{#1}{#2}}}
\def\Ascr{\mathcal{A}}
\def\Cscr{\mathcal{C}}
\def\Iscr{\mathcal{I}}
\def\Lscr{\mathcal{L}}
\def\Pscr{\mathcal{P}}
\definecolor{orange}{rgb}{1.0,0.5,0}
\definecolor{violet}{rgb}{0.6,0,0.8}
\definecolor{darkgreen}{rgb}{0,0.5,0}
\definecolor{darkblue}{rgb}{0,0,0.5}
\title{An improved approximation algorithm for ATSP\thanks{An extended abstract of this paper appeared in the proceedings of STOC 2020.}}
\date{}
\author{
Vera Traub\thanks{
Department of Mathematics, ETH Zurich, Zurich, Switzerland.
Email: \href{mailto:vera.traub@ifor.math.ethz.ch}%
{vera.traub@ifor.math.ethz.ch}.
Supported by Swiss National Science Foundation grant 200021\_184622.
}
\and
Jens Vygen\thanks{
Research Institute for Discrete Mathematics and Hausdorff Center for Mathematics, University of Bonn, Bonn, Germany.
Email: \href{mailto:vygen@or.uni-bonn.de}%
{vygen@or.uni-bonn.de}.
}
}
\begin{document}
\maketitle

\begin{abstract}
 We revisit the constant-factor approximation algorithm for the asymmetric traveling salesman problem
 by Svensson, Tarnawski, and V\'egh~\cite{SveTV18}.
 We improve on each part of this algorithm.
 We avoid the reduction to irreducible instances and thus obtain a simpler and much better reduction to vertebrate pairs.
 We also show that a slight variant of their algorithm for vertebrate pairs has a much smaller approximation ratio.
 Overall we improve the approximation ratio from $506$ to $22+\epsilon$ for any $\epsilon > 0$.
 This also improves the upper bound on the integrality ratio from $319$ to $22$.
\end{abstract}

\section{Introduction}
The asymmetric traveling salesman problem (ATSP) is one of the most fundamental and challenging combinatorial optimization problems.
Given a finite set of cities with pairwise non-negative distances, we ask for a shortest tour that visits all cities and returns to the starting point.

The first non-trivial approximation algorithm was due to Frieze, Galbiati, and Maffioli \cite{FriGM82}.
Their $\log_2(n)$-approximation ratio, where $n$ is the number of cities, was improved to $0.99\log_2(n)$ by Bl\"aser \cite{Bla03}, 
to $0.842 \log_2(n)$ by Kaplan, Lewenstein, Shafrir, and Sviridenko \cite{KapXX05}, and to $\sfrac{2}{3} \log_2(n)$ by Feige and Singh
\cite{FeiS07}.
Then a $O(\log(n)/\log(\log(n)))$-approximation algorithm was discovered by Asadpour, Goemans, M\k{a}dry, Oveis Gharan, and Saberi \cite{AsaXX17}, and this inspired further work on the traveling salesman problem.
Major progress towards a constant-factor approximation algorithm was made by Svensson~\cite{Sve15}:
he devised such an algorithm for the special case in which the distances are given by an unweighted digraph.
This was extended to two different edge weights by Svensson, Tarnawski, and V\'egh~\cite{SveTV16}.

In a recent breakthrough, Svensson, Tarnawski, and V\'egh~\cite{SveTV18} devised the first constant-factor approximation algorithm 
for the general ATSP.
In their STOC~2018 paper, they showed an approximation ratio of $5500$.
Later they optimized their analysis and obtained an approximation ratio of $506$.

Since this algorithm is analyzed with respect to the natural linear programming relaxation, it also yields a constant upper bound on the integrality ratio.
In fact, Svensson, Tarnawski, and V\'egh showed an upper bound of $319$ on the integrality ratio, but their algorithm that computes such a solution does not have polynomial running time.
Before~\cite{SveTV18}, the best known upper bound was $(\log(\log(n)))^{O(1)}$~\cite{AnaO15}.
The strongest known lower bound on the integrality ratio is $2$~\cite{ChaGK06}.

We describe a polynomial-time algorithm that computes a tour of length at most $22+\epsilon$ times the LP value for any given ATSP instance.
Hence, the integrality ratio is at most $22$.
Via the reductions of~\cite{FeiS07} and \cite{KohTV19}, our result also implies stronger upper bounds for the path version, where the start and end of the tour are given and distinct.

\section{Outline}

An instance of ATSP can be described as a strongly connected digraph $G=(V,E)$ and a cost (or length) function $c : E \to \mathbb{R}_{\ge 0}$.
We look for a minimum-cost closed walk in $G$ that visits every vertex at least once. Such a closed walk may use (and then has to pay) edges several times.
A \emph{tour} is a multi-set $F$ of edges such that $(V,F)$ is connected and Eulerian, i.e.\ every vertex has the same number of entering and leaving edges.
Since such a graph admits an Eulerian walk (a closed walk that visits every vertex at least once and every vertex exactly once), an equivalent formulation of ATSP asks for a tour $F$ with $c(F)$ minimum.

The algorithm by Svensson, Tarnawski, and V\'egh~\cite{SveTV18} proceeds through a sequence of reductions,
which we follow with some modifications (see Figure~\ref{fig:reduction_steps}).
First they show that it suffices to consider so-called laminarly-weighted instances.
We strengthen this reduction to what we call \emph{strongly laminar} instances (Section~\ref{sect:strongly_laminar}).
In contrast to the following reductions this causes no loss in the approximation ratio.
In a strongly laminar instance the cost of an edge $e$ is given by the cost of entering or leaving sets in a laminar family $\Lscr$
each of whose elements induces a strongly connected subgraph.
More precisely,
\[ c(e) = \sum_{L \in\Lscr : e \in \delta(L)} y_L \]
for some positive weights $y_L$ ($L\in \Lscr$) and all $e\in E$.

\begin{figure}
\begin{center}
 \begin{tikzpicture}[xscale=0.95,yscale=0.9]
 \begin{scope}[align=left, anchor=north west,text height=4ex, inner sep=8pt, outer sep =3pt]
 \node[draw] (a) at (-2,2) {ATSP\\ };
 \node[draw] (b) at (0.6,2) {laminarly- \\ weighted ATSP};
 \node[draw] (c) at (4.7,2) {irreducible \\ instances};
 \node[draw] (d) at (7.9,2) {vertebrate \\ pairs};
 \node[draw] (e) at (11.2,2) {Subtour Partition \\ Cover};
 \end{scope}
 \draw[->, thick, >=latex] (a) --(b);
 \draw[->, thick, >=latex] (b) -- (c);
 \draw[->, thick, >=latex] (c) -- (d);
 \draw[->, thick, >=latex] (d) -- (e);
 
 \begin{scope}[align=left, anchor=north west,text height=4ex, inner sep=8pt, outer sep =3pt]
 \node[draw] (a') at (-2,0) {ATSP\\ };
 \node[draw] (b') at (0.6,0) { strongly \\ laminar ATSP \hspace*{0.5mm}};
 \node[draw] (d') at (7.9,0) {vertebrate \\ pairs};
 \node[draw] (e') at (11.2,0) {Subtour \phantom{Partition} \\ Cover \qquad\quad {\tiny Sec.\,5}};
 \end{scope}
 \draw[->, thick, >=latex] (a') -- node[below] {\tiny Sec.\,3} (b');
 \draw[->, thick, >=latex] (b') --  node[below] {\tiny Sec.\,4} (d');
 \draw[->, thick, >=latex] (d') -- node[below] {\tiny Sec.\,6} (e');
 \end{tikzpicture}
\end{center}
\caption{The sequence of reductions of Svensson, Tarnawski, and V\'egh~\cite{SveTV18} (top)
and our algorithm (bottom).
\label{fig:reduction_steps}
}
\end{figure}
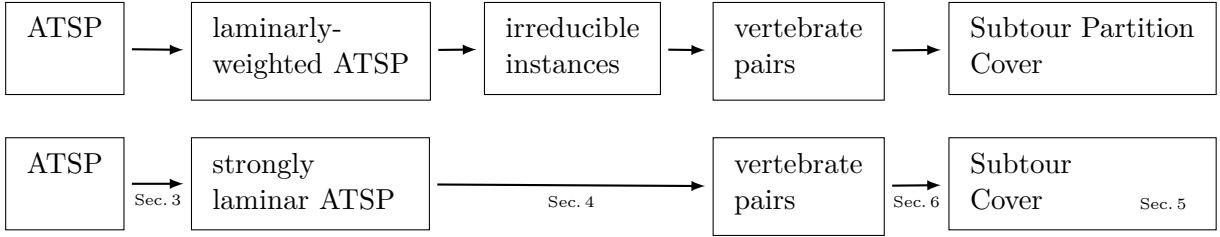

In Section~\ref{sect:reduction_veterbrate_pair} we reduce strongly laminar instances to even more structured instances 
called \emph{vertebrate pairs}.
In a vertebrate pair we already have a given subtour, called \emph{backbone}, that visits not necessarily all vertices but all non-singleton
elements of the laminar family $\Lscr$. 
In contrast to the reduction to strongly laminar instances, the reduction to vertebrate pairs causes some loss in the approximation ratio.
While Svensson, Tarnawski, and V\'egh also reduce to vertebrate pairs, 
they first reduce to what they call an \emph{irreducible instance} as an intermediate step before reducing to vertebrate pairs.
We show that this intermediate step is not necessary. This leads to a simpler algorithm.
Moreover, the loss in the approximation ratio in this step is much smaller.
In fact, a significant part of the improvement of the overall approximation ratio is due to our new reduction to vertebrate pairs.

Finally, in Section~\ref{sect:subtour_cover} and Section~\ref{sect:solve_vertebrate_pairs} 
we explain how to compute good solutions for vertebrate pairs.
The main algorithmic framework, essentially due to Svensson~\cite{Sve15}, follows
on a very high level the cycle cover approach by Frieze, Galbiati and Maffioli~\cite{FriGM82}. 
It maintains an Eulerian subgraph $H$ which intially consists of the backbone only.
In each iteration it computes an Eulerian set $F$ of edges that connects every connected component of $H$, except possibly the backbone, to 
another connected component.
However, in order to achieve a constant-factor approximation for ATSP we need additional properties and will not always add all edges of $F$ to $H$.

In Section~\ref{sect:subtour_cover} we explain a sub-routine that computes the edge set $F$ in every iteration of Svensson's algorithm.
The problem solved by the sub-routine, which we call \emph{Subtour Cover},
can be viewed as the analogue of the cycle cover problem that is solved in every iteration of the $\log_2(n)$-approximation 
algorithm by Frieze, Galbiati and Maffioli~\cite{FriGM82}.
It is very similar to what Svensson, Tarnawski, and V\'egh call \emph{Subtour Partition Cover} and \emph{Eulerian Partition Cover} 
and Svensson~\cite{Sve15} calls \emph{Local Connectivity ATSP}.
Svensson, Tarnawski, and V\'egh compute a solution for Subtour Cover by rounding a circulation in a certain flow network, which is constructed
from the LP solution using a so-called \emph{witness flow}.
By using a special witness flow with certain minimality properties our Subtour Cover solution will obey stronger bounds.

In Section~\ref{sect:solve_vertebrate_pairs} we then explain how to compute solutions for vertebrate pairs using the algorithm for \emph{Subtour Cover} as a sub-routine.
The essential idea is due to Svensson~\cite{Sve15}, who considered node-weighted instances,
and was later adapted to vertebrate pairs in~\cite{SveTV18}.
In this part we make two improvements compared to the algorithm in~\cite{SveTV18}.

The more important change is the following. 
Svensson's algorithm uses a potential function to measure progress,
and in each of~\cite{Sve15} and~\cite{SveTV18} two different potential functions are considered.
One potential function is used to obtain an exponential time algorithm that yields an upper bound on the integrality ratio of the linear programming relaxation,
and the other potential function is used to obtain a polynomial-time algorithm.
This leads to different upper bounds on the integrality ratio of the LP and the approximation ratio of the algorithm.
We show in Section~\ref{sect:solve_vertebrate_pairs} that we can make this discrepancy arbitrarily small
by a slightly different choice of the potential function for the polynomial-time algorithm.
This leads to a better approximation ratio. 
Moreover, the analysis of the polynomial-time algorithm then immediately implies the best upper bound we know on the integrality ratio 
and there is no need anymore to consider two different potential functions.

The second change compared to the algorithm in~\cite{SveTV18} is that we include an idea that Svensson~\cite{Sve15} used for node-weighted instances.
This leads to another small improvement of the approximation guarantee.

Overall, we obtain for every $\epsilon >0$ a polynomial-time $(22+\epsilon)$-approximation algorithm for ATSP.
The algorithm computes a solution of cost at most $22+\epsilon$ times the cost of an optimum solution to the classic linear programming relaxation \eqref{eq:atsp_lp}, which we describe next.

\section{Reducing to strongly laminar instances}\label{sect:strongly_laminar}
As in the Svensson--Tarnawski--V\'egh algorithm, we begin by solving the classic linear programming relaxation:
  \begin{equation}\label{eq:atsp_lp}
  \begin{aligned}
   \min c(x)  \\
   s.t.& &  x(\delta^-(v)) -x(\delta^+(v)) =&\ 0 & & \text{ for } v\in V \\
   & &  x(\delta(U)) \ge&\ 2 & &\text{ for } \emptyset \ne U \subsetneq V \\
   & & x_e \ge & \ 0 & &\text{ for } e\in E,
  \end{aligned}
    \tag{ATSP LP}
 \end{equation}
 where $c(x):=\sum_{e\in E}c(e)x_e$, $x(F):=\sum_{e\in F}x_e$ for $F\subseteq E$, 
 $\delta^-(v)$ and $\delta^+(v)$ denote the sets of edges entering and leaving $v$, respectively, 
 and $\delta(U)$ denotes the set of edges with exactly one endpoint in $U$.
We also solve the dual LP:
\begin{equation}\label{eq:dual_atsp_lp}
   \begin{aligned}
    \max & &  \sum_{\emptyset \ne U \subsetneq V} \! 2 y_U \\
    s.t. & & a_w - a_v + \sum_{U: e\in\delta(U)} \! y_U \le&\ c(e) & &\text{ for } e=(v,w)\in E \\
   & & y_U \ge&\ 0 & &\text{ for } \emptyset \ne U \subsetneq V,
  \end{aligned}
  \tag{ATSP DUAL}
\end{equation}
where the variables $a_v$ ($v\in V$) are unbounded.
A family $\Lscr$ of subsets of $V$ is called \emph{laminar} if for any $A,B\in \Lscr$ we have $A\subseteq B$, $B\subseteq A$, or $A\cap B = \emptyset$.
Such a family has at most $2|V|$ elements.
The following is well-known (see e.g.~\cite{SveTV18}).
\begin{lemma}\label{lem:solve_lps} 
  Let $(G,c)$ be an instance of ATSP.
  Then we can compute in polynomial time an optimum solution $x$ to \eqref{eq:atsp_lp} and an optimum solution $(a,y)$ to \eqref{eq:dual_atsp_lp},
  such that $y$ has laminar support, i.e.\ $\Lscr := \{U: y_U>0\}$ is a laminar family.
\end{lemma}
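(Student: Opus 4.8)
The plan is to solve the primal first and then to single out a particular optimal dual solution as the maximizer of an auxiliary linear objective whose maximization automatically forces the support to be laminar. First I would solve \eqref{eq:atsp_lp}: although it has exponentially many constraints, it has a polynomial-time separation oracle via minimum-cut computations (for $x$ obeying flow conservation one has $x(\delta(U)) = 2x(\delta^+(U))$, so a violated subtour constraint is found by polynomially many $s$-$t$ min-cut computations), so by the ellipsoid method we obtain in polynomial time an optimal $x$ and the optimal value $\OPT$ of \eqref{eq:atsp_lp}, which by LP duality is also the optimum of \eqref{eq:dual_atsp_lp}.

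Next, fix a root vertex $r \in V$. Since $\delta(U) = \delta(V \setminus U)$ for every $U$, any dual solution can be turned — by moving the weight of each $y_U$ with $r \in U$ onto $y_{V\setminus U}$ — into one whose $y$ is supported on subsets of $V \setminus \{r\}$, without changing feasibility or the dual objective $\sum_U 2y_U$. Now I would consider the auxiliary linear program: maximize $\sum_{\emptyset \ne U \subseteq V\setminus\{r\}} |U|^2 y_U$ over all pairs $(a,y)$ that are feasible for \eqref{eq:dual_atsp_lp}, satisfy $\sum_U 2y_U = \OPT$, and have $y$ supported on subsets of $V\setminus\{r\}$. Its feasible region is nonempty (by the previous remark) and the objective is bounded above by $|V|^2\OPT/2$, so it attains its maximum, and every feasible solution of it is automatically an optimal solution of \eqref{eq:dual_atsp_lp}. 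The point is that this auxiliary LP can be solved in polynomial time despite its exponential number of $y$-variables: its LP dual has a circulation variable $z \ge 0$ on the edges and one free scalar $\lambda$, subject to $z(\delta(U)) + 2\lambda \ge |U|^2$ for all $\emptyset \ne U \subseteq V \setminus \{r\}$, and separating these inequalities amounts to minimizing the set function $U \mapsto z(\delta(U)) - |U|^2$ over $\emptyset \ne U \subseteq V \setminus \{r\}$. This function is submodular — $z(\delta(\cdot))$ is submodular for $z \ge 0$, and $U \mapsto -|U|^2$ is submodular because $|A \cap B|^2 + |A \cup B|^2 \ge |A|^2 + |B|^2$ — so it can be minimized in polynomial time, and we thereby obtain an optimal solution $(a,y)$ of the auxiliary LP, explicitly given on polynomially many sets.

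It then remains to check that $\Lscr := \{U : y_U > 0\}$ is laminar. Every set in $\Lscr$ avoids $r$, so if laminarity failed there would be $A, B \in \Lscr$ with $A\setminus B$, $B\setminus A$, $A\cap B$, and $V\setminus(A\cup B)$ all nonempty. Setting $\epsilon := \min\{y_A, y_B\} > 0$, I would modify $y$ by decreasing $y_A$ and $y_B$ by $\epsilon$ and increasing $y_{A\cap B}$ and $y_{A\cup B}$ by $\epsilon$, where $A\cap B$ and $A\cup B$ are again nonempty proper subsets avoiding $r$. The total weight $\sum_U y_U$ is unchanged, so $\sum_U 2y_U = \OPT$ still holds; for every edge $e=(v,w)$ the quantity $a_w - a_v + \sum_{U:\, e\in\delta(U)} y_U$ does not increase, because $\mathbf{1}_{\delta(A)} + \mathbf{1}_{\delta(B)} \ge \mathbf{1}_{\delta(A\cap B)} + \mathbf{1}_{\delta(A\cup B)}$ holds pointwise (a case check on which of the four regions the two endpoints of $e$ lie in), so the modified $y$ is still feasible; and the auxiliary objective changes by $\epsilon\bigl(|A\cap B|^2 + |A\cup B|^2 - |A|^2 - |B|^2\bigr) = 2\epsilon\,|A\setminus B|\cdot|B\setminus A| > 0$, contradicting optimality. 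Hence $\Lscr$ is laminar.

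The main obstacle is the polynomial-time claim in the second paragraph: the dual solution we want lives in a linear program with exponentially many variables, so one cannot simply read a vertex off an LP solver. The key is that passing to the LP dual converts separation into submodular function minimization; this is the one nontrivial ingredient, while everything else — the primal separation oracle, the $r$-avoiding reformulation of the dual, and the exchange argument — is routine, the exchange argument being just the classical uncrossing step for the cut function.
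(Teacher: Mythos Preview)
The paper does not give its own proof of this lemma; it simply states the result as well-known and points to \cite{SveTV18}. So there is nothing in the paper to compare against line by line, and the question is whether your argument stands on its own. It does.

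Your proof is correct, and the overall strategy --- solve the primal via a min-cut separation oracle, then among all optimal dual solutions pick one maximizing $\sum_U |U|^2 y_U$, and argue by uncrossing that any maximizer has laminar support --- is one of the clean standard routes to this fact. The verification that $U \mapsto z(\delta(U)) - |U|^2$ is submodular is right, the derivation of the dual of the auxiliary LP is right, and the uncrossing step (feasibility preserved because $\mathbf{1}_{\delta(A)} + \mathbf{1}_{\delta(B)} \ge \mathbf{1}_{\delta(A\cap B)} + \mathbf{1}_{\delta(A\cup B)}$ pointwise, objective strictly increased because $|A\cap B|^2 + |A\cup B|^2 - |A|^2 - |B|^2 = 2|A\setminus B|\cdot|B\setminus A| > 0$) is the classical one.

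The more commonly cited proof proceeds slightly differently: run the ellipsoid method on \eqref{eq:atsp_lp}, which returns an optimal $x$ together with a polynomial-size family of cut constraints whose restricted LP already attains the optimum; solve the dual of that restricted LP directly (now with polynomially many $y$-variables), and then iteratively uncross the support. Your approach sidesteps the sometimes fiddly termination argument for iterative uncrossing by baking the potential $\sum_U |U|^2 y_U$ into an LP objective and appealing to submodular minimization for the separation oracle; this is elegant and equally valid. The one nontrivial black box you rely on --- that a separation oracle for the dual of the auxiliary LP yields, via the Gr\"otschel--Lov\'asz--Schrijver framework, an explicit optimal primal solution supported on polynomially many sets --- is exactly the right thing to cite, and you flag it correctly in your closing paragraph.
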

We will now obtain LP solutions with more structure.
By $G[U]=(U, E[U])$ we denote the subgraph of $G=(V,E)$ induced by the vertex set $U$.
\begin{definition}
 Let $(G,c)$ be an instance of ATSP.
 Moreover, let $(a,y)$ be a dual LP solution, i.e.\ a solution to \eqref{eq:dual_atsp_lp}.
 We say that $y$ or $(a,y)$ has \emph{strongly laminar support} if
  \begin{itemize}\itemsep0pt
  \item $\Lscr := \{U: y_U>0\}$ is a laminar family, and 
  \item for every set $U\in \Lscr$, the graph $G[U]$ is strongly connected.
 \end{itemize}
\end{definition}

The following lemma allows us to assume that our optimum dual solution has strongly laminar support.

\begin{lemma}\label{lemma:strongly_laminar}
 Let $(G,c)$ be an instance of ATSP.
 Moreover, let $x$ be an optimum solution to  \eqref{eq:atsp_lp} and $(a,y)$ an optimum solution to \eqref{eq:dual_atsp_lp}
 with laminar support.
 Then we can compute in polynomial time $(a',y')$ such that
 \begin{itemize}\itemsep0pt
  \item $(a',y')$ is an optimum solution of \eqref{eq:dual_atsp_lp}, and
  \item $(a',y')$ has strongly laminar support.
 \end{itemize}
\end{lemma}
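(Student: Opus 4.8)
The plan is to repeatedly delete from the support $\Lscr=\{U:y_U>0\}$ an inclusion-minimal set $U$ with $G[U]$ not strongly connected, each time moving its weight onto a single proper subset and correcting with a shift of the potentials $a$. So fix such a minimal $U$. Since $G[U]$ has at least two strongly connected components, pick a source strongly connected component $A$ of $G[U]$ and put $B:=U\setminus A$; then $\emptyset\ne A\subsetneq U$, the graph $G[A]$ is strongly connected, and no edge of $E[U]$ runs from $B$ to $A$ (because $A$ is a source component). Define $(a',y')$ by setting $y'_U:=0$, $y'_B:=y_B+y_U$, $y'_T:=y_T$ for every other set $T$, and $a'_v:=a_v+y_U$ for $v\in A$, $a'_v:=a_v$ for $v\notin A$.

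First I would check that $(a',y')$ is feasible for \eqref{eq:dual_atsp_lp}. For an edge $e=(v,w)$ the left-hand side of its constraint changes by $y_U\cdot([w\in A]-[v\in A]-[e\in\delta(U)]+[e\in\delta(B)])$, and a short case distinction on whether $v$ and $w$ lie in $A$, in $B$, or outside $U$ --- using that no edge of $E[U]$ runs from $B$ to $A$ --- shows that this quantity is always $\le 0$; in fact it is $0$ except on the edges from $A$ to $V\setminus U$, where it equals $-2y_U$. Hence no dual constraint is violated. Second, the objective $\sum_T 2y_T$ is unchanged, since we moved total weight $y_U$ from $U$ to $B$ and the change of $a$ does not affect it; together with feasibility this shows that $(a',y')$ is an optimum solution of \eqref{eq:dual_atsp_lp}. (If one also wants to keep complementary slackness with the given $x$: from $x(\delta(U))=2$ and the absence of $B$-to-$A$ edges in $E[U]$ one gets $\delta^+(B)\subseteq\delta^+(U)$ and $\delta^-(A)\subseteq\delta^-(U)$, hence $x(\delta(A))=x(\delta(B))=2$ and $x$ carries no flow on the edges from $A$ to $V\setminus U$; so $(x,(a',y'))$ is again a complementary pair.)

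Finally I would verify that the new support $(\Lscr\setminus\{U\})\cup\{B\}$ is laminar and that the procedure is polynomial. For laminarity: by minimality of $U$, every $T\in\Lscr$ with $T\subsetneq U$ has $G[T]$ strongly connected, so $T$ is contained in a single strongly connected component of $G[U]$ and hence $T\subseteq A$ or $T\subseteq B$; sets containing $U$ or disjoint from $U$ are trivially laminar with $B\subsetneq U$. For termination, observe that $B\subsetneq U$ and $G[A]$ is strongly connected, so the quantity $\sum_T|T|$, summed over all $T\in\Lscr$ with $G[T]$ not strongly connected, strictly decreases in every step; since it is a nonnegative integer bounded by $2|V|^2$, since $|\Lscr|$ never grows, and since each step only needs a few strongly connected components computations, the procedure terminates in polynomial time and outputs a dual solution of strongly laminar support. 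The point I expect to be most delicate is maintaining laminarity, which is exactly why $U$ must be taken inclusion-minimal --- so that the sets of $\Lscr$ strictly inside $U$ are already strongly connected and therefore do not cross the partition $U=A\cupp B$; the feasibility check is routine but needs care with the sign of the potential shift (the potential on the source component $A$ must be raised, not lowered).
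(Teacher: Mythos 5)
Your proof is correct, and it follows the same overall strategy as the paper's: pick an inclusion-minimal $U$ in the support with $G[U]$ not strongly connected, split $U$ along a source strongly connected component, move the weight $y_U$ onto one side of the split, and repair feasibility with a shift of the node potentials. The one genuine difference is the choice of where the weight goes. The paper lets $S$ be the source component itself (with $\delta^-(S)\subseteq\delta^-(U)$), sets $y'_S:=y_S+y_U$, and lowers $a$ on $U\setminus S$; you put the weight on the complement $B=U\setminus A$ of the source component $A$ and raise $a$ on $A$. Both choices are feasible by the same case check, preserve the objective, and keep the support laminar by minimality of $U$. The paper's choice has a small practical advantage: since $S$ is a single strongly connected component, $G[S]$ is automatically strongly connected, so the number of ``bad'' sets in the support drops by exactly one each step and the process ends in at most $2|V|$ iterations. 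With your choice, $G[B]$ may again fail to be strongly connected, which is why you need the potential $\sum_T|T|$ over bad sets $T$ to argue termination, giving $O(|V|^2)$ iterations; still polynomial, just a weaker bound. Your aside on complementary slackness with the given $x$ is accurate but not actually needed for the lemma as stated — it follows automatically from $x$ and $(a',y')$ both being optimal.
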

\begin{proof}
As long as there is a set $U$ with $y_U > 0$, but $G[U]$ is not strongly connected, we do the following.
Let $U$ be a minimal set with $y_U >0$ and such that $G[U]$ is not strongly connected.
Moreover, let $S$ be the vertex set of the first strongly connected component of $G[U]$ in a topological order.
Then we have $\delta^-(S) \subseteq \delta^-(U)$.

Define a dual solution $(a', y')$ as follows. 
We set $y'_U := 0$, $y'_S := y_S + y_U$, and $y'_W := y_W$ for other sets $W$.
Moreover, $a'_v:= a_v - y_U$ for $v\in U\setminus S$ and $a'_v := a_v$ for all other vertices $v$.
The only edges $e=(v,w)$ for which $ a'_w - a'_v + \sum_{U: e\in\delta(U)} y'_U > a_w - a_v + \sum_{U: e\in\delta(U)} y_U $,
are edges from $U\setminus S$ to $S$.
However, such edges do not exist by choice of $S$.
Hence, $(a',y')$ is a feasible dual solution.
Since $\sum_{\emptyset \ne U \subsetneq V} 2y'_U =\sum_{\emptyset \ne U \subsetneq V} 2y_U$, it is also optimal.

We now show that the support of $y'$ is laminar.
Suppose there is a set $W$ in the support of $y'$ that crosses $S$.
Then $W$ must be in the support of $y$ and hence a subset of $U$ because the support of $y$ is laminar.
By the minimal choice of $U$, $G[W]$ is strongly connected.
But this implies that $G$ contains an edge from $W\setminus S$ to $W\cap S$, contradicting $\delta^-(S) \subseteq \delta^-(U)$.

We now decreased the number of sets $U$ in the support for which $G[U]$ is not strongly connected.
After iterating this at most $2|V|$ times the dual solution has the desired properties.
\end{proof}

As Svensson, Tarnawski, and V\'egh~\cite{SveTV18}, we next show that we may assume the dual variables $a_v$ to be $0$ for all $v\in V$.
This leads to the following definition.

\begin{definition}\label{def:strongly_laminar}
A \emph{strongly laminar ATSP instance} is a quadruple $(G,\Lscr,x,y)$, where 
\begin{enumerate}[(i)]\itemsep0pt
 \item $G=(V,E)$ is a strongly connected digraph;
 \item $\Lscr$ is a laminar family of subsets of $V$ such that $G[U]$ is strongly connected for all $U\in \Lscr$;
 \item $x$ is a feasible solution to \eqref{eq:atsp_lp} such that $x(\delta(U))=2$ for all $U\in\Lscr$ and $x_e > 0$ for all $e\in E$;
 \label{item:x_in_strongly_laminar}
 \item $y:\Lscr\to\mathbb{R}_{> 0}$.
\end{enumerate}
This induces the ATSP instance $(G,c)$, where $c$ is the induced weight function defined by
$c(e):=\sum_{U\in\Lscr: e\in\delta(U)} y_U$ for all $e\in E$.
\end{definition}

By complementary slackness, $x$ and $(0,y)$ are optimum solutions of  \eqref{eq:atsp_lp} and \eqref{eq:dual_atsp_lp} for $(G,c)$.
For a strongly laminar instance $\Iscr$ we denote by $\lp(\Iscr)=c(x)$ the value of these LPs. 
We now prove that for ATSP it is sufficient to consider strongly laminar instances.

\begin{theorem}
\label{thm:stronglylaminar}
Let $\alpha\ge 1$. 
If there is a polynomial-time algorithm  that computes for every strongly laminar ATSP instance $(G,\Lscr,x,y)$
a solution of cost at most $\alpha \cdot c(x)$,
then there is a polynomial-time algorithm that computes for every instance of ATSP a solution of cost at most 
$\alpha$ times the cost of an optimum solution to \eqref{eq:atsp_lp}.
\end{theorem}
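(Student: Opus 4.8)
The plan is to transform an optimum primal--dual pair for the given ATSP instance into a strongly laminar instance without changing the LP value or the cost of any tour, apply the assumed $\alpha$-approximation to that instance, and carry the resulting tour back.

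First I would use Lemma~\ref{lem:solve_lps} to obtain an optimum solution $x$ of~\eqref{eq:atsp_lp} and an optimum solution $(a,y)$ of~\eqref{eq:dual_atsp_lp} with laminar support. Compared with Definition~\ref{def:strongly_laminar}, two things are still missing: $x$ should be positive on every edge, and $G[U]$ should be strongly connected for every $U$ in the support of $y$. To force positivity I would restrict attention to $G' := (V,E')$ with $E' := \{\,e\in E : x_e>0\,\}$, letting $x'$ be $x$ restricted to $E'$. Since $x(\delta(U))\ge 2$ and flow conservation give $x(\delta^+(U))=x(\delta^-(U))\ge 1>0$ for every $\emptyset\ne U\subsetneq V$, the digraph $G'$ is still strongly connected; $x'$ is still optimum for~\eqref{eq:atsp_lp} (dropping edges with $x_e=0$ changes nothing and cannot lower the optimum); and $(a,y)$, still feasible with the same objective, is an optimum dual solution for $(G',c)$ with laminar support.

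Next I would apply Lemma~\ref{lemma:strongly_laminar} to $(G',c)$ with $x'$ and $(a,y)$, obtaining an optimum dual solution $(a',y')$ for $(G',c)$ whose support $\Lscr'$ is strongly laminar with respect to $G'$. Then I would remove the potentials $a'$: put $c'(e) := c(e)-a'_w+a'_v$ for $e=(v,w)\in E'$. Complementary slackness for $(G',c)$ with $x'$ and $(a',y')$ shows that every edge of $G'$ satisfies $c'(e)=\sum_{U\in\Lscr' : e\in\delta(U)}y'_U\ge 0$, so $\Iscr := (G',\Lscr',x',y')$ is a strongly laminar ATSP instance whose induced cost function is exactly $c'$. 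Using $\sum_{e=(v,w)}(a'_v-a'_w)x'_e=0$ (flow conservation at $x'$) and $c(x)=\sum_{e}c(e)x_e=c(x')$, one gets $\lp(\Iscr)=c'(x')=c(x)$, the optimum value of~\eqref{eq:atsp_lp}.

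Finally, I would run the assumed polynomial-time algorithm on $\Iscr$ to get a tour $F$ in $G'$ with $c'(F)\le\alpha\cdot c'(x')=\alpha\cdot c(x)$. Because $F$ is connected and Eulerian on $V$, the potential contributions $\sum_{e=(v,w)\in F}(a'_v-a'_w)$ cancel, so $c(F)=c'(F)\le\alpha\cdot c(x)$; and since $F\subseteq E'\subseteq E$, it is a valid tour for the original instance $(G,c)$. All steps --- solving the two LPs, restricting to $E'$, Lemma~\ref{lemma:strongly_laminar}, the potential shift, and the assumed algorithm --- run in polynomial time, completing the proof. The delicate point, and the main obstacle, is the order of operations: positivity (restriction to $E'$) must be arranged \emph{before} the strongly-laminar reduction, because Lemma~\ref{lemma:strongly_laminar} only makes $G[U]$ strongly connected for the graph it is applied to, and that property can be destroyed by afterwards deleting edges with $x_e=0$; the remaining work is routine bookkeeping with complementary slackness and the cancellation of potentials along Eulerian multisets.
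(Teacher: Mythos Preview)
Your proof is correct and follows essentially the same route as the paper's own argument: restrict to the support of $x$, apply Lemma~\ref{lemma:strongly_laminar} on the restricted graph, use complementary slackness to identify the induced cost $c'$ as a potential shift of $c$, and exploit the Eulerian property so that potentials cancel on any tour. Your explicit remark about the order of operations (restrict to $E'$ \emph{before} invoking Lemma~\ref{lemma:strongly_laminar}) pinpoints exactly the subtlety the paper handles silently.
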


\begin{proof}
Let $(G,c)$ be an arbitrary instance.
We apply Lemma~\ref{lem:solve_lps} to compute an optimum solution $x$ of \eqref{eq:atsp_lp} and an optimum
solution $(a,y)$ of \eqref{eq:dual_atsp_lp} such that the support of $y$ is a laminar familiy $\Lscr$.

Now let $E'$ be the support of $x$ and define $G':=(V,E')$. 
Let $x'$ be the vector $x$ restricted to its support $E'$.
Then apply Lemma~\ref{lemma:strongly_laminar} to $(G',x',y)$.
We obtain an optimum dual solution $(a',y')$ to \eqref{eq:dual_atsp_lp} with strongly laminar support $\Lscr$.
By complementary slackness we have $x'(\delta(U))=2$ for all $U\in\Lscr$ with $y'_{U} > 0$.

Then the induced weight function of the strongly laminar ATSP instance $(G',\Lscr,x',y')$ 
is given by $c'(e)=\sum_{S\in\Lscr: e\in\delta(S)} y'_S = c(e)+a_v-a_w$ for all $e=(v,w)\in E'$ (by complementary slackness).
Because every tour in $G'$ is Eulerian, it has the same cost with respect to $c$ and with respect to $c'$. 
Moreover, $c(x)=c'(x')$ and $(0,y')$ is an optimum dual solution for $(G',c')$.
Hence also the LP values are the same and thus the theorem follows.
\end{proof}

One advantage of this structure is that we always have \emph{nice paths}, which are defined as follows.

\begin{definition}
\label{def:nice_path}
 Let $G=(V,E)$ be a directed graph and let $\Lscr$ be a laminar family.
 Let $v,w\in V$ and let $\tilde U$ be the minimal set in $\Lscr\cup\{V\}$ with $v,w\in \tilde U$. 
 A $v$-$w$-path is \emph{nice} if it is in $G[\tilde U]$ and it enters and leaves every set $U\in \Lscr$ at most once.
\end{definition}

\begin{lemma}
\label{lemma:enterandleaveonlyonce}
 Let $G=(V,E)$ be a strongly connected directed graph and let $\Lscr$ be a laminar family such that $G[U]$ is strongly connected for every $U\in \Lscr$.
Then for any $v,w\in V$ we can find a nice $v$-$w$-path in polynomial time.
\end{lemma}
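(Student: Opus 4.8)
The plan is to describe a recursive algorithm guided by the laminar structure of $\Lscr$, with the recursion decreasing $|V|$. If $\tilde U \subsetneq V$ then $\tilde U \in \Lscr$, so $G[\tilde U]$ is strongly connected; I would then recurse on the (strictly smaller) instance $(G[\tilde U],\, \{U \in \Lscr : U \subseteq \tilde U\})$ with the same endpoints $v,w$. Its output lies in $G[\tilde U]$ by construction, and for any $U \in \Lscr$ with $U \not\subseteq \tilde U$ laminarity gives $U \cap \tilde U = \emptyset$ or $U \supseteq \tilde U$, so a path inside $G[\tilde U]$ enters and leaves such a $U$ zero times; hence the output is also a nice $v$-$w$-path for the original instance. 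So from now on assume $\tilde U = V$; in particular $G$ is strongly connected.

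Let $L_1,\dots,L_k$ be the maximal elements of $\Lscr\setminus\{V\}$; these are pairwise disjoint proper subsets of $V$, and every $U\in\Lscr$ is either $V$ or a subset of some $L_i$. I would start from an arbitrary simple $v$-$w$-path $P$ in $G$ and then, for $i=1,\dots,k$ in turn, do the following: if the current walk visits $L_i$, let $a_i$ and $b_i$ be its first and its last vertex lying in $L_i$, apply the induction hypothesis to $(G[L_i],\, \{U\in\Lscr: U\subseteq L_i\})$ with endpoints $a_i,b_i$ to obtain a nice $a_i$-$b_i$-path $Q_i\subseteq G[L_i]$, and replace the portion of the current walk between $a_i$ and $b_i$ by $Q_i$. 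When all $k$ sets have been handled, shortcut the resulting $v$-$w$-walk to a simple $v$-$w$-path and output it.

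Correctness rests on the invariant that the current object is always a $v$-$w$-walk in $G$ and that, for every index $i$ already handled, the vertices of the current walk lying in $L_i$ either form the empty set or form a single contiguous subwalk, equal to $Q_i$. When $L_i$ is handled this holds for $i$: apart from the edge entering $a_i$ and the edge leaving $b_i$, every edge of the old walk crossing $\partial L_i$ lay strictly inside the replaced portion (the old walk was outside $L_i$ before $a_i$ and after $b_i$), and $Q_i\subseteq G[L_i]$ adds no such edge. The step I expect to require the most care is showing that handling $L_i$ preserves the invariant for every already-handled $j\ne i$: since $L_i\cap L_j=\emptyset$, the cut vertices $a_i,b_i$ lie outside $L_j$, so the contiguous stretch $Q_j$ of the walk inside $L_j$ cannot straddle the replaced portion — it is either swallowed by it, and disappears, or disjoint from it, and is left untouched. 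At the end, the invariant says the final walk enters and leaves each $L_i$ at most once; since its trace on $L_i$ is the nice path $Q_i$, combining the at-most-one entering and at-most-one leaving edge of $L_i$ with the niceness of $Q_i$ inside $G[L_i]$ shows that the final walk also enters and leaves every $U\in\Lscr$ with $U\subseteq L_i$ at most once, and it trivially crosses $\partial V$ zero times. Thus the final walk is nice, and shortcutting — which only deletes edges — makes it a nice $v$-$w$-path. Finally, the recursion only descends into the vertex-disjoint, strictly smaller subgraphs $G[L_i]$ (and, in the first case, into the smaller $G[\tilde U]$), so a routine accounting shows that the whole algorithm runs in polynomial time.
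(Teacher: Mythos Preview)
Your proof is correct, but it takes a different route from the paper's. The paper gives a short iterative fix-up: start from any $v$-$w$-path in $G[\tilde U]$, and while some $U\in\Lscr$ is entered or left more than once, pick a \emph{maximal} such $U$ and replace the subpath between the first and last visit to $U$ by an arbitrary path inside $G[U]$. Since the replacement lies entirely in $U$, it cannot create new violations for sets containing $U$ or disjoint from $U$, so each set is fixed at most once and at most $|\Lscr|<2|V|$ iterations suffice. Your argument instead recurses explicitly down the laminar hierarchy, assembling the nice path from nice subpaths $Q_i$ inside the maximal proper sets $L_i$ and maintaining an invariant across the $L_i$. Your approach is more structural and makes visible why the ``lies in $G[\tilde U]$'' clause of niceness matters (it is precisely what rules out the dangerous case $a_i,b_i\in U\subsetneq L_i$, where naively counting one boundary edge on each side plus one enter/leave of $Q_i$ would give two crossings); the paper's approach is terser and avoids the invariant bookkeeping. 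Both are polynomial time, with the paper's flat $|\Lscr|$-iteration bound slightly cleaner than your recursion accounting.
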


\begin{proof}
Let $P$ be a path from $v$ to $w$ in $G[\tilde U]$. 
Now repeat the following until $P$ enters and leaves every set in $\Lscr$ at most once. 
Let $U$ be a maximal set with $U\in \Lscr$ that $P$ enters or leaves more than once.
Let $v'$ be the first vertex that $P$ visits in $U$ and let $w'$ be the last vertex that $P$ visits in $U$.
Since $G[U]$ is strongly connected, we can replace the $v'$-$w'$-subpath of $P$ by a path in $G[U]$. 
After at most $|\Lscr| < 2 |V|$ iterations, $P$ is a nice $v$-$w$-path.
\end{proof}

\section{Reducing to vertebrate pairs}\label{sect:reduction_veterbrate_pair}

Let $\Lscr_{\ge 2} := \{ L\in \Lscr : |L|\ge 2 \}$ be the family of all non-singleton elements of  $\Lscr$.
In this section we show how to reduce ATSP to the case where we have already a given subtour $B$, called \emph{backbone},
that visits all elements of $\Lscr_{\ge 2}$; see Figure~\ref{fig:vertebrate_pair}.
By a \emph{subtour} we mean a connected Eulerian multi-subgraph of $G$.
We call a strongly laminar ATSP instance together with a given backbone a \emph{vertebrate pair}.

\begin{definition}
A \emph{vertebrate pair} consists of 
\begin{itemize}\itemsep0pt
 \item a strongly laminar ATSP instance $\Iscr=(G,\Lscr,x,y)$ and 
 \item a connected Eulerian multi-subgraph $B$ of $G$ (the \emph{backbone}) such that \\
       $V(B)\cap L\not=\emptyset$ for all $L\in\Lscr_{\ge 2}$.
\end{itemize}
Let $\kappa,\eta \ge 0$.
A \emph{$(\kappa,\eta)$-algorithm for vertebrate pairs} is an algorithm that computes, for any 
given vertebrate pair $(\Iscr,B)$, a multi-set $F$ of edges such that $E(B)\cupp F$ is a tour 
and
\begin{equation}\label{eq:definition_kappa_eta_algorithm}
 c(F)\le \kappa\cdot\lp(\Iscr) + \eta \cdot \sum_{v\in V\setminus V(B): \{v\} \in \Lscr} 2y_{\{v\}}. \\
\end{equation}
\end{definition}
Note that this definition is slightly different to the one in~\cite{SveTV18} 
(where $G[L]$ was not required to be strongly connected for $L\in \Lscr$), but this will not be relevant.

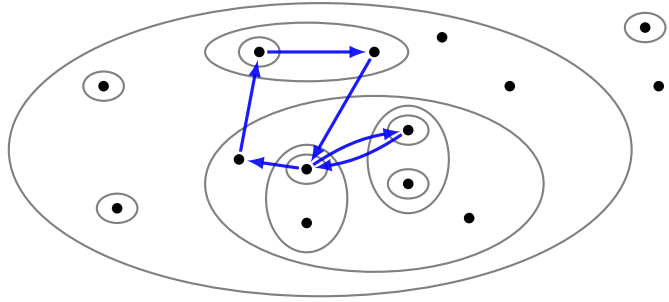
\begin{figure}
\begin{center}
 \begin{tikzpicture}[yscale=0.65, xscale=0.9]
  \tikzset{vertex/.style={fill=black, circle,inner sep=0em,minimum size=4pt,outer sep =1pt }}
  \tikzset{Backbone/.style={->, >=latex, line width=1.2pt,blue, opacity=0.9}}
  \tikzset{Laminar/.style={draw=gray, fill=none, thick}}
     \begin{scope}[every node/.style={vertex}]
    \node (v1) at (5,4) {};

    \node (v12) at (8,2.3) {};
    \node (v13) at (8,1.2) {};
    \node (v14) at (9.5,3.1) {};
    \node (v15) at (9.5,2) {};
    \node (v16) at (10.4,1.3) {};
    
    \node (v17) at (9,4.7) {};
    \node (v18) at (10,5) {};
    \node (v19) at (11,4) {};
    \node (v20) at (7,2.5) {};
    \node (v21) at (13.2,4) {};
    
    \node (a) at (7.3,4.7){};
    \node (b) at (5.2,1.5) {};
   \end{scope}
    \node[vertex] (outside) at (13,5.2) {};

   \begin{scope}[Laminar]
   \draw (8,4.7) ellipse (1.5 and 0.6);
   \draw (v1) ellipse (0.3 and 0.3);
   \draw (a) ellipse (0.3 and 0.3);
   \draw (b) ellipse (0.3 and 0.3); 
   \draw (v12) ellipse (0.3 and 0.3);
   \draw (v14) ellipse (0.3 and 0.3);
   \draw (v15) ellipse (0.3 and 0.3);
   \draw (outside) ellipse (0.3 and 0.3);
   \draw (8,1.7) ellipse (0.6 and 1.1);
   \draw (9.5,2.5) ellipse (0.6 and 1.1);
   \draw (9, 2.0) ellipse (2.5 and 1.8); 
   \draw (8.2,2.7) ellipse (4.6 and 3);
   \end{scope}
   
   \begin{scope}[Backbone]
    \draw[bend left=15] (v12) to (v14);
    \draw[bend left=15] (v14) to (v12);
    \draw (v12) to (v20);
    \draw (v20) to (a);
    \draw (a) to (v17);
    \draw (v17) to (v12);
   \end{scope}
 \end{tikzpicture}
\end{center}
\caption{Example of a vertebrate pair.
The laminar family $\Lscr$ is shown in gray and a backbone $B$ is shown in blue.
\label{fig:vertebrate_pair}
}
\end{figure}

In this section we will show that a $(\kappa,\eta)$-algorithm for vertebrate pairs (for any constants
$\kappa$ and $\eta$) implies a $(3\kappa + \eta + 2)$-approximation algorithm for ATSP.
This is the reason for using the bound in \eqref{eq:definition_kappa_eta_algorithm}; we would get a worse overall approximation guarantee if we just worked with the weaker inequality $c(F) \le (\kappa + \eta) \cdot \lp(\Iscr)$.

Let $(G,\Lscr,x,y)$ be a strongly laminar ATSP instance and $c$ the induced cost function.
In the following we fix for every $u,v\in V$ a nice $u$-$v$-path $P_{u,v}$.
Such paths can be computed in polynomial time by Lemma~\ref{lemma:enterandleaveonlyonce}.
\begin{lemma}\label{lemma:cost_of_nice_path}
Let $W\in \Lscr\cup \{V\}$ and let $u,v\in W$. Then 
\[
 c\left(E\left(P_{u,v}\right)\right) \ =\ \sum_{L\in \Lscr:\, L\subsetneq W, L\cap V(P_{u,v}) \ne \emptyset} 2y_L 
 \ - \sum_{L\in\Lscr:\, u\in L\subsetneq W} y_L\ - \sum_{L\in\Lscr:\, v\in L\subsetneq W} y_L.
\]
\end{lemma}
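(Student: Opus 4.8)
The plan is to compute the cost of the nice path $P_{u,v}$ by summing, over all $L \in \Lscr$, the contribution $y_L$ times the number of edges of $P_{u,v}$ in $\delta(L)$, i.e.\ $c(E(P_{u,v})) = \sum_{L \in \Lscr} y_L \cdot |E(P_{u,v}) \cap \delta(L)|$. So the whole proof reduces to counting, for each $L \in \Lscr$, how many edges of the path cross $L$. Since $P_{u,v}$ is a path (a walk that does not repeat vertices) lying inside $G[\tilde U]$, where $\tilde U$ is the minimal set of $\Lscr \cup \{V\}$ containing both $u$ and $v$, any $L$ with $L \not\subseteq \tilde U$ either contains $\tilde U$ or is disjoint from it; in the first case no edge of the path crosses $L$ (all vertices are inside $L$), and in the second case likewise. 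Hence only sets $L \subseteq \tilde U$, and in fact only $L \subsetneq W$ when $W \in \{\tilde U\} \subseteq \Lscr \cup \{V\}$ — more precisely, since $W \supseteq \tilde U$ by minimality of $\tilde U$, it suffices to consider $L \subsetneq W$ with $L \cap V(P_{u,v}) \neq \emptyset$, which exactly matches the index set of the first sum in the statement. (Sets $L$ with $L \cap V(P_{u,v}) = \emptyset$ contribute zero trivially.)

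Next I would do the per-set count. Fix $L \subsetneq W$ with $V(P_{u,v}) \cap L \neq \emptyset$. Because $P_{u,v}$ is nice, it enters $L$ at most once and leaves $L$ at most once; so the vertices of $P_{u,v}$ in $L$ form a single contiguous subpath. There are then three cases depending on whether the endpoints $u$ and $v$ lie in $L$:
(a) if neither $u$ nor $v$ is in $L$, the path enters $L$ once and leaves once, so $|E(P_{u,v}) \cap \delta(L)| = 2$;
(b) if exactly one of $u, v$ is in $L$ (say $u \in L$, $v \notin L$, and symmetrically), the path starts inside $L$ and leaves once (never entering), so the count is $1$;
(c) if both $u$ and $v$ are in $L$, the path stays inside $L$ throughout (the contiguous block containing $u$ and $v$ is the whole path), so the count is $0$. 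Note $L = W$ cannot occur in case (c) in a way that matters since we restrict to $L \subsetneq W$; and if $u, v \in L \subsetneq W$ this is consistent because $\tilde U \subseteq L$ would contradict... actually $u,v \in L$ with $L \subsetneq W$ is fine and just gives count $0$, but then $L \supseteq \tilde U$ by minimality, so in fact this case is vacuous once we've restricted to $L \subsetneq W$ with $W = \tilde U$; in any event its contribution is $0$, so it is harmless.

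Finally I would assemble the three cases. Summing $y_L \cdot |E(P_{u,v}) \cap \delta(L)|$ over the relevant $L$: every $L \subsetneq W$ meeting the path contributes $2 y_L$ in the baseline (case (a)); then for each $L$ containing $u$ (with $L \subsetneq W$) we must subtract $y_L$ to correct case (b)/(c) for the $u$-side, and similarly subtract $y_L$ for each $L$ containing $v$. This yields exactly
\[
 c(E(P_{u,v})) = \sum_{L \in \Lscr:\, L \subsetneq W,\, L \cap V(P_{u,v}) \neq \emptyset} 2 y_L \;-\; \sum_{L \in \Lscr:\, u \in L \subsetneq W} y_L \;-\; \sum_{L \in \Lscr:\, v \in L \subsetneq W} y_L,
\]
as claimed, where one uses that any $L$ with $u \in L \subsetneq W$ automatically satisfies $L \cap V(P_{u,v}) \neq \emptyset$ (it contains $u$), so the subtracted sums are indexed within the baseline sum. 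The main obstacle — really the only non-bookkeeping point — is making the case analysis of $|E(P_{u,v}) \cap \delta(L)|$ airtight: this is exactly where the \emph{niceness} of $P_{u,v}$ (entering and leaving each set at most once, Definition~\ref{def:nice_path}, guaranteed by Lemma~\ref{lemma:enterandleaveonlyonce}) is essential, since for an arbitrary $u$-$v$-path the contiguity of $V(P_{u,v}) \cap L$ could fail and the counts could be larger.
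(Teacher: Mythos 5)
Your proposal is correct and follows the same route as the paper's (very terse) proof: decompose $c(E(P_{u,v}))=\sum_{L\in\Lscr}y_L\,|E(P_{u,v})\cap\delta(L)|$ and use niceness to determine each crossing count according to whether $u$ and/or $v$ lie in $L$. One small wrinkle: niceness alone (``enters and leaves $L$ at most once'') does not by itself imply the vertices of the path in $L$ form a single contiguous block when $u,v\in L$ -- the correct reason the count is $0$ in that case is that $u,v\in L\in\Lscr$ forces $\tilde U\subseteq L$, so the nice path lies entirely in $G[L]$; you do invoke this fact a few lines later, so the argument stands.
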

\begin{proof}
Since the path $P_{u,v}$ is nice, it is contained in $G[W]$.
Moreover, it leaves every set $L\in \Lscr$ at most once and enters every set $L\in \Lscr$ at most once.
A set $L\in \Lscr$ with $u\in L$ is never entered by $P_{u,v}$ and a set $L\in \Lscr$  with $w\in L$ is never left by $P_{u,v}$.
\end{proof}
We define 
\[ \val(W):=\sum_{L\in\Lscr: L\subsetneq W}2y_L. \]
and 
$$ D_W(u,v) \ := \ \sum_{L\in\Lscr:\, u\in L\subsetneq W} y_L + \sum_{L\in\Lscr:\, v\in L\subsetneq W} y_L + c(E(P_{u,v}))$$
for $u,v\in W$.
Note that $D_W(u,v)\le \val(W)$ by Lemma~\ref{lemma:cost_of_nice_path}.
We write 
\[
 D_W:=\max\{D_W(u,v) : u,v\in W\}.
\]
The intuitive meaning of $D_W$ in the analysis of our reduction to vertebrate pairs is the following.
On the one hand, it can be useful if $D_W$ is small:
if we enter the set $W$ at some vertex $s\in W$ and leave it at some other vertex $t\in W$,
we can always find a cheap $s$-$t$-walk inside $G[W]$.
On the other hand, if $D_W$ is large, we can find a nice path inside $W$ that visits many sets $L\in\Lscr$
(or more precisely, sets of high weight in the dual solution $y$).

The reduction to vertebrate pairs is via a recursive algorithm.
For a given set $W\in\Lscr \cup \{V\}$ it constructs a tour in $G[W]$. 
See Figure~\ref{fig:reduction_veterbrate_pair} for an illustration.

\medskip
\begin{algorithm2e}[H]
\vspace*{1mm}
\KwIn{a strongly laminar ATSP instance $\Iscr=(G,\Lscr,x,y)$ with $G=(V,E)$,\newline
      a set $W\in\Lscr \cup \{V\}$, and \newline
      a $(\kappa, \eta)$-algorithm $\Ascr$ for vertebrate pairs (for some constants $\kappa,\eta\ge 0$)}
\KwOut{a tour $F$ in $G[W]$}
\vspace*{2mm}

\begin{enumerate}\itemsep1pt
\item 
      If $W \ne V$, contract $V\setminus W$ into a single vertex $v_{\bar W}$ and redefine $y_W := \sfrac{D_W}{2}$.      
\item \textbf{Construct a vertebrate pair:}
      Let $u^*,v^* \in W$ such that $D_W(u^*,v^*) = D_W$. Let $B$ be the multi-graph corresponding to the closed walk
      that results from appending $P_{u^*\!,v^*}$ and $P_{v^*\!,u^*}$.
      \\[1mm]
      Let $\Lscr_{\bar B}$ be the set of all maximal sets $L\in \Lscr$ with $L\subsetneq W$ and $V(B)\cap L =\emptyset$.
      Contract every set $L\in \Lscr_{\bar B}$ to a single vertex $v_L$ and set $y_{\{v_L\}} := y_L + \sfrac{D_L}{2}$.
      Let $G'$ be the resulting graph.
      \\[1mm]
      Let $\Lscr'$ be the laminar family of subsets of $V(G')$ that contains singletons $\{v_L\}$ for $L\in \Lscr_{\bar B}$
      and all the sets arising from $L \in \Lscr$ with $L \subseteq W$ and $L\cap V(B) \ne \emptyset$.
      \\[1mm]
      Let $\Iscr'=(G',\Lscr',x,y)$ be the resulting strongly laminar instance.      
\item \textbf{Compute a solution for the vertebrate pair:} 
       Apply the given algorithm $\Ascr$ to the vertebrate pair $(\Iscr', B)$.
       Let $F'$ be the resulting Eulerian edge set.
\item \textbf{Lift the solution to a subtour:}
       Fix an Eulerian walk in every connected component of $F'$.
       Now uncontract every $L\in\Lscr_{\bar B}$. 
       Whenever an Eulerian walk passes through $v_L$, we get two edges $(u',u)\in \delta^-(L)$ and $(v,v') \in\delta^+(L)$.
       To connect $u$ and $v$ within $L$, add the path $P_{u,v}$.
       \\[1mm]
       Moreover, if $W\ne V$ do the following.
       Whenever an Eulerian walk passes through $v_{\bar W}$ using the edges $(u, v_{\bar W})$ and $(v_{\bar W}, v)$,
       replace them by the path $P_{u,v}$. 
\item \textbf{Recurse to complete to a tour of the original instance:}
      For every set $L\in\Lscr_{\bar B}$, apply Algorithm~\ref{algo:reduction_veterbrate_pair} recursively to obtain a tour $F_L$
      in $G[L]$.
      Let $F''$ be the union of $F'$ and all these tours $F_L$ for $L\in\Lscr_{\bar B}$.
\item Return $F := F'' \cupp E(B)$. 
\end{enumerate}
\caption{Recursive algorithm to reduce to vertebrate pairs. \label{algo:reduction_veterbrate_pair}
}
\end{algorithm2e}
\medskip

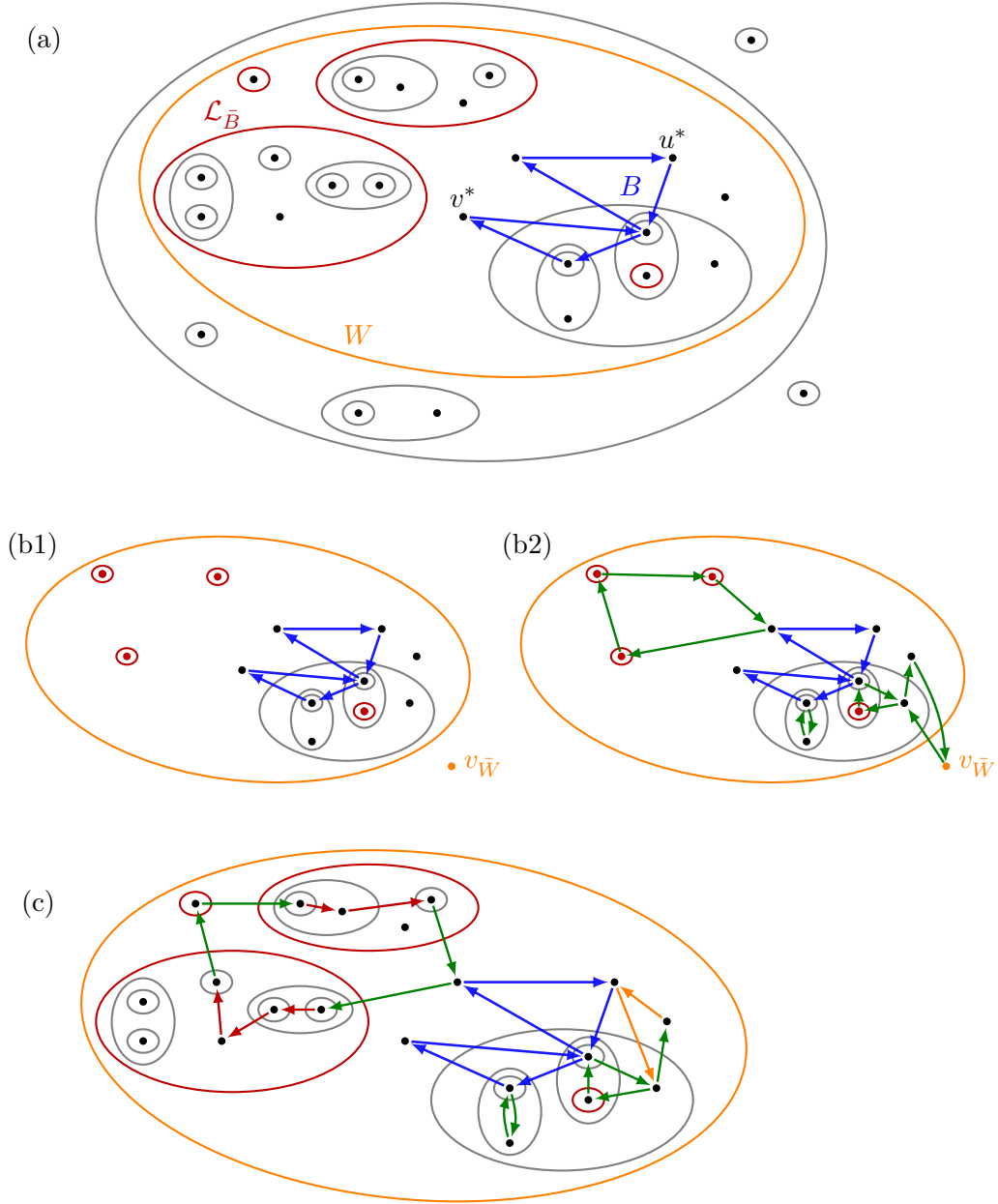
\begin{figure}
\begin{center}
 \begin{tikzpicture}[xscale=0.8,yscale=0.54]
 
  \tikzset{vertex/.style={fill=black, circle,inner sep=0em,minimum size=3pt,outer sep =1pt }}
  \tikzset{contracted/.style={fill=darkred, circle, inner sep=0em,minimum size=3pt }}
  \tikzset{PathCompletion/.style={->, >=latex, line width=0.9pt,darkred}}
  \tikzset{setNotB/.style={draw=darkred, fill=none, thick}}
  \tikzset{OuterContracted/.style={fill=orange, circle, inner sep=0em,minimum size=3pt }}
  \tikzset{setW/.style={draw=orange, fill=none, thick}}
  \tikzset{OuterPathCompletion/.style={->, >=latex, line width=0.9pt,orange}}
  \tikzset{Backbone/.style={->, >=latex, line width=1pt,blue, opacity=0.9}}
  \tikzset{VertebratePairSolution/.style={->, >=latex, line width=0.9pt, darkgreen}}
  \tikzset{Laminar/.style={draw=gray, fill=none, thick}}

  \begin{scope}[shift={(2,0)},xscale=0.9]
   \node () at (-2,8) {(a)};
   
   \begin{scope}[every node/.style={vertex}]
    \node (v1) at (2,7) {};
    \node (v2) at (4,7) {};
    \node (v3) at (4.8,6.8) {};
    \node (v4) at (6.5,7.1) {};
    \node (v5) at (6,6.4) {};
    
    \node (v6) at (2.4,5) {};
    \node (v7) at (1,4.5) {};
    \node (v8) at (1,3.5) {};
    \node (v9) at (2.5,3.5) {};
    \node (v10) at (3.5,4.3) {};
    \node (v11) at (4.4,4.3) {};
    
    \node (v12) at (8,2.3) {};
    \node (v13) at (8,0.9) {};
    \node (v14) at (9.5,3.1) {};
    \node (v15) at (9.5,2) {};
    \node (v16) at (10.8,2.3) {};
    
    \node (v17) at (7,5) {};
    \node (v18) at (10,5) {};
    \node (v19) at (11,4) {};
    \node (v20) at (6,3.5) {};
    
    \node (v21) at (4,-1.5) {};
    \node (v22) at (5.5, -1.5) {};
    \node (v23) at (1,0.5) {};
    \node (v24) at (12.5,-1) {};
    \node (v25) at (11.5,8) {};
   \end{scope}

   \begin{scope}[Laminar]
   \draw[setNotB] (v1) ellipse (0.3 and 0.3);
   \draw (v2) ellipse (0.3 and 0.3);
   \draw (4.5, 6.9) ellipse (1 and 0.7);
   \draw (v4) ellipse (0.3 and 0.3);
   \draw[setNotB] (5.3,6.9) ellipse (2.1 and 1.1);
   
   \draw (v6) ellipse (0.3 and 0.3);
   \draw (v7) ellipse (0.3 and 0.3);
   \draw (v8) ellipse (0.3 and 0.3);
   \draw (1,4) ellipse (0.6 and 1.1);
   \draw (v10) ellipse (0.3 and 0.3);
   \draw (v11) ellipse (0.3 and 0.3);
   \draw (4,4.3) ellipse (1.0 and 0.6);
   \draw[setNotB] (2.7,4) ellipse (2.6 and 1.8);
   
   \draw (v12) ellipse (0.3 and 0.3);
   \draw (v14) ellipse (0.3 and 0.3);
   \draw[setNotB] (v15) ellipse (0.3 and 0.3);
   \draw (8,1.7) ellipse (0.6 and 1.1);
   \draw (9.5,2.5) ellipse (0.6 and 1.1);
   \draw (9, 2.0) ellipse (2.5 and 1.8);
   
   \draw (v21) ellipse (0.3 and 0.3);
   \draw (v23) ellipse (0.3 and 0.3);
   \draw (4.8, -1.5) ellipse (1.5 and 0.7);
   \draw (v24) ellipse (0.3 and 0.3);
   \draw (v25) ellipse (0.3 and 0.3);
   
   \draw[setW, rotate=-10] (5.4,4.9) ellipse (6.4 and 4.4);
   \draw[rotate=-9] (5.4,4) ellipse (7 and 5.8);
   \end{scope}

   \begin{scope}[Backbone]
    \draw (v14) to (v12);
    \draw (v12) to (v20);
    \draw (v20) to (v14);
    \draw (v14) to (v17);
    \draw (v17) to (v18);
    \draw (v18) to (v14);
   \end{scope}
   
   \node[above] (u) at (v18) {$u^*$};
   \node[above] (v) at (v20) {$v^*$};
   \node[blue] (B) at (9.2,4.3) {$B$};
   \node[orange] (W) at (4,0.5) {$W$};
   \node[darkred] () at (1.4,6.1) {$\Lscr_{\bar B}$};
\end{scope}

\begin{scope}[shift={(0,-10.5)},xscale=0.6, yscale=0.7]
\node () at (0,8) {(b1)};

   \begin{scope}[every node/.style={vertex}]
    \node (v1) at (2,7) {};

    \node (v12) at (8,2.3) {};
    \node (v13) at (8,0.9) {};
    \node (v14) at (9.5,3.1) {};
    \node (v15) at (9.5,2) {};
    \node (v16) at (10.8,2.3) {};
    
    \node (v17) at (7,5) {};
    \node (v18) at (10,5) {};
    \node (v19) at (11,4) {};
    \node (v20) at (6,3.5) {};

   \end{scope}

   \begin{scope}[Laminar]
   \draw[setNotB] (v1) ellipse (0.3 and 0.3);
   \node[contracted] (L1) at (v1) {};

   \draw[setNotB] (5.3,6.9) ellipse (0.3 and 0.3);
   \node[contracted] (L2) at (5.3,6.9) {};
   
   \draw[setNotB] (2.7,4) ellipse (0.3 and 0.3);
   \node[contracted] (L3) at (2.7,4) {};
   
   \draw (v12) ellipse (0.3 and 0.3);
   \draw (v14) ellipse (0.3 and 0.3);
   \draw[setNotB] (v15) ellipse (0.3 and 0.3);
   \node[contracted] (L4) at (v15) {};
   \draw (8,1.7) ellipse (0.6 and 1.1);
   \draw (9.5,2.5) ellipse (0.6 and 1.1);
   \draw (9, 2.0) ellipse (2.5 and 1.8);
   
   \draw[setW, rotate=-10] (5.4,4.9) ellipse (6.4 and 4.4);
   \end{scope}
   
   \node[OuterContracted] (outside) at (12,0) {};
   \node[right=1pt, orange] () at (outside) {$v_{\bar W}$};

   \begin{scope}[Backbone]
    \draw (v14) to (v12);
    \draw (v12) to (v20);
    \draw (v20) to (v14);
    \draw (v14) to (v17);
    \draw (v17) to (v18);
    \draw (v18) to (v14);
   \end{scope}
 
\end{scope}

\begin{scope}[shift={(8.5,-10.5)},xscale=0.6, yscale=0.7]
\node () at (0,8) {(b2)};

   \begin{scope}[every node/.style={vertex}]
    \node (v1) at (2,7) {};

    \node (v12) at (8,2.3) {};
    \node (v13) at (8,0.9) {};
    \node (v14) at (9.5,3.1) {};
    \node (v15) at (9.5,2) {};
    \node (v16) at (10.8,2.3) {};
    
    \node (v17) at (7,5) {};
    \node (v18) at (10,5) {};
    \node (v19) at (11,4) {};
    \node (v20) at (6,3.5) {};

   \end{scope}

   \begin{scope}[Laminar]
   \draw[setNotB] (v1) ellipse (0.3 and 0.3);
   \node[contracted] (L1) at (v1) {};

   \draw[setNotB] (5.3,6.9) ellipse (0.3 and 0.3);
   \node[contracted] (L2) at (5.3,6.9) {};
   
   \draw[setNotB] (2.7,4) ellipse (0.3 and 0.3);
   \node[contracted] (L3) at (2.7,4) {};
   
   \draw (v12) ellipse (0.3 and 0.3);
   \draw (v14) ellipse (0.3 and 0.3);
   \draw[setNotB] (v15) ellipse (0.3 and 0.3);
   \node[contracted] (L4) at (v15) {};
   \draw (8,1.7) ellipse (0.6 and 1.1);
   \draw (9.5,2.5) ellipse (0.6 and 1.1);
   \draw (9, 2.0) ellipse (2.5 and 1.8);
   
   \draw[setW, rotate=-10] (5.4,4.9) ellipse (6.4 and 4.4);
   \end{scope}
   
   \node[OuterContracted] (outside) at (12,0) {};
   \node[right=1pt, orange] () at (outside) {$v_{\bar W}$};

   \begin{scope}[Backbone]
    \draw (v14) to (v12);
    \draw (v12) to (v20);
    \draw (v20) to (v14);
    \draw (v14) to (v17);
    \draw (v17) to (v18);
    \draw (v18) to (v14);
   \end{scope}
   
   \begin{scope}[VertebratePairSolution]
    \draw (L1) to (L2);
    \draw (L3) to (L1);
    \draw (L2) to (v17);
    \draw (v17) to (L3);
    
    \draw[bend left=15pt] (v12) to (v13);
    \draw[bend left=15pt] (v13) to (v12);
    
    \draw (L4) to (v14);
    \draw (v14) to (v16);
    \draw(v16) to (v19);
    \draw[bend left=10pt] (v19) to (outside);
    \draw (outside) to (v16);
    \draw (v16) to (L4);
   \end{scope}   
 
\end{scope}

 \begin{scope}[shift={(1,-21)},xscale=0.9]
 \node () at (-1,7) {(c)};

    \begin{scope}[every node/.style={vertex}]
    \node (v1) at (2,7) {};
    \node (v2) at (4,7) {};
    \node (v3) at (4.8,6.8) {};
    \node (v4) at (6.5,7.1) {};
    \node (v5) at (6,6.4) {};
    
    \node (v6) at (2.4,5) {};
    \node (v7) at (1,4.5) {};
    \node (v8) at (1,3.5) {};
    \node (v9) at (2.5,3.5) {};
    \node (v10) at (3.5,4.3) {};
    \node (v11) at (4.4,4.3) {};
    
    \node (v12) at (8,2.3) {};
    \node (v13) at (8,0.9) {};
    \node (v14) at (9.5,3.1) {};
    \node (v15) at (9.5,2) {};
    \node (v16) at (10.8,2.3) {};
    
    \node (v17) at (7,5) {};
    \node (v18) at (10,5) {};
    \node (v19) at (11,4) {};
    \node (v20) at (6,3.5) {};
   \end{scope}

   \begin{scope}[Laminar]
   \draw[setNotB] (v1) ellipse (0.3 and 0.3);
   \draw (v2) ellipse (0.3 and 0.3);
   \draw (4.5, 6.9) ellipse (1 and 0.7);
   \draw (v4) ellipse (0.3 and 0.3);
   \draw[setNotB] (5.3,6.9) ellipse (2.1 and 1.1);
   
   \draw (v6) ellipse (0.3 and 0.3);
   \draw (v7) ellipse (0.3 and 0.3);
   \draw (v8) ellipse (0.3 and 0.3);
   \draw (1,4) ellipse (0.6 and 1.1);
   \draw (v10) ellipse (0.3 and 0.3);
   \draw (v11) ellipse (0.3 and 0.3);
   \draw (4,4.3) ellipse (1.0 and 0.6);
   \draw[setNotB] (2.7,4) ellipse (2.6 and 1.8);
   
   \draw (v12) ellipse (0.3 and 0.3);
   \draw (v14) ellipse (0.3 and 0.3);
   \draw[setNotB] (v15) ellipse (0.3 and 0.3);
   \draw (8,1.7) ellipse (0.6 and 1.1);
   \draw (9.5,2.5) ellipse (0.6 and 1.1);
   \draw (9, 2.0) ellipse (2.5 and 1.8);
   
   \draw[setW, rotate=-10] (5.4,4.9) ellipse (6.4 and 4.4);
   \end{scope}

   \begin{scope}[Backbone]
    \draw (v14) to (v12);
    \draw (v12) to (v20);
    \draw (v20) to (v14);
    \draw (v14) to (v17);
    \draw (v17) to (v18);
    \draw (v18) to (v14);
   \end{scope}
   
    \begin{scope}[VertebratePairSolution]
     \draw (v1) to (v2);
     \draw (v6) to (v1);
     \draw (v4) to (v17);
    \draw (v17) to (v11);
    
    \draw[bend left=12pt] (v12) to (v13);
    \draw[bend left=12pt] (v13) to (v12);
    
    \draw (v15) to (v14);
    \draw (v14) to (v16);
    \draw(v16) to (v19);
    \draw (v16) to (v15);
   \end{scope} 
   
   \begin{scope}[PathCompletion]
    \draw (v2) to (v3);
    \draw (v3) to (v4);
    \draw (v11) to (v10);
    \draw (v10) to (v9);
    \draw (v9) to (v6);
   \end{scope}
   
    \draw[OuterPathCompletion] (v19) to (v18);
    \draw[OuterPathCompletion] (v18) to (v16);
 \end{scope}

\end{tikzpicture}
\end{center}
\caption{Illustration of Algorithm~\ref{algo:reduction_veterbrate_pair}.
The ellipses show the laminar family $\Lscr$. 
Picture~(a) shows the set $W$ (orange), the subtour $B$ (blue), and the elements of $\Lscr_{\bar B}$ (red). 
The subtour $B$ is the union of the paths $P_{u^*,v^*}$ and $P_{v^*,u^*}$.
Picture~(b1) shows the resulting vertebrate pair instance as constructed in step~2 of Algorithm~\ref{algo:reduction_veterbrate_pair}.
The vertices resulting from the contraction of elements of $\Lscr_{\bar B}$ are shown in red and the vertex $v_{\bar W}$ that 
results from the contraction of $V \setminus W$ is shown in orange.
Picture~(b2) shows in green a possible solution to this vertebrate pair.
Picture~(c) illustrates step~4 of Algorithm~\ref{algo:reduction_veterbrate_pair}:
the green edges are those that arise from the vertebrate pair solution from Picture~(b2) by undoing the contraction of the 
sets in $\Lscr_{\bar B}$. The red edges are the paths that we add to connect within $L\in \Lscr_{\bar B}$ when
uncontracting $L$. The orange edges show the $u$-$v$-path in $G[W]$ that we add to replace the edges
$(u, v_{\bar W})$ and $(v_{\bar W}, v)$ in the vertebrate pair solution from Picture~(b2).
\label{fig:reduction_veterbrate_pair}
}
\end{figure}

First, we observe that Algorithm~\ref{algo:reduction_veterbrate_pair} indeed returns a tour in $G[W]$.

\begin{lemma}\label{lemma:algo_feasible_reduction_vertebrate_pairs}
Let $\kappa,\eta\ge 1$. 
Suppose we have a polynomial-time $(\kappa,\eta)$-algorithm $\Ascr$ for vertebrate pairs.
Then  Algorithm~\ref{algo:reduction_veterbrate_pair} has polynomial runtime and returns a tour in $G[W]$
for every strongly laminar ATSP instance $\Iscr=(G,\Lscr,x,y)$ and every $W\in\Lscr\cup \{V\}$.
\end{lemma}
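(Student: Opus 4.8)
The plan is to establish both assertions — polynomial running time and that the returned $F$ is a tour in $G[W]$ — simultaneously by induction on $|W|$, which is well founded because every recursive call in Step~5 of Algorithm~\ref{algo:reduction_veterbrate_pair} is on a set $L\in\Lscr$ with $L\subsetneq W$.

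\emph{Running time.} The recursion unfolds as a rooted tree whose root carries the set $W$ and in which the children of a node carrying a set $U$ are exactly the sets of $\Lscr_{\bar B}$ computed in that call; these are strict subsets of $U$ and, being maximal members of a laminar family, pairwise disjoint. Hence every proper descendant of a node carries a strictly smaller set, so two distinct nodes can never carry the same (nonempty) set: they would be descendant-related, contradicting strict monotonicity, or incomparable in the tree, hence carrying disjoint sets. Thus the node-to-set map is injective, the tree has at most $|\Lscr|+1\le 2|V|+1$ nodes, and each call does only polynomial work: the nice paths $P_{u,v}$ are computed once at the outset via Lemma~\ref{lemma:enterandleaveonlyonce}; $D_W$ and a maximizing pair $(u^*,v^*)$ are obtained by evaluating $D_W(u,v)$ over the $O(|V|^2)$ pairs using Lemma~\ref{lemma:cost_of_nice_path}; $G'$, $\Lscr'$ and $\Iscr'$ are built in polynomial time; $\Ascr$ is invoked once, polynomial by hypothesis; and Steps~4 and~6 (fixing Eulerian walks, substituting paths, forming unions) are clearly polynomial. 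This gives the claimed polynomial bound.

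\emph{The output is a tour in $G[W]$.} First I would check that $(\Iscr',B)$ genuinely is a vertebrate pair (after discarding any set that acquires weight zero), so that $\Ascr$ applies: $G'$ and every $G'[U]$ with $U\in\Lscr'$ are strongly connected because contracting vertex sets preserves strong connectivity and each such $U$ is a new singleton or the image of a set of $\Lscr$; $x$ remains feasible with $x(\delta(U))=2$ for $U\in\Lscr'$ since the cut $\delta_{G'}$ of a contracted set equals the corresponding cut $\delta_G$; and $B$ is a connected Eulerian subgraph of $G'$ that meets every non-singleton set of $\Lscr'$ by construction of $\Lscr'$. Hence $\Ascr$ returns $F'$ with $E(B)\cupp F'$ a tour in $G'$; in particular $F'$ is balanced at every vertex, being a difference of two Eulerian multigraphs, so Step~4 may fix an Eulerian closed walk in every connected component of $F'$. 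Let $F_L$ be the tour returned by the recursive call on $L\in\Lscr_{\bar B}$; by the induction hypothesis it is a tour in $G[L]$. For the returned multiset $F$ — the union of $E(B)$, the lifted $F'$, and all the $F_L$ — one verifies: $F$ uses only edges of $G[W]$, because every substituted path lies in some $L\subseteq W$ (uncontracting $v_L$) or in $G[W]$ (replacing $v_{\bar W}$), while $E(B)$ and each $F_L$ consist of edges of $E[W]$; $F$ is Eulerian, because each lifted component is a closed walk, $E(B)$ and each $F_L$ are Eulerian, and a sum of balanced edge sets is balanced; and $V(F)=W$, because $E(B)\cupp F'$ spans $V(G')$, Step~4 deletes $v_{\bar W}$ and each $v_L$ while keeping every uncontracted vertex of $W$ and introducing only vertices of $W$, the $F_L$ then cover every vertex of every $L\in\Lscr_{\bar B}$, and $W$ is the union of the uncontracted vertices and the sets of $\Lscr_{\bar B}$.

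\emph{Connectivity, which is the main obstacle.} When a component walk visits a contracted vertex $v_L$ (or $v_{\bar W}$) several times, uncontracting turns this into several detours through $G[L]$ (resp. $G[W]$) that need not pairwise intersect, so one must show that no connected component is actually lost. Three facts do it: (a) every vertex of $F'$ lies in exactly one connected component of $F'$, so all substitutions at $v_L$, and all those at $v_{\bar W}$, take place inside a single component; (b) replacing, inside a single closed walk, each occurrence of a contracted vertex by the corresponding detour again yields a single closed walk, so every lifted component is connected; and (c) since $v_L$ is covered by the tour $E(B)\cupp F'$, at least one lifted component walk contains a detour through $G[L]$ whose endpoints lie in $L=V(F_L)$ and hence meets $F_L$, and the tour $F_L$, being connected and spanning $L$, ties together all lifted component walks that detour through $L$. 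Combining (a)--(c), the intersection graph of the collection consisting of the lifted components, $E(B)$ and the tours $F_L$ ($L\in\Lscr_{\bar B}$) is obtained from the intersection graph of the collection consisting of the components of $F'$ and $B$ — which is connected because $E(B)\cupp F'$ is — by retaining every adjacency realised at an uncontracted vertex (in particular every adjacency with $B$, all of whose vertices are uncontracted) and replacing each adjacency realised at some $v_L$ by a two-edge path through the new node $F_L$. Since subdividing edges and attaching new nodes on the subdivisions preserves connectivity, $F$ is connected, so $F$ is a tour in $G[W]$, which closes the induction. The real work of the proof is precisely this bookkeeping of components.
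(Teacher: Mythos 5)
Your proof is correct and follows essentially the same inductive plan as the paper's, which is also by induction on $|W|$: observe that $F'\cupp E(B)$ is a tour in $\Iscr'$ after step~3, that lifting in step~4 preserves the Eulerian property and connectivity of $F'\cupp E(B)$, that the lifted subtour meets every $L\in\Lscr_{\bar B}$ and every vertex of $W$ outside $\bigcup\Lscr_{\bar B}$, and then invoke the induction hypothesis on the $F_L$; the runtime bound comes from the observation that the recursion tree has at most $|\Lscr|+1$ nodes. You simply make explicit several checks the paper leaves implicit — that $(\Iscr',B)$ is a legitimate vertebrate pair, and the component-by-component bookkeeping of why connectivity survives lifting — so this is a more detailed exposition of the same argument rather than a genuinely different route.
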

\begin{proof}
 We apply induction on $|W|$. For $|W|=1$, the algorithm returns $F=\emptyset$.
 Now let $|W| > 1$. 
 At the end of step~3, we have that $F'$ is Eulerian and $F'\cupp E(B)$ is a tour in the instance $\Iscr'$.
 In step~4, the set $F'$ remains Eulerian and $F'\cupp E(B)$ remains connected.
 Moreover, the subtour $F'\cupp E(B)$ visits all sets in $\Lscr_{\bar B}$, i.e.\ we have 
 $F' \cap \delta(L) \ne \emptyset$ for all $L\in\Lscr_{\bar B}$.
 The subtour $F'\cupp E(B)$ also visits all vertices in $W$ that are not contained in any set $L\in\Lscr_{\bar B}$,
 i.e.\ for these vertices $v$ we have  $\delta(v) \cap (F' \cup E(B)) \ne \emptyset$. 
 After step~4, we have $F'\subseteq E[W]$.
 We conclude that the graph $(W, F''\cupp E(B))$ is connected and Eulerian; 
 here we applied the induction hypothesis to the sets $L\in \Lscr_{\bar B}$.

 To see that the runtime of the algorithm is polynomially bounded we observe that there are in total at most 
 $|\Lscr| +1 \le 2 |V|$ recursive calls of the algorithm because $\Lscr\cup \{V\}$ is a laminar family.
\end{proof}

Next we observe that our backbone $B$ visits many sets $L\in\Lscr$ inside $W$ if $D_W$ is large.
This is because the path $P_{u^*,v^*}$ enters and leaves each set in $\Lscr$ at most once and thus $D_W$ can only be large if this path visits sets in $\{ L\in \Lscr : L \subsetneq W\}$ of large total weight.

\begin{lemma}\label{lemma:L_bar_W_small}
 Let $\Iscr=(G,\Lscr,x,y)$ be a strongly laminar ATSP instance, and let $W\in\Lscr\cup \{V\}$.
 Moreover, let $B$ be as in step~2 of Algorithm~\ref{algo:reduction_veterbrate_pair}.
 Then
 \begin{equation}\label{eq:L_bar_W_small}
 \sum_{L\in\Lscr_{\bar B}} (2y_L + \val(L)) \le \val(W) - D_W.
\end{equation}
\end{lemma}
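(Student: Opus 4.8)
The plan is to reduce everything to a statement about which sets of $\Lscr$ the backbone $B$ meets. First I would put $D_W$ in closed form. By the choice of $u^*,v^*$ in step~2 of Algorithm~\ref{algo:reduction_veterbrate_pair} we have $D_W=D_W(u^*,v^*)$, and substituting the expression for $c(E(P_{u^*,v^*}))$ from Lemma~\ref{lemma:cost_of_nice_path} into the definition of $D_W(u^*,v^*)$ makes the two $y$-terms (over the sets containing $u^*$, and over the sets containing $v^*$) cancel, so that
\[
 D_W \ = \sum_{L\in\Lscr:\, L\subsetneq W,\ L\cap V(P_{u^*,v^*})\ne\emptyset} 2y_L .
\]
Since $B$ is the multigraph of the closed walk obtained by concatenating $P_{u^*,v^*}$ and $P_{v^*,u^*}$, we have $V(P_{u^*,v^*})\subseteq V(B)$, hence $D_W\le\sum_{L\in\Lscr:\,L\subsetneq W,\ L\cap V(B)\ne\emptyset}2y_L$, which rearranges to
\[
 \val(W)-D_W \ \ge \sum_{L\in\Lscr:\, L\subsetneq W,\ L\cap V(B)=\emptyset} 2y_L .
\]

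Next I would identify the left-hand side of \eqref{eq:L_bar_W_small} with exactly this last sum. The combinatorial core is that the property of being a set $L'\in\Lscr$ with $L'\subsetneq W$ and $L'\cap V(B)=\emptyset$ is inherited by every $\Lscr$-subset of $L'$, and that by laminarity each such $L'$ is contained in a \emph{unique} maximal set with this property, which is by definition an element of $\Lscr_{\bar B}$. Hence $\{L'\in\Lscr: L'\subsetneq W,\ L'\cap V(B)=\emptyset\}$ is the disjoint union, over $L\in\Lscr_{\bar B}$, of the subfamilies $\{L'\in\Lscr: L'\subseteq L\}$; disjointness uses that distinct elements of $\Lscr_{\bar B}$ are disjoint. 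Since $\val(L)=\sum_{L'\in\Lscr:\,L'\subsetneq L}2y_{L'}$, and therefore $2y_L+\val(L)=\sum_{L'\in\Lscr:\,L'\subseteq L}2y_{L'}$, summing over $L\in\Lscr_{\bar B}$ gives
\[
 \sum_{L\in\Lscr_{\bar B}}\bigl(2y_L+\val(L)\bigr) \ = \sum_{L'\in\Lscr:\, L'\subsetneq W,\ L'\cap V(B)=\emptyset} 2y_{L'} ,
\]
and combining this with the displayed inequality above yields \eqref{eq:L_bar_W_small}.

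The only genuinely delicate part is the bookkeeping in the second step: one has to check that the family of $\Lscr$-sets strictly inside $W$ and disjoint from $V(B)$ decomposes without overlap into the subfamilies of $\Lscr$-sets contained in the maximal such sets $\Lscr_{\bar B}$, so that the telescoping identity $2y_L+\val(L)=\sum_{L'\subseteq L}2y_{L'}$, applied at each $L\in\Lscr_{\bar B}$, counts every weight $2y_{L'}$ exactly once. The first step is a direct computation from Lemma~\ref{lemma:cost_of_nice_path} together with the trivial inclusion $V(P_{u^*,v^*})\subseteq V(B)$.
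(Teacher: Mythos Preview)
Your proof is correct and follows essentially the same approach as the paper: both use Lemma~\ref{lemma:cost_of_nice_path} to rewrite $D_W$ as $\sum_{L\subsetneq W,\,L\cap V(P_{u^*,v^*})\ne\emptyset}2y_L$, use $V(P_{u^*,v^*})\subseteq V(B)$, and identify $\sum_{L\in\Lscr_{\bar B}}(2y_L+\val(L))$ with $\sum_{L\subsetneq W,\,L\cap V(B)=\emptyset}2y_L$. The paper writes the chain starting from $\val(W)-\sum_{L\in\Lscr_{\bar B}}(\cdot)$ and working toward $D_W$, while you go in the reverse direction and spell out the laminar decomposition that the paper leaves implicit.
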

\begin{proof}
 By Lemma~\ref{lemma:cost_of_nice_path} and the choice of $u^*$ and $v^*$ we get
\begin{align*}
  \val(W) -  \sum_{L\in\Lscr_{\bar B}} (2y_L + \val(L))\ =\ &
 \sum_{L\in \Lscr:\, L\subsetneq W,\, L\cap V(B) \ne \emptyset} 2y_L\\
 \ge\ & \sum_{L\in \Lscr:\, L\subsetneq W,\, L\cap V(P_{u^*\!, v^*}) \ne \emptyset} 2y_L \\
 =\ & c(E(P_{u^*\!,v^*})) + \sum_{L\in\Lscr:\, u^*\in L\subsetneq W} y_U + \sum_{L\in\Lscr:\, v^*\in L\subsetneq W} y_U \\
 =\ & D_W(u^*,v^*) \\
 =\ & D_W.     \qedhere 
\end{align*}
\end{proof}

Now we analyze the cost of the tour $F$ in $G[W]$ computed by Algorithm~\ref{algo:reduction_veterbrate_pair}.

\begin{lemma}
\label{lemma:atsp_sttour_insubsets}
Let $\kappa,\eta\ge 0$. 
Suppose we have a $(\kappa,\eta)$-algorithm $\Ascr$ for vertebrate pairs.
Let $\Iscr=(G,\Lscr,x,y)$ be a strongly laminar ATSP instance, $c$ the induced cost function, and $W\in\Lscr\cup \{V\}$.
Then the tour $F$ in $G[W]$ returned by Algorithm~\ref{algo:reduction_veterbrate_pair} has cost at most
\begin{align*}
c(F) \ \le \ &\left(2\kappa+2 \right)\cdot\val(W) + (\kappa+\eta)\cdot(\val(W)-D_W). \\
\end{align*}
\end{lemma}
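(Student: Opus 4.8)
The plan is to prove the bound by induction on $|W|$, following the recursive structure of Algorithm~\ref{algo:reduction_veterbrate_pair}. For $|W|=1$ the algorithm returns $F=\emptyset$ and $\val(W)=D_W=0$, so the bound holds trivially. For the inductive step I would account separately for the three sources of edges in the final tour $F = F'' \cupp E(B)$: (i) the backbone $B$, (ii) the edge set $F'$ returned by the $(\kappa,\eta)$-algorithm $\Ascr$ on the vertebrate pair $(\Iscr',B)$, together with the path completions from step~4 (the paths $P_{u,v}$ inserted when uncontracting $L \in \Lscr_{\bar B}$ and the path replacing the $v_{\bar W}$-detour), and (iii) the recursively constructed tours $F_L$ for $L \in \Lscr_{\bar B}$, to which the induction hypothesis applies.

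First I would bound $c(E(B))$. Since $B$ is the closed walk obtained by appending the nice paths $P_{u^*,v^*}$ and $P_{v^*,u^*}$, Lemma~\ref{lemma:cost_of_nice_path} applied to each gives $c(E(B)) = c(E(P_{u^*,v^*})) + c(E(P_{v^*,u^*})) \le D_W(u^*,v^*) + D_W(v^*,u^*) = 2D_W \le 2\val(W)$, where the last inequality is the observation $D_W \le \val(W)$. (One could also just use $c(E(B)) \le 2\val(W)$ directly.) Next I would bound $c(F')$ using the defining inequality \eqref{eq:definition_kappa_eta_algorithm} of a $(\kappa,\eta)$-algorithm applied to $\Iscr'$: $c(F') \le \kappa\cdot\lp(\Iscr') + \eta \cdot \sum_{v} 2y'_{\{v\}}$, summing over singletons $\{v\}$ of $\Lscr'$ with $v \notin V(B)$. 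Here the key point is that the singletons of $\Lscr'$ not met by $B$ are exactly the contracted vertices $v_L$ for $L \in \Lscr_{\bar B}$, each carrying weight $y_{\{v_L\}} = y_L + \sfrac{D_L}{2}$, so that $\sum 2y'_{\{v_L\}} = \sum_{L\in\Lscr_{\bar B}}(2y_L + D_L)$; and $\lp(\Iscr')$ needs to be bounded by something like $\val(W) + \sum_{L\in\Lscr_{\bar B}} D_L$ (the $D_L$ terms come from the redefinition $y_{\{v_L\}} := y_L + \sfrac{D_L}{2}$, and in the case $W \ne V$ the redefinition $y_W := \sfrac{D_W}{2}$ contributes too, but one should check these contributions are already inside $\val(W)$ or handled consistently — I expect $\lp(\Iscr') \le \val(W)$ with the convention that $\val$ is recomputed in $\Iscr'$, but this bookkeeping is exactly the delicate point). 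The path-completion edges in step~4: when uncontracting $L\in\Lscr_{\bar B}$ we add one path $P_{u,v}$ inside $G[L]$ per visit of $v_L$, and since $F'\cupp E(B)$ is a tour each $v_L$ is visited at least once, but could be visited more — here I would use that the cost of each such $P_{u,v}$ is at most $\val(L)$ (from $D_L \le \val(L)$ and Lemma~\ref{lemma:cost_of_nice_path}, since $c(E(P_{u,v})) \le D_L(u,v) \le \val(L)$), but to control the total I need to know $v_L$ is visited \emph{exactly} once, which should follow because $\Ascr$ produces an Eulerian set where each singleton not on the backbone is entered once — or alternatively the cost is charged against the $D_L$ budget already paid in $\lp(\Iscr')$. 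The $v_{\bar W}$-replacement path costs at most $\val(W)$ by the same reasoning.

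Finally, for the recursive tours: by the induction hypothesis $c(F_L) \le (2\kappa+2)\val(L) + (\kappa+\eta)(\val(L)-D_L)$ for each $L\in\Lscr_{\bar B}$. Summing over $L$ and invoking Lemma~\ref{lemma:L_bar_W_small}, which gives $\sum_{L\in\Lscr_{\bar B}}(2y_L + \val(L)) \le \val(W) - D_W$ and hence also $\sum_L \val(L) \le \val(W)-D_W$, I would collect all the pieces. The terms should combine as: backbone $2\val(W)$; the $\Ascr$-solution and path completions contributing roughly $\kappa\cdot\val(W)$ plus $(\kappa+\eta)\sum_L D_L$-type terms plus $\sum_L \val(L)$-type terms; and the recursion contributing $(2\kappa+2)\sum_L\val(L) + (\kappa+\eta)\sum_L(\val(L)-D_L)$. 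Using $\sum_L(\val(L) + 2y_L) \le \val(W)-D_W$ to absorb the nested budgets, the whole sum should telescope down to $(2\kappa+2)\val(W) + (\kappa+\eta)(\val(W)-D_W)$.

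\textbf{Main obstacle.} The hard part will be the careful bookkeeping of the dual weights across the two contractions in steps~1 and~2 — in particular verifying that $\lp(\Iscr')$ (equivalently $\sum_{L\in\Lscr'} 2y'_L$ in the contracted instance) is bounded by $\val(W)$ plus exactly the extra $D_L$ and $D_W$ contributions introduced by the redefinitions, and that the path-completion edges inserted in step~4 are charged against these same contributions (via the bounds $c(E(P_{u,v})) \le D_L \le \val(L)$) without double-counting. Getting the coefficients of $\val(W)$ and $\val(W)-D_W$ to come out to exactly $2\kappa+2$ and $\kappa+\eta$ respectively hinges on this accounting being tight, so the inequality $D_W \le \val(W)$ and Lemma~\ref{lemma:L_bar_W_small} will need to be applied in just the right places.
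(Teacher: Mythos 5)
Your plan follows the same high‑level strategy as the paper (induction on $|W|$, bound $c(E(B)) \le 2D_W$, invoke the $(\kappa,\eta)$-algorithm guarantee, recurse and apply Lemma~\ref{lemma:L_bar_W_small}), but there is a genuine gap exactly where you flag uncertainty, and your proposed resolution of it is wrong. You worry that ``to control the total I need to know $v_L$ is visited exactly once.'' That is neither true nor needed. A vertebrate-pair solution can pass through $v_L$ several times. The paper's mechanism does not count visits at all: the values $y_{\{v_L\}} = y_L + D_L/2$ and (when $W\ne V$) $y_W = D_W/2$ are chosen precisely so that each individual uncontraction in step~4 is cost-\emph{nonincreasing}. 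Concretely, whenever a walk enters $v_L$ on an edge corresponding to $(u',u)\in\delta^-(L)$ and leaves on one corresponding to $(v,v')\in\delta^+(L)$, the costs of these two edges decrease upon uncontraction by exactly $D_L - \sum_{L'\subsetneq L,\,u\in L'} y_{L'} - \sum_{L'\subsetneq L,\,v\in L'} y_{L'}$, while the added path costs $c(E(P_{u,v})) = D_L(u,v) - \sum_{L'\subsetneq L,\,u\in L'} y_{L'} - \sum_{L'\subsetneq L,\,v\in L'} y_{L'} \le D_L - (\text{same sums})$ by Lemma~\ref{lemma:cost_of_nice_path}. So the net change is $D_L(u,v) - D_L \le 0$ \emph{per passage}, and likewise for $v_{\bar W}$ with $D_W(u,v) - D_W \le 0$. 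Your alternative guess, that the path completions are ``charged against the $D_L$ budget already paid in $\lp(\Iscr')$,'' is also not right: the charging is against the \emph{per-edge} payment of $y_{\{v_L\}}$ in $c(F')$, not a one-time LP quantity.

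The second unresolved point is your tentative bound $\lp(\Iscr') \le \val(W)$, which is not correct because of the additional mass injected by the redefinitions. The correct bound, which is what drives the arithmetic, is
\begin{equation*}
 \lp(\Iscr') \ \le \ D_W \;+\; \sum_{\substack{L\in\Lscr,\ L\subsetneq W,\\ L\cap V(B)\neq\emptyset}} 2y_L \;+\; \sum_{L\in\Lscr_{\bar B}} 2y_{\{v_L\}},
\end{equation*}
where the $D_W$ comes from $y_W = D_W/2$ when $W\ne V$ and the last sum expands to $\sum_L (2y_L + D_L)$. Combining this with the $(\kappa,\eta)$-algorithm guarantee, the per-passage cost-neutrality of step~4, the induction hypothesis, disjointness of $\Lscr_{\bar B}$, and Lemma~\ref{lemma:L_bar_W_small} then closes the argument. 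Without pinning down the lifting argument and the $\lp(\Iscr')$ bound, your proof does not go through; these are precisely what makes the coefficients $2\kappa+2$ and $\kappa+\eta$ come out exactly.
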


\begin{proof}
By induction on $|W|$. The statement is trivial for $|W|=1$ since then $c(F)=0$ (because $F \subseteq E[W] = \emptyset$).
Let now $|W|\ge 2$.
By definition of $D_W$, we have 
\begin{equation}\label{eq:cost_bound_backbone}
\begin{aligned}
 c(E(B))\ =\ c(E(P_{u^*\!,v^*})) + c(E(P_{v^*\!,u^*})) \ \le\  \textstyle 2 D_W.
\end{aligned}
\end{equation}

We now analyze the cost of $F'$ in step~3 of Algorithm~\ref{algo:reduction_veterbrate_pair}.
Since $F'$ is the output of a $(\kappa, \eta)$-algorithm applied to the vertebrate pair $(\Iscr', B)$,
we have
$
 c(F')\ \le\  \kappa \cdot \lp(\Iscr') + \eta \cdot \sum_{L\in\Lscr_{\bar B}} 2y_{\{v_L\}}.
$
Using
 $
  \sum_{L\in\Lscr_{\bar B}} 2y_{\{v_L\}} = \sum_{L\in\Lscr_{\bar B}} (2y_L + D_L)
$
 and
\begin{equation*}
 \lp(\Iscr') \ \le \ D_W + \sum_{L\in\Lscr, L\subsetneq W, L\cap V(B)\neq \emptyset} 2y_L 
                    + \sum_{L \in \Lscr_{\bar B}} 2y_{\{v_L\}}
 \end{equation*}
(where we used that we set $y_{W} := \frac{D_W}{2}$ in step~1 of the algorithm if $W\ne V$),
this implies
\begin{equation}\label{eq:cost_F_step3}
\begin{aligned}
 c(F')\ \le\ & \kappa \cdot D_W + \sum_{L\in\Lscr, L\subsetneq W, L\cap V(B)\neq \emptyset} \kappa\cdot 2y_L 
               +  \sum_{L\in\Lscr_{\bar B}} (\kappa + \eta) \cdot (2y_L + D_L)
\end{aligned}
\end{equation}
at the end of step~3.
As in \cite{SveTV18}, the lifting and all the amendments of $F'$ in step~4 do not increase the cost of $F'$
by Lemma~\ref{lemma:cost_of_nice_path} and the choice of the values $y_{\{v_L\}}$ in step~2 and $y_{W}$ in step~1.
(Here we use that whenever a Eulerian walk passes through $v_{\bar W}$, we leave and enter $W$.)

To bound the cost increase in step~5 we apply the induction hypothesis.
Adding the edges resulting from a single recursive call of Algorithm~\ref{algo:reduction_veterbrate_pair} in step~5
for some $L\in\Lscr_{\bar B}$ increases the cost by at most
$c(F_L) \le (2\kappa+2)\cdot\val(L) + (\kappa+\eta) (\val(L) - D_L)$.
Using~\eqref{eq:cost_F_step3}, we  obtain the following bound:
\begin{align*}
 c(F'')\ \le\ & \kappa \cdot D_W + \sum_{L\in\Lscr, L\subsetneq W, L\cap V(B)\neq \emptyset} \kappa\cdot 2y_L \\
                 &   +  \sum_{L\in\Lscr_{\bar B}}
                 \left(\big.(2\kappa+2)\cdot\val(L) + (\kappa + \eta) \cdot (2y_L + \val(L))\right)\\
        \le\ & \kappa \cdot D_W + \kappa \cdot\val(W) \\
            & +  \sum_{L\in\Lscr_{\bar B}}
              \left(\big.(\kappa+2)\cdot\val(L) + (\kappa + \eta) \cdot (2y_L + \val(L)) \right)\\
        \le\ & \kappa \cdot D_W + \kappa \cdot \val(W) \\
           & + (\kappa+2)\cdot(\val(W) -D_W)  + (\kappa + \eta) \cdot (\val(W) - D_W) \\
        =\ & (2\kappa+ 2) \cdot \val(W) - 2 \cdot D_W + (\kappa + \eta) \cdot (\val(W) - D_W),
\end{align*}
where we used the definition of $\Lscr_{\bar B}$ for the second inequality
and Lemma~\ref{lemma:L_bar_W_small} for the third inequality;
note that the elements of $\Lscr_{\bar B}$ are pairwise disjoint.
Together with \eqref{eq:cost_bound_backbone} this implies the claimed bound on $c(F)$.
\end{proof}
Now we prove the main result of this section.
\begin{theorem}\label{thm:atspwithvertebratepairs}
Let $\kappa,\eta \ge 0$. 
Suppose we have a polynomial-time $(\kappa,\eta)$-algorithm for vertebrate pairs.
Then there is a polynomial-time algorithm that computes a solution of cost at most 
\begin{equation*}
 3\kappa+\eta+2
\end{equation*}
times the value of \eqref{eq:atsp_lp} for any given ATSP instance.
\end{theorem}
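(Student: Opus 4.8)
The plan is to apply Algorithm~\ref{algo:reduction_veterbrate_pair} with $W = V$ and the given $(\kappa,\eta)$-algorithm $\Ascr$ for vertebrate pairs. First I would invoke Theorem~\ref{thm:stronglylaminar}: it suffices to produce, for every strongly laminar ATSP instance $\Iscr = (G,\Lscr,x,y)$, a tour of cost at most $(3\kappa+\eta+2)\cdot\lp(\Iscr) = (3\kappa+\eta+2)\cdot c(x)$ in polynomial time. So fix such an instance and run Algorithm~\ref{algo:reduction_veterbrate_pair} on input $(\Iscr, V, \Ascr)$. By Lemma~\ref{lemma:algo_feasible_reduction_vertebrate_pairs} this runs in polynomial time and returns a tour $F$ in $G[V] = G$, and by Lemma~\ref{lemma:atsp_sttour_insubsets} applied with $W = V$ we get
\[
 c(F) \ \le\ (2\kappa+2)\cdot\val(V) + (\kappa+\eta)\cdot(\val(V) - D_V) \ \le\ (3\kappa+\eta+2)\cdot\val(V),
\]
using $D_V \ge 0$ (which holds since $D_V(u,v)\ge 0$ for all $u,v$, as all three summands in the definition of $D_V(u,v)$ are nonnegative — note $c(E(P_{u,v}))\ge 0$ because $c\ge 0$).

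The remaining step is to relate $\val(V) = \sum_{L\in\Lscr} 2y_L$ to $\lp(\Iscr) = c(x)$. This is exactly LP strong duality together with the structure of a strongly laminar instance. Since $a\equiv 0$, the dual objective of $(0,y)$ is $\sum_{\emptyset\ne U\subsetneq V} 2y_U = \sum_{L\in\Lscr} 2y_L = \val(V)$; and by the remark following Definition~\ref{def:strongly_laminar}, $x$ and $(0,y)$ are optimum primal and dual solutions, so $\val(V) = c(x) = \lp(\Iscr)$. (Concretely, $c(x) = \sum_e c(e) x_e = \sum_e x_e \sum_{L\in\Lscr:\, e\in\delta(L)} y_L = \sum_{L\in\Lscr} y_L\, x(\delta(L)) = \sum_{L\in\Lscr} 2y_L$, where the last equality uses property~(iii) in Definition~\ref{def:strongly_laminar}, namely $x(\delta(L)) = 2$ for all $L\in\Lscr$.) Plugging this in gives $c(F) \le (3\kappa+\eta+2)\cdot\lp(\Iscr)$, and composing with Theorem~\ref{thm:stronglylaminar} completes the proof.

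There is essentially no hard part here: all the real work has been front-loaded into Lemma~\ref{lemma:atsp_sttour_insubsets} (the cost analysis of the recursion) and Theorem~\ref{thm:stronglylaminar} (the reduction to strongly laminar instances). The only points that require a moment's care are (a) checking $D_V \ge 0$ so that the $(\kappa+\eta)(\val(V)-D_V)$ term can be bounded by $(\kappa+\eta)\val(V)$ — this is immediate from the nonnegativity of $c$ and of $y$ — and (b) the identity $\val(V) = c(x)$, which is the one-line complementary-slackness computation above using $x(\delta(L))=2$ for $L\in\Lscr$. So the proof is really just "assemble Theorem~\ref{thm:stronglylaminar}, Lemma~\ref{lemma:algo_feasible_reduction_vertebrate_pairs}, and Lemma~\ref{lemma:atsp_sttour_insubsets}, then simplify $(2\kappa+2)+(\kappa+\eta) = 3\kappa+\eta+2$."
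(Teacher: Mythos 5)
Your proof is correct and follows essentially the same route as the paper: invoke Theorem~\ref{thm:stronglylaminar} to reduce to strongly laminar instances, run Algorithm~\ref{algo:reduction_veterbrate_pair} with $W=V$, and combine Lemma~\ref{lemma:algo_feasible_reduction_vertebrate_pairs}, Lemma~\ref{lemma:atsp_sttour_insubsets}, $D_V\ge 0$, and the identity $\val(V)=\lp(\Iscr)$. The only cosmetic difference is the order in which you apply $D_V\ge 0$ and $\val(V)=\lp(\Iscr)$, and you spell out the latter identity in more detail than the paper does.
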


\begin{proof}
By Theorem~\ref{thm:stronglylaminar} it suffices to show that there is a polynomial-time algorithm that computes a solution of cost at most 
$ (3\kappa+\eta+2) \cdot\lp(\Iscr)$ for any given  strongly laminar ATSP instance $\Iscr$.
Given such an instance, we apply Algorithm~\ref{algo:reduction_veterbrate_pair} to $W=V$.
By Lemma~\ref{lemma:algo_feasible_reduction_vertebrate_pairs} and Lemma~\ref{lemma:atsp_sttour_insubsets}, this algorithm computes 
in polynomial time a tour of cost at most 
\begin{align*}
c(F) \ \le \ &\left(2\kappa+2 \right)\cdot\val(V) + (\kappa+\eta)\cdot(\val(V)-D_V) \\
      \ =\ & \left(2\kappa+2 \right)\cdot\lp(\Iscr) + (\kappa+\eta)\cdot(\lp(\Iscr)-D_V) \\
      \ \le\ & \left(3\kappa+\eta+2\right) \cdot \lp(\Iscr).
      \qedhere
\end{align*}
\end{proof}

In the following we will present a $(2,14+\epsilon)$-algorithm for vertebrate pairs, improving on the $(2,37+\epsilon)$-algorithm by 
Svensson, Tarnawski, and V\'egh~\cite{SveTV18}.
Using their vertebrate pair algorithm, Theorem~\ref{thm:atspwithvertebratepairs} immediately implies a $(45+\epsilon)$-approximation algorithm for ATSP.

\begin{remark}\label{remark:improve_approx_ratio}
One could achieve a slightly better overall approximation ratio for ATSP by the following modifications.
Change Algorithm~\ref{algo:reduction_veterbrate_pair} and generalize the notion of vertebrate pairs as follows.
First, in the definition of a vertebrate pair allow that the backbone is not necessarily Eulerian but 
could also be an $s$-$t$-path for some $s,t\in V$.
In this case the solution for the vertebrate pair would again be an Eulerian multi-set $F$ of edges 
such that $(V,E(B) \cupp F)$ is connected; then $E(B) \cupp F$ is an $s$-$t$-tour.
The algorithm for vertebrate pairs that we will describe in later sections extends to this more general version.

Then fix a constant $\delta \in [0,1]$ depending on $\kappa$ and $\eta$ and 
change step~5 of Algorithm~\ref{algo:reduction_veterbrate_pair} as follows.
If in step~4 we added a $u$-$v$-path $P_{u,v}$ in $G[L]$ with $D_L(u,v) \ge (1-\delta) \cdot D_L$ for a set $L\in \Lscr_{\bar B}$,
then we use this path as a backbone in the recursive call of Algorithm~\ref{algo:reduction_veterbrate_pair}
instead of constructing a new backbone.
This saves the cost $2D_L$ of the backbone in the recursive call, but we also pay some additional cost.
Because the total $y$-weight of the sets in $\Lscr$ visited by $P$ is not $D_L$ (as with the old choice of the backbone) but slightly less, 
we obtain a worse bound in Lemma~\ref{lemma:L_bar_W_small}.
If for a set $L\in \Lscr_{\bar B}$ we did not add a path in $G[L]$ of length at least $(1-\delta) \cdot D_L$ in step~4,
then we do not change the recursive call of Algorithm~\ref{algo:reduction_veterbrate_pair} in step~5.
In this case we gain because the bound on the cost of the edges that we added in step~4 is not tight.

Optimizing $\delta$ depending on $\kappa$ and $\eta$ leads to an improvement of the overall approximation ratio.
However, the improvement is small. 
We will later show that there is a polynomial-time $(2,14 +\epsilon)$-algorithm for vertebrate pairs for any fixed $\epsilon > 0$.
For $\kappa = 2$ and $\eta > 14$ the improvement is less than $0.2$, and it is less than $1$ for any $\kappa$ and~$\eta$.
\end{remark}

\section{Computing subtour covers}\label{sect:subtour_cover}

Very roughly, the algorithm that Svensson, Tarnawski, and V\'egh~\cite{SveTV18} use 
for vertebrate pairs follows the cycle cover approach by Frieze, Galbiati and Maffioli~\cite{FriGM82}. 
The algorithm by Frieze, Galbiati and Maffioli always maintains an Eulerian \mbox{(multi-)}set $H$ of edges and repeatedly computes 
another Eulerian \mbox{(multi-)}set $F$ of edges that enters and leaves every connected component of $(V,H)$
at least once. Then it adds the edges of $F$ to $H$ and iterates until $(V,H)$ is connected.

In order to achieve a constant approximation ratio, the algorithm for vertebrate pairs and its analysis are much more involved.
The main algorithm is essentially due to Svensson~\cite{Sve15}, and we describe an improved version of this algorithm 
in Section~\ref{sect:solve_vertebrate_pairs}.

In this section we discuss a sub-routine called by Svensson's  algorithm.
The sub-routine we present here is an improved version of an algorithm by Svensson, Tarnawski, and V\'egh~\cite{SveTV18}.
It computes solutions to the \emph{Subtour Cover} problem, which we define below.
One can view the Subtour Cover problem as the analogue of the cycle cover problem 
that is solved in every iteration of the algorithm by Frieze, Galbiati and Maffioli.
However, we do not only require that the multi-set $F$ of edges that we compute is Eulerian and enters and leaves every connected
component of $(V,H)$, but require in addition that every component of $(V,F)$ 
that crosses the boundary of a set $L\in\Lscr_{\ge 2}$ is connected to the backbone $B$.
Recall that $\Lscr_{\ge 2} = \{ L \in \Lscr : |L| \ge 2\}$.

\begin{definition}\label{def:subtour_cover}
An instance of \emph{Subtour Cover} consists of a vertebrate pair $(\Iscr,B)$ with $\Iscr=(G,\Lscr,x,y)$
and a multi-subset $H$ of $E[V\setminus V(B)]$ such that
\begin{itemize}
 \item $(V,H)$ is Eulerian, and
 \item $H\cap \delta(L) = \emptyset$ for all $L\in\Lscr_{\ge 2}$.
\end{itemize}

A solution to such an instance $(\Iscr,B, H)$ is a multi-set $F$ of edges such that the following three conditions 
are fulfilled:
\begin{itemize}
 \item[(i)] $(V,F)$ is Eulerian.
 \item[(ii)] $\delta(W)\cap F\ne\emptyset$ for all vertex sets $W$ of connected components of $(V\setminus V(B),H)$.
 \item[(iii)] If for a connected component $D$ of $(V, F)$ there is a set $L\in \Lscr_{\ge 2}$ with $E(D)\cap \delta(L) \ne \emptyset$,
       then $V(D) \cap V(B) \ne \emptyset$.
\end{itemize}
\end{definition}
Subtour Cover is very similar to the notions of Subtour Partition Cover from \cite{SveTV18} and Local Connectivity ATSP from \cite{Sve15}. 
The difference between instances of Subtour Cover and Subtour Partition Cover 
is that we require that $H\cap \delta(L) = \emptyset$ for all $L\in\Lscr_{\ge 2}$ in Definition~\ref{def:subtour_cover}.
Moreover, a solution for Subtour Partition Cover is not required to fulfill condition~$(iii)$.
However, the instances to which Svensson, Tarnawski, and V\'egh apply their algorithm for Subtour Partition Cover
also fulfill the definition of Subtour Cover and the solutions computed by this algorithm also fulfill condition~$(iii)$.
We include these properties explicitly in Definition~\ref{def:subtour_cover} because 
we will exploit them for some improvement in Svensson's algorithm (see Section~\ref{sect:solve_vertebrate_pairs}).

For the analysis of Svensson's algorithm for vertebrate pairs it is not sufficient to have only a bound on the total
cost of a solution to Subtour Cover.
In this section we explain an algorithm that computes solutions to Subtour Cover 
that fulfill certain ``local'' cost bounds.
More precisely, the goal of this section is to show the following theorem, 
where we write $y_v := y_{\{v\}}$ if $\{v\}\in \Lscr$ and $y_v := 0$ otherwise.

\begin{theorem}
\label{thm:subtourpartitioncover}
There is a polynomial-time algorithm for Subtour Cover that computes for every instance $(\Iscr,B,H)$ a solution $F$ such that
\begin{equation}\label{eq:backbone_component_light1}
 c(F) \ \le \ 2\cdot\lp(\Iscr)+ \sum_{v\in V\setminus V(B)} 2y_{v},
\end{equation}
and for every connected component $D$ of $(V,F)$ with $V(D)\cap V(B)=\emptyset$ we have
\begin{equation}\label{eq:def_light1}
   c(E(D)) \ \le \ 3 \cdot \sum_{v\in V(D)} 2 y_{v}.
\end{equation} 
\end{theorem}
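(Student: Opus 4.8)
The plan is to construct the edge set $F$ by rounding an integral circulation in a suitable flow network built from the fractional solution $x$, following the witness-flow approach of Svensson, Tarnawski, and V\'egh, but with a carefully chosen witness flow that has minimality properties which force the local bound~\eqref{eq:def_light1}. First I would set up the flow network: split the fractional solution $x$ into its part inside $V(B)$ and its part outside; since $(V,H)$ is Eulerian with $H \cap \delta(L) = \emptyset$ for all $L \in \Lscr_{\ge 2}$, every connected component of $(V \setminus V(B), H)$ lies inside a single minimal set of $\Lscr_{\ge 2} \cup \{V\}$. I would add, for each such component $W$, a demand forcing at least one unit of flow to cross $\delta(W)$, and I would route this excess along paths of the LP solution $x$ toward the backbone. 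The key device is the \emph{witness flow}: a fractional flow supported on $x/2$-paths that certifies, for each vertex $v \notin V(B)$, a path from $v$ to $B$; the cost of these witness paths is bounded in terms of the dual because each such path can be taken to be nice (Lemma~\ref{lemma:enterandleaveonlyonce}) and Lemma~\ref{lemma:cost_of_nice_path} controls its cost by the $y$-weight of the sets it enters, which in turn is paid for by the $2y_v$ terms of the endpoints and by $\lp(\Iscr)$.

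Next I would argue feasibility of the rounding. The circulation in the network, rounded to an integral one, yields an Eulerian multiset $F$ with $(V,F)$ Eulerian (condition~(i)); the demands across component boundaries give condition~(ii); and condition~(iii) follows because the witness flow routes every unit of excess leaving a set $L \in \Lscr_{\ge 2}$ all the way to the backbone, so any component of $(V,F)$ that crosses some $\delta(L)$ with $L \in \Lscr_{\ge 2}$ must reach $V(B)$. For the global cost bound~\eqref{eq:backbone_component_light1}, I would account separately: the ``base'' part of $F$ costs at most $c(x) \le \lp(\Iscr)$ (or a factor coming from rounding $x$, giving the $2\cdot\lp(\Iscr)$), and the additional witness-path part contributes the $\sum_{v \in V\setminus V(B)} 2y_v$ term, using that each vertex outside $B$ is charged by a nice path whose cost is absorbed into its own $2y_v$ together with dual weight already counted in $\lp(\Iscr)$.

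The main obstacle — and the place where the improvement over~\cite{SveTV18} lives — is establishing the local bound~\eqref{eq:def_light1} for components $D$ with $V(D) \cap V(B) = \emptyset$. Here one must use a \emph{minimal} witness flow: among all witness flows of minimum cost, pick one that is also minimal with respect to the flow it sends across the boundaries of small sets, so that no flow is ``wasted.'' The claim is that for a component $D$ disjoint from $B$, the total cost $c(E(D))$ is at most the cost of the witness paths emanating from the vertices of $D$, and minimality ensures these paths stay within the laminar sets containing $V(D)$ and enter each relevant set at most a bounded number of times; applying Lemma~\ref{lemma:cost_of_nice_path} set by set and summing telescopes to $3 \sum_{v \in V(D)} 2y_v$. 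I would structure this as: (1) show each edge of $D$ is ``paid for'' by the witness flow passing over it; (2) use minimality to bound how many times the witness flow crosses each $\delta(L)$; (3) charge the dual weight $2y_L$ of each crossed set to the $2y_v$ of a vertex of $D$ inside $L$, with the factor $3$ absorbing the at-most-two-crossings-plus-endpoint overhead. The delicate point is proving step~(2) — that minimality of the witness flow really does cap the crossings — since without it the local cost could blow up; I expect this to require a careful exchange/uncrossing argument on the support of the witness flow, exactly as in the circulation-rounding lemmas, and it is the technical heart of the section.
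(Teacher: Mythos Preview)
Your proposal has the right keywords---witness flow, minimality, circulation rounding---but the way you assemble them is not how the argument works, and several steps as you describe them would not go through.

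First, you are missing the \emph{split graph}, which is the central device. The paper does not split $x$ into a part inside $V(B)$ and a part outside; it builds a two-level copy $G^{01}$ of $G$ in which forward edges live only on the lower level, backward edges only on the upper level, and only backbone vertices have an edge from the lower to the upper level. Any cycle in $G^{01}$ whose image in $G$ crosses $\delta(L)$ for some $L\in\Lscr_{\ge 2}$ must contain both a forward and a backward edge, hence must use an up-edge at a backbone vertex (Lemma~\ref{lemma:circuitsconnecttobackbone}). This structural fact is what gives condition~(iii); your claim that ``the witness flow routes every unit of excess leaving $L$ all the way to the backbone'' is not the mechanism.

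Second, the witness flow is not a collection of nice $v$--$B$ paths. It is a flow $f\le x$ in $G$ that equals $x$ on forward edges, is zero on backward edges, and has nonnegative excess at non-backbone vertices (Definition~\ref{def:witness_flow}); its only role is to specify how $x$ lifts to a circulation $z$ in $G^{01}$ (the $f$-part on the lower level, $x-f$ on the upper level). Nice paths and Lemmas~\ref{lemma:enterandleaveonlyonce}--\ref{lemma:cost_of_nice_path} are not used anywhere in this section.

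Third, and most importantly, your mechanism for the local bound~\eqref{eq:def_light1} is wrong. There is no telescoping of nice-path costs. For a component $D$ disjoint from $B$, property~(iii) gives $c(E(D))=\sum_{v\in V(D)}|\delta^-_F(v)|\cdot 2y_v$, so the whole task is to bound in-degrees by~$3$. Before the final path-insertion step one shows $|\delta^-_{\bar F}(v)|\le 2$ for $v\in V(D)$ with $y_v>0$ (Lemma~\ref{lemma:bar_F_light}); this follows from the rounding constraint $\bar z^*(\delta^-(v^1))\le\lceil 2\bar z(\delta^-(v^1))\rceil\le 2$ together with the fact that $\bar f^*$ vanishes on $E(D)$ (Lemma~\ref{lemma:zero_flow_bar_f}). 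The latter is where the special witness flow enters: one picks $f$ with acyclic support minimizing $\sum_i f(\delta(W_i))$ (Lemma~\ref{lemma:improved_witness_flow}), and the hard lemma (Lemma~\ref{lemma:cycle_free_maintained}) is that after rerouting through the auxiliary vertices $a_i$---placed at the first strongly connected component $\hat W_i$ of $G_f[W_i]$---the witness flow \emph{remains} acyclic. An acyclic flow that restricts to a circulation on a closed subgraph must be identically zero there; that is the punchline. Your proposed invariant ``minimality caps the number of $\delta(L)$-crossings'' is not what is proved or used; the correct invariant is acyclicity of the rerouted witness flow, and its proof is an augmenting-walk argument in the residual graph $G_f$, not an uncrossing.
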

Svensson, Tarnawski, and V\'egh~\cite{SveTV18} proved a similar statement, but instead of~\eqref{eq:def_light1}
they showed the weaker bound $ c(E(D)) \ \le \ 4 \cdot \sum_{v\in V(D)} 2 y_{v}$.

The reason why we need bounds on the cost of single connected components rather than the total Subtour Cover solution is the following.
When Svensson's algorithm computes a solution $F$ to Subtour Cover, it does not include all edges of $F$ in the tour
that it computes but only those edges that are part of some carefully selected connected components of $(V,F)$.

In the rest of this section we prove Theorem~\ref{thm:subtourpartitioncover}.
We first give a brief outline.

\subsection{Outline}

Let $W_1, \dots, W_k$ be the vertex sets of the connected components of $(V\setminus V(B), H)$.
To find a solution $F$ that fulfills the properties $(i)$ and $(ii)$ we would like to find an integral circulation $x^*$ in $G$
that satisfies $x^*(\delta(W_i)) \ge 2$ for $i=1,\dots,k$.
Note that $x$ is a fractional circulation with this property.
However, if we include the constraints $x^*(\delta(W_i)) \ge 2$ in the linear program describing a minimum-cost circulation problem,
we will in general not obtain an integral optimum solution.
Svensson~\cite{Sve15} suggested the following.
We can introduce new vertices $a_i$ for $i=1,\dots,k$ and reroute one unit of flow going through the set $W_i$ through the new vertex $a_i$.
Then we can add constraints $x^*(\delta^-(a_i)) = 1$ to our flow problem and maintain integrality.
After solving the minimum-cost circulation problem, we can map the one unit of flow through $a_i$ back to some flow entering and leaving $W_i$ 
(with some small additional cost).

The bound \eqref{eq:backbone_component_light1} is obtained by minimizing the total cost of the circulation.
The most difficult properties to achieve are $(iii)$ and \eqref{eq:def_light1}.
If we have $(iii)$, it is relatively easy to obtain a bound of a similar form as \eqref{eq:def_light1} (with some other constant):
we can add constraints of the form $x^*(\delta^-(v)) \le \lceil x(\delta^-(v)) \rceil$ to our minimum-cost circulation problem.
Because of $(iii)$ and the definition of the induced cost function $c$, this implies a bound similar to  \eqref{eq:def_light1}.

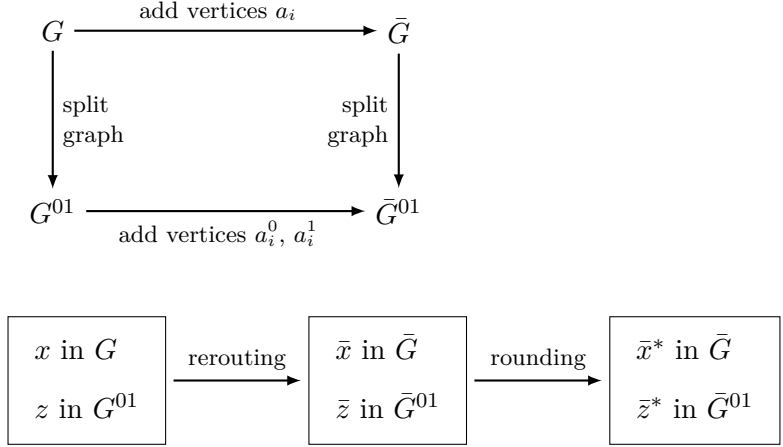
\begin{figure}
\begin{center}
 \begin{tikzpicture}
  \begin{scope}[shift={(1,1.3)}]
  \node (G) at (-0.3,2.4) {$G$};
  \node (bG) at (4.3,2.4) {$\bar G$};
  \node (Gs) at (-0.3,0) {$G^{01}$};
  \node (bGs) at (4.3,0) {$\bar G^{01}$};
  \begin{scope}[thick, >=latex]
     \draw[->] (G) -- (bG);
     \draw[->] (G) -- (Gs);
     \draw[->] (bG) -- (bGs);
     \draw[->] (Gs) -- (bGs);
  \end{scope}
  \node[above]  () at (1.9,2.4) {\footnotesize add vertices $a_i$};
  \node[below]  () at (1.9,0) {\footnotesize add vertices $a_i^0$, $a_i^1$};
  \node[right] () at (-0.3,1.4) {\footnotesize split};
  \node[right] () at (-0.3,1.0) {\footnotesize graph};
  \node[left] () at (4.3,1.4) {\footnotesize split};
  \node[left] () at (4.3,1.0) {\footnotesize graph};
\end{scope}
 \node[draw, align=left, anchor=north west,text height=6ex, inner sep=10pt, minimum width = 2, outer sep =3pt] (a) at (0,0) {$x$ in $G$\\[3mm] $z$ in $G^{01}$};
 \node[draw, align=left, anchor=north west,text height=6ex, inner sep=10pt, minimum width = 2, outer sep =3pt] (b) at (4,0) {$\bar x$ in $\bar G$\\[3mm] $\bar z$ in $\bar G^{01}$};
 \node[draw, align=left, anchor=north west,text height=6ex, inner sep=10pt, minimum width = 2, outer sep =3pt] (c) at (8,0) {$\bar x^*$ in $\bar G$\\[3mm] $\bar z^*$ in $\bar G^{01}$};
 \draw[->, thick, >=latex] (a) -- node[above] {\footnotesize rerouting} (b);
 \draw[->, thick, >=latex] (b) -- node[above] {\footnotesize rounding} (c);
 \end{tikzpicture}
\end{center}
\caption{\label{fig:overview_circulations}
Overview of the different graphs and circulations occurring in the proof of Theorem~\ref{thm:subtourpartitioncover}.
The integral circulation $\bar x^*$ corresponds to an Eulerian (multi) edge set $\bar F$ in~$\bar G$.
}
\end{figure}

To achieve property~$(iii)$, Svensson, Tarnawski, and V\'egh~\cite{SveTV18} introduced the concept of the \emph{split graph}.
This graph contains two copies of every vertex of the original graph $G$.
Every Eulerian edge set in the split graph can be projected to an Eulerian edge set in the original graph $G$.
The crucial property of the split graph is that every cycle that contains an edge corresponding to $e\in \delta_G(L)$ 
for some $L\in \Lscr_{\ge 2}$ also contains a copy of a backbone vertex $v\in V(B)$.
Therefore, if we round a circulation in the split graph (and then project the solution back to $G$),
we will automatically fulfill property~$(iii)$.

While every circulation in the split graph can be projected to a circulation in the original graph $G$,
we cannot lift any arbitrary circulation in $G$ to a circulation in the split graph.
However, Svensson, Tarnawski, and V\'egh~\cite{SveTV18} showed that this is possible for every solution $x$ to \eqref{eq:atsp_lp}.
For this, they use a so-called \emph{witness flow}.
We will choose the witness-flow with a certain minimality condition to achieve the bound \eqref{eq:def_light1}, improving on the 
Subtour Cover algorithm from~\cite{SveTV18}.
To obtain the improved bound we also choose the flow that is rerouted through the auxiliary vertices $a_i$ more carefully.

Because we cannot lift an arbitrary circulation in $G$ to a circulation in the split graph $G^{01}$ of $G$,
we proceed in the following order.
First, we lift the circulation $x$ to a circulation $z$ in the split graph $G^{01}$.
Then we add the auxiliary vertices $a_i$ to $G$ and add the two corresponding copies $a_i^0$ and $a_i^1$ to the split graph $G^{01}$.
In the resulting split graph $\bar G^{01}$ we reroute flow through the new auxiliary vertices $a_i^0, a_i^1$
and round our fractional circulation to an integral one.
See Figure~\ref{fig:overview_circulations}.

We now explain our algorithm in detail.

\subsection{The split graph}\label{sect:split_graph}

In this section we explain the concept of the \emph{split graph} due to Svensson, Tarnawski, and V\'egh (in earlier versions of \cite{SveTV18}).
This is an important tool for achieving property $(iii)$ of a solution to Subtour Cover.
This property will also be crucial in the proof of~\eqref{eq:def_light1}.
For defining the split graph, we number the non-singleton elements of our laminar family $\Lscr$ as follows.
Number $\Lscr_{\ge 2}\cup\{V\}=\{L_1,\ldots,L_{r_{\max}}\}$ such that $|V| =  |L_1|\ge\cdots\ge |L_{r_{\max}}|\ge 2$. 
Let $r(v):=\max\{i:v\in L_i\}$, and call an edge
$e=(v,w)\in E$ \emph{forward} if $r(v)<r(w)$, \emph{backward} if $r(v)>r(w)$, and 
\emph{neutral} if $r(v)=r(w)$. See Figure~\ref{fig:define_forward_backwards}.

\begin{figure}
\begin{center}
 \begin{tikzpicture}[scale=0.7]
  \tikzset{BackwardEdge/.style={->, >=latex, line width=1pt,darkgreen}}
  \tikzset{ForwardEdge/.style={->, >=latex, line width=1pt,darkred}}
  \tikzset{NeutralEdge/.style={->, >=latex, line width=1pt,gray}}
 
  \fill[blue, opacity=0.2] (3.6,4) ellipse (3.2 and 2.7) {};
  \fill[white, opacity=1]  (2.5,5) ellipse (1.2 and 0.7) {};
  \fill[white, opacity=1] (4,3.1) ellipse (1.7 and 1.2) {};
  
  \node[blue, opacity=0.6] () at (1.3,0.5) {$r(v) = 2$};

  \tikzset{L/.style={thick}}
  \draw[L] (7,4) ellipse (7 and 3.8) {};
  \node[below=1pt] () at (7,7.8) {$V=L_1$};
  
  \draw[L] (3.6,4) ellipse (3.2 and 2.7) {};
  \node[below=1pt] () at (3.6,6.7) {$L_2$};
  
  \draw[L] (10.5,4) ellipse (3 and 2.5) {};
  \node[below=1pt] () at (10.5,6.5) {$L_3$};
  
  \draw[L] (2.5,5) ellipse (1.2 and 0.7) {};
  \node[below=1pt] () at (2.5,5.7) {$L_6$};
  
  \draw[L] (4,3.1) ellipse (1.7 and 1.2) {};
  \node[below=1pt] () at (4,4.3) {$L_4$};
  
  \draw[L] (3.3,3) ellipse (0.5 and 0.5) {};
  \node[below=1pt] () at (3.3,3.5) {$L_9$};
  
  \draw[L] (10,4.5) ellipse (1.5 and 1.0) {};
  \node[below=1pt] () at (10,5.5) {$L_5$};
  
  \draw[L] (4.8,3) ellipse (0.5 and 0.5) {};
  \node[below=1pt] () at (4.8,3.5) {$L_{11}$};
  
  \draw[L] (12,3) ellipse (0.6 and 0.6) {};
  \node[below=1pt] () at (12,3.5) {$L_8$};
  
  \draw[L] (10.5, 4.2) ellipse (0.5 and 0.5) {};
  \node[below=1pt] () at (10.5,4.7) {$L_{10}$};
  
  \draw[L] (10.3,2.5) ellipse (0.8 and 0.6) {};
  \node[below=1pt] () at (10.3,3.1) {$L_{7}$};
  
 \end{tikzpicture}
\end{center}
\caption{The laminar family $\Lscr \cup \{V\} = \{L_1, \dots, L_{11}\}$.
In this example, the set $L_2 \setminus (L_6 \cup L_4)$ is the set of all vertices $v$ with $r(v)=2$;
it is shown in blue.}
\label{fig:define_forward_backwards}
\end{figure}
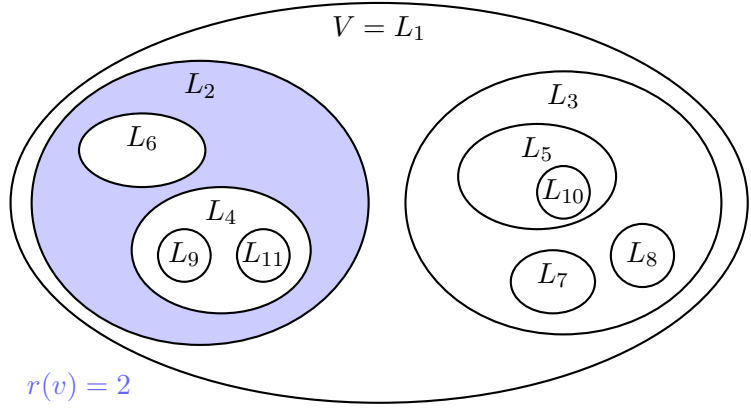
We will need the following simple observation about cycles in $G$.
A \emph{cycle} is a connected digraph in which every vertex has in-degree and out-degree exactly 1.
\begin{lemma}\label{lemma:simple_berservation_froward_edges}
 Let $C$ be the edge set of a cycle.
 If there exists a set $L\in \Lscr_{\ge 2}$ with $C \cap \delta(L) \ne \emptyset$, then $C$ contains a forward edge and a backward edge. 
\end{lemma}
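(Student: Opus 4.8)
The plan is to reduce the statement to the claim that $C$ contains at least one \emph{non-neutral} edge, and then to derive a contradiction from the assumption that every edge of $C$ is neutral.

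For the reduction, write $C$ as a cyclic sequence of vertices $v_0, v_1, \dots, v_{k-1}, v_k = v_0$. Summing the differences $r(v_{i+1}) - r(v_i)$ over $i = 0, \dots, k-1$ telescopes to $0$. Hence the total positive contribution (coming from the forward edges of $C$) equals the absolute value of the total negative contribution (coming from the backward edges of $C$). In particular, $C$ has a forward edge if and only if it has a backward edge, so it is enough to exhibit a single non-neutral edge of $C$, equivalently to show that $r$ is not constant on $V(C)$.

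For the contradiction, suppose every edge of $C$ is neutral, so $r$ is constant on $V(C)$, say $r(v) = j$ for all $v \in V(C)$. By definition of $r$ this means $V(C) \subseteq L_j$ and $V(C) \cap L_i = \emptyset$ for every $i > j$ (since $r(v) = j$ forbids $v \in L_i$ with $i > j$). Now let $L \in \Lscr_{\ge 2}$ be the set with $C \cap \delta(L) \ne \emptyset$; since $\delta(V) = \emptyset$, we have $L = L_m$ for some $m \ge 2$. If $m > j$, then $V(C) \cap L_m = \emptyset$, so no edge of $C$ has an endpoint in $L_m$, hence $C \cap \delta(L_m) = \emptyset$. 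If $m = j$, then $V(C) \subseteq L_j = L_m$, so again $C \cap \delta(L_m) = \emptyset$. If $m < j$, then $|L_m| \ge |L_j|$ by the chosen numbering, so by laminarity one of $L_m = L_j$ (excluded since $m \ne j$ and the $L_i$ are distinct), $L_m \subsetneq L_j$ (excluded since $|L_m| \ge |L_j|$), $L_j \subseteq L_m$, or $L_m \cap L_j = \emptyset$ holds; in the latter two cases $V(C) \subseteq L_j$ is either entirely inside or entirely outside $L_m$, so once more $C \cap \delta(L_m) = \emptyset$. In every case we contradict $C \cap \delta(L) \ne \emptyset$, so $r$ is non-constant on $V(C)$, and by the reduction above $C$ contains both a forward and a backward edge.

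The argument is elementary; the only steps needing care are the laminarity case distinction (using that the $L_i$ are pairwise distinct sets numbered by non-increasing cardinality) and the observation that $r(v) = j$ excludes $v$ from all $L_i$ with $i > j$. The conceptual heart is the telescoping identity upgrading ``one non-neutral edge'' to ``both a forward and a backward edge'', but it is immediate once stated.
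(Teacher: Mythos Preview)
Your proof is correct and follows essentially the same approach as the paper: both arguments establish that $r$ is non-constant on $V(C)$ and then use the telescoping of $r$-differences around the cycle to conclude that forward and backward edges must both occur. The only difference is stylistic: the paper picks an edge $(v,w)\in C\cap\delta^+(L)$ and argues directly that $L_{r(v)}\subseteq L$ (hence $w\notin L_{r(v)}$ and $r(v)\ne r(w)$), whereas you reach the same conclusion by contradiction via an explicit laminarity case distinction on the index of $L$.
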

\begin{proof}
 Because $C$ is Eulerian there exists an edge $e=(v,w) \in C \cap \delta^+(L)$.
 By the choice of the numbering $L_1,\ldots,L_{r_{\max}}$, we have $L_{r(v)} \subseteq L$ and hence $w\notin L_{r(v)}$.
 Therefore, the cycle with edge set $C$ contains vertices $v,w$ with $r(v) \ne r(w)$.
 Hence, $C$ contains both a forward and a backward edge. 
\end{proof}
Next we define the \emph{split graph} $G^{01}$ of $G$ and extend the cost function $c$ to it. 
\begin{itemize}
\item For every vertex $v\in V$ it contains two vertices $v^0$ and $v^1$ (on the lower and upper level).
\item For every $v\in V$ it contains an edge $e_v^{\downarrow}=(v^1,v^0)$ with $c(e_v^{\downarrow})=0$.
\item For every $v\in V(B)$ it also contains an edge $e_v^{\uparrow}=(v^0,v^1)$ with $c(e_v^{\uparrow})=0$.
\item For every forward edge $e=(v,w)\in E$, the split graph contains an edge $e^0=(v^0,w^0)$ with $c(e^0)=c(e)$.
\item For every backward edge $e=(v,w)\in E$, the split graph contains an edge $e^1=(v^1,w^1)$ with $c(e^1)=c(e)$.
\item For every neutral edge  $e=(v,w)\in E$, the split graph contains edges $e^0=(v^0,w^0)$ and $e^1=(v^1,w^1)$ with $c(e^0)=c(e^1)=c(e)$.
\end{itemize}
We write $V^0 := \{v^0 : v\in V\}$ and call $G^{01}[V^0]$ the \emph{lower level} of the split graph $G^{01}$. 
Similarly, we write $V^1 := \{v^1 : v\in V\}$ and call $G^{01}[V^1]$ the \emph{upper level}  of $G^{01}$.  
For a set $W \subseteq V$ let
$W^{01}:=\{v^j: v\in W,\, j\in\{0,1\}\}$ be the vertex set of $G^{01}$ that corresponds to $W$.

For any subgraph of $G^{01}$ we obtain a subgraph of $G$ (its \emph{image}) by replacing both $v^0$ and $v^1$ by $v$ and removing loops. 
Then, obviously, the image of a cycle is an Eulerian graph.
The next lemma shows how we can use the split graph to achieve property $(iii)$ of a solution to Subtour Cover. 

\begin{lemma}
\label{lemma:circuitsconnecttobackbone}
If the image of a cycle in $G^{01}$ contains an edge $e\in\delta(L)$ for some $L\in\Lscr_{\ge 2}$, it also contains a vertex of $B$.
\end{lemma}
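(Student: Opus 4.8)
Plan of proof.

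The plan is to take a cycle $C^{01}$ in $G^{01}$ whose image $C$ in $G$ contains an edge $e\in\delta(L)$ for some $L\in\Lscr_{\ge 2}$, and to locate a backbone vertex on $C^{01}$. The first step is to invoke Lemma~\ref{lemma:simple_berservation_froward_edges}: since $C$ is the edge set of a cycle and $C\cap\delta(L)\neq\emptyset$, the image $C$ contains both a forward edge and a backward edge. Now I translate this back to $C^{01}$. By the definition of the split graph, a forward edge $e=(v,w)$ of $G$ only has the copy $e^0=(v^0,w^0)$ on the lower level, and a backward edge only has the copy $e^1=(v^1,w^1)$ on the upper level (neutral edges have copies on both levels but stay within a level). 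So $C^{01}$ must contain at least one edge lying entirely on the lower level (the copy of the forward edge in $C$) and at least one edge lying entirely on the upper level (the copy of the backward edge in $C$).

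The second step is the key structural observation: the only edges of $G^{01}$ that go between the two levels are the ``vertical'' edges $e_v^{\downarrow}=(v^1,v^0)$ for $v\in V$ and $e_v^{\uparrow}=(v^0,v^1)$ for $v\in V(B)$. Since $C^{01}$ is a cycle that visits both the lower level (at the tail of its forward-edge copy) and the upper level (at the tail of its backward-edge copy), traversing $C^{01}$ we must at some point move from the upper level to the lower level — and since $C^{01}$ is a cycle, it is a single closed walk, so it must also at some point move from the lower level back to the upper level. The latter transition can only be made via an edge $e_v^{\uparrow}=(v^0,v^1)$, which exists only when $v\in V(B)$. Hence $C^{01}$ contains a vertex $v^0$ (equivalently a vertex $v^1$) with $v\in V(B)$, and therefore its image $C$ contains the backbone vertex $v$.

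I expect the only mildly delicate point is making the "must go up at some point" argument airtight: one needs that $C^{01}$, being a cycle, is connected and Eulerian, so the induced partition of its vertices into lower-level and upper-level vertices — both nonempty by Step~1 — cannot be respected by all edges; some edge crosses from $V^0$ to $V^1$, and such an edge is necessarily an $e_v^{\uparrow}$. This is essentially a one-line parity/connectivity remark rather than a real obstacle. No heavy computation is needed; the whole proof rests on Lemma~\ref{lemma:simple_berservation_froward_edges} together with the observation that upward vertical edges exist only at backbone vertices.
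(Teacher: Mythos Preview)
Your proof is correct and follows essentially the same route as the paper's: invoke Lemma~\ref{lemma:simple_berservation_froward_edges} to obtain a forward and a backward edge in the image, note that these force $C^{01}$ to visit both levels of the split graph, and conclude that $C^{01}$ must use an edge $e_v^{\uparrow}$, which exists only for $v\in V(B)$. Your write-up is more explicit about the ``must go up'' step, but the argument is identical to the paper's three-line proof.
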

\begin{proof}
Let $C^{01}$ be a cycle in $G^{01}$ such that its image $C$ (an Eulerian subgraph of $G$) contains an edge $e\in \delta(L)$ for some $L\in\Lscr_{\ge 2}$.
By Lemma~\ref{lemma:simple_berservation_froward_edges}, $C$ contains a forward edge and a backward edge.
Therefore $C^{01}$ visits both levels of $G^{01}$ and thus contains an edge $e_v^{\uparrow}$ for some $v\in V(B)$.
\end{proof}

\subsection{Witness flows}
We now want to map $x$ to a circulation $z$ in the split graph $G^{01}$.
To this end, we define a flow $f \le x$, which we will call a  \emph{witness flow}.
In the construction of $z$, we will map the witness flow $f$ to the lower level of $G^{01}$
and map the remaining flow $x-f$ to the upper level of $G^{01}$.
See Figure~\ref{fig:construction_of_z}.

\begin{definition}[witness flow]\label{def:witness_flow}
Let $x'$ be a circulation in $G$. Then we call a flow $f'$ in $G$ a \emph{witness flow (for $x'$)} if 
\begin{enumerate}[(a)]
\item $f'(e)=0$ for every backward edge $e$;
\item $f'(e)=x'(e)$ for every forward edge $e$;
\item $0\le f'(e)\le x'(e)$ for every neutral edge $e$; and
\item $f'(\delta^+(v))\ge f'( \delta^-(v))$ for all $v\in V\setminus V(B)$.
\end{enumerate}
\end{definition}
The concept of witness flow was introduced in~\cite{SveTV18}.
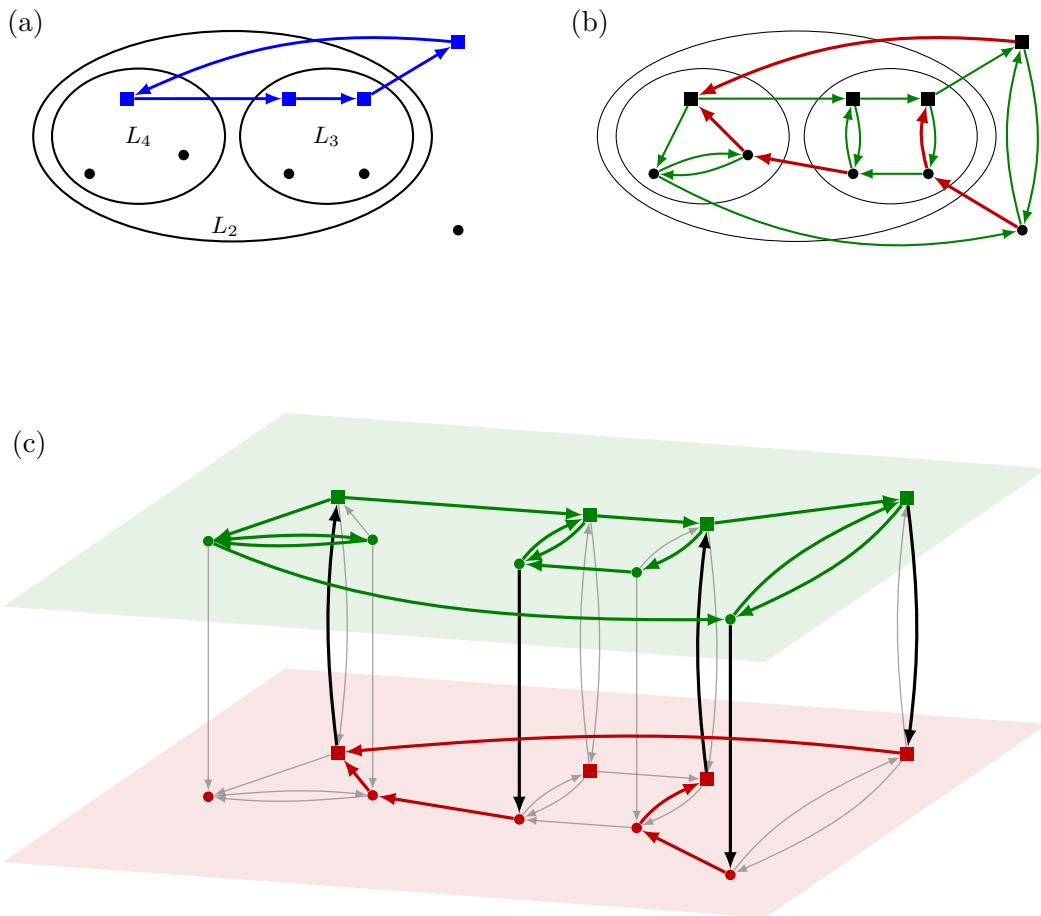
\begin{figure}
\begin{center}
 \begin{tikzpicture}[yscale=-0.5, xscale=0.5]
\tikzset{vertex/.style={
fill=black, circle,inner sep=0em,minimum size=4pt
}}
\tikzset{Bvertex/.style={fill=black, inner sep=0em,minimum size=5pt }}
\tikzset{Edge/.style={->, >=latex, line width=1.1pt,blue}}
\def \bs{15pt}

\node () at (0,1) {(a)};

\node () at (3,4) {\footnotesize $L_4$};
\node () at (8,4) {\footnotesize $L_3$};
\node () at (5.3,6.4) {\footnotesize $L_2$};
\begin{scope}[every node/.style={vertex}]
 \node (v1) at (1.7,5) {};
 \node (v2) at (4.2,4.5) {};
 \node (v4) at (7,5) {};
 \node (v5) at (9,5) {};
 \node (v8) at (11.5,6.5) {};
\end{scope}
 \node[Bvertex, blue] (v3) at (2.7,3) {};
 \node[Bvertex, blue] (v6) at (7,3) {};
 \node[Bvertex, blue] (v7) at (9,3) {};
 \node[Bvertex, blue] (v9) at (11.5,1.5) {};

\draw[thick, black] (3,4) ellipse (2.3 and 1.8);
\draw[thick, black] (8,4) ellipse (2.3 and 1.8);
\draw[thick, black] (5.5,4) ellipse (5.3 and 2.8);

\begin{scope}[Edge]
 \draw (v3) to (v6);
 \draw (v6) to (v7);
 \draw (v9) to[bend left=\bs] (v3);
 \draw (v7) to (v9);
\end{scope}


\begin{scope}[shift={(15,0)}]

\node () at (0,1) {(b)};
\tikzset{Edge/.style={->, >=latex, line width=0.8pt,darkgreen}}
\tikzset{FlowEdge/.style={->, >=latex, line width=1.2pt,darkred}}
\def \bs{15pt}

\begin{scope}[every node/.style={vertex}]
 \node (v1) at (1.7,5) {};
 \node (v2) at (4.2,4.5) {}; 
 \node (v4) at (7,5) {};
 \node (v5) at (9,5) {}; 
 \node (v8) at (11.5,6.5) {};
 \end{scope}
 \node[Bvertex] (v3) at (2.7,3) {};
 \node[Bvertex] (v6) at (7,3) {};
 \node[Bvertex] (v7) at (9,3) {};
\node[Bvertex] (v9) at (11.5,1.5) {};

\draw[black] (3,4) ellipse (2.3 and 1.8);
\draw[black] (8,4) ellipse (2.3 and 1.8);
\draw[black] (5.5,4) ellipse (5.3 and 2.8);

	\begin{scope}[FlowEdge]
	\draw (v4) to (v2);
	\draw (v2) to (v3);

	\draw (v5) to[bend right=\bs] (v7);
	
	\draw (v8) to (v5);
	\draw (v9) to[bend left=\bs] (v3);
	\end{scope}
	
	\begin{scope}[Edge]
	\draw (v1) to[bend right=\bs] (v2);
	\draw (v2) to[bend right=\bs] (v1);
	\draw (v3) to (v1);
	\draw (v3) to (v6);
	\draw (v5) to (v4);
	
	\draw (v4) to[bend right=\bs] (v6);
	\draw (v6) to[bend right=\bs] (v4);
	\draw (v6) to (v7);
	\draw (v7) to[bend right=\bs] (v5);
	
	\draw (v1) to[bend left=20pt] (v8);;
	\draw (v7) to (v9);
	\draw (v8) to[bend right=\bs] (v9);
	\draw (v9) to[bend right=\bs] (v8);
	\end{scope}
 
\end{scope}
 \end{tikzpicture}
 \vspace*{20mm}
 
   \begin{tikzpicture}[scale = 0.85]
   \tikzset{vertex/.style={fill=black, circle,inner sep=0em,minimum size=4pt }}
   \tikzset{Bvertex/.style={fill=black, inner sep=0em,minimum size=5pt }}
   \tikzset{Edge/.style={->, >=latex, line width=0.5pt,gray, opacity=0.7}}
   \tikzset{BlackEdge/.style={->, >=latex, line width=1.2pt,black, opacity=1.0}}
   \tikzset{RedEdge/.style={->, >=latex, line width=1.2pt,black, opacity=1.0}}
   \tikzset{LightRedEdge/.style={->, >=latex, line width=0.5pt,gray, opacity=0.7}}
   \tikzset{UpperFlowEdge/.style={->, >=latex, line width=1.2pt,darkgreen}}
   \tikzset{LowerFlowEdge/.style={->, >=latex, line width=1.2pt,darkred}}
   \def \bs{15pt}
   \def \rt{10}

   \node () at (-4,3.5) {(c)};
   
      \begin{scope}[canvas is xz plane at y=0, rotate=\rt]  
       \draw[fill = darkred, fill opacity=0.1, draw=none] (0,0) rectangle +(13,8);
       \begin{scope}[every node/.style={vertex, darkred}]
	  \node (w1) at (1.7,5) {};
	  \node (w2) at (4.2,4.5) {};	  
	  \node (w4) at (7,5) {};
	  \node (w5) at (9,5) {};	  
	  \node (w8) at (11.5,6.5) {};
	  \end{scope}
	  \begin{scope}[every node/.style={Bvertex, darkred}]
	  \node (w3) at (2.7,3) {};
	  \node (w6) at (7,3) {};
	  \node (w7) at (9,3) {};
	  \node (w9) at (11.5,1.5) {};
	  \end{scope}
      \end{scope}

        \begin{scope}[canvas is xz plane at y=4, rotate=\rt]
        \draw[fill = darkgreen, fill opacity=0.1, draw=none] (0,0) rectangle +(13,8);
        
         \begin{scope}[every node/.style={vertex, darkgreen}]
	\node (v1) at (1.7,5) {};
	\node (v2) at (4.2,4.5) {};	
	\node (v4) at (7,5) {};
	\node (v5) at (9,5) {};
	\node (v8) at (11.5,6.5) {};
	\end{scope}
        \begin{scope}[every node/.style={Bvertex, darkgreen}]
	\node (v3) at (2.7,3) {};
	\node (v6) at (7,3) {};
	\node (v7) at (9,3) {};
	\node (v9) at (11.5,1.5) {};
	\end{scope}   
      \end{scope} 
      
      \begin{scope}[Edge]
        \foreach \i in {1,2,5} {
          \draw[Edge] (v\i) to (w\i);
        }
        \foreach \i in {4,8} {
          \draw[BlackEdge] (v\i) to (w\i);
        }
        \foreach \i in {3,6,7} {
          \draw[Edge] (v\i) to[bend left=7pt] (w\i);
        }
        \foreach \i in {3,7} {
          \draw[RedEdge] (w\i) to[bend left=7pt] (v\i);
        }
        \draw[LightRedEdge] (w6) to[bend left=7pt] (v6);
        \draw[LightRedEdge] (w9) to[bend left=7pt] (v9);
        \draw[BlackEdge] (v9) to[bend left=7pt] (w9);
      \end{scope}

      \begin{scope}[canvas is xz plane at y=0, rotate=\rt]  
	  \begin{scope}[Edge]
	  \draw (w1) to[bend right=\bs] (w2);
	  \draw (w2) to[bend right=\bs] (w1);
	  \draw (w3) to (w1);
	  
	  \draw (w5) to (w4);
	  \draw (w6) to (w7);
	  
	  \draw (w4) to[bend right=\bs] (w6);
	  \draw (w6) to[bend right=\bs] (w4);
	  
	  \draw (w7) to[bend right=\bs] (w5);
	  
	  \draw (w8) to[bend right=\bs] (w9);
	  \draw (w9) to[bend right=\bs] (w8);
	  \end{scope}
	  
	  \begin{scope}[LowerFlowEdge]
	  \draw (w2) to (w3);
	  \draw (w4) to (w2);
	  
	  \draw (w5) to[bend right=\bs] (w7);
	  
	  \draw (w8) to (w5);
	  \draw (w9) to[bend left=\bs] (w3);
	  \end{scope}
      \end{scope}

      \begin{scope}[canvas is xz plane at y=4, rotate=\rt]

	\begin{scope}[Edge]

        \draw (v2) to (v3);
	\draw (v5) to[bend right=\bs] (v7);
	
	\end{scope}
	
	\begin{scope}[UpperFlowEdge]
	\draw (v1) to[bend right=\bs] (v2);
	\draw (v2) to[bend right=\bs] (v1);
	\draw (v3) to (v1);
	
	\draw (v3) to (v6);
	\draw (v6) to (v7);
	
	\draw (v5) to (v4);
	
	\draw (v4) to[bend right=\bs] (v6);
	\draw (v6) to[bend right=\bs] (v4);
	
	\draw (v7) to[bend right=\bs] (v5);
	
	\draw (v1) to[bend left=20pt] (v8);;
	\draw (v7) to (v9);
	\draw (v8) to[bend right=\bs] (v9);
	\draw (v9) to[bend right=\bs] (v8);
	\end{scope}
              
      \end{scope} 
      
    \end{tikzpicture} 
    \end{center}
    \caption{An example of the construction of the circulation $z$ in $G^{01}$.
    Picture (a) shows the laminar family $\Lscr_{\ge 2}=\{L_2, L_3, L_4\}$ and in blue the backbone $B$.
    Picture (b) shows a solution $x$ to \eqref{eq:atsp_lp} where we have $x_e = \sfrac{1}{2}$ for all 
    edges; a witness flow $f$ is shown in red. The vertices in $V(B)$ are shown as squares.
    Every cycle crossing the boundary of a set $L\in\Lscr_{\ge 2}$ contains both a green and a red edge.
    Picture (c) shows the resulting circulation $z$ in $G^{01}$, where we have $z_e >0$ for every thick edge $e$ and 
    and $z_e = 0$ for all thin edges.
    The green vertices are those on the upper level of the split graph; the red vertices are those on the lower level.
    The flow $x-f$ is mapped to the upper level (green) and the flow $f$ is mapped to the lower level (red).
    \label{fig:construction_of_z}}
\end{figure} 
We now show that the pairs $(x',f')$ where $f'$ is a witness flow for the circulation $x'$ in $G$,
correspond to circulations in the split graph $G^{01}$.
\begin{lemma}
Let $z'$ be a circulation in $G^{01}$. 
Define $\pi(z') := (x',f')$ where $x',f'$ are flows in $G$ defined by
\begin{itemize}
 \item $x'(e) := z'(e^0) + z'(e^1)$, and
 \item $f'(e) := z'(e^0)$,
\end{itemize}
where we set $z'(e^1):=0$ for forward edges $e$ and $z'(e^0):=0$ for backward edges $e$.
Then $x'$ is a circulation in $G$ with $c(x')=c(z')$ and $f'$ is a witness flow for $x'$.
\end{lemma}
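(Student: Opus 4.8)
The plan is to verify the three assertions---that $x'$ is a circulation, that $c(x')=c(z')$, and that $f'$ is a witness flow for $x'$---directly from the construction of $G^{01}$ and the two conventions $z'(e^1):=0$ for forward edges $e$ and $z'(e^0):=0$ for backward edges $e$. None of the three is hard; the argument is essentially bookkeeping of which copies of which edges are incident to the lower- and upper-level vertices.

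First I would show $x'$ is a circulation. Nonnegativity is immediate from $z'\ge 0$. For flow conservation at a vertex $v\in V$, I would add the flow-conservation equations of $z'$ at $v^0$ and at $v^1$. The arcs incident to $v^0$ are the internal arc $e_v^{\downarrow}=(v^1,v^0)$, the internal arc $e_v^{\uparrow}=(v^0,v^1)$ (present only if $v\in V(B)$), and the copies $e^0$ of the forward and neutral edges in $\delta_G^{-}(v)\cup\delta_G^{+}(v)$; symmetrically the arcs incident to $v^1$ are $e_v^{\downarrow}$, $e_v^{\uparrow}$ (if $v\in V(B)$), and the copies $e^1$ of the backward and neutral edges incident to $v$. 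When the two conservation equations are summed, the contributions of $e_v^{\downarrow}$ and $e_v^{\uparrow}$ cancel, since each of these arcs occurs exactly once on an ``entering'' side and once on a ``leaving'' side. What remains, after using the conventions to fill in the missing copies, is $\sum_{e\in\delta^-(v)}\bigl(z'(e^0)+z'(e^1)\bigr)=\sum_{e\in\delta^+(v)}\bigl(z'(e^0)+z'(e^1)\bigr)$, which is exactly $x'(\delta^-(v))=x'(\delta^+(v))$.

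The cost identity is immediate: by construction $c(e^0)=c(e^1)=c(e)$ for the copies of every $e\in E$, while $c(e_v^{\downarrow})=c(e_v^{\uparrow})=0$, so summing over all arcs of $G^{01}$ gives $c(z')=\sum_{e\in E}c(e)\bigl(z'(e^0)+z'(e^1)\bigr)=\sum_{e\in E}c(e)x'(e)=c(x')$. For the witness-flow properties, conditions (a) and (b) follow straight from the conventions (a backward edge has $f'(e)=z'(e^0)=0$, and a forward edge has $x'(e)=z'(e^0)+z'(e^1)=z'(e^0)=f'(e)$), and condition (c) for a neutral edge follows from $0\le z'(e^0)\le z'(e^0)+z'(e^1)$. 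The only property that uses the structure of the split graph is (d): for $v\in V\setminus V(B)$ the lower-level vertex $v^0$ has \emph{no} outgoing internal arc $e_v^{\uparrow}$, so conservation of $z'$ at $v^0$ reads $z'(e_v^{\downarrow})+\sum_{e\in\delta^-(v)}z'(e^0)=\sum_{e\in\delta^+(v)}z'(e^0)$ (the sums may be taken over all of $\delta^-(v)$ and $\delta^+(v)$ because $z'(e^0)=0$ on backward edges). Since $f'(e)=z'(e^0)$ for every edge, this gives $f'(\delta^+(v))-f'(\delta^-(v))=z'(e_v^{\downarrow})\ge 0$, which is (d).

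I do not anticipate a genuine obstacle; the proof is a routine verification. If anything needs care it is condition (d), where it is crucial that $v\notin V(B)$: only then does $v^0$ have no outgoing arc besides the copies $e^0$ of edges leaving $v$, so that the ``slack'' in the inequality is the nonnegative quantity $z'(e_v^{\downarrow})$ rather than a term of indeterminate sign.
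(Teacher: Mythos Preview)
Your proof is correct and follows essentially the same approach as the paper: a direct verification from the definition of the split graph. The paper's own proof is terser---it only spells out the witness-flow conditions (a)--(d) by pointing to which arcs are absent in $G^{01}$---whereas you also write out the flow-conservation and cost arguments explicitly; but the underlying reasoning is identical.
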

\begin{proof}
$(a)$ holds because for a backward edge $e$, the graph $G^{01}$ does not contain an edge $e^0$.
Similarly, $(b)$ holds because for a forward edge $e$, the graph $G^{01}$ does not contain an edge $e^1$.
Property $(c)$ is obvious by construction and $(d)$ holds because for $v\in V\setminus V(B)$ the split graph 
does not contain an edge $e_v^{\uparrow}$.
\end{proof}
Having a circulation $x'$ in $G$ and a witness flow $f'$ for $x'$, we can map $x'$ to a circulation $z'$ in $G^{01}$ 
with $\pi(z')=(x',f')$ as follows:
\begin{itemize}
 \item For every edge $e^0$ of the lower level of $G^{01}$ we set $z'(e^0) = f'(e)$.
 \item For every edge $e^1$ of the upper level of $G^{01}$ we set $z'(e^1) = x'(e) - f'(e)$.
 \item For every edge $e_v^{\uparrow}$ (for $v\in V(B)$) we set $z(e_v^{\uparrow})=\max\{0,f'(\delta^-(v))-f'(\delta^+(v))\}$.
 \item For every edge $e_v^{\downarrow}$ (for $v\in V$) we set $z(e_v^{\downarrow}) = \max\{0,f'(\delta^+(v))-f'(\delta^-(v))\}$. 
\end{itemize}
Notice that $x'(e) = z'(e^0)$ for every forward edge $e$ and $x'(e) = z'(e^1)$ for every backward edge $e$.
Moreover, $x'(e) = z'(e^0) + z'(e^1)$ for every neutral edge $e$.
Furthermore, $z'$ indeed defines a circulation in $G^{01}$ because $f'(\delta^+(v))\ge f'(\delta^-(v))$ for all $v\in V\setminus V(B)$.
    
The following was already proved in~\cite{SveTV18}. Here we give a simpler proof.
\begin{lemma}
\label{lemma:witnessflow}
Let $(\Iscr,B)$ be a vertebrate pair, with $\Iscr=(G,\Lscr,x,y)$.
Then there exists a witness flow $f$ for $x$.
\end{lemma}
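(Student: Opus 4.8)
The plan is to reduce the existence of a witness flow to a flow-feasibility problem that decouples over the levels $r(\cdot)$ and whose feasibility can be read off from the cut constraints of \eqref{eq:atsp_lp}. By (a) and (b) we are forced to set $f(e)=x(e)$ on every forward edge and $f(e)=0$ on every backward edge, so only the values of $f$ on the neutral edges remain to be chosen. Writing $\delta^-_{\mathrm{fw}}(v)$ and $\delta^+_{\mathrm{fw}}(v)$ for the forward edges entering resp.\ leaving $v$ and putting $b(v):=x(\delta^-_{\mathrm{fw}}(v))-x(\delta^+_{\mathrm{fw}}(v))$, condition (d) at a vertex $v\in V\setminus V(B)$ says exactly that the net amount of neutral $f$-flow leaving $v$ is at least $b(v)$. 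Hence it is enough to find a flow $\phi$ on the neutral edges with $0\le\phi\le x$ whose net outflow at each $v\in V\setminus V(B)$ is at least $b(v)$ (backbone vertices being unconstrained).

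First I would observe that this problem decouples over levels: a neutral edge has both endpoints in the same set $U_i:=\{v:r(v)=i\}$, and a vertex of $U_i$ is incident only to neutral edges inside $U_i$; so $\phi$ breaks up into independent flows $\phi_i$ on the neutral subgraph induced by $U_i$, and for each $i$ we just need $0\le\phi_i\le x$ with net outflow $\ge b(v)$ for all $v\in U_i\setminus V(B)$. This is a standard feasible-flow problem, so by the max-flow--min-cut theorem it has a solution if and only if
\[
 \sum_{v\in S}b(v)\ \le\ x\bigl(\delta^+_{\mathrm{neu}}(S)\bigr)\qquad\text{for every }S\subseteq U_i\setminus V(B),
\]
where $\delta^+_{\mathrm{neu}}(S)$ denotes the neutral edges from $S$ to $U_i\setminus S$.

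Verifying this cut condition is the heart of the proof. Fix $i$ and a nonempty $S\subseteq U_i\setminus V(B)$ (for $i=1$ there are no forward edges into $U_1$, so the condition is trivial; thus assume $i\ge 2$). The structural facts I would use are: a forward edge entering a vertex of $U_i$ must originate in $V\setminus L_i$ (its tail cannot lie in $L_i$, as then it would have $r$-value $\ge i$); a backward edge leaving $S$ leaves $L_i$ altogether; and a forward or neutral edge leaving $S$ that stays inside $L_i$ must end in $L_i\setminus S$. Counting the $x$-weight of the edges entering $L_i\setminus S$ accordingly, and using $x(\delta^-(L_i))=1$ because $L_i\in\Lscr$ (the tight cut $x(\delta(L_i))=2$), gives
\[
 x\bigl(\delta^-(L_i\setminus S)\bigr)\ \le\ \bigl(1-x(\delta^-_{\mathrm{fw}}(S))\bigr)+x(\delta^+_{\mathrm{fw}}(S))+x\bigl(\delta^+_{\mathrm{neu}}(S)\bigr).
\]
Now $L_i\setminus S$ is a nonempty proper subset of $V$ — nonempty precisely because $V(B)\cap L_i\neq\emptyset$ by the vertebrate property while $S\cap V(B)=\emptyset$ — so the cut constraint of \eqref{eq:atsp_lp} gives $x(\delta^-(L_i\setminus S))\ge 1$. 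Comparing the two inequalities yields $x(\delta^-_{\mathrm{fw}}(S))-x(\delta^+_{\mathrm{fw}}(S))\le x(\delta^+_{\mathrm{neu}}(S))$, i.e.\ $\sum_{v\in S}b(v)\le x(\delta^+_{\mathrm{neu}}(S))$, as needed. I expect this last step to be the main obstacle: the whole argument hinges on realizing that the LP cut constraint should be applied to the set $L_i\setminus S$, and that the vertebrate property is exactly what keeps this set nonempty.
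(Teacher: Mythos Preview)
Your argument is correct and follows essentially the same route as the paper: both reduce existence to a Hoffman-type feasibility condition and verify the crucial cut inequality by applying the LP constraint to the set $L_i\setminus S$, using the vertebrate property $V(B)\cap L_i\neq\emptyset$ to guarantee this set is nonempty. Your explicit decoupling by level $r(\cdot)$ is a clean way to bypass what the paper handles via its Case~1 (a minimal-counterexample uncrossing step), and your cut verification then corresponds exactly to the paper's Case~2.
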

\begin{proof}
Consider $G'$, which arises from $G$ by adding a new vertex $a$ and
edges $(a,v)$ for all $v\in V$ and edges $(v,a)$ for all $v\in V(B)$.
Set $l(e')=0$ and $u(e')=\infty$ for the new edges.
Moreover, for $e\in E$ set the lower bound $l(e)$ and the upper bound $u(e)$ according to the requirements from Definition~\ref{def:witness_flow}, 
i.e.\ set $u(e) =x(e)$ if $e$ is a forward or neutral edge and $u(e)=0$ otherwise and set 
$l(e)=x(e)$ if $e$ is a forward edge and $l(e) =0$ otherwise.

Then we are looking for a circulation $f'$ in $G'$ with $l\le f'\le u$.
By Hoffman's circulation theorem, this exists if
\begin{equation}
\label{eq:atsp_witness_hoffman}
l(\delta^-(U)) \ \le \ u(\delta^+(U))
\end{equation}
for all $U\subseteq V\cup\{a\}$. 
We show that this is indeed true.
Suppose not, and let $U$ be a minimal set violating \eqref{eq:atsp_witness_hoffman}. 
Since \eqref{eq:atsp_witness_hoffman} obviously holds whenever $a\in U$ or $B\cap U\not=\emptyset$, we have $U\subseteq V\setminus V(B)$.
Let $i$ be the largest index so that $U\cap L_i \not=\emptyset$. 
See Figure~\ref{fig:witness_flow_exists}.\\[2mm]
\textbf{Case 1:} $U\setminus L_i\not=\emptyset$. \\
Then (by the minimality of $U$) we have
$l(\delta^-(U\cap L_i)) \ \le \ u(\delta^+(U\cap L_i))$ and $l(\delta^-(U\setminus L_i)) \ \le \ u(\delta^+(U\setminus L_i))$.
Since all edges from $U\setminus L_i$ to $U\cap L_i$ are forward edges and
all edges from $U\cap L_i$ to $U\setminus L_i$ are backward edges, we get
\begin{align*}
l(\delta^-(U)) + x(\delta^+(U\setminus L_i)\cap\delta^-(U\cap L_i))
\ &=\ l(\delta^-(U\cap L_i)) + l(\delta^-(U\setminus L_i)) \\
\ &\le \ u(\delta^+(U\cap L_i)) + u(\delta^+(U\setminus L_i)) \\
\ &=\ u(\delta^+(U)) + x(\delta^+(U\setminus L_i)\cap\delta^-(U\cap L_i))
\end{align*}
and hence \eqref{eq:atsp_witness_hoffman}, which is a contradiction to the choice of $U$. \\[2mm]
\textbf{Case 2:} $U\subseteq L_i$.\\
Then $r(u)=i$ for all $u \in U$ and $r(w) \ge i$ for all $w\in L_i$.
Hence $l(\delta^-(U)) \le x(\delta^-(L_i)\cap\delta^-(U))$ because we have $l(e) >0$ only for forward edges and
all edges in $\delta^-(U) \setminus \delta^-(L_i)$ are neutral or backward edges.
Moreover, edges in $\delta^+(U) \setminus \delta^+(L_i)$ are not backward edges, implying
$x(\delta^+(U) \setminus \delta^+(L_i)) = u(\delta^+(U) \setminus \delta^+(L_i)) \le u(\delta^+(U))$.
Therefore,
\begin{align*}
l(\delta^-(U)) \ &\le \ x(\delta^-(L_i)\cap\delta^-(U)) \\
\ &= \ x(\delta^-(L_i))+x(\delta^+(U)\setminus\delta^+(L_i)) - x(\delta^-(L_i\setminus U)) \\
\ &\le \ x(\delta^-(L_i))+u(\delta^+(U)) - x(\delta^-(L_i\setminus U)).
\end{align*}
Since $L_i\setminus U\not=\emptyset$ (because $L_i\cap V(B)\not=\emptyset=U\cap V(B)$), we have $x(\delta^-(L_i\setminus U))\ge 1$.
Moreover, $L_i \in \Lscr \cup \{V\}$ implies $x(\delta(L_i)) \in \{0,2\}$ and hence $x(\delta^-(L_i))\le 1$.
Hence \eqref{eq:atsp_witness_hoffman} follows, which is again a contradiction.
\end{proof}

\begin{figure}
\begin{center}
 \begin{tikzpicture}[scale=0.6]
  \tikzset{BackwardEdge/.style={->, >=latex, line width=1pt,darkgreen}}
  \tikzset{ForwardEdge/.style={->, >=latex, line width=1pt,darkred}}
  \tikzset{NeutralEdge/.style={->, >=latex, line width=1pt,gray}}
  \tikzset{Bvertex/.style={fill=black, inner sep=0em,minimum size=5pt }}
  \tikzset{Bedge/.style={->, >=latex, line width=1.1pt,blue}}

\begin{scope}
\node at (-3.5,8) {Case 1:};

 \node[Bvertex, blue] (b1) at (2.5,6.5) {};
 \node[Bvertex, blue] (b2) at (4,7.8) {};
 \node[Bvertex, blue] (b3) at (5.5,6.5) {};
 \node[Bvertex, blue] (b4) at (4,5.2) {};

\begin{scope}[Bedge]
 \draw (b1) to (b2);
 \draw (b2) to (b3);
 \draw (b3) to (b4);
 \draw (b4) to (b1);
\end{scope}
\node[blue] at (4,7) {$B$};

\node[Bvertex,violet,circle] at (0,8) {};
\node[violet] at (-0.5,8) {$a$};
 
\tikzset{L/.style={thick}}
\draw[L] (2.5,4) ellipse (2.5 and 3) {};
\node at (4.5,1.2) {$L_i$};
  
\draw[brown] (0,4) ellipse (4 and 2) {};
\fill[brown, opacity=0.1] (0,4) ellipse (4 and 2) {};
\node[brown] at (-3,2) {$U$};
\node[brown] at (-2,4) {$U\setminus L_i$};
\node[brown] at (2,4) {$U\cap L_i$};

 \draw[BackwardEdge] (1.5,5) to node[above=0mm] {\small backward} (-1.5,5);
 \draw[ForwardEdge] (-1.5,3) to node[below=0mm] {\small forward} (1.5,3);
  
\end{scope}   

\begin{scope}[shift={(12,0)}]
\node at (-1.5,8) {Case 2:};

 \node[Bvertex, blue] (b1) at (4,6) {};
 \node[Bvertex, blue] (b2) at (5.5,7.3) {};
 \node[Bvertex, blue] (b3) at (7,6) {};
 \node[Bvertex, blue] (b4) at (5.5,4.7) {};

\begin{scope}[Bedge]
 \draw (b1) to (b2);
 \draw (b2) to (b3);
 \draw (b3) to (b4);
 \draw (b4) to (b1);
\end{scope}
\node[blue] at (5.5,6.5) {$B$};

\node[Bvertex,violet,circle] at (2,8) {};
\node[violet] at (1.5,8) {$a$};
 
\tikzset{L/.style={thick}}
\draw[L] (2.3,4) ellipse (4 and 3) {};
\node at (5.2,1.2) {$L_i$};
  
\draw[brown] (0.8,4) ellipse (2 and 2.3) {};
\fill[brown, opacity=0.1] (0.8,4) ellipse (2 and 2.3) {};
\node[brown] at (0,3) {$U$};

 \draw[BackwardEdge] (4,4.8) to node[above=0mm] {\small backward} (1,4.8);
 \draw[NeutralEdge] (4,4.2) to node[below=-0.5mm] {\small neutral} (1,4.2);
 \draw[NeutralEdge] (1,3.4) to (4,3.4);
 \draw[ForwardEdge] (1,2.8) to node[below=0mm] {\small forward} (4,2.8);
  
\end{scope}    
  
 \end{tikzpicture}
\end{center}
\caption{Proof of Lemma~\ref{lemma:witnessflow} (Case 1 and Case 2).
\label{fig:witness_flow_exists}}
\end{figure}
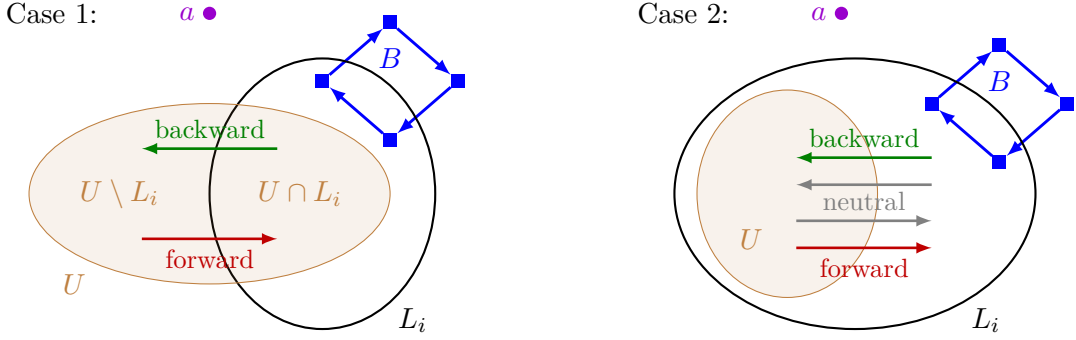

Working with an arbitrary witness flow $f$ is sufficient to obtain a constant-factor approximation for ATSP
and this is essentially what Svensson, Tarnawski, and V\'egh did.
However, to obtain a better approximation ratio we will not work with an arbitrary witness flow $f$, 
but will choose $f$ with some additional properties.
Recall that $W_1, \dots, W_k$ are the vertex sets of the connected components of $(V\setminus V(B), H)$.

\begin{lemma}\label{lemma:improved_witness_flow}
 We can compute in polynomial time a witness flow $f$ for $x$ such that
  \begin{itemize}
  \item[(e)] the support of $f$ is acyclic, and 
  \item[(f)] $\sum_{i=1}^k f(\delta(W_i)) \le \sum_{i=1}^k f'(\delta(W_i))$ for every witness flow $f'$ for $x$.
 \end{itemize}
\end{lemma}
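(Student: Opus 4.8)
The plan is to cast the search for a witness flow as an optimization problem over a polytope of flows and to argue that an optimal extreme point automatically satisfies (e). First I would fix the feasible region: by Lemma~\ref{lemma:witnessflow} the set of witness flows for $x$ is nonempty, and it is exactly the set of circulations $f'$ in the auxiliary graph $G'$ (the graph obtained by adding the vertex $a$ with edges $(a,v)$ for $v\in V$ and $(v,a)$ for $v\in V(B)$, as in the proof of Lemma~\ref{lemma:witnessflow}) that respect the lower and upper bounds $l\le f'\le u$ encoding conditions (a)--(d). This is a (bounded, once we note all relevant flow values are at most $x(e)\le 1$ plus what is pushed through $a$, which we can also cap) polytope described by a network-matrix constraint system, hence an integral-vertex-type polytope; more precisely it is a face-defining polytope of a min-cost circulation problem, so we can optimize any linear objective over it in polynomial time and obtain a vertex solution.

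Next I would choose the objective so as to enforce (f): minimize $\sum_{i=1}^k f'(\delta(W_i)) = \sum_{i=1}^k \big(f'(\delta^+(W_i)) + f'(\delta^-(W_i))\big)$, which is a linear function of the flow values $f'(e)$ on edges $e\in E$ crossing some $W_i$. Solving this min-cost circulation problem in $G'$ in polynomial time yields an optimal witness flow $f$, and optimality is precisely condition (f). The remaining point is to get (e) as well: an arbitrary optimum need not have acyclic support. To fix this I would post-process: as long as the support of $f$ contains a cycle $C$, note that $C$ must lie entirely inside $G$ (the added edges at $a$ form no cycle together with $G$-edges of positive flow unless they pass through $a$, and a cycle through $a$ uses an edge $(v,a)$ with $v\in V(B)$ and $(a,w)$; such cycles can also be cancelled) — and that cancelling flow along $C$, i.e.\ decreasing $f$ uniformly on $C$ by $\min_{e\in C} f(e)$, keeps $f$ a feasible witness flow: conditions (a)--(c) are preserved because we only decrease flow on neutral edges within $C$ and forward edges have $f(e)=x(e)$ fixed, so no forward edge can lie on $C$ unless... — here I need to be slightly careful and instead cancel only cycles in the support restricted to neutral and backward edges. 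Crucially, cancelling a cycle inside $G$ does not change $f(\delta(W_i))$ for any $i$, because a cycle enters and leaves each $W_i$ equally often, so the objective value is unchanged and (f) is preserved; and it strictly decreases the number of edges with positive flow, so the process terminates in polynomially many steps.

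The main obstacle is the interaction between acyclicity and the witness-flow constraints: forward edges $e$ are \emph{pinned} at $f(e)=x(e)>0$ (since $x_e>0$ for all $e$ in a strongly laminar instance), so one cannot cancel a cycle that uses a forward edge. The resolution is the structural observation that a directed cycle in $G$ using a forward edge must also use a backward edge (this is essentially Lemma~\ref{lemma:simple_berservation_froward_edges} if the cycle crosses some $L\in\Lscr_{\ge2}$, and more simply follows because $r(\cdot)$ cannot strictly increase all the way around a cycle), and on backward edges $f$ is pinned at $0$ by condition (a) — so in fact the support of any witness flow, restricted to forward and backward edges, contains no cycle at all automatically; the only possible cycles use neutral edges, on which $f$ is free to be decreased, so the cancellation argument goes through cleanly. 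An even slicker route, which I would actually write up, is to add to the objective a small "lexicographic" term or, better, to directly characterize that among all $f$ achieving the minimum in (f) we may pick one minimizing $\sum_e f(e)$ (again a linear objective, solvable in polynomial time): any such $f$ has acyclic support, since otherwise cancelling a neutral-edge cycle would strictly reduce $\sum_e f(e)$ without affecting (f) or feasibility. This gives (e) and (f) simultaneously from a single polynomial-time min-cost circulation computation (with a two-level objective, or two successive computations), which is exactly what the lemma asserts.
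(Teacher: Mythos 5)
Your approach is essentially the paper's: a two-stage (lexicographic) minimization, first of $\sum_{i=1}^k f(\delta(W_i))$ to secure (f), then of $\sum_{e\in E} f(e)$ over the resulting feasible set to secure (e), relying on the key observation that any cycle in the support of a witness flow contains no backward edges (by (a)) and hence no forward edges either, so it consists of neutral edges and can be cancelled without violating (a)--(d). One minor slip in your exposition: cancelling a cycle that crosses some $\delta(W_i)$ would strictly \emph{decrease} $\sum_{i} f(\delta(W_i))$ (since both $f(\delta^-(W_i))$ and $f(\delta^+(W_i))$ drop), not leave it unchanged as you claim; this is harmless, since such a cycle cannot appear once the first objective has been minimized, but the justification as stated is incorrect.
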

\begin{proof}
We first compute a witness flow $\tilde f$ by minimizing $\sum_{i=1}^k f(\delta(W_i))$ subject to the constraints $(a)-(d)$ from Definition~\ref{def:witness_flow}.
This linear program is feasible by Lemma~\ref{lemma:witnessflow}.
Then the flow $\tilde f$ fulfills property $(f)$.

To compute the flow $f$ we minimize $\sum_{e\in E} f(e)$ subject to the constraints $(a)-(d)$ and $f(e) \le \tilde f(e)$ for all $e\in E$.
This linear program is feasible because $\tilde f$ is a feasible solution.
Then $f$ is a witness flow for $x$ with $\sum_{i=1}^k f(\delta(W_i)) \le \sum_{i=1}^k \tilde f(\delta(W_i))$.
Since the flow $\tilde f$ fulfills property $(f)$, the same holds for the flow $f$.

Suppose $f$ does not fulfill $(e)$, i.e.\ $f$ is not acyclic.
Then there is a cycle $C\subseteq E$ with $f(e) >0$ for all $e\in C$.
As $f$ fulfills $(a)$, the set $C$ does not contain any backward edge.
This implies that $C$ also contains no forward edge because $C$ is a cycle.
Let $\epsilon := \min_{e\in C} f(e)$.
For $e\in E$ we set $f'(e) := f(e) - \epsilon \le \tilde f(e)$ if $e\in C$ and $f'(e) := f(e) \le \tilde f(e)$ otherwise.
Because $C$ contains neither forward nor backward edges, $f'$ fulfills $(a)$ and $(b)$.
By the choice of $\epsilon$, we have $f'(e) \ge 0$ for all $e\in E$, implying $(c)$.
Finally, $f'(\delta^+(v)) - f'( \delta^-(v)) = f(\delta^+(v)) - f( \delta^-(v)) \ge 0$ for all $v\in V \setminus  V(B)$,
where we used that $C$ is a cycle and $f$ fulfills $(d)$.
This shows that $f'$ is a witness flow and $f'(e) \le \tilde f(e)$ for all $e\in E$, 
but $\sum_{e\in E} f'_e < \sum_{e\in E} f_e$, a contradiction to the choice of~$f$.
\end{proof}

\subsection{Rerouting and rounding}
Recall that the sets $W_1, \dots, W_k$ are the vertex sets of the connected components of $(V\setminus V(B), H)$.
Thus they are pairwise disjoint subsets of $V\setminus V(B)$. 
\begin{lemma}\label{lemma:r_values_consistent}
 Let $i\in\{1,\ldots,k\}$ and $v, w\in W_i$. Then $r(v)=r(w)$.
\end{lemma}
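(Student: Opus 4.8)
The plan is to reduce the whole statement to a single-edge fact: for every edge $e = (u,u') \in H$ one has $r(u) = r(u')$. Granting this, the lemma follows immediately, since $v$ and $w$ lie in the same connected component $W_i$ of $(V\setminus V(B),H)$: there is a sequence of vertices $v = x_0, x_1, \ldots, x_m = w$, all in $W_i$, such that consecutive vertices $x_{t-1}, x_t$ are joined by an edge of $H$ (in one direction or the other), and both endpoints of each such edge lie in $W_i$ by definition of a connected component. Applying the single-edge fact along this path gives $r(v) = r(x_0) = \cdots = r(x_m) = r(w)$. (If one prefers to work with directed paths, one can instead note that $(V,H)$ is Eulerian, hence each of its weakly connected components is strongly connected.)

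For the single-edge fact, the key observation is that $r$ is governed exactly by membership in the family $\{L_1,\ldots,L_{r_{\max}}\}$: since $r(u) = \max\{j : u \in L_j\}$, it suffices to show that $\{j : u\in L_j\} = \{j : u'\in L_j\}$. For $j=1$ this is trivial because $L_1 = V$ contains everything. For $j \ge 2$ we have $L_j \in \Lscr_{\ge 2}$, and by the definition of a Subtour Cover instance (Definition~\ref{def:subtour_cover}) we have $H \cap \delta(L_j) = \emptyset$; as $e \in H$ this forces $e \notin \delta(L_j)$, i.e.\ $e$ does not have exactly one endpoint in $L_j$. Hence $u \in L_j$ if and only if $u' \in L_j$, which yields the desired equality of index sets and therefore $r(u) = r(u')$.

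There is no real obstacle here; the only two points that need to be stated carefully are (i) that the value $r(\cdot)$ is determined precisely by the pattern of membership in the laminar family, so that the hypothesis $H\cap\delta(L)=\emptyset$ for $L\in\Lscr_{\ge 2}$ translates directly into $r$ being constant along any edge of $H$, and (ii) that one should conclude by walking through the component $W_i$ along $H$-edges (disregarding their orientation, or using strong connectivity of the Eulerian component) rather than assuming a directed path between $v$ and $w$ exists within $W_i$.
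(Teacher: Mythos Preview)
Your proof is correct and follows essentially the same approach as the paper: the paper argues directly that $H\cap\delta(L)=\emptyset$ for $L\in\Lscr_{\ge 2}$ forces $W_i\subseteq L$ or $W_i\cap L=\emptyset$, hence the index sets $\{j:v\in L_j\}$ and $\{j:w\in L_j\}$ coincide, while you obtain the same conclusion edge-by-edge along a path in the component. The content is identical; your version is just slightly more explicit.
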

\begin{proof}
For all $L\in\Lscr_{\ge 2}$ we have $H\cap \delta(L) = \emptyset$ and therefore $W_i\subseteq L$ or $W_i\cap L=\emptyset$.
This implies $r(v) = \max \{j : v\in L_j\} = \max \{j : w\in L_j\} = r(w)$.
\end{proof}

We will now work with a flow $f$ as in Lemma~\ref{lemma:improved_witness_flow}.
Let $G_f$ denote the residual graph of the flow $f$ and the graph $G$ with edge capacities $x$.
So for every edge $e=(v,w)\in E$ with $f(e) < x(e)$, the residual graph contains an edge $(v,w)$ with residual capacity 
$u_f((v,w)) = x(e) - f(e)$.
For every edge $e=(v,w) \in E$ with $f(e) >0$ the residual graph contains an edge $(w,v)$ with residual capacity
$u_f((w,v))=f(e)$. Parallel edges can arise.

We will transform the graph $G$ into another graph $\bar G$.
The circulation $z$ in $G^{01}$ will be transformed into a circulation $\bar z$ in the split graph $\bar G^{01}$ of $\bar G$.
We construct $\bar G$ from $G$ by doing the following for $i=1,\ldots,k$.

We add an auxiliary vertex $a_i$ to $G$ and set $r(a_i):= r(v)$ for $v\in W_i$; this is well-defined by Lemma~\ref{lemma:r_values_consistent}.
Let $\hat W_i$ be the vertex set of the first strongly connected component of $G_f[W_i]$ in some topological order. 
For every edge $(v,w)\in \delta^-(\hat W_i)$ we add an edge $(v,a_i)$ of the same cost.
Similarly, for every edge $(v,w)\in \delta^+(\hat W_i)$ we add an edge $(a_i,w)$ of the same cost.
Note that then a new edge is a forward/backward/neutral edge if and only if its corresponding edge in $G$ is forward/backward/neutral.
Then the split graph $\bar G^{01}$ of $\bar G$ contains new vertices $a_i^0$ and $a_i^1$, 
connected by an edge $e^{\downarrow}_{a_i} =(a_i^1, a_i^0)$ of cost zero.
Let $\bar G$ the graph resulting from $G$ by the modifications described above and let $\bar G^{01}$ be its split graph.

We will now reroute some of the flow $z$ going through $\hat W_i$ such that it goes through one of the new vertices $a_i^0, a_i^1$.
See Figure~\ref{fig:rerouting_flow}.
We need the following lemma, where $\chi^F \in \mathbb{Z}_{\ge 0}^E$ denotes the incidence vector of $F$ for any multi-subset $F$ of $E$.
\begin{lemma}\label{lemma:construct_flow_rerouting}
 Let $G'$ be a directed graph and $z'$ a circulation in $G'$.
 Let $U\subseteq V(G')$ with $z'(\delta(U))\ge 2$.
 Then we can compute in polynomial time a multiset $\Pscr$ of paths in $G'[U]$ and 
 for every $P\in \Pscr$ starting in $s\in U$ and ending in $t\in U$
 \begin{itemize}\itemsep0pt
  \item a weight $\lambda_P > 0$,
  \item an edge $e^\textsf{in}_P = (s',s)\in \delta^-(U)$, and
  \item an edge $e^\textsf{out}_P = (t,t')\in \delta^+(U)$,
 \end{itemize}
 such that $ \sum_{P\in \Pscr} \lambda_P = 1$ and
 \[ 
   \sum_{P\in \Pscr} \lambda_P \cdot \left( \chi^{e^\textsf{in}_P} + \chi^{E(P)} + \chi^{e^\textsf{out}_P}  \right) \ \le\ z'.
 \] 
\end{lemma}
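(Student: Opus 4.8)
The plan is to reduce the claim to a textbook circulation‑decomposition argument. First I would turn ``the part of $z'$ that flows through $U$'' into a genuine circulation on an auxiliary graph $G''$: take the induced subgraph $G'[U]$, add two new vertices $\sigma,\tau$, for every $e=(s',s)\in\delta^-(U)$ add a parallel copy $\hat e=(\sigma,s)$ (recording that $\hat e$ corresponds to $e$), for every $e=(t,t')\in\delta^+(U)$ add a parallel copy $\hat e=(t,\tau)$, and finally add one ``feedback'' edge $e^*=(\tau,\sigma)$. I would then define $z''$ on $G''$ by $z''(e):=z'(e)$ for $e\in E(G'[U])$, $z''(\hat e):=z'(e)$ for each boundary edge $e$, and $z''(e^*):=\beta$, where $\beta:=z'(\delta^-(U))$. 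Here $\beta=z'(\delta^+(U))$ and $\beta\ge 1$ follow from $z'$ being a circulation together with $z'(\delta(U))\ge 2$, and the same circulation property makes $z''$ satisfy flow conservation at every vertex (inside $U$ the boundary contributions of $z'$ are replaced exactly by the $\sigma$- and $\tau$-edges; at $\sigma$ and $\tau$ it is immediate). So $z''$ is a nonnegative circulation on $G''$.

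Next I would apply the standard decomposition of a circulation into simple directed cycles — repeatedly pull out a simple cycle from the support and subtract its minimum value, zeroing at least one edge each time, which finishes in at most $|E(G'')|$ steps and needs no integrality of $z'$ — to write $z''=\sum_j\mu_j\chi^{C_j}$ with $\mu_j>0$ and each $C_j$ a simple cycle. Since $\sigma$ has $e^*$ as its only incoming edge and $\tau$ has $e^*$ as its only outgoing edge, a cycle $C_j$ uses $e^*$ if and only if it passes through $\sigma$ (equivalently $\tau$); let $J$ collect those indices. For $j\in J$, simplicity of $C_j$ forces it to enter $U$ once through the copy of a single edge $e^\textsf{in}_j=(s'_j,s_j)\in\delta^-(U)$ and leave $U$ once through the copy of a single edge $e^\textsf{out}_j=(t_j,t'_j)\in\delta^+(U)$, with the portion of $C_j$ strictly inside $U$ being a simple $s_j$-$t_j$-path $P_j$ in $G'[U]$ (the trivial one‑vertex path if $s_j=t_j$). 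The remaining cycles avoid $\sigma$ and $\tau$, hence lie entirely in $G'[U]$, and are discarded.

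Finally I would output $\Pscr:=\{P_j:j\in J\}$ (as a multiset), with the recorded edges $e^\textsf{in}_{P_j},e^\textsf{out}_{P_j}$, and weights $\lambda_{P_j}:=\mu_j/\beta$. Reading off the $e^*$-coordinate of the decomposition gives $\sum_{j\in J}\mu_j=z''(e^*)=\beta$, hence $\sum_P\lambda_P=1$ (so $J\ne\emptyset$) and all $\lambda_P>0$. For the vector inequality I would simply compare coordinates over $E(G')$: on a boundary edge $e\in\delta^-(U)$ the decomposition gives $z'(e)=z''(\hat e)=\sum_{j\in J:\,e^\textsf{in}_j=e}\mu_j$ (symmetrically on $\delta^+(U)$), while on an edge $e\in E(G'[U])$ it gives $z'(e)=z''(e)\ge\sum_{j\in J:\,e\in E(P_j)}\mu_j$; thus $\sum_{j\in J}\mu_j\bigl(\chi^{e^\textsf{in}_j}+\chi^{E(P_j)}+\chi^{e^\textsf{out}_j}\bigr)\le z'$ coordinatewise, and dividing by $\beta\ge 1$ (using $z'\ge 0$) yields the claimed $\sum_P\lambda_P\bigl(\chi^{e^\textsf{in}_P}+\chi^{E(P)}+\chi^{e^\textsf{out}_P}\bigr)\le z'$.

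The only genuinely delicate point — where I expect the ``work'' to be — is making each rerouted unit pair a concrete edge of $\delta^-(U)$ with a concrete edge of $\delta^+(U)$ while still being dominated by $z'$ edge by edge; introducing one parallel copy $\hat e$ per boundary edge (instead of a single merged $\sigma$- or $\tau$-edge) is precisely what makes the simple‑cycle decomposition produce that pairing for free, and it is what forces the boundary coordinates to come out as equalities before the final rescaling by $1/\beta$. Everything else is routine.
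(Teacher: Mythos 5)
Your proof is correct and follows essentially the same idea as the paper: reduce to a cycle decomposition of a circulation and keep the cycles crossing $U$, then rescale by $1/\beta$ (which works because $z'(\delta(U))\ge 2$ forces $\beta\ge 1$). The only cosmetic difference is that the paper contracts $V(G')\setminus U$ to a single vertex $v_\textsf{outside}$ rather than introducing the $\sigma$-$\tau$-plus-feedback-edge gadget, but a simple cycle through $v_\textsf{outside}$ yields exactly the same $(e^\textsf{in},P,e^\textsf{out})$ triple that your cycles through $e^*$ produce.
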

\begin{proof}
 We contract $V(G')\setminus U$ to a vertex $v_\textsf{outside}$.
 Then $z'(\delta(v_\textsf{outside})) = z'(\delta(U))\ge 2$.
 Because $z'$ remains a circulation, we can compute in polynomial time a set $\Cscr$ of cycles containing $v_\textsf{outside}$ and weights
 $\lambda_C > 0$ for $C\in \Cscr$ with $\sum_{C\in\Cscr} \lambda_C = 1$ such that 
 \[ \sum_{C\in\Cscr} \lambda_C \cdot \chi^{E(C)} \le z'. \]
 After undoing the contraction, every cycle $C$ results in an edge $e^\textsf{in}= (s',s) \in \delta^-(U)$, an edge $e^\textsf{out}= (t,t')\in \delta^+(U)$,
 and an $s$-$t$-path $P$ in $G'[U]$. 
\end{proof}
We construct a circulation $\bar z$ in $\bar G^{01}$ from $z$ by doing the following for $i=1,\ldots,k$.
We apply Lemma~\ref{lemma:construct_flow_rerouting} to the vertex set $U=\hat W_i^{01}$.
We partition the resulting set $\Pscr$ into sets $\Pscr^0$ and $\Pscr^1$ such that
$\Pscr^0$ contains the paths $P\in\Pscr$ for which $e^\textsf{in}_P$ is contained in the lower level of the split graph
and $\Pscr^1$ contains the paths $P\in\Pscr$ for which $e^\textsf{in}_P$ is contained in the upper level of the split graph.
Since $\sum_{P\in \Pscr} \lambda_P = 1$, we have $\sum_{P\in \Pscr^q} \lambda_P \ge \sfrac{1}{2}$ for some $q\in\{0,1\}$.
We can choose values $0 \le \lambda_P' \le \lambda_P$ such that $\sum_{P\in \Pscr^q} \lambda_P' = \sfrac{1}{2}$.
To obtain $\bar z$ from $z$, we do the following
for every $P\in \Pscr^q$:
\begin{itemize}\itemsep0pt
 \item We decrease the flow on $e^\textsf{in}_P$ and increase the flow on its corresponding edge in $\delta^-(a_i^q)$ by $\lambda_P'$.
 \item We decrease the flow on every edge $e\in E(P)$ by $\lambda_P'$.
 \item Let $p=0$ if $e^\textsf{out}_P$ is contained in the lower level of the split graph and $p=1$ otherwise.
       We decrease the flow on $e^\textsf{out}_P$ and increase the flow on its corresponding edge in $\delta^+(a_i^p)$ by $\lambda_P'$.
 \item Because $W_i \cap V(B) = \emptyset$, the path $P$ contains no edge from the lower to the upper level;
       hence $p \le q$.
       If $p<q$, i.e.\ $q=1$ and $p=0$, we increase the flow on $e_{a_i}^{\downarrow}$ by $\lambda_P'$.
\end{itemize}
Note that we maintain a circulation in the split graph $\bar G^{01}$.

Let $\bar z$ be the circulation in $\bar G^{01}$ resulting from $z$.
Note that $c(\bar z)\le c(z)$.
Moreover, $\bar z$ is a circulation such that for every $i\in\{1,\ldots,k\}$
we have $\bar z(\delta^-(a^0_i))=\sfrac{1}{2}$ or $\bar z(\delta^-(a^1_i))=\sfrac{1}{2}$.
Because we could only reroute $\sfrac{1}{2}$ unit of flow through $a_i^0$ or $a_i^1$,  we consider the circulation $2\bar z$.
\begin{figure}
\begin{center}
 \begin{tikzpicture}[scale=0.6]
 
  \tikzset{vertex/.style={fill=black, circle,inner sep=0em,minimum size=4pt }}
  \tikzset{IntegralEdge/.style={->, >=latex, line width=1.5pt, orange, opacity=1.0}}
  \tikzset{PathCompletionEdge/.style={->, >=latex, line width=1.5pt,blue, opacity=1.0}}
  \tikzset{UpperFlowEdge/.style={->, >=latex, line width=0.8pt,darkgreen}}
  \tikzset{LowerFlowEdge/.style={->, >=latex, line width=0.8pt,darkred}}
 
  \begin{scope}[shift={(2.5,9)}]
   \node () at (0,7) {(a)};
   

   \draw[darkblue] (3,3.5) ellipse (2 and 2.8); 
   \draw[fill=darkblue, draw=darkblue, fill opacity=0.1] (2.5,3.5) ellipse (1 and 2.2); 
   
   \begin{scope}[every node/.style={vertex}]
     \node (v1) at (2.5,5) {};
     \node (v2) at (2.5,2) {};
     \node (v3) at (4,4.5) {};
     \node (v4) at (4,2.5) {};
     
     \node (v5) at (1,7) {};
     \node (v7) at (3,7) {};
     
     \node (v14) at (5,7) {};
     \node (v8) at (6,6) {};
     \node (v9) at (6,4.5) {};
     
     \node (v10) at (6,2) {};
     \node (v11) at (5,1) {};
     
     \node (v13) at (1,1) {};
     \node (v12) at (1,0) {};
     
     \node (v5b) at (0.5,5.5) {};
     \node (v12b) at (2.5,0) {};
   \end{scope}
   
   \begin{scope}[LowerFlowEdge]
    \draw (v3) to (v8);
    \draw[double]  (v4) to (v2);
    \draw (v2) to[bend left=20pt] (v1);
    \draw[double] (v1) to (v7);
    \draw[double] (v11) to (v4);
    \draw[double] (v2) to (v13);
   \end{scope}
   
   \begin{scope}[UpperFlowEdge]
    \draw (v5) to (v1);
    \draw (v5b) to (v1);
    \draw[double] (v1) to (v3);
    \draw[double] (v9) to (v3);
    \draw (v3) to[bend right=10pt] (v14);
    \draw[double] (v3) to (v4);
    \draw (v2) to[bend right=20pt] (v1);
    \draw [double] (v4) to (v10);
    \draw (v12) to[bend right=20pt] (v2);
    \draw (v12b) to (v2);
   \end{scope}
   
  \end{scope}

  \begin{scope}[shift={(12,9)}]
   \node () at (-0.5,7) {(b1)};
     
   \draw[darkblue] (3,3.5) ellipse (2 and 2.8); 
   \draw[fill=darkblue, draw=darkblue, fill opacity=0.1] (2.5,3.5) ellipse (1 and 2.2); 
   
   \begin{scope}[every node/.style={vertex}]
     \node[blue, minimum size=5pt] (ai) at (0, 3.5) {};
   
     \node (v1) at (2.5,5) {};
     \node (v2) at (2.5,2) {};
     \node (v3) at (4,4.5) {};
     \node (v4) at (4,2.5) {};
     
     \node (v5) at (1,7) {};
     \node (v7) at (3,7) {};
     
     \node (v14) at (5,7) {};
     \node (v8) at (6,6) {};
     \node (v9) at (6,4.5) {};
     
     \node (v10) at (6,2) {};
     \node (v11) at (5,1) {};
     
     \node (v13) at (1,1) {};
     \node (v12) at (1,0) {};
     
     \node (v5b) at (0.5,5.5) {};
     \node (v12b) at (2.5,0) {};
   \end{scope}
   
   \begin{scope}[LowerFlowEdge]
    \draw (v3) to (v8);
    \draw[double]  (v4) to (v2);
    \draw (v2) to[bend left=20pt] (v1);
    \draw[double] (v1) to (v7);
    \draw[double] (v11) to (v4);
    \draw[double] (v2) to (v13);
   \end{scope}
   
   \begin{scope}[UpperFlowEdge]
    \draw (v5) to (v1);
    \draw (v5b) to (ai);
    \draw[double] (ai) to (v3);
    \draw[double] (v9) to (v3);
    \draw (v3) to[bend right=10pt] (v14);
    \draw[double] (v3) to (v4);
    \draw [double] (v4) to (v10);
    \draw (v12b) to (v2);
    \draw (v12) to[bend left=20pt] (ai);
   \end{scope}
  \end{scope}
  
  \begin{scope}[shift={(0,0)}]
   \node () at (-0.5,7) {(b2)};
     
   \draw[darkblue] (3,3.5) ellipse (2 and 2.8); 
   \draw[fill=darkblue, draw=darkblue, fill opacity=0.1] (2.5,3.5) ellipse (1 and 2.2); 
   
   \begin{scope}[every node/.style={vertex}]
   \node[blue, minimum size=5pt] (ai) at (0, 3.5) {};
   
     \node (v1) at (2.5,5) {};
     \node (v2) at (2.5,2) {};
     \node (v3) at (4,4.5) {};
     \node (v4) at (4,2.5) {};
     
     \node (v5) at (1,7) {};
     \node (v7) at (3,7) {};
     
     \node (v14) at (5,7) {};
     \node (v8) at (6,6) {};
     \node (v9) at (6,4.5) {};
     
     \node (v10) at (6,2) {};
     \node (v11) at (5,1) {};
     
     \node (v13) at (1,1) {};
     \node (v12) at (1,0) {};
         
     \node (v5b) at (0.5,5.5) {};
     \node (v12b) at (2.5,0) {};
   \end{scope}
   
   \begin{scope}[LowerFlowEdge]
    \draw (v3) to (v8);
    \draw[double]  (v4) to (v2);
    \draw (v2) to[bend left=20pt] (v1);
    \draw[double] (v1) to (v7);
    \draw[double] (v11) to (v4);
    \draw (v2) to (v13);
    \draw (ai) to (v13);
   \end{scope}
   
   \begin{scope}[UpperFlowEdge]
    \draw (v5) to (v1);
    \draw (v1) to (v3);
    \draw (v5b) to (ai);
    \draw (ai) to (v3);
    \draw[double] (v9) to (v3);
    \draw (v3) to[bend right=10pt] (v14);
    \draw[double] (v3) to (v4);
    \draw (v2) to[bend right=20pt] (v1);
    \draw [double] (v4) to (v10);
    \draw (v12b) to (v2);
    \draw (v12) to[bend left=20pt] (ai);
   \end{scope}
  \end{scope}
  
    \begin{scope}[shift={(8,0)}]
   \node () at (-0.5,7) {(c)};
   
   \draw[darkblue] (3,3.5) ellipse (2 and 2.8); 
   \draw[fill=darkblue, draw=darkblue, fill opacity=0.1] (2.5,3.5) ellipse (1 and 2.2); 
   
   \begin{scope}[every node/.style={vertex}]
   \node[blue, minimum size=5pt] (ai) at (0, 3.5) {};
   
     \node (v1) at (2.5,5) {};
     \node (v2) at (2.5,2) {};
     \node (v3) at (4,4.5) {};
     \node (v4) at (4,2.5) {};
     
     \node (v5) at (1,7) {};
     \node (v7) at (3,7) {};
     
     \node (v14) at (5,7) {};
     \node (v8) at (6,6) {};
     \node (v9) at (6,4.5) {};
     
     \node (v10) at (6,2) {};
     \node (v11) at (5,1) {};
     
     \node (v13) at (1,1) {};
     \node (v12) at (1,0) {};
     
     \node (v5b) at (0.5,5.5) {};
     \node (v12b) at (2.5,0) {};
   \end{scope}
   
   \begin{scope}[IntegralEdge]
    \draw (v5b) to (ai);
    \draw (ai) to (v13);
   \end{scope}

  \end{scope}
  
    \begin{scope}[shift={(15.5,0)}]
   \node () at (0,7) {(d)};
   
      \draw[darkblue] (3,3.5) ellipse (2 and 2.8); 
   \draw[fill=darkblue, draw=darkblue, fill opacity=0.1] (2.5,3.5) ellipse (1 and 2.2); 
   
   \begin{scope}[every node/.style={vertex}]
     \node (v1) at (2.5,5) {};
     \node (v2) at (2.5,2) {};
     \node (v3) at (4,4.5) {};
     \node (v4) at (4,2.5) {};
     
     \node (v5) at (1,7) {};
     \node (v7) at (3,7) {};
     
     \node (v14) at (5,7) {};
     \node (v8) at (6,6) {};
     \node (v9) at (6,4.5) {};
     
     \node (v10) at (6,2) {};
     \node (v11) at (5,1) {};
     
     \node (v13) at (1,1) {};
     \node (v12) at (1,0) {};
     
     \node (v5b) at (0.5,5.5) {};
     \node (v12b) at (2.5,0) {};
   \end{scope}
   
    \begin{scope}[IntegralEdge]
    \draw (v5b) to (v1);
    \draw (v2) to (v13);
   \end{scope}
   
   \begin{scope}[PathCompletionEdge]
    \draw (v1) to (v3);
    \draw (v3) to (v4);
    \draw (v4) to (v2);
   \end{scope}

  \end{scope} 
 \end{tikzpicture}
 \end{center}
 \caption{
 Example of the construction of the solution $F$ from the witness flow $f$.
 On all pictures, a set $W_i$ (blue with white interior) and the subset $\hat W_i$ (blue and filled) is shown.
 The pictures show only edges with at least one endpoint in $W_i$.
 Picture (a) shows (parts of) a possible solution $x$ to \eqref{eq:atsp_lp} (green and red) and a witness flow $f$ (red). 
 The edges drawn with a single line have value $\sfrac{1}{4}$, the edges drawn with a double line have value $\sfrac{1}{2}$.
 Pictures (b1) and (b2) show two possible circulations $\bar x$ in $\bar G$ that could result from rerouting flow through $a_i$ (blue);
 the witness flow $\bar f$ is shown in red.
 Picture (c) shows in orange an possible integral flow $\bar x^*$ in $\bar G$ that could result if we rerouted flow through 
 $a_i$ as in (b2); The orange edges are elements of the edge set $\bar F$ with $\chi^{\bar F} = \bar x^*$.
 Picture (d) shows the result of mapping $\bar F$ back to $G$. 
 In blue the path $P_i$ in $G[W_i]$ is shown; it completes the orange edges to a circulation.
 \label{fig:rerouting_flow}}
\end{figure}
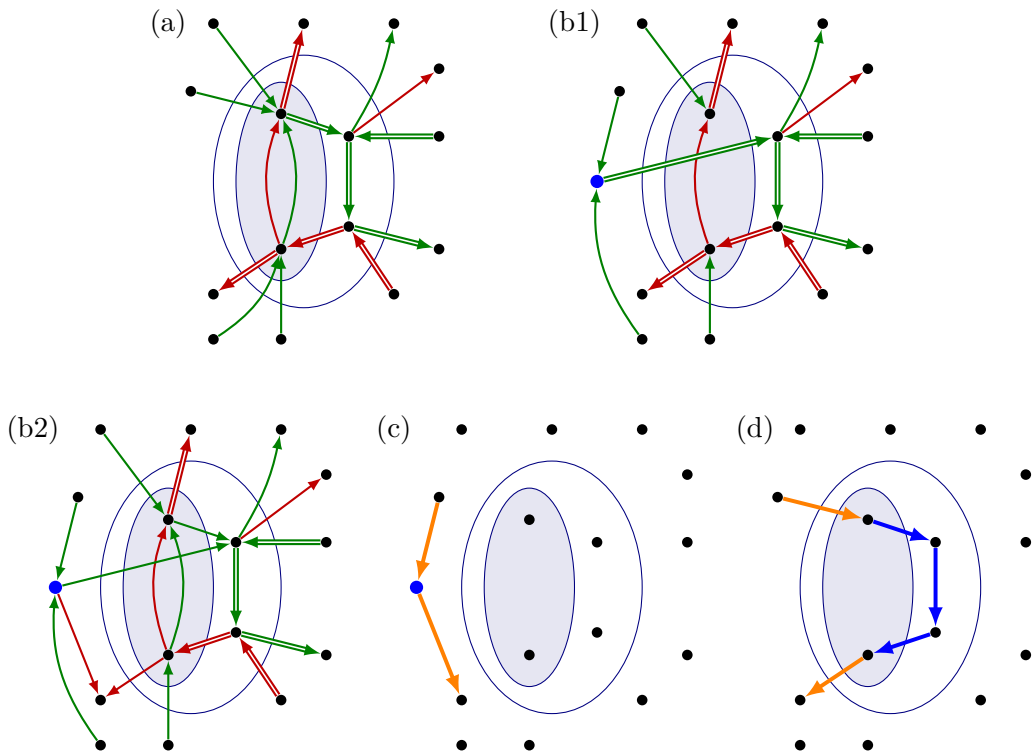

We round $2\bar z$ to an integral circulation:
by Corollary 12.2b of \cite{Sch03}, we can find in polynomial time an integral circulation $\bar z^*$ in $\bar G^{01}$ with 
\begin{enumerate}[(A)]\itemsep0pt
\item\label{item:upper_bound_edges} $0\le \bar z^*(e)\le \lceil 2\bar z(e) \rceil$ for all $e\in E(\bar G^{01})$,
\item $c(\bar z^*) \le c(2\bar z)$,
\item\label{item:flow_bound_through_v_1} $\bar z^*(\delta^-(v^1)) \le \lceil 2\bar z(\delta^-(v^1))\rceil$ for all $v\in V$, and
\item\label{item:flow_one_throu_a_i} for every $i\in\{1,\ldots,k\}$ we have $\bar z^*(\delta^-(a^0_i))=1$ or $\bar z^*(\delta^-(a^1_i))=1$.
\end{enumerate}          

Let $(\bar x, \bar f) :=\pi(\bar z)$ and $(\bar x^*, \bar f^*) :=\pi(\bar z^*)$.
Let $\bar F \subseteq E(\bar G)$ be the multi-set of edges with $\chi^{\bar F} = \bar x^*$; see Figure~\ref{fig:rerouting_flow}~(c).
Then $\bar F$ is Eulerian because $\bar x^*$ is a circulation.

We now show several properties of $\bar F$, before we show how to map $\bar F$ 
to a solution $F$ for Subtour Cover in $G$ (in Section~\ref{sect:map_bar_F_back}).
First we observe 
\begin{equation}\label{eq:cost_bound_bar_F}
 c(\bar F)\ =\ c(\bar x^*)\ =\ c(\bar z^*)\ \le\ 2\cdot c(\bar z)\ \le\ 2\cdot c(z)\ =\ 2\cdot c(x)\ =\ 2\cdot \lp(\Iscr).
\end{equation}
The following lemma will be used in the proof of property~$(ii)$ of Definition~\ref{def:subtour_cover}.
\begin{lemma}\label{lemma:visit_a_i_exatcly_once}
Let $i\in \{1,\dots, k\}$. Then $|\delta^-_{\bar F}(a_i)| = 1$.
\end{lemma}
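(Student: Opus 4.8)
The plan is to compute $|\delta^-_{\bar F}(a_i)|=\bar x^*(\delta^-_{\bar G}(a_i))$ directly in terms of the rounded circulation $\bar z^*$ in the split graph $\bar G^{01}$, and then pin its value down using properties (A), (C), (D) of $\bar z^*$ together with flow conservation. The first step is purely structural: the edges incident to $a_i$ in $\bar G$ are in bijection with the edges of $\delta_G(\hat W_i)$, so in $\bar G^{01}$ the edges entering $a_i^0$ are exactly the level-$0$ copies $(v^0,a_i^0)$ of the forward and neutral in-edges of $a_i$, together with the single edge $e^{\downarrow}_{a_i}=(a_i^1,a_i^0)$, while the edges entering $a_i^1$ are exactly the level-$1$ copies $(v^1,a_i^1)$ of the backward and neutral in-edges of $a_i$ (there is no $e^{\uparrow}_{a_i}$, because $a_i\notin V(B)$). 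Since $e^{\downarrow}_{a_i}$ is not a copy of an edge of $\bar G$, this yields the identity
\[
 |\delta^-_{\bar F}(a_i)| \ =\ \bar z^*(\delta^-(a_i^0)) + \bar z^*(\delta^-(a_i^1)) - \bar z^*(e^{\downarrow}_{a_i}).
\]

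Next I would recall what flow $\bar z$ assigns near $a_i$. Let $q\in\{0,1\}$ be the level through which the rerouting step pushed flow into $a_i$. By construction all of the rerouted flow entering $a_i$ enters $a_i^q$ and has total value $\frac12$; hence $\bar z(\delta^-(a_i^q))=\frac12$, every level-$(1-q)$ copy of an edge of $\delta^-_{\bar G}(a_i)$ has $\bar z$-value $0$, and moreover $\bar z(e^{\downarrow}_{a_i})=0$ when $q=0$. By property (A), $\bar z^*$ must also vanish on all of these zero-flow edges; substituting this into the displayed identity collapses it — in both cases $q=0$ and $q=1$ — to
\[
 |\delta^-_{\bar F}(a_i)| \ =\ \bar z^*(\delta^-(a_i^q)).
\]

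It then remains to prove $\bar z^*(\delta^-(a_i^q))=1$. For the upper bound I would use the in-degree rounding property at the vertex $a_i^q$ (the analogue of (C) for the auxiliary vertices) together with $2\bar z(\delta^-(a_i^q))=1$, which gives $\bar z^*(\delta^-(a_i^q))\le\lceil 1\rceil=1$. For the lower bound I would invoke (D): either $\bar z^*(\delta^-(a_i^q))=1$ and we are done, or $\bar z^*(\delta^-(a_i^{1-q}))=1$. In the latter case, if $q=0$ this is impossible, since we already saw $\bar z^*(\delta^-(a_i^1))=0$; if $q=1$, flow conservation at $a_i^0$ gives $\bar z^*(\delta^+(a_i^0))=1$, and because every edge of $\delta^+(a_i^0)$ is a copy of an out-edge of $a_i$ in $\bar G$ we get $\bar x^*(\delta^+_{\bar G}(a_i))\ge 1$, hence $\bar x^*(\delta^-_{\bar G}(a_i))=|\delta^-_{\bar F}(a_i)|\ge 1$ since $\bar x^*$ is a circulation, i.e.\ $\bar z^*(\delta^-(a_i^q))\ge 1$.

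The bulk of the work is the careful bookkeeping in the first two steps — keeping track of which of the (at most two) split copies of each edge of $\delta_G(\hat W_i)$ is actually present and which carries $\bar z$-flow, and remembering that $e^{\downarrow}_{a_i}$ lies on the lower level only and must not be counted among $\delta^-_{\bar G}(a_i)$. The only genuinely non-formal input is the upper bound, where we need the rounding to control the in-degree at the auxiliary vertex $a_i^q$ (the version of property (C) for the $a_i$'s); everything else reduces to flow conservation and properties (A) and (D).
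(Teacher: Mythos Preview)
Your approach is essentially the paper's: the same structural identity
\[
|\delta^-_{\bar F}(a_i)|=\bar z^*(\delta^-(a_i^1))+\bar z^*(\delta^-(a_i^0)\setminus\{e^{\downarrow}_{a_i}\})
\]
and the same use of (A) to kill all $\bar z^*$-flow on in-edges at the non-chosen level. Your reduction to $|\delta^-_{\bar F}(a_i)|=\bar z^*(\delta^-(a_i^q))$ is exactly what the paper obtains inside each of its two cases.

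Where you diverge is in the final step. You treat (D) as a genuine disjunction and then need two separate arguments: an upper bound via ``the analogue of (C) for the auxiliary vertices'' and a lower bound via flow conservation. But property (C) as stated in the paper only covers $v^1$ for $v\in V$; it says nothing about $a_i^0$ or $a_i^1$, so the upper bound you invoke is not among the listed properties (A)--(D). This is a small gap.

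It is also an unnecessary one. Property (D) is not really a disjunction: the rounding imposes the degree constraint at the \emph{specific} level $q$ where $2\bar z(\delta^-(a_i^q))=1$, so (D) should be read as $\bar z^*(\delta^-(a_i^q))=1$. The paper's two cases ``$\bar z^*(\delta^-(a_i^1))=1$'' versus ``otherwise $\bar z^*(\delta^-(a_i^0))=1$'' are just a case split on $q$ (note that $\bar z^*(\delta^-(a_i^1))=1$ forces $q=1$ via (A), and conversely). Once you have $|\delta^-_{\bar F}(a_i)|=\bar z^*(\delta^-(a_i^q))$, (D) finishes the proof in one line, and both your upper-bound appeal to an unstated property and your flow-conservation detour become redundant.
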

\begin{proof}
We have 
\[
  |\delta^-_{\bar F}(a_i)| = \bar x^*(\delta^-(a_i)) = \bar z^*(\delta^-(a_i^1)) + \bar z^*(\delta^-(a_i^0) \setminus\{e_{a_i}^{\downarrow}\})
\]
 By property~\eqref{item:flow_one_throu_a_i}, we have $\bar z^*(\delta^-(a_i^0))=1$ or $\bar z^*(\delta^-(a_i^1))=1$.
 Moreover, by property~\eqref{item:upper_bound_edges}, the support of the integral flow $\bar z^*$ is contained in the support of the flow $\bar z$.
 If we have $\bar z^*(\delta^-(a_i^1))=1$, then we have by construction of $\bar z$ that $\bar z^*(e) \le \lceil 2\bar z(e)\rceil =0$ 
 for all $e\in \delta^-(a_i^0)\setminus \{e_{a_i}^{\downarrow}\}$, implying $|\delta^-_{\bar F}(a_i)| = 1$.
 Otherwise,  we have $\bar z^*(\delta^-(a_i^0))=1$ and by construction of $\bar z$ we have $\bar z(\delta^-(a_i^1))=0$ and $\bar z(e_{a_i}^{\downarrow}) = 0$.
 Therefore, by property~\eqref{item:upper_bound_edges} we have $\bar z^*(\delta^-(a_i^1))=0$ and $\bar z^*(e_{a_i}^{\downarrow}) = 0$.
\end{proof}

The proof of the following lemma is where we use our choice of $f$ as in Lemma~\ref{lemma:improved_witness_flow}.
Here, an arbitrary witness flow is not sufficient. 
See Figure~\ref{fig:cycle_free_maintained} (a) -- (b) for an illustration.
\begin{lemma}\label{lemma:cycle_free_maintained}
The flows $\bar f$ and $\bar f^*$ have acyclic support.
\end{lemma}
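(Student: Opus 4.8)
The plan is to first reduce the statement about $\bar f$ to the statement about $f$, which we already know is acyclic by property~(e) of Lemma~\ref{lemma:improved_witness_flow}, and then reduce the statement about $\bar f^*$ to the one about $\bar f$. Recall $(\bar x,\bar f) = \pi(\bar z)$ and $(\bar x^*,\bar f^*) = \pi(\bar z^*)$, so $\bar f$ is the projection to $G$ of the flow carried on the lower level of $\bar G^{01}$, and similarly for $\bar f^*$.

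First I would analyze $\bar f$. The circulation $\bar z$ was obtained from $z$ by the rerouting step: for the chosen side $q\in\{0,1\}$ and each path $P\in\Pscr^q$ we decreased the flow along $E(P)$ (a path inside $\hat W_i^{01}$), decreased it on $e^{\textsf{in}}_P$ and $e^{\textsf{out}}_P$, and increased it on the corresponding edges incident to $a_i^0$ or $a_i^1$ (and possibly on $e^{\downarrow}_{a_i}$). Projecting to $G$, the flow $\bar f$ arises from $f$ by, for each $i$, possibly decreasing flow along some paths inside $\hat W_i$ and $e^{\textsf{in}}_P, e^{\textsf{out}}_P$ incident to $\hat W_i$, and adding flow along the new edges $(v,a_i)$ and $(a_i,w)$ which are in bijection with edges of $\delta^-(\hat W_i)$ and $\delta^+(\hat W_i)$. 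The key point is that any cycle in the support of $\bar f$ either avoids all auxiliary vertices $a_i$ — in which case it would be a cycle supported on a subset of the edges of $f$ together with \emph{decreased} values, hence a cycle in the support of $f$, contradicting (e) — or it passes through some $a_i$. In the latter case, $a_i$ has in-degree and out-degree exactly one in the cycle, and the two incident edges correspond to an edge $e^{\textsf{in}} = (s',s)\in\delta^-(\hat W_i)$ and an edge $e^{\textsf{out}} = (t,t')\in\delta^+(\hat W_i)$ on which $\bar f$ is positive; the portion of the cycle from $t'$ back to $s'$ lies in the support of $\bar f$ outside $\hat W_i$, hence (using $\bar f(e)\le f(e)$ on all original edges outside $\hat W_i$) in the support of $f$ outside $\hat W_i$. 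But $\hat W_i$ was chosen as the vertex set of the \emph{first} strongly connected component of $G_f[W_i]$ in a topological order, so there is no edge of $G_f$, and in particular no edge on which $f$ is positive (reverse residual edges) or on which $f < x$, entering $\hat W_i$ from $W_i\setminus\hat W_i$ — and more to the point, contracting $\hat W_i$ and following the support of $f$ from $t'$ should not return to $s'$, since that would close a cycle through $\hat W_i$ in the support of $f$, again contradicting (e). Making this last reduction precise is the main obstacle: one must argue carefully that a cycle through $a_i$ in $\operatorname{supp}(\bar f)$ yields a cycle through $\hat W_i$ in $\operatorname{supp}(f)$, using that $G_f[\hat W_i]$ being strongly connected lets us route an $s$-$t$-path inside $\hat W_i$ in the residual graph, i.e.\ without creating new flow-carrying edges that $f$ does not already implicitly support as a residual circulation. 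I expect this is where the hypothesis (e) is genuinely needed and where one has to be cautious about the distinction between the support of $f$ and the residual graph $G_f$.

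For $\bar f^*$, I would argue from $\bar f$. By property~(A) of the rounding, $\operatorname{supp}(\bar z^*)\subseteq\operatorname{supp}(2\bar z) = \operatorname{supp}(\bar z)$, hence the lower-level edges carrying $\bar z^*$ form a subset of those carrying $\bar z$; projecting, $\operatorname{supp}(\bar f^*)\subseteq\operatorname{supp}(\bar f)$. Since a subgraph of an acyclic digraph is acyclic, and $\bar f$ has acyclic support by the first part, so does $\bar f^*$. This second part is short; the work is all in establishing acyclicity of $\bar f$ from acyclicity of $f$ through the rerouting step, and in particular in handling cycles that pass through the auxiliary vertices $a_i$ by exploiting that $\hat W_i$ is the first strong component of $G_f[W_i]$.
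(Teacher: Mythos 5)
Your reduction of $\bar f^*$ to $\bar f$ via property (A) is correct and matches the paper, as is your handling of a hypothetical cycle that avoids all auxiliary vertices $a_i$ (it projects to a cycle in the support of $f$, contradicting~(e)). However, the heart of the lemma is the case where the cycle $\bar C$ passes through some $a_i$, and here your plan has a genuine gap.

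You propose to close the argument by ``closing a cycle through $\hat W_i$ in the support of $f$, again contradicting (e).'' This will not work, and you half-sense why. The portion of $\bar C$ outside $\hat W_i$ does project to flow-carrying edges of $f$, but the connecting $s$-$t$-path inside $\hat W_i$ is guaranteed only in the \emph{residual graph} $G_f[\hat W_i]$, not in the support of $f$. So you do not obtain a cycle in the support of $f$, and property (e) has nothing to say. The paper's resolution is different: instead of a cycle in the support of $f$, one constructs a closed \emph{walk} $C$ in the residual graph $G_f$ (by reversing the flow-carrying edges corresponding to $\bar C$ and stitching inside each $\hat W_i$ using strong connectivity of $G_f[\hat W_i]$), shows via the $r$-value argument that every edge of $C$ is neutral (so that augmenting along $C$ by a small amount preserves properties (a)--(d) of a witness flow), and then observes that augmenting strictly decreases $\sum_i f(\delta(W_i))$. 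This contradicts the minimality property~\emph{(f)} of Lemma~\ref{lemma:improved_witness_flow}, not property~(e). The missing ingredients in your plan are precisely: (i) the recognition that the contradiction is with (f), not (e); (ii) the neutrality argument for edges of $C$, which is needed to ensure augmentation stays a witness flow; and (iii) the observation that the augmentation decreases flow across the cuts $\delta(W_i)$ and does not increase it anywhere (the added residual edges inside $\hat W_i$ do not cross any $\delta(W_j)$). Property (e) is used only in the easy case you already handled; the hard case is exactly why the paper computes $f$ to minimize $\sum_i f(\delta(W_i))$ in the first place.
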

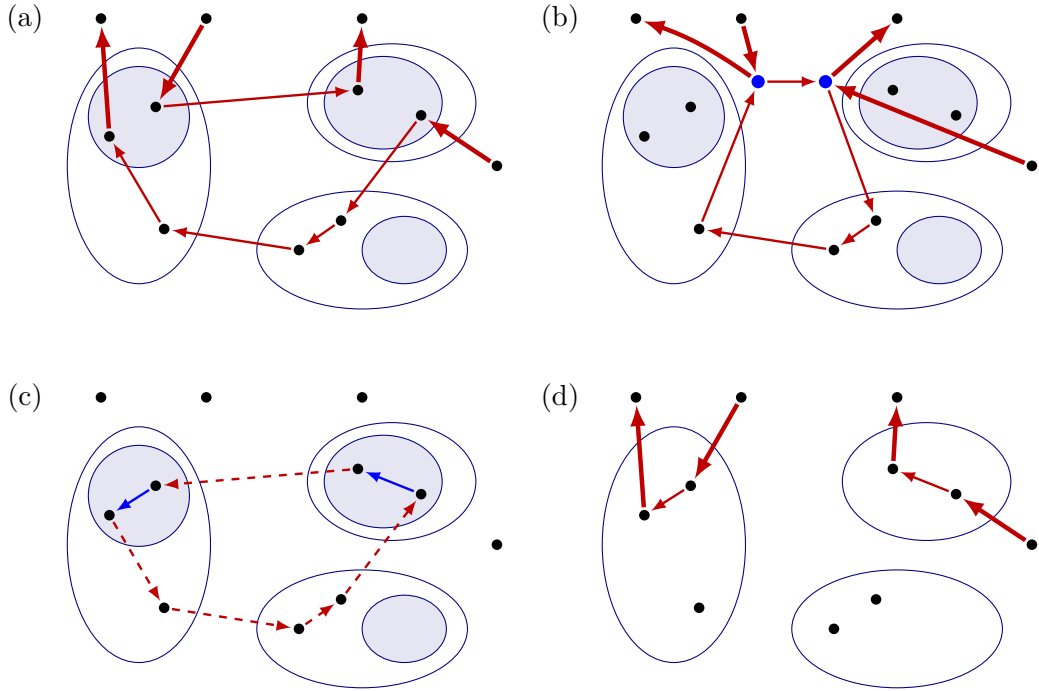
\begin{figure}
\begin{center}
 \begin{tikzpicture}[scale=0.56]
 
  \tikzset{vertex/.style={fill=black, circle,inner sep=0em,minimum size=4pt, outer sep=1pt }}
  \tikzset{PathCompletionEdge/.style={->, >=latex, line width=0.9pt,blue, opacity=1.0}}
  \tikzset{ResidualFlowEdge/.style={->, >=latex, line width=0.9pt,darkred, dashed}}
  \tikzset{LowerFlowEdge/.style={->, >=latex, line width=0.9pt,darkred}}
  \tikzset{ForwardEdge/.style={->, >=latex, line width=1.7pt,darkred}}
 
  \begin{scope}[shift={(0,0)}]
   \node () at (-2,7) {(a)};
      
   \begin{scope}[draw=darkblue]
    \draw  (0.7,3.5) ellipse (1.7 and 2.8); 
    \draw  (6.7, 5) ellipse (2 and 1.4); 
    \draw  (6,1.5) ellipse (2.5 and 1.4); 
   \end{scope}
   
   \begin{scope}[fill=darkblue, draw=darkblue, fill opacity=0.1]
       \fill[draw=darkblue] (0.7,4.66) ellipse (1.2 and 1.2); 
       \fill[draw=darkblue] (6.5,5) ellipse (1.4 and 1.1);
       \fill[draw=darkblue] (7, 1.5) ellipse (1 and 0.8);
   \end{scope}

   \begin{scope}[every node/.style={vertex}]
     \node (v1) at (2.3,7) {};
     \node (v2) at (1.1, 4.9) {};
     \node (v3) at (5.9,5.3) {};
     \node (v4) at (6,7) {};
     \node (v5) at (9.2,3.5) {};
     \node (v6) at (7.4,4.7) {};
     \node (v7) at (5.5,2.2) {};
     \node (v8) at (4.5,1.5) {};
     \node (v9) at (1.3,2) {};
     \node (v10) at (0,4.2) {};
     \node (v11) at (-0.2,7) {};
     
   \end{scope}
   
   \begin{scope}[ForwardEdge]
    \draw (v1) to (v2);
    \draw (v3) to (v4);
    \draw (v5) to (v6);
    \draw (v10) to (v11);
   \end{scope}
   
   \begin{scope}[LowerFlowEdge]
    \draw (v2) to (v3);
    \draw (v6) to (v7);
    \draw (v7) to (v8);
    \draw (v8) to (v9);
    \draw (v9) to (v10);    
   \end{scope}
  
  \end{scope}

  \begin{scope}[shift={(12.7,0)}]
   \node () at (-2,7) {(b)};
      
      \begin{scope}[draw=darkblue]
    \draw  (0.7,3.5) ellipse (1.7 and 2.8); 
    \draw  (6.7, 5) ellipse (2 and 1.4); 
    \draw  (6,1.5) ellipse (2.5 and 1.4); 
   \end{scope}
   
   \begin{scope}[fill=darkblue, draw=darkblue, fill opacity=0.1]
       \fill[draw=darkblue] (0.7,4.66) ellipse (1.2 and 1.2); 
       \fill[draw=darkblue] (6.5,5) ellipse (1.4 and 1.1);
       \fill[draw=darkblue] (7, 1.5) ellipse (1 and 0.8);
   \end{scope}

   \begin{scope}[every node/.style={vertex}]
     \node (v1) at (2.3,7) {};
     \node (v2) at (1.1, 4.9) {};
     \node (v3) at (5.9,5.3) {};
     \node (v4) at (6,7) {};
     \node (v5) at (9.2,3.5) {};
     \node (v6) at (7.4,4.7) {};
     \node (v7) at (5.5,2.2) {};
     \node (v8) at (4.5,1.5) {};
     \node (v9) at (1.3,2) {};
     \node (v10) at (0,4.2) {};
     \node (v11) at (-0.2,7) {};
     
     \node[blue, minimum size=5pt] (a1) at (2.7,5.5) {};
     \node[blue, minimum size=5pt] (a2) at (4.3,5.5) {};
   \end{scope}
   
   \begin{scope}[ForwardEdge]
    \draw (v1) to (a1);
    \draw (a2) to (v4);
    \draw (v5) to (a2);
    \draw[bend right=8] (a1) to (v11);
   \end{scope}
   
   \begin{scope}[LowerFlowEdge]
    \draw (a1) to (a2);
    \draw (a2) to (v7);
    \draw (v7) to (v8);
    \draw (v8) to (v9);
    \draw (v9) to (a1);    
   \end{scope}
  \end{scope}

  \begin{scope}[shift={(0,-9)}]
   \node () at (-2,7) {(c)};
      
      \begin{scope}[draw=darkblue]
    \draw  (0.7,3.5) ellipse (1.7 and 2.8); 
    \draw  (6.7, 5) ellipse (2 and 1.4); 
    \draw  (6,1.5) ellipse (2.5 and 1.4); 
   \end{scope}
   
   \begin{scope}[fill=darkblue, draw=darkblue, fill opacity=0.1]
       \fill[draw=darkblue] (0.7,4.66) ellipse (1.2 and 1.2); 
       \fill[draw=darkblue] (6.5,5) ellipse (1.4 and 1.1);
       \fill[draw=darkblue] (7, 1.5) ellipse (1 and 0.8);
   \end{scope}

   \begin{scope}[every node/.style={vertex}]
     \node (v1) at (2.3,7) {};
     \node (v2) at (1.1, 4.9) {};
     \node (v3) at (5.9,5.3) {};
     \node (v4) at (6,7) {};
     \node (v5) at (9.2,3.5) {};
     \node (v6) at (7.4,4.7) {};
     \node (v7) at (5.5,2.2) {};
     \node (v8) at (4.5,1.5) {};
     \node (v9) at (1.3,2) {};
     \node (v10) at (0,4.2) {};
     \node (v11) at (-0.2,7) {};
     
   \end{scope}
  
  \begin{scope}[PathCompletionEdge]
   \draw (v2) to (v10);
   \draw (v6) to (v3);
  \end{scope}

   \begin{scope}[ResidualFlowEdge]
    \draw (v3) to (v2);
    \draw (v7) to (v6);
    \draw (v8) to (v7);
    \draw (v9) to (v8);
    \draw (v10) to (v9);    
   \end{scope}
  \end{scope}
  
  \begin{scope}[shift={(12.7,-9)}]
   \node () at (-2,7) {(d)};
      
      \begin{scope}[draw=darkblue]
    \draw  (0.7,3.5) ellipse (1.7 and 2.8); 
    \draw  (6.7, 5) ellipse (2 and 1.4); 
    \draw  (6,1.5) ellipse (2.5 and 1.4); 
   \end{scope}
   

   \begin{scope}[every node/.style={vertex}]
     \node (v1) at (2.3,7) {};
     \node (v2) at (1.1, 4.9) {};
     \node (v3) at (5.9,5.3) {};
     \node (v4) at (6,7) {};
     \node (v5) at (9.2,3.5) {};
     \node (v6) at (7.4,4.7) {};
     \node (v7) at (5.5,2.2) {};
     \node (v8) at (4.5,1.5) {};
     \node (v9) at (1.3,2) {};
     \node (v10) at (0,4.2) {};
     \node (v11) at (-0.2,7) {};
   \end{scope}
   
   \begin{scope}[ForwardEdge]
    \draw (v1) to (v2);
    \draw (v3) to (v4);
    \draw (v5) to (v6);
    \draw (v10) to (v11);
   \end{scope}
   
   \begin{scope}[LowerFlowEdge]
    \draw (v2) to (v10);
    \draw (v6) to (v3);  
   \end{scope}
  \end{scope}

 \end{tikzpicture}
 \end{center}
 \caption{Illustration of the proof of Lemma~\ref{lemma:cycle_free_maintained} and 
 the reason why choosing an arbitrary flow $f$ as in Lemma~\ref{lemma:witnessflow} is not sufficient.
 Three sets $W_i$  are shown in blue with white interior;
 pictures (a)--(c) also show their subsets $\hat W_i$ (blue and filled).
 Picture~(a) shows (parts of) a flow $f$ as in Lemma~\ref{lemma:witnessflow} (red); the thick edges show forward edges.
 This flow $f$ will not be chosen by our algorithm; it does not minimize $\sum_{i=1}^k f(\delta(W_i))$.
 Picture~(b) shows what would happen if we chose this flow anyway.
 We see a possible result of rerouting this flow through the vertices $a_i \in V(\bar G)$ (shown in blue).
 In this example, the support of $\bar f$ contains a cycle $\bar C$.
 Picture~(c) shows a corresponding closed walk $C$ in the residual graph $G_f$. 
 The blue edges show paths inside the sets $\hat W_i$; these exist because $G_f[\hat W_i]$ is strongly connected.
 Picture~(d) shows the flow resulting from $f$ by augmenting along $C$.
 The augmentation decreased $\sum_{i=1}^k f(\delta(W_i))$, but did not change the flow on forward edges.
 \label{fig:cycle_free_maintained}}
\end{figure}
\begin{proof}
Since the support of $\bar f^*$ is contained in the support of $\bar f$ by \eqref{item:upper_bound_edges}, it suffices to show that $\bar f$ has acyclic support.
Suppose the support of $\bar f$ contains a cycle $\bar C$.
Then there exists $i\in\{1, \dots, k\}$ such that $a_i \in V(\bar C)$
because otherwise $\bar C$ is contained in the support of $f$ (which is acyclic).
Let $\bar e = (a_i, v)\in E(\bar C)$ and let $e=(u,v)\in \delta^+(\hat W_i)$ be the edge of $G$ corresponding to $\bar e$.
Then $f(e) > 0$ and hence the residual graph $G_f$ contains an edge $(v,u) \in \delta_{G_f}^-(\hat W_i)$.
Therefore $v\notin W_i$ since $\hat W_i$ is the vertex set of the first strongly connected component of $G_f[W_i]$.
This shows $E(\bar C) \cap \delta(W_i \cup \{a_i\}) \ne \emptyset$.

We claim that we can map $\bar C$ to a closed walk $C$ in the residual graph $G_f$.
See Figure~\ref{fig:cycle_free_maintained} (b) -- (c).
We first map every edge of the cycle $\bar C$ to its corresponding edge in $G$. 
Notice that the resulting edge set $F$ is not necessarily a cycle:
if $a_i \in V(\bar C)$ for some $i\in \{1,\dots, k\}$, then $F$ contains an edge entering $\hat W_i$ and an edge leaving 
$\hat W_i$, but might be disconnected in between.

We have $f(e) > 0$ for every edge $e\in F$.
Thus, by reversing all edges in $F$ we obtain edges in $G_f$ (with positive residual capacity $u_f$).
Moreover, we can complete this edge set to a closed walk $C$ in $G_f$ (with positive residual capacity $u_f$) by adding only edges 
of $G_f[\hat W_i]$ for $i\in \{1,\dots, k\}$; 
this is possible because for every $i\in \{1,\dots, k\}$, the subgraph $G_f[\hat W_i]$ is strongly connected by the choice of $\hat W_i$.
We found a closed walk $C$ in $G_f$.
Let $i \in \{1,\dots, k\}$ such that $E(\bar C) \cap \delta(W_i \cup \{a_i\}) \ne \emptyset$.
Then $E(C) \cap \delta(W_i) \ne \emptyset$.

Also note that $r(v)\ge r(w)$ for all $(v,w)\in E(C)$:
every edge $(v,w) \in E(G_f)$ of $C$ has a corresponding edge $(w,v) \in E(G)$ with $f(e) > 0$
or it has both endpoints in the same set $\hat W_i \subseteq W_i$.
In the first case, we can conclude that $(w,v)$ is not a backward edge and hence $r(w) \le r(v)$.
In the latter case, $r(v)=r(w)$ by Lemma~\ref{lemma:r_values_consistent}.
Since $C$ is a closed walk we conclude that $r(v)=r(w)$ for all $v,w \in V(C)$.

This shows that augmenting $f$ along the closed walk $C$ changes flow only on neutral edges.
We augment by some sufficiently small but positive amount and maintain a witness flow.
We claim that this augmentation decreases $\sum_{i=1}^k f(\delta(W_i))$, which contradicts our choice of $f$.
See Figure~\ref{fig:cycle_free_maintained}~(d).
The only edges of $C$ contained in a cut $\delta(W_i)$ for some $i\in \{1,\dots, k\}$ result from mapping the edges of 
the cycle $\bar C$ in $\bar G$ to $G_f$ and reversing them; for these edges the augmentation decreases the flow value.
The other edges that we added to $C$ are contained in some $G_f[\hat W_i]$ for $i\in\{1,\dots,k\}$
and hence they do not cross the boundary of any set $W_i$.
Therefore, augmenting $f$ along $C$ decreases the flow value on all edges in $E(C) \cap (\delta(W_1) \cup \dots \cup \delta(W_k))$
and we have already shown that this set is nonempty.
\end{proof}

\begin{lemma}\label{lemma:zero_flow_bar_f}
 Let $\bar D$ be a connected component of $(V,\bar F)$ with $V(\bar D) \cap V(B) = \emptyset$.
 Then $\bar f^* (E(\bar D)) = 0$.
\end{lemma}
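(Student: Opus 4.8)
The plan is to argue that the restriction of $\bar f^*$ to the edges inside $\bar D$ is a circulation, and then to derive the claim from the acyclicity of the support of $\bar f^*$ established in Lemma~\ref{lemma:cycle_free_maintained}, using that a nonnegative circulation with acyclic support is identically zero. The two points that genuinely need care — and which I therefore isolate up front — are that the support of $\bar f^*$ cannot cross the boundary of $\bar D$ (so that the witness-flow inequalities really refer to a flow living inside $\bar D$), and that the flow-conservation inequality also holds at the auxiliary vertices $a_i$.

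\textbf{Step 1: the support of $\bar f^*$ stays inside $\bar D$.} Since $(\bar x^*,\bar f^*)=\pi(\bar z^*)$ and $\bar z^*\ge 0$ by property~\eqref{item:upper_bound_edges}, we have $\bar f^*(e)=\bar z^*(e^0)\le \bar z^*(e^0)+\bar z^*(e^1)=\bar x^*(e)$ for every edge $e$. As $\bar x^*=\chi^{\bar F}$ is integral, every edge $e$ with $\bar f^*(e)>0$ occurs in $\bar F$. Because $\bar D$ is a connected component of $(V,\bar F)$, no such edge can have exactly one endpoint in $V(\bar D)$; hence every edge in the support of $\bar f^*$ incident to $V(\bar D)$ has both endpoints in $V(\bar D)$ and thus lies in $E(\bar D)$.

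\textbf{Step 2: $\bar f^*$ restricted to $E(\bar D)$ is a circulation.} All vertices of $\bar D$ lie in $V(\bar G)\setminus V(B)$ by assumption. For $v\in V\setminus V(B)$, property $(d)$ of Definition~\ref{def:witness_flow} (valid since $\bar f^*$ is a witness flow for $\bar x^*$) gives $\bar f^*(\delta^+(v))\ge\bar f^*(\delta^-(v))$. The same inequality holds at every auxiliary vertex $a_i$: the split graph $\bar G^{01}$ contains no edge $e^{\uparrow}_{a_i}$, so, using that $\bar z^*$ is a circulation,
\[
  \bar f^*(\delta^+(a_i))-\bar f^*(\delta^-(a_i)) \;=\; \bar z^*(\delta^+(a_i^0))-\bigl(\bar z^*(\delta^-(a_i^0))-\bar z^*(e^{\downarrow}_{a_i})\bigr) \;=\; \bar z^*(e^{\downarrow}_{a_i}) \;\ge\; 0 .
\]
Summing $\bar f^*(\delta^+(v))-\bar f^*(\delta^-(v))\ge 0$ over all $v\in V(\bar D)$ yields $0$, because by Step~1 every edge contributing to these sums has both endpoints in $V(\bar D)$ and is therefore counted once with each sign. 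Hence $\bar f^*(\delta^+(v))=\bar f^*(\delta^-(v))$ for all $v\in V(\bar D)$, which together with Step~1 says precisely that the restriction of $\bar f^*$ to $E(\bar D)$ is a circulation.

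\textbf{Step 3: conclude.} By Lemma~\ref{lemma:cycle_free_maintained} the support of $\bar f^*$ is acyclic, so the circulation obtained in Step~2 has acyclic support. A nonnegative circulation with acyclic support is zero: otherwise, starting from any edge carrying positive flow and repeatedly following an outgoing edge with positive flow (possible by flow conservation) eventually revisits a vertex and produces a cycle in the support, a contradiction. Therefore $\bar f^*(e)=0$ for all $e\in E(\bar D)$, i.e.\ $\bar f^*(E(\bar D))=0$, which is the claim. I do not expect a real obstacle beyond the bookkeeping in Steps~1 and~2.
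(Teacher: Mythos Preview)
Your proof is correct and follows essentially the same route as the paper: the witness-flow inequality at each vertex of $\bar D$, summed over $V(\bar D)$, forces the restriction of $\bar f^*$ to $E(\bar D)$ to be a circulation, and then acyclicity of the support (Lemma~\ref{lemma:cycle_free_maintained}) finishes. You are simply more explicit than the paper on two points it uses tacitly --- that the support of $\bar f^*$ is contained in $\bar F$ and hence cannot cross $\delta(V(\bar D))$, and that the inequality $\bar f^*(\delta^+(\cdot))\ge\bar f^*(\delta^-(\cdot))$ also holds at the auxiliary vertices $a_i$ (which is automatic since $a_i\notin V(B)$ and $\pi$ of a circulation in $\bar G^{01}$ is a witness flow in $\bar G$).
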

\begin{proof}
Because $\bar f^*$ is a witness flow, we have $\bar f^*(\delta^-(v)) \le \bar f^*(\delta^+(v))$ for every $v\in V(\bar D)$.
Since
\[
 \bar f^*(E(\bar D))\ =\ \sum_{v\in V(\bar D)} \bar f^*(\delta^-(v))\ \le\ \sum_{v\in V(\bar D)} \bar f^*(\delta^+(v))\ =\ \bar f^*(E(\bar D)),
\]
we have $\bar f^*(\delta^-(v)) = \bar f^*(\delta^+(v))$ for every $v\in V(\bar D)$.
In other words, $\bar f^*$ restricted to $E(\bar D)$ is a circulation.
Because the support of $\bar f^*$ is is acyclic by Lemma~\ref{lemma:cycle_free_maintained}, this implies $\bar f^* (E(\bar D)) = 0$.
\end{proof}

\begin{lemma}\label{lemma:bar_F_visits_w_i_a_i}
 Let $i\in\{1,\dots, k\}$. Then $\bar F \cap \delta(W_i\cup \{a_i\}) \ne \emptyset$.
\end{lemma}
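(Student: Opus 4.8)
The plan is to find the required crossing edge among just two candidates: the unique edge of $\bar F$ entering $a_i$ and the unique edge leaving it. By Lemma~\ref{lemma:visit_a_i_exatcly_once} there is exactly one edge of $\bar F$ entering $a_i$, and since $\bar F$ is Eulerian there is exactly one leaving it. Both of these edges lie in the support of $\bar z^*$, hence by property~\eqref{item:upper_bound_edges} in the support of $\bar z$, so both were actually produced by the rerouting step. I would first pin down on which level of $\bar G^{01}$ these two edges sit. Since only the paths in $\Pscr^q$ are rerouted (for the chosen $q\in\{0,1\}$), the edge entering $a_i$ lies on level $q$ and corresponds to an edge $(v,w)\in\delta_G^-(\hat W_i)$ whose level-$q$ split copy $(v^q,w^q)$ carries positive $z$-flow. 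Likewise, because every rerouted path satisfies $p\le q$ (this is exactly where $W_i\cap V(B)=\emptyset$ is used), in the case $q=0$ the edge leaving $a_i$ lies on level $0$ and corresponds to an edge $(v',w')\in\delta_G^+(\hat W_i)$ whose level-$0$ split copy $(v'^0,w'^0)$ carries positive $z$-flow.

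The second step translates positivity of $z$ on a split copy of an edge $e$ into a residual-graph edge, via $z(e^0)=f(e)$ on the lower level and $z(e^1)=x(e)-f(e)$ on the upper level: a positive value on $e^0$ gives the reverse of $e$ in $G_f$, and a positive value on $e^1$ gives $e$ itself in $G_f$. Then I split into two cases. If $q=1$, positivity of $z$ on $(v^1,w^1)$ gives $(v,w)\in E(G_f)$; were $v\in W_i$, then $v\in W_i\setminus\hat W_i$ and $(v,w)$ would be an edge of $G_f[W_i]$ from $W_i\setminus\hat W_i$ into $\hat W_i$, contradicting that $\hat W_i$ is the first strongly connected component of $G_f[W_i]$ in a topological order; hence $v\notin W_i$, and the edge $(v,a_i)$ of $\bar F$ lies in $\delta^-(W_i\cup\{a_i\})$. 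If $q=0$, positivity of $z$ on $(v'^0,w'^0)$ gives $f((v',w'))>0$, so $G_f$ contains $(w',v')$; were $w'\in W_i$, then $w'\in W_i\setminus\hat W_i$ and $(w',v')$ would again be an edge of $G_f[W_i]$ from $W_i\setminus\hat W_i$ into $\hat W_i$, the same contradiction; hence $w'\notin W_i$, and the edge $(a_i,w')$ of $\bar F$ lies in $\delta^+(W_i\cup\{a_i\})$. In either case $\bar F\cap\delta(W_i\cup\{a_i\})\ne\emptyset$.

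The genuinely delicate part is the bookkeeping in the first step: tracing a positive $\bar z^*$-value back through the rounding (property~\eqref{item:upper_bound_edges}) to a positive $\bar z$-value, and then back through the rerouting construction to a positive $z$-value on precisely the right split copy, all while correctly tracking the level ($q$ for the edge entering $a_i$, and $p\le q$ for the edge leaving it). Once that is in place, the contradiction with the source-component property of $\hat W_i$ is immediate, so no further difficulty is expected.
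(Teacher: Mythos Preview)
Your argument is correct and takes a genuinely different route from the paper. The paper's proof looks only at the incoming edge $\bar e=(v,a_i)\in\bar F$: if $v\in W_i$, it uses the source-component property of $\hat W_i$ to deduce $f(e)=x(e)$ for the corresponding edge $e=(v,w)$, hence $\bar z^*(\bar e^1)=0$ and $\bar f^*(\bar e)=\bar z^*(\bar e^0)\ge 1$; then it invokes Lemma~\ref{lemma:zero_flow_bar_f} (which in turn relies on the acyclicity of $\bar f^*$ from Lemma~\ref{lemma:cycle_free_maintained}) to conclude that the component of $(V(\bar G),\bar F)$ containing $a_i$ meets $V(B)$ and hence leaves $W_i\cup\{a_i\}$. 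Your proof instead cases on the chosen level $q$ and, in the case $q=0$, switches to the outgoing edge; in both cases you derive a residual-graph edge entering $\hat W_i$ directly and get an immediate contradiction with the source-component property. The payoff of your approach is that it is self-contained: it needs neither Lemma~\ref{lemma:zero_flow_bar_f} nor the acyclicity of $\bar f^*$. The paper's approach is more uniform (one edge, one case split on whether $v\in W_i$) but leans on the heavier machinery already developed for other purposes. Your $q=1$ case is essentially the contrapositive of the paper's step showing $\bar z^*(\bar e^1)=0$; the real divergence is that for $q=0$ you exploit $p\le q$ to push the same residual-graph argument through on the outgoing side, which the paper does not do.
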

\begin{proof}
 By Lemma~\ref{lemma:visit_a_i_exatcly_once} there exists an edge $\bar e=(v, a_i) \in \bar F$.
 If $v \notin W_i$, we have $\bar e\in \bar F \cap \delta(W_i\cup \{a_i\}))$.
 Otherwise, the edge $e$ of $G$ that corresponds to $\bar e$ fulfills $e\in E[W_i] \cap \delta^-(\hat W_i)$.
 Therefore, we have $f(e)= x(e)$ as otherwise also the residual graph $G_f$ contained $e$, contradicting the choice of $\hat W_i$.
 This implies 
 \[ 
 \bar z^*(\bar e^1)\ \le\ \lceil 2 \bar z(\bar e^1)\rceil\ \le\ \lceil 2 z(e^1)\rceil\ =\ \lceil2(x(e) - f(e))\rceil\ =\ 0.
 \]
 But then
 \[
  \bar f^*(\bar e)\ =\  \bar z^*(\bar e^0)\ =\  \bar z^*(\bar e^0) +\bar z^*(\bar e^1)\ =\ \bar x^*(\bar e)\ \ge\ 1,
 \]
 because $\bar e\in \bar F$.
 By Lemma~\ref{lemma:zero_flow_bar_f}, this implies that the connected component $\bar D$ of $(V(\bar G), \bar F)$
 that contains $a_i$ also contains a vertex $w\in V(B)$.
 Since $W_i \cap V(B) = \emptyset$, this completes the proof.
\end{proof}

In the proof of property~$(iii)$ of Definition~\ref{def:subtour_cover} we will use the following observation.
\begin{lemma}\label{lemma:bar_F_crossings_L}
 Let $\bar D$ be a connected component of $(V,\bar F)$ with $V(\bar D) \cap V(B) = \emptyset$.
 Then $E(\bar D)$ contains no forward edge.
\end{lemma}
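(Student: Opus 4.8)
The plan is to obtain this immediately from Lemma~\ref{lemma:zero_flow_bar_f}, using the fact that forward edges are ``invisible'' to the upper level of the split graph and therefore carry exactly the same value in $\bar x^*$ as in the witness flow $\bar f^*$.

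First I would argue by contradiction: suppose that some edge $\bar e=(u,w)\in E(\bar D)$ is a forward edge. Since $\chi^{\bar F}=\bar x^*$ and $\bar e$ lies in the connected component $\bar D$ of $(V(\bar G),\bar F)$, we have $\bar x^*(\bar e)\ge 1$. Next I would unwind the definitions of the split graph and of the projection $\pi$. For a forward edge $\bar e$ the split graph $\bar G^{01}$ contains only the lower-level copy $\bar e^0$, and by convention $\bar z^*(\bar e^1):=0$. Writing $(\bar x^*,\bar f^*)=\pi(\bar z^*)$ this gives
\[
 \bar f^*(\bar e) \;=\; \bar z^*(\bar e^0) \;=\; \bar z^*(\bar e^0)+\bar z^*(\bar e^1) \;=\; \bar x^*(\bar e) \;\ge\; 1 ,
\]
and hence, since $\bar f^*\ge 0$, also $\bar f^*(E(\bar D))\ge \bar f^*(\bar e)\ge 1$.

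Finally, because $V(\bar D)\cap V(B)=\emptyset$, Lemma~\ref{lemma:zero_flow_bar_f} yields $\bar f^*(E(\bar D))=0$, contradicting the previous line. This proves that $E(\bar D)$ contains no forward edge.

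I do not expect a real obstacle here: once Lemma~\ref{lemma:zero_flow_bar_f} is available, the statement is essentially a one-line corollary. The only point requiring a little care is the bookkeeping around $\bar G$ --- one should recall that an edge of $\bar G$ is declared forward precisely when its corresponding edge of $G$ is forward (which was arranged when the auxiliary vertices $a_i$ were added, using the consistency of $r$-values from Lemma~\ref{lemma:r_values_consistent}), so that the key structural fact ``a forward edge of $\bar G$ has only a lower-level copy in $\bar G^{01}$'' indeed carries over from $G$ to $\bar G$.
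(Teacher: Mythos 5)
Your proof is correct and takes essentially the same route as the paper: both deduce $\bar f^*(E(\bar D))=0$ from Lemma~\ref{lemma:zero_flow_bar_f} and then use the fact that a witness flow equals $\bar x^*$ on forward edges (property~(b) of Definition~\ref{def:witness_flow}) to conclude. You simply re-derive that witness-flow property directly from the split-graph construction rather than citing it, which is a harmless expansion of the same argument.
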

\begin{proof}
 By Lemma~\ref{lemma:zero_flow_bar_f} we have $\bar f^* (E(\bar D)) = 0$.
 Since $\bar f^*$ is a witness flow for $\bar x^* = \chi^{\bar F}$, this implies that $E(\bar D)$ contains no forward edge.
\end{proof}

The following lemma will be used in the proof of \eqref{eq:def_light1} of Theorem~\ref{thm:subtourpartitioncover}.
\begin{lemma}\label{lemma:bar_F_light}
 Let $\bar D$ be a connected component of $(V,\bar F)$ with $V(\bar D) \cap V(B) = \emptyset$.
 Then for every vertex $v\in V(\bar D)\setminus \{a_1,\dots, a_k\}$ with $y_v > 0$ we have $|\delta_{\bar F}^-(v)| \le 2$.
\end{lemma}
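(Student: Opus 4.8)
The quantity to bound is $|\delta^-_{\bar F}(v)| = \bar x^*(\delta^-(v))$, and the plan is to push this count upward through the chain of constructions $x\rightsquigarrow z\rightsquigarrow \bar z\rightsquigarrow\bar z^*$ and then invoke the rounding bounds \eqref{item:upper_bound_edges}--\eqref{item:flow_one_throu_a_i} together with the fact that $\{v\}$ is a singleton element of $\Lscr$. Two structural facts come first. Since $y_v>0$ we have $\{v\}\in\Lscr$, so because $\Iscr$ is a strongly laminar instance, $x(\delta(\{v\}))=2$ and hence $x(\delta^-(v))=x(\delta^+(v))=1$. And since $v\in V(\bar D)$ with $V(\bar D)\cap V(B)=\emptyset$, the vertex $v$ is not a backbone vertex; in particular $\bar G^{01}$ contains no edge $e_v^{\uparrow}$, so every edge of $\bar G^{01}$ entering $v^1$ is a copy $e^1$ of some backward or neutral edge $e\in\delta^-_{\bar G}(v)$.

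The first real step is to relate $\bar F$ to the upper level of $\bar z^*$. Every edge of $\bar F$ entering $v$ lies in $E(\bar D)$ (as $v\in V(\bar D)$ and $\bar D$ is a connected component), hence by Lemma~\ref{lemma:bar_F_crossings_L} it is not a forward edge, and by Lemma~\ref{lemma:zero_flow_bar_f} (which gives $\bar f^*(E(\bar D))=0$) it carries no $\bar f^*$-flow, i.e.\ $\bar z^*(e^0)=0$. Therefore $\bar x^*(e)=\bar z^*(e^0)+\bar z^*(e^1)=\bar z^*(e^1)$ for each such edge, and since these edges $e^1$ are distinct and all enter $v^1$, summing gives $|\delta^-_{\bar F}(v)|=\bar x^*(\delta^-(v))\le \bar z^*(\delta^-(v^1))$. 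By property~\eqref{item:flow_bound_through_v_1} of the rounding we then have $\bar z^*(\delta^-(v^1))\le \lceil 2\bar z(\delta^-(v^1))\rceil$, so it suffices to prove $\bar z(\delta^-(v^1))\le 1$.

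The heart of the argument, and the step I expect to need the most care, is bounding $\bar z(\delta^-(v^1))$. The key observation is that the rerouting operation does not increase the in-degree of $v^1$: for each rerouted path $P$ we move flow off $e^{\textsf{in}}_P=(s',s)$ onto $(s',a_i^q)$, off the interior edges of $P$, and off $e^{\textsf{out}}_P=(t,t')$ onto the corresponding edge $(a_i^p,t')$ — which has the same head $t'$ — plus possibly onto $e_{a_i}^{\downarrow}=(a_i^1,a_i^0)$. Hence the only vertices of $\bar G^{01}$ whose in-degree (with respect to the circulation) changes are those in $\hat W_i^{01}\cup\{a_i^0,a_i^1\}$, and for a vertex of $V^{01}$ the in-degree can only decrease. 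Since the sets $W_i$ are pairwise disjoint, iterating over $i$ gives $\bar z(\delta^-(v^1))\le z(\delta^-(v^1))$. Finally, from the construction of $z$ out of the witness flow $f$ — using again that $v\notin V(B)$, so there is no term $z(e_v^{\uparrow})$, and that a witness flow equals $x$ on forward edges — we get $z(\delta^-(v^1))=\sum_{e\in\delta^-(v),\ e\text{ not forward}}(x(e)-f(e))=x(\delta^-(v))-f(\delta^-(v))\le x(\delta^-(v))=1$. Chaining everything, $|\delta^-_{\bar F}(v)|\le \bar z^*(\delta^-(v^1))\le\lceil 2\bar z(\delta^-(v^1))\rceil\le\lceil 2\rceil=2$. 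The only delicate bookkeeping is in verifying that every flow increase in the rerouting lands either on an auxiliary $a_i$-edge or on an edge sharing its head with a simultaneously decreased edge; this is precisely the mechanism by which the $a_i$-construction absorbs the surplus flow so that the in-degree of $v^1$ is not inflated.
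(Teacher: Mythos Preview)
Your proof is correct and follows essentially the same route as the paper's: both reduce $|\delta^-_{\bar F}(v)|$ to $\bar z^*(\delta^-(v^1))$ via Lemma~\ref{lemma:zero_flow_bar_f}, then apply the rounding bound~\eqref{item:flow_bound_through_v_1}, and finally bound $\bar z(\delta^-(v^1))\le x(\delta^-(v))=1$. The paper is terser---it writes the decomposition $\bar x^*(\delta^-(v))=\bar z^*(\delta^-(v^1))+\bar f^*(\delta^-(v))$ directly (so Lemma~\ref{lemma:bar_F_crossings_L} is not needed) and leaves the inequality $\bar z(\delta^-(v^1))\le x(\delta^-(v))$ implicit---whereas you spell out the rerouting bookkeeping and the witness-flow computation; but the substance is identical.
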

\begin{proof}
For every vertex $v\in V(\bar D)$ we have 
\begin{align*}
 |\delta_{\bar F}^-(v)|\ =\ \bar x^*(\delta^-(v))\ &=\ \bar z^*(\delta^-(v^1)) + \bar z^*(\delta^-(v^0)\setminus\{e_v^{\downarrow}\})\\
                         &=\ \bar z^*(\delta^-(v^1)) + \bar f^*(\delta^-(v))\\
                         &=\ \bar z^*(\delta^-(v^1)),
\end{align*}
where we used Lemma~\ref{lemma:zero_flow_bar_f}.
For all $v\in V(\bar D)\setminus \{a_1,\dots, a_k\}$ with $y_v>0$ we have $\{v\}\in\Lscr$ and hence $x(\delta^-(v)) =1$ by property~\eqref{item:x_in_strongly_laminar} of Definition~\ref{def:strongly_laminar}.
Therefore, by \eqref{item:flow_bound_through_v_1} we get
\[
 |\delta_{\bar F}^-(v)|\ =\ \bar z^*(\delta^-(v^1))\ \le\ \lceil 2\bar z(\delta^-(v^1)) \rceil\ \le\ \lceil 2 x(\delta^-(v)) \rceil\ =\ 2.
\]
\end{proof}

\subsection{Mapping back to $\boldmath{G}$}\label{sect:map_bar_F_back}

We now transform $\bar F$ into a solution $F$ of the Subtour Cover problem in $G$.
See Figure~\ref{fig:rerouting_flow}~(c)--(d).
By Lemma~\ref{lemma:visit_a_i_exatcly_once}, every vertex $a_i$ for $i\in\{1,\dots ,k\}$ has exactly one incoming edge in $\bar F$
and because $\bar F$ is Eulerian, $a_i$ also has exactly one outgoing edge.
We replace all the edges in $\delta_{\bar F}(a_i)$ for $i\in\{1,\dots ,k\}$ by their corresponding edges in $G$.
For every $i\in\{1,\dots ,k\}$ we added one edge $(v,s)\in \delta^-(\hat W_i)$ and 
an edge $(t,w)\in \delta^+(\hat W_i)$; to obtain an Eulerian edge set we add an $s$-$t$-path $P_i$ in $G[W_i]$.
Such a path exists because $G[W_i]$ is strongly connected.
Let $F$ be the resulting Eulerian multi-set of edges in $G$.
Note that if two vertices $a,b \in V(G)$ are in the same connected component of $(V(\bar G), \bar F)$, then they are also in the same connected 
component of $(V(G), F)$.

\begin{lemma}\label{lemma:new_path_do_not_cross_laminar}
 Let $i\in\{1,\dots,k\}$ and $L\in \Lscr_{\ge 2}$.
 Then $E(P_i) \cap \delta(L) = \emptyset$.
\end{lemma}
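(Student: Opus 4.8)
The plan is to exploit the structural hypothesis built into the definition of a Subtour Cover instance, namely that $H\cap\delta(L)=\emptyset$ for every $L\in\Lscr_{\ge 2}$. This forces each set $W_i$ to lie entirely inside or entirely outside any such $L$, and since $P_i$ is a path inside $G[W_i]$, every edge of $P_i$ then has both endpoints on the same side of $L$ and hence does not cross $\delta(L)$.

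First I would fix $i\in\{1,\dots,k\}$ and $L\in\Lscr_{\ge 2}$ and recall from Definition~\ref{def:subtour_cover} that $H\cap\delta(L)=\emptyset$. Thus no edge of the graph $(V\setminus V(B),H)$ has exactly one endpoint in $L$, so no connected component of this graph can contain vertices both in $L$ and outside $L$. In particular the vertex set $W_i$ of a connected component of $(V\setminus V(B),H)$ satisfies either $W_i\subseteq L$ or $W_i\cap L=\emptyset$. (This is precisely the dichotomy already used in the proof of Lemma~\ref{lemma:r_values_consistent}.)

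Then I would finish as follows: by the construction in Section~\ref{sect:map_bar_F_back}, $P_i$ is an $s$-$t$-path in $G[W_i]$, so every edge $e=(u,v)\in E(P_i)$ has $u,v\in W_i$. If $W_i\subseteq L$ then $u,v\in L$; if $W_i\cap L=\emptyset$ then $u,v\notin L$. In either case $e\notin\delta(L)$, and therefore $E(P_i)\cap\delta(L)=\emptyset$, as claimed. I do not expect any real obstacle here: the lemma is an immediate consequence of the hypothesis $H\cap\delta(L)=\emptyset$ that is part of the Subtour Cover definition, and the only point requiring a little care is that $P_i$ was chosen inside $G[W_i]$ rather than merely inside $G$, which is guaranteed by how $P_i$ is constructed.
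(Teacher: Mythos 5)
Your proof is correct and takes essentially the same approach as the paper: both use $H\cap\delta(L)=\emptyset$ to conclude that each $W_i$ lies entirely inside or outside $L$, and then $E(P_i)\subseteq E[W_i]$ finishes the argument. The paper leaves the dichotomy $W_i\subseteq L$ or $W_i\cap L=\emptyset$ implicit, while you spell it out, which is a faithful unpacking of the same reasoning.
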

\begin{proof}\looseness=-1
We have $H \cap \delta(L) = \emptyset$ and the sets $W_1, \dots, W_k$ are the vertex sets of the connected components of $(V\setminus V(B), H)$.
Now $E(P_i) \subseteq E[W_i]$ implies $E(P_i) \cap \delta(L) = \emptyset$.
\end{proof}
We claim that $F$ is a solution to the Subtour Cover problem and fulfills~\eqref{eq:backbone_component_light1} and~\eqref{eq:def_light1} of Theorem~\ref{thm:subtourpartitioncover}.
Property~$(i)$ of a solution of the Subtour Cover problem (Definition~\ref{def:subtour_cover}) holds because $F$ is Eulerian.
Property~$(ii)$ follows from Lemma~\ref{lemma:bar_F_visits_w_i_a_i}.

We now show property~$(iii)$.
Let $D$ be a connected component of $(V, F)$ with $E(D) \cap \delta(L) \ne \emptyset$ for some $L\in \Lscr_{\ge 2}$.
Because $D$ is Eulerian it then contains a cycle $C$ with $E(C) \cap \delta(L) \ne \emptyset$.
But then $E(C) \subseteq E(D)$ contains a forward edge by Lemma~\ref{lemma:simple_berservation_froward_edges}.
By Lemma~\ref{lemma:r_values_consistent}, the edges of the paths $P_i$ (for $i\in \{1,\dots,k\}$) are neutral edges.
Hence, the forward edge in $C$ was already present in $\bar F$ and thus Lemma~\ref{lemma:bar_F_crossings_L} implies $V(D) \cap V(B) \ne \emptyset$.
This shows that $F$ is a solution to the Subtour Cover problem.
It remains to show~\eqref{eq:backbone_component_light1} and~\eqref{eq:def_light1}.

Lemma~\ref{lemma:new_path_do_not_cross_laminar} implies 
\[
 c(E(P_i)) = \sum_{v\in V(P_i)} |E(P_i) \cap \delta(v)| \cdot y_v \le \sum_{v\in W_i} 2y_v.
\]
Moreover, the sets $W_i$ for $i\in\{1,\dots, k\}$ are pairwise disjoint.
Using also $V(B) \cap W_i = \emptyset$ for $i\in\{1,\dots, k\}$, we obtain 
$\sum_{i=1}^k c(E(P_i)) \le \sum_{v\in V\setminus V(B)} 2y_v$.
Together with \eqref{eq:cost_bound_bar_F}, this implies~\eqref{eq:backbone_component_light1}.
 
Finally, we prove~\eqref{eq:def_light1}.
Let $D$ be a connected component of $(V, F)$ with $V(D) \cap V(B) = \emptyset$.
By property $(iii)$, $c(E(D))=\sum_{v\in V(D)} |F \cap \delta^-(v)|\cdot 2y_v$.
Because the sets $W_1, \dots, W_k$ are pairwise disjoint, we have $|F \cap \delta^-(v)| \le |\bar F \cap \delta^-(v)| +1$ for every vertex $v\in V(D)$.
By Lemma~\ref{lemma:bar_F_light}, this implies $|F \cap \delta^-(v)| \le 3$ for every vertex $v\in V(D)$ with $y_v >0$.
This shows~\eqref{eq:def_light1} and concludes the proof of Theorem~\ref{thm:subtourpartitioncover}.

\section{Algorithm for vertebrate pairs}\label{sect:solve_vertebrate_pairs}

In this section we present an algorithm for vertebrate pairs.
This algorithm is essentially due to Svensson~\cite{Sve15} who used it for node-weighted ATSP instances.
Later Svensson, Tarnawski, and V\'egh~\cite{SveTV18} adapted the algorithm to work with vertebrate pairs.
Here, we present an improved variant of their algorithm.

As a subroutine we will use the Subtour Cover algorithm from Theorem~\ref{thm:subtourpartitioncover}.
In order to exhibit the dependence of the approximation guarantee of the algorithm on the subroutine
we introduce the notion of an \emph{$(\alpha,\kappa,\beta)$-algorithm} for Subtour Cover.
Theorem~\ref{thm:subtourpartitioncover} yields a $(3,2,1)$-algorithm for Subtour Cover.

\begin{definition}
Let $\alpha,\kappa,\beta \ge 0$.
An \emph{$(\alpha,\kappa,\beta)$-algorithm} for Subtour Cover is a polynomial-time algorithm that 
computes a solution $F$ for every instance $(\Iscr,B, H)$ such that 
\begin{equation}\label{eq:backbone_component_light}
 c(F) \ \le \ \kappa\cdot\lp(\Iscr)+\beta\cdot\sum_{v\in V\setminus V(B)} 2y_{v},
\end{equation}
and for every connected component $D$ of $(V,F)$ with $V(D)\cap V(B)=\emptyset$ we have
\begin{equation}\label{eq:def_light}
   c(E(D)) \ \le \ \alpha \cdot \sum_{v\in V(D)} 2y_{v}.
\end{equation} 
\end{definition}
Let $\alpha,\kappa,\beta \ge 0$ such that there is an \emph{$(\alpha,\kappa,\beta)$-algorithm} $\Ascr$ for Subtour Cover and
let $\epsilon > 0$ be a fixed constant.
The goal of this section is to show that there is a polynomial-time $(\kappa, 4\alpha+ \beta + 1+\epsilon)$-algorithm for vertebrate pairs.

\subsection{Outline}

Let $(\Iscr, B)$ be a vertebrate pair.
Svensson's algorithm is initialized with an Eulerian multi-set $\tilde H \subseteq E[V \setminus V(B)]$  with $\tilde H\cap \delta(L)=\emptyset$ for all $L\in \Lscr_{\ge 2}$, and then computes either 
a ``better'' initialization $\tilde H'$ or extends $\tilde H$ to a solution $H$ of the given vertebrate pair $(\Iscr, B)$.

The initialization $\tilde H$ of the algorithm will always be \emph{light} (see Definition~\ref{def:light}).
To define what a \emph{light} edge set is, we introduce a function $\ell : V \rightarrow  \mathbb{R}_{\ge 0}$.
For $v\in V$ we set
\begin{equation*}
 \ell(v) \ := \ \begin{cases}
          (1+\epsilon')\cdot 2\alpha \cdot 2y_{v} + \frac{\epsilon'}{n} \cdot \sum_{u\in V\setminus V(B)} 2 y_u  & \text{ if } v\in V\setminus V(B) \\[2mm]
          \frac{\kappa \cdot \lp(\Iscr) + \beta \cdot \sum_{u\in V\setminus V(B)} 2y_{u} }{|V(B)|}
          & \text{ if } v\in V(B),
         \end{cases}
\end{equation*}
where $\epsilon':=\frac{\epsilon}{3+4\alpha+\sfrac{1}{2\alpha}}$.

\begin{definition}\label{def:light}
Let $\tilde H$ be a (multi-)subset of $E$.
We call $\tilde H$ \emph{light} if $c(E(D)) \le \ell(V(D))$ for every connected component $D$ of $(V,\tilde H)$. 
\end{definition}

Note that for $v\in V\setminus V(B)$ the first term of the definition of $\ell(v)$ is proportional to the corresponding dual variable $y_v$.
We need the additional term $\frac{\epsilon'}{n} \cdot \sum_{u\in V\setminus V(B)} 2 y_u$ to guarantee that $\ell(v)$ cannot be too close to zero;
see the proof of Lemma~\ref{lemma:improved_initialization}.
For vertices in $V(B)$ we will only need that 
$\ell(V(B)) = \kappa \cdot \lp(\Iscr) + \beta \cdot \sum_{u\in V\setminus V(B)} 2y_{u}$.

To measure what a ``better'' initialization for Svensson's algorithm is, we introduce a potential function $\Phi$.
For a multi-subset $\tilde H$ of $E[V\setminus V(B)]$ such that the connected components of 
$(V\setminus V(B),\tilde H)$ have vertex sets $\tilde W_1,\ldots,\tilde W_k$, we write 
\[ \Phi(\tilde H)= \sum_{i=1}^k \ell(\tilde W_i)^{1+p},\]
where $p\coloneqq \log_{1+\epsilon'}(\frac{2+\epsilon'}{\epsilon'})$.
Svensson~\cite{Sve15} and Svensson, Tarnawski, and V\'egh \cite{SveTV18} used $p=1$. 
Our choice of the potential function $\Phi$ will lead to an improved approximation ratio.

The following lemma states the result of Svensson's algorithm.

\begin{lemma}\label{lemma:result_svensson}
Let $\alpha,\kappa,\beta \ge 0$ such that there is an \emph{$(\alpha,\kappa,\beta)$-algorithm} for Subtour Cover and let $\epsilon > 0$ be a fixed constant. Then there exists a constant $C > 0$ such that the following holds.

Given a vertebrate pair $(\Iscr,B)$ with $\Iscr=(G,\Lscr,x,y)$ and a light Eulerian multi-subset $\tilde H$ of $E[V\setminus V(B)]$ with $\tilde H\cap \delta(L)=\emptyset$ for all $L\in \Lscr_{\ge 2}$, we can  compute in polynomial time 
 \begin{itemize}
  \item[(a)] a solution $H$ for the vertebrate pair $(\Iscr, B)$ such that 
  \begin{equation}
  \label{eq:cost_bound_H_svensson}
  c(H) \le \ell(V(B)) + (2+\sfrac{1}{2\alpha}) \cdot \ell(V \setminus V(B)),\end{equation}
  or 
  \item[(b)] a light Eulerian multi-subset $\tilde H'$ of $E[V\setminus V(B)]$ such that $\tilde H'\cap \delta(L)=\emptyset$ for all $L\in \Lscr_{\ge 2}$ and
 \begin{equation}\label{eq:potentialincrease}
             \Phi(\tilde H') - \Phi(\tilde H) \ > \ \Bigl(\sfrac{1}{C\cdot n} \cdot \ell(V\setminus V(B))\Bigr)^{1+p}. 
\end{equation}
 \end{itemize}
\end{lemma}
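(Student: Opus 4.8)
The plan is to mimic Svensson's iterative argument, adapted to the new potential function $\Phi$ with exponent $1+p$ and the new light-function $\ell$. The proof proceeds by one round of the main algorithm, where we maintain $H$ growing from $\tilde H$ by repeatedly adding Subtour Cover solutions and merging components; either this terminates successfully (case (a)) or it produces a cheap component that can be ``promoted'' into a new initialization $\tilde H'$ with a strictly larger potential (case (b)).

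\textbf{Step 1: one application of the Subtour Cover algorithm.} First I would apply the given $(\alpha,\kappa,\beta)$-algorithm $\Ascr$ to the instance $(\Iscr, B, \tilde H)$ — note $\tilde H$ satisfies the hypotheses of Definition~\ref{def:subtour_cover}, since it is Eulerian, lies in $E[V\setminus V(B)]$, and $\tilde H\cap\delta(L)=\emptyset$ for all $L\in\Lscr_{\ge 2}$. Let $F$ be the returned solution. By~\eqref{eq:backbone_component_light} and~\eqref{eq:def_light}, $F$ is Eulerian, it connects every component of $(V\setminus V(B),\tilde H)$ to something else, every component of $(V,F)$ touching $\delta(L)$ for $L\in\Lscr_{\ge2}$ meets $V(B)$, the global cost is at most $\kappa\cdot\lp(\Iscr)+\beta\sum_{v\notin V(B)}2y_v=\ell(V(B))$, and every component $D$ of $(V,F)$ avoiding $V(B)$ has $c(E(D))\le\alpha\sum_{v\in V(D)}2y_v$.

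\textbf{Step 2: dichotomy on the components of $(V, \tilde H \cup F)$.} Consider the connected components of $(V,\tilde H\cupp F)$. The ones meeting $V(B)$ will eventually be absorbed into the backbone component; the key question is whether the remaining components are all ``light enough.'' For a component $D$ of $(V,\tilde H\cupp F)$ with $V(D)\cap V(B)=\emptyset$: it is built from old $\tilde H$-components (each light, so $c$ of its edges is $\le\ell$ of its vertex set) glued by $F$-pieces (each with $c(E(D'))\le\alpha\sum 2y_v$ by~\eqref{eq:def_light}), and a counting argument — each $F$-component touches at least two distinct old $\tilde H$-components or vertices, so the number of $F$-pieces is at most the number of merged blocks minus one — bounds $c(E(D))$ by roughly $\ell(V(D))$ plus $2\alpha\sum_{v\in V(D)}2y_v$, i.e.\ by $(1+\epsilon')^{-1}$ of something comparable to $\ell(V(D))$. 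If \emph{all} such components are light (i.e.\ $c(E(D))\le\ell(V(D))$), set $\tilde H:=\tilde H\cupp F$ and iterate; the global cost of $\tilde H\cupp F$ restricted to the backbone component is controlled by $c(F)\le\ell(V(B))$ plus the telescoping contributions, yielding case (a) with the bound~\eqref{eq:cost_bound_H_svensson} once $(V,H)$ becomes connected — here the $2+\frac1{2\alpha}$ factor emerges from summing the geometric-like series of the light-budgets spent across iterations. If \emph{some} component $D$ is \emph{not} light, we stop and go to Step~3.

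\textbf{Step 3: constructing the better initialization $\tilde H'$.} When a heavy component $D$ is found, we extract from it an Eulerian multi-subset of $E[V\setminus V(B)]$, avoiding all $\delta(L)$, $L\in\Lscr_{\ge2}$ — essentially $D$ itself, which avoids $V(B)$ and hence (by property $(iii)$ and the structure of $F$) avoids the boundaries of the non-singleton laminar sets — and merge it into $\tilde H$ to form $\tilde H'$. We must verify $\tilde H'$ is light: this is where the additive term $\frac{\epsilon'}{n}\sum_{u\notin V(B)}2y_u$ in the definition of $\ell$ is used, exactly as flagged in the remark before Lemma~\ref{lemma:improved_initialization}, to ensure $\ell(v)$ is bounded away from $0$ so that merging does not violate lightness. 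The potential increase is then computed: merging components $\tilde W_{i_1},\dots,\tilde W_{i_m}$ into one replaces $\sum_j\ell(\tilde W_{i_j})^{1+p}$ by $(\sum_j\ell(\tilde W_{i_j}))^{1+p}$, and convexity (superadditivity of $t\mapsto t^{1+p}$) gives a gain; the specific choice $p=\log_{1+\epsilon'}\!\frac{2+\epsilon'}{\epsilon'}$ is calibrated so that the gain beats $\bigl(\frac{1}{Cn}\ell(V\setminus V(B))\bigr)^{1+p}$ for a suitable constant $C$, giving~\eqref{eq:potentialincrease}. One must also check $\tilde H'\cap\delta(L)=\emptyset$ for $L\in\Lscr_{\ge2}$ and that $\tilde H'$ is Eulerian, both of which are inherited from $\tilde H$ and $F$.

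\textbf{Main obstacle.} The delicate part is the bookkeeping in Step~2–3 that makes the potential argument close: tracking how much of the ``light budget'' $\ell$ is consumed per iteration, showing the heavy component certifies a quantitatively large potential jump (not merely a positive one), and verifying that the new initialization remains light after merging — the latter genuinely requires the floor/additive-$\epsilon'/n$ safeguard in $\ell$, and getting the polynomial bound on the number of iterations (hence the constant $C$) from the total-potential ceiling $\Phi\le\ell(V\setminus V(B))^{1+p}\cdot(\text{poly})$ is the crux. The choice of exponent $p$ is dictated precisely by making the convexity gain in Step~3 exceed $(\Theta(1/n)\cdot\ell(V\setminus V(B)))^{1+p}$, which is the technical heart of the improvement over the $p=1$ analysis in~\cite{Sve15,SveTV18}.
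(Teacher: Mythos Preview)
Your high-level dichotomy is right, but two essential mechanisms from the paper are missing, and without them the argument does not close.

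\textbf{The cost bound in case (a) does not follow from a geometric series.} In the paper, the algorithm does \emph{not} replace $\tilde H$ by $\tilde H\cupp F$ and iterate. Instead it keeps the original partition $\tilde W_0,\ldots,\tilde W_k$ (and the index function $\mathrm{ind}(\cdot)$) fixed throughout, grows a separate set $H\supseteq\tilde H$, and in each round adds only carefully selected pieces of $F$ together with auxiliary cheap cycles (step~(3c) of Algorithm~\ref{algo:svensson_algo}). The $(2+\sfrac{1}{2\alpha})$ factor is obtained by charging each added $F$-piece to a set $\tilde W_i$ \emph{at most once over all iterations} (Lemma~\ref{lemma:atsp_f_edges}) and charging each cycle of step~(3c) to a distinct index (Lemma~\ref{lemma:atsp_x_edges}). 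The ``at most once'' argument relies critically on step~(3c) together with the check in step~(2b): if $F_i$ were added in two different rounds, the second occurrence would contain a cycle crossing the earlier $Z$-boundary, and either that cycle is cheap (contradicting termination of step~(3c) in the earlier round) or it certifies $\ell(V(D))>(1+\epsilon')\ell(\tilde W_{\mathrm{ind}(D)})$ via Lemma~\ref{lemma:circuits_are_light} (contradicting step~(2b) in the later round). Your proposal has no analogue of step~(3c), no fixed index function, and no mechanism preventing repeated charging; a naive iteration could add $\Theta(n)$ Subtour Cover solutions at cost $\ell(V(B))$ each, blowing the bound.

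\textbf{Lightness of $\tilde H'$ requires the knapsack step, not the $\epsilon'/n$ floor.} When you find a heavy merged component $D$ and ``merge it into $\tilde H$'', the resulting component may have cost up to roughly $2\ell(V(D))$ (old light budget plus new $F$-cost), so it is \emph{not} light. The additive $\frac{\epsilon'}{n}\sum 2y_u$ term only guarantees $\ell(v)$ is bounded away from zero (Lemma~\ref{lemma:regularization_of_l}); it does not repair an overweight component. The paper's fix (Lemma~\ref{lemma:improved_initialization}) is a knapsack: from the old components $\tilde W_j$ meeting $V(D)$, keep only a subset $J$ with $\sum_{j\in J}\ell(\tilde W_j\cap V(D))\le\frac{\epsilon'}{2+\epsilon'}\ell(V(D))$, discard the rest, and take $\tilde H'=E(D)\cupp\bigcup_{j\in J}\tilde H[\tilde W_j]$ plus the untouched components. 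This simultaneously makes the new component light and, via the greedy knapsack guarantee (Lemma~\ref{lemma:knapsack}) and the calibration $(1+\epsilon')^p=\frac{2+\epsilon'}{\epsilon'}$, forces the potential gain $\ell(V(D^*))^{1+p}-\sum_{j\in I}\ell(\tilde W_j)^{1+p}>\ell(\tilde W_i)^{1+p}$. Pure superadditivity of $t\mapsto t^{1+p}$ is not enough here, because you are not merging all the $\tilde W_j$'s --- some are thrown away and contribute negatively to $\Phi(\tilde H')-\Phi(\tilde H)$.
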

From Lemma~\ref{lemma:result_svensson} we can derive the main result of this section.
\begin{theorem}
\label{thm:reductiontosubtourpartitioncover}
Let $\alpha,\kappa,\beta \ge 0$ such that there is an \emph{$(\alpha,\kappa,\beta)$-algorithm} for Subtour Cover and
let $\epsilon > 0$ be a fixed constant.

Then there is a polynomial-time $(\kappa, 4\alpha + \beta + 1+\epsilon)$-algorithm for vertebrate pairs.
\end{theorem}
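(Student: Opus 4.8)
The plan is to derive Theorem~\ref{thm:reductiontosubtourpartitioncover} from Lemma~\ref{lemma:result_svensson} by iterating the dichotomy (a)/(b) and using the potential $\Phi$ to bound the number of iterations polynomially. First I would initialize Svensson's algorithm with $\tilde H = \emptyset$, which is trivially light (each connected component is a single vertex $v$ with $c(E(D)) = 0 \le \ell(v)$). Then I would repeatedly invoke Lemma~\ref{lemma:result_svensson}: as long as it returns outcome (b), I replace $\tilde H$ by the new light set $\tilde H'$ and continue; as soon as it returns outcome (a), I stop with the solution $H$ for the vertebrate pair.

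The next step is to bound the number of iterations. Each application of outcome (b) increases $\Phi$ by more than $\bigl(\tfrac{1}{Cn}\ell(V\setminus V(B))\bigr)^{1+p}$. On the other hand, $\Phi(\tilde H)$ is always bounded above: since the vertex sets $\tilde W_1,\dots,\tilde W_k$ of the components of $(V\setminus V(B), \tilde H)$ are disjoint subsets of $V\setminus V(B)$, and $x \mapsto x^{1+p}$ is superadditive on nonnegative reals, we get $\Phi(\tilde H) = \sum_i \ell(\tilde W_i)^{1+p} \le \bigl(\sum_i \ell(\tilde W_i)\bigr)^{1+p} \le \ell(V\setminus V(B))^{1+p}$. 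Comparing upper bound and per-step increase, the number of iterations is at most $(Cn)^{1+p}$, which is polynomial in $n$ since $p$ is a constant (depending only on $\epsilon'$, hence on $\epsilon$, $\alpha$). Here I should note that $\ell(V\setminus V(B)) > 0$ unless $V = V(B)$, in which case $\tilde H = \emptyset$ is already a trivial solution; so the division is harmless. Thus the overall algorithm runs in polynomial time.

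It then remains to verify that the solution $H$ returned satisfies the cost bound required of a $(\kappa, 4\alpha+\beta+1+\epsilon)$-algorithm for vertebrate pairs, i.e.\ that $c(H) \le \kappa\cdot\lp(\Iscr) + (4\alpha+\beta+1+\epsilon)\cdot\sum_{v\in V\setminus V(B)} 2y_v$. By \eqref{eq:cost_bound_H_svensson} we have $c(H) \le \ell(V(B)) + (2+\tfrac{1}{2\alpha})\cdot\ell(V\setminus V(B))$. By the definition of $\ell$ on $V(B)$ we have $\ell(V(B)) = \kappa\cdot\lp(\Iscr) + \beta\cdot\sum_{v\in V\setminus V(B)} 2y_v$. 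For the other term, summing the definition of $\ell$ on $V\setminus V(B)$ over all such vertices gives $\ell(V\setminus V(B)) = (1+\epsilon')\cdot 2\alpha\cdot\sum_{v\in V\setminus V(B)} 2y_v + \epsilon'\cdot\tfrac{|V\setminus V(B)|}{n}\cdot\sum_{u\in V\setminus V(B)} 2y_u \le (1+\epsilon')(2\alpha + \tfrac{\epsilon'}{2\alpha}\cdot\tfrac{1}{1+\epsilon'})\cdot\sum_{v\in V\setminus V(B)} 2y_v$; more simply, $\ell(V\setminus V(B)) \le \bigl((1+\epsilon')2\alpha + \epsilon'\bigr)\cdot\sum_{v\in V\setminus V(B)} 2y_v$. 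Multiplying by $(2+\tfrac{1}{2\alpha})$ and collecting terms, the coefficient of $\sum_{v\in V\setminus V(B)} 2y_v$ becomes $\beta + (2+\tfrac{1}{2\alpha})\bigl((1+\epsilon')2\alpha + \epsilon'\bigr) = \beta + 4\alpha + 1 + \epsilon'(4\alpha + 2 + \tfrac{1}{2\alpha} + 1) = \beta + 4\alpha + 1 + \epsilon'(3 + 4\alpha + \tfrac{1}{2\alpha})$, and by the choice $\epsilon' = \tfrac{\epsilon}{3+4\alpha+1/(2\alpha)}$ the last term equals $\epsilon$. Hence $c(H) \le \kappa\cdot\lp(\Iscr) + (4\alpha+\beta+1+\epsilon)\cdot\sum_{v\in V\setminus V(B)} 2y_v$, as required. (One should also check the degenerate case $\alpha = 0$ separately, where the $\tfrac{1}{2\alpha}$ terms are vacuous because then every component with $V(D)\cap V(B)=\emptyset$ has zero cost; I would handle this by a remark.)

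I expect the main obstacle to be purely bookkeeping: ensuring the iteration-count bound is genuinely polynomial, which hinges on $p$ being a constant and on the superadditivity argument for the upper bound on $\Phi$, and then carefully tracking the arithmetic so that the $\epsilon'$-terms collapse exactly to $\epsilon$. The substantive content — that each round either finishes or makes measurable progress — is entirely encapsulated in Lemma~\ref{lemma:result_svensson}, so no new ideas are needed here; the proof is a short wrapper around that lemma.
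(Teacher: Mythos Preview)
Your proposal is correct and follows essentially the same argument as the paper: initialize with $\tilde H=\emptyset$, iterate Lemma~\ref{lemma:result_svensson} using the potential increase in case~(b) together with the upper bound $\Phi(\tilde H)\le \ell(V\setminus V(B))^{1+p}$ to limit the number of iterations to $(Cn)^{1+p}$, and then expand the cost bound~\eqref{eq:cost_bound_H_svensson} via the definitions of $\ell$ and $\epsilon'$ to obtain the $(\kappa,4\alpha+\beta+1+\epsilon)$ guarantee. Your extra remarks on the degenerate cases $V=V(B)$ and $\alpha=0$ are welcome additions but not present in the paper's proof.
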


\begin{proof}
Define $\epsilon':=\frac{\epsilon}{3+4\alpha+\sfrac{1}{2\alpha}}$, $p$, $\ell$ and $\Phi$ as above.
We start with $\tilde H=\emptyset$ and apply Lemma~\ref{lemma:result_svensson}. 
If we obtain a set $\tilde H'$ as in Lemma~\ref{lemma:result_svensson}~\textit{(b)},
we set $\tilde H:=\tilde H'$ and iterate, i.e.\ we apply Lemma~\ref{lemma:result_svensson} again until
we obtain a set $H$ as in Lemma~\ref{lemma:result_svensson}~\textit{(a)}.
Since $0 \ \le \ \Phi(\tilde H) \ \le\ \ell(V\setminus V(B))^{1+p}$, we need at most $\left(C\cdot n\right)^{1+p}$ iterations.
At the end, the algorithm guaranteed by Lemma~\ref{lemma:result_svensson} returns a solution $H$ for the vertebrate pair $(\Iscr, B)$ such that 
\begin{align*}
c(H) \ \le& \ \ell(V(B)) + \left(2+\sfrac{1}{2\alpha}\right) \cdot \ell(V\setminus V(B)) \\
        \ \le& \ \kappa \cdot \lp(I) + \left( \beta+\left(2+\sfrac{1}{2\alpha}\right)\cdot \left((1+\epsilon')\cdot 2\alpha+ \epsilon'\right) \right) \cdot \sum_{v\in V\setminus V(B)}2y_v \\
        \ =& \ \kappa \cdot \lp(I) + \left(\beta+4\alpha + 1 + 4\alpha \cdot \epsilon' + \epsilon' + (2 +\sfrac{1}{2\alpha})\cdot\epsilon' \right) \cdot \sum_{v\in V\setminus V(B)}2y_v \\
        \ =& \ \kappa \cdot \lp(I) + \left(4\alpha+ \beta+ 1+ \epsilon \right) \cdot \sum_{v\in V\setminus V(B)}2y_v. 
\end{align*}
\vspace*{-5mm}

\end{proof}

\subsection{Basic properties of the function $\boldmath{\ell}$ and algorithm $\Ascr$}\label{sect:basic_properties_svensson}

In this section we describe the key properties of the function $\ell$ and our given $(\alpha,\kappa,\beta)$-algorithm $\Ascr$ for Subtour Cover. 
\begin{lemma}\label{lemma:properties_subtour_cover_l}
 Let $\Ascr$ be an $(\alpha, \kappa, \beta)$-algorithm for Subtour Cover.
 Let $F$ be the output of $\Ascr$ applied to an instance $(\Iscr, B, H)$.
 \begin{enumerate}[(i)]
 \item\label{item:lightness} For every connected component $D$ of $(V,F)$ with $V(D) \cap V(B)= \emptyset$ we have 
       \[ c(E(D)) \le  \frac{1}{2(1+\epsilon')} \cdot \ell(V(D)). \]
 \item\label{item:backbone_comp} \looseness=-1
       Let the graph $D_B$ be the union of all connected components $D$ of $(V,F)$ with $V(D)\cap V(B) \ne \emptyset$.
       Then 
       \[ c(E(D_B)) \le \ell(V(B)). \]
 \end{enumerate}
\end{lemma}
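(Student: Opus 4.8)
The plan is to unfold the definition of $\ell$ and the guarantee of an $(\alpha,\kappa,\beta)$-algorithm for Subtour Cover, and check the two inequalities term by term. For part \eqref{item:lightness}, let $D$ be a connected component of $(V,F)$ with $V(D)\cap V(B)=\emptyset$. By \eqref{eq:def_light} we have $c(E(D)) \le \alpha\sum_{v\in V(D)} 2y_v$, so it suffices to show $\alpha\sum_{v\in V(D)}2y_v \le \frac{1}{2(1+\epsilon')}\ell(V(D))$, i.e.\ that $\ell(v)\ge 2\alpha(1+\epsilon')\cdot 2y_v$ for each $v\in V(D)$ and then sum. Since $V(D)\subseteq V\setminus V(B)$ (because $V(D)\cap V(B)=\emptyset$ and the components partition $V$), every $v\in V(D)$ satisfies $\ell(v) = (1+\epsilon')\cdot 2\alpha\cdot 2y_v + \frac{\epsilon'}{n}\sum_{u\in V\setminus V(B)}2y_u \ge (1+\epsilon')\cdot 2\alpha\cdot 2y_v$, which is exactly what is needed. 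Summing over $v\in V(D)$ gives $\ell(V(D)) \ge 2\alpha(1+\epsilon')\sum_{v\in V(D)}2y_v \ge 2(1+\epsilon')\,c(E(D))$, which rearranges to the claim.

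For part \eqref{item:backbone_comp}, let $D_B$ be the union of all connected components of $(V,F)$ meeting $V(B)$. The cost $c(E(D_B))$ is at most the total cost $c(F) = c(E(F))$, because $E(D_B)\subseteq F$ and $c\ge 0$; actually one should be slightly careful and note $c(E(D_B)) \le c(F)$ since $E(D_B)$ is a submultiset of $F$. By \eqref{eq:backbone_component_light}, $c(F)\le \kappa\cdot\lp(\Iscr) + \beta\sum_{v\in V\setminus V(B)}2y_v$. On the other hand, summing the definition of $\ell$ over $v\in V(B)$ gives $\ell(V(B)) = |V(B)|\cdot \frac{\kappa\lp(\Iscr)+\beta\sum_{u\in V\setminus V(B)}2y_u}{|V(B)|} = \kappa\lp(\Iscr) + \beta\sum_{u\in V\setminus V(B)}2y_u$, exactly matching the bound on $c(F)$. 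Hence $c(E(D_B)) \le c(F) \le \ell(V(B))$.

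Neither part is difficult; the only point requiring mild care is the bookkeeping for $D_B$. A priori the edge sets $E(D)$ of distinct components $D$ are disjoint submultisets of $F$, so $c(E(D_B)) = \sum_{D:\,V(D)\cap V(B)\ne\emptyset} c(E(D)) \le \sum_{\text{all }D} c(E(D)) = c(F)$, and one should make sure that the decomposition of $F$ into its connected components really accounts for every edge of $F$ (it does, since $(V,F)$ is a graph on vertex set $V$ and every edge lies in exactly one component). The main ``obstacle,'' such as it is, is simply to get the constants to line up — in particular to notice that the extra additive term $\frac{\epsilon'}{n}\sum 2y_u$ in $\ell(v)$ for $v\notin V(B)$ only helps (it makes $\ell(v)$ larger), so it can be dropped for part \eqref{item:lightness}, and that it is exactly the choice of the denominator $|V(B)|$ that makes $\ell(V(B))$ collapse to the clean expression needed for part \eqref{item:backbone_comp}.
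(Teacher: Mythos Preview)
Your proof is correct and follows exactly the same approach as the paper: both parts amount to unfolding the definition of $\ell$ together with the guarantees \eqref{eq:def_light} and \eqref{eq:backbone_component_light} of an $(\alpha,\kappa,\beta)$-algorithm. The paper simply states this in one sentence, while you spell out the arithmetic explicitly.
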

\begin{proof}
The claimed properties follow directly from the definition of $\ell$ and the definition of an $(\alpha,\kappa,\beta)$-algorithm for Subtour Cover:
property~\eqref{item:lightness} follows from~\eqref{eq:def_light} and property~\eqref{item:backbone_comp} follows from~\eqref{eq:backbone_component_light}.
\end{proof}
The next lemma will be needed to show that Svensson's algorithm makes sufficient progress when finding ``a better initialization''.
\begin{lemma}\label{lemma:regularization_of_l}
 There exists a constant $C>0$ such that for every vertex $v\in V\setminus V(B)$ we have 
       \[ \ell(v) \ge \frac{1}{C\cdot n} \cdot \ell(V\setminus V(B)). \]
\end{lemma}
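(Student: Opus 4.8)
The plan is to exploit the second, additive term $\tfrac{\epsilon'}{n}\sum_{u\in V\setminus V(B)}2y_u$ in the definition of $\ell(v)$ for $v\in V\setminus V(B)$; this term was put into $\ell$ precisely so that $\ell(v)$ cannot degenerate (for instance when $y_v=0$), and it is exactly what makes this lemma true. Abbreviate $S:=\sum_{u\in V\setminus V(B)}2y_u$. If $V=V(B)$ there is nothing to prove, and if $S=0$ then $\ell(v)=0$ for all $v\in V\setminus V(B)$ while also $\ell(V\setminus V(B))=0$, so the claimed inequality holds trivially; hence one may assume $S>0$.

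First I would observe that, since both summands in the definition of $\ell(v)$ are nonnegative, dropping the first one gives the uniform lower bound $\ell(v)\ge\tfrac{\epsilon'}{n}\cdot S$ for every $v\in V\setminus V(B)$. Next I would bound $\ell(V\setminus V(B))$ from above: summing the definition of $\ell$ over all $v\in V\setminus V(B)$ and using $|V\setminus V(B)|\le n$ yields
\[
 \ell(V\setminus V(B))\ =\ (1+\epsilon')\cdot 2\alpha\cdot S\ +\ |V\setminus V(B)|\cdot\tfrac{\epsilon'}{n}\cdot S\ \le\ \bigl((1+\epsilon')\cdot 2\alpha+\epsilon'\bigr)\cdot S.
\]

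Finally I would set $C:=\tfrac{(1+\epsilon')\cdot 2\alpha+\epsilon'}{\epsilon'}$, which is a well-defined positive constant: $\epsilon'>0$, the quantity $\alpha>0$ is implicit in the definition $\epsilon'=\tfrac{\epsilon}{3+4\alpha+1/(2\alpha)}$ used throughout this section, and $\epsilon'$ depends only on the fixed constants $\alpha,\kappa,\beta,\epsilon$. Combining the two bounds above gives, for every $v\in V\setminus V(B)$,
\[
 \ell(v)\ \ge\ \tfrac{\epsilon'}{n}\cdot S\ =\ \tfrac{1}{C\cdot n}\cdot\bigl((1+\epsilon')\cdot 2\alpha+\epsilon'\bigr)\cdot S\ \ge\ \tfrac{1}{C\cdot n}\cdot\ell(V\setminus V(B)),
\]
which is the desired inequality. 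I do not expect a genuine obstacle here: the only point requiring care is that the constant $C$ stays bounded, i.e.\ that $\ell(v)$ cannot be arbitrarily small relative to $\ell(V\setminus V(B))$. This is exactly the reason the additive $\tfrac{\epsilon'}{n}\cdot S$ term appears in $\ell(v)$ for $v\in V\setminus V(B)$ in the first place — without it, a vertex with $y_v=0$ would have $\ell(v)=0$ while $\ell(V\setminus V(B))$ could be positive, and no such $C$ could exist.
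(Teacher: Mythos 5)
Your proof is correct and takes essentially the same route as the paper: lower-bound $\ell(v)$ by the additive $\tfrac{\epsilon'}{n}\sum_{u\in V\setminus V(B)}2y_u$ term, upper-bound $\ell(V\setminus V(B))$ by $\bigl((1+\epsilon')\cdot 2\alpha + \epsilon'\bigr)\sum_{u\in V\setminus V(B)}2y_u$, and combine, taking $C=\tfrac{(1+\epsilon')\cdot 2\alpha+\epsilon'}{\epsilon'}$. Your handling of the degenerate cases $V=V(B)$ and $S=0$ and the explicit constant are sound additions, but the argument is the same as in the paper.
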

\begin{proof}
 We have
 \begin{align*}
   \ell(V\setminus V(B))\ 
   \le&\   (1+\epsilon')\cdot 2\alpha \cdot \sum_{u\in V\setminus V(B)} 2y_{u} + \epsilon' \cdot \sum_{u\in V\setminus V(B)} 2 y_u \\
   \le&\ \left((1+\epsilon')\cdot 2\alpha + \epsilon'\right) \cdot \sum_{u\in V\setminus V(B)} 2 y_u.
 \end{align*}
 Therefore, for every vertex $v\in V\setminus V(B)$ we have  
 \[ \ell(v)\ \ge\ \frac{\epsilon'}{n} \cdot \sum_{u\in V\setminus V(B)} 2 y_u\ \ge\ \frac{\epsilon'}{\left((1+\epsilon')\cdot 2\alpha + \epsilon'\right)\cdot n}\cdot \ell(V\setminus V(B)), \]
 which completes the proof because $\alpha$ and $\epsilon'$ are constants.
\end{proof}
The following property of $\ell$ is not crucial for obtaining a constant-factor approximation, 
but allows us to obtain a better approximation ratio.
\begin{lemma}\label{lemma:circuits_are_light}
  For every cycle $C$ in $G[V\setminus V(B)]$ with $E(C) \cap \delta(L) = \emptyset$ for all $L\in \Lscr_{\ge 2}$, we have 
 \[ c(E(C)) \le \frac{1}{2\alpha(1+\epsilon')} \cdot \ell(V(C)). \]
\end{lemma}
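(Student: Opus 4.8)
I want to bound $c(E(C))$ for a cycle $C$ living entirely in $G[V\setminus V(B)]$ that does not cross any non-singleton laminar set. The starting observation is that, because $C \cap \delta(L) = \emptyset$ for all $L \in \Lscr_{\ge 2}$, the only sets $L \in \Lscr$ with $C \cap \delta(L) \ne \emptyset$ are singletons $\{v\}$; hence by the definition of the induced cost function,
\[
 c(E(C)) \;=\; \sum_{e \in E(C)} c(e) \;=\; \sum_{v \in V(C)} |E(C) \cap \delta(v)| \cdot y_v.
\]
Since $C$ is a cycle, every vertex on it has exactly one entering and one leaving edge, so $|E(C) \cap \delta(v)| = 2$ for every $v \in V(C)$ (and $v \notin V(B)$, so $y_v$ is well-defined as $y_{\{v\}}$ or $0$). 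Therefore $c(E(C)) = \sum_{v \in V(C)} 2 y_v$.

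**Relating this to $\ell$.** Now I invoke the definition of $\ell$ on $V\setminus V(B)$: for each $v \in V(C) \subseteq V\setminus V(B)$,
\[
 \ell(v) \;=\; (1+\epsilon')\cdot 2\alpha \cdot 2y_v + \tfrac{\epsilon'}{n}\sum_{u\in V\setminus V(B)} 2y_u \;\ge\; (1+\epsilon')\cdot 2\alpha \cdot 2y_v,
\]
so $2y_v \le \frac{1}{2\alpha(1+\epsilon')}\,\ell(v)$. Summing over $v \in V(C)$ gives
\[
 c(E(C)) \;=\; \sum_{v\in V(C)} 2y_v \;\le\; \frac{1}{2\alpha(1+\epsilon')}\sum_{v\in V(C)}\ell(v) \;=\; \frac{1}{2\alpha(1+\epsilon')}\,\ell(V(C)),
\]
which is exactly the claimed bound. (I should double-check that the statement uses $\ell(V(C))$, i.e., $\ell$ summed over the vertex set of the cycle, which matches the convention $\ell(S) = \sum_{v\in S}\ell(v)$ used throughout; yes.)

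**Where the difficulty is — and isn't.** Honestly this lemma is almost purely bookkeeping: the real content is just "a cycle has in-degree and out-degree one at each vertex, so each vertex contributes $2y_v$ to the cost," combined with dropping the nonnegative additive regularization term in $\ell(v)$. There is no genuine obstacle. The one thing to be careful about is the hypothesis $E(C) \cap \delta(L) = \emptyset$ for all $L \in \Lscr_{\ge 2}$: this is what ensures that no edge of $C$ picks up cost from a non-singleton set, so that $c(E(C))$ really collapses to the singleton sum. If the cycle crossed some $L \in \Lscr_{\ge 2}$ the identity $c(E(C)) = \sum_v 2y_v$ would fail (there'd be extra $y_L$ terms), and — more to the point — the cost could be much larger than $\ell(V(C))$ since $\ell$ on vertices inside $L$ doesn't "see" $y_L$. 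So the proof is: expand $c$ via the induced cost function, use that $C$ is a cycle to get the factor $2$ at each vertex, restrict to singleton sets using the hypothesis, then bound $2y_v \le \frac{1}{2\alpha(1+\epsilon')}\ell(v)$ vertex by vertex and sum.
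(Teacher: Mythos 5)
Your proof is correct and follows exactly the same route as the paper's: expand $c(E(C))$ via the induced cost function, note that the hypothesis restricts contributions to singleton sets and the cycle property gives a factor $2$ per vertex so $c(E(C)) = \sum_{v\in V(C)} 2y_v$, and then drop the nonnegative regularization term in $\ell(v)$ to get $2y_v \le \frac{1}{2\alpha(1+\epsilon')}\ell(v)$ and sum. The paper states this in a single line; you have simply spelled out the intermediate steps.
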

\begin{proof}
  Let $C$ be a cycle with $E(C) \cap \delta(L) = \emptyset$ for all $L\in \Lscr_{\ge 2}$.
  Then we have $c(E(C)) = \sum_{v\in V(C)} 2y_v \le  \frac{1}{(1+\epsilon')\cdot 2\alpha}\cdot \ell(V(C))$.
\end{proof}

In the following sections we will only use Lemma~\ref{lemma:properties_subtour_cover_l}, 
Lemma~\ref{lemma:regularization_of_l} and Lemma~\ref{lemma:circuits_are_light} and we will not use the precise definition of $\ell$ anymore.

\subsection{Finding a better initialization}

In this section we discuss how Svensson's algorithm finds in certain cases a better initialization $\tilde H'$.
We will need the following well-known statement about the knapsack problem.
\begin{lemma}\label{lemma:knapsack}
 Suppose we are given a finite set $I$ of items and for every item $j\in I$ a weight $w_j > 0$ and a profit $p_j \ge 0$.
 Moreover, let $\bar w < \sum_{j \in I} w_j$ be a given weight limit.
 
 Then we can compute in polynomial time a set $J\subseteq I$ such that
 \begin{itemize}
  \item $ \sum_{j\in J} w_j\ \le\ \bar w$, and \\[-5mm]
  \item $ \sum_{j\in J} p_j\ \ge\ \frac{\bar w}{\sum_{j \in I} w_j} \cdot \sum_{j\in I} p_j-\max_{j\in I}\ p_j$
 \end{itemize}
\end{lemma}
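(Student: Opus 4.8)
The plan is to reduce to the classical FPTAS-free ``greedy plus one item'' argument for the fractional vs.\ integral knapsack problem. First I would sort the items $j\in I$ in non-increasing order of their profit density $p_j/w_j$, say $I=\{j_1,\dots,j_m\}$ with $p_{j_1}/w_{j_1}\ge\cdots\ge p_{j_m}/w_{j_m}$ (items with $p_j=0$ can be placed last; note $w_j>0$ for all $j$, so the densities are well defined). Let $t$ be the largest index such that $\sum_{i=1}^{t} w_{j_i}\le \bar w$; such a $t$ exists (possibly $t=0$) because $w_{j_1}>0$ is compared against $\bar w\ge 0$, and $t<m$ because $\bar w<\sum_{j\in I}w_j$. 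Set $S:=\{j_1,\dots,j_t\}$, so $\sum_{j\in S}w_j\le\bar w$, which is the first required property.

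For the profit bound I would compare $S$ with the optimal \emph{fractional} solution. The greedy fractional solution takes $j_1,\dots,j_t$ entirely and then a fraction $\theta:=(\bar w-\sum_{i=1}^t w_{j_i})/w_{j_{t+1}}\in[0,1)$ of item $j_{t+1}$; a standard exchange argument shows this fractional solution has the maximum possible profit $P^*_{\mathrm{frac}}$ among all fractional packings of weight at most $\bar w$, and in particular $P^*_{\mathrm{frac}}\ge \frac{\bar w}{\sum_{j\in I}w_j}\cdot\sum_{j\in I}p_j$ (scaling the all-items solution, which has weight $\sum_{j\in I}w_j>\bar w$, down by the factor $\bar w/\sum_{j\in I}w_j$ gives a feasible fractional packing of exactly that profit). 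Hence
\[
 \sum_{j\in S} p_j \ =\ P^*_{\mathrm{frac}} - \theta\, p_{j_{t+1}} \ \ge\ \frac{\bar w}{\sum_{j\in I}w_j}\cdot\sum_{j\in I}p_j \ -\ p_{j_{t+1}} \ \ge\ \frac{\bar w}{\sum_{j\in I}w_j}\cdot\sum_{j\in I}p_j \ -\ \max_{j\in I}p_j,
\]
using $\theta<1$ and $p_{j_{t+1}}\le\max_{j\in I}p_j$. This is exactly the second required property, so $J:=S$ works.

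Finally I would note polynomial running time: sorting by density is $O(m\log m)$ and computing the prefix sums to find $t$ is $O(m)$, so the whole procedure is polynomial in the input size. There is essentially no obstacle here; the only point requiring a little care is the exchange argument establishing optimality of the greedy fractional solution — but since we only need the \emph{lower bound} $P^*_{\mathrm{frac}}\ge \frac{\bar w}{\sum_{j\in I}w_j}\sum_{j\in I}p_j$, I could even bypass optimality of greedy entirely: directly observe that the greedy prefix $S$ together with the fraction $\theta$ of $j_{t+1}$ is, density by density, at least as profitable as the uniformly scaled-down packing, since greedy fills the weight budget $\bar w$ using the highest available densities. Either route gives the claim with only routine bookkeeping.
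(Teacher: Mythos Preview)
Your proof is correct and uses essentially the same approach as the paper: sort by density $p_j/w_j$, take the maximal greedy prefix $S$, and bound its profit by comparing to the prefix-plus-one-more-item. The only cosmetic difference is that you route the profit bound through the fractional optimum $P^*_{\mathrm{frac}}$ and the scaled all-items packing, whereas the paper argues directly that the set $J':=S\cup\{j_{t+1}\}$ has average density at least $\frac{\sum_{j\in I}p_j}{\sum_{j\in I}w_j}$ and total weight exceeding $\bar w$, giving $\sum_{j\in J'}p_j>\frac{\bar w}{\sum_{j\in I}w_j}\sum_{j\in I}p_j$ and hence the bound for $J=S$ after dropping $j_{t+1}$; your ``bypass'' remark at the end is exactly this direct route.
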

\begin{proof}
We run the following greedy algorithm.
Sort the items by nonincreasing ratio $\sfrac{p_j}{w_j}$.
Consider the items in this order and, starting with $J=\emptyset$, add items to the set $J$ as long as $\sum_{j\in J} w_j \le \bar w$.
Then adding the next item to $J$ would result in a set $J'$ with 
$\sum_{j\in J'} w_j > \bar w$.
By the sorting of the items,
\[ \textstyle \sum_{j\in J'} p_j \ =\ \frac{\sum_{j\in J'} p_j}{\sum_{j\in J'} w_j} \cdot \sum_{j\in J'} w_j\ \ge\ \frac{\sum_{j\in I} p_j}{\sum_{j\in I} w_j} \cdot \sum_{j\in J'} w_j\ >\ 
\frac{\sum_{j\in I} p_j}{\sum_{j\in I} w_j} \cdot \bar w. \]
Because $J'\setminus J$ contains only one element, this implies
\[ \textstyle \sum_{j\in J} p_j\ \ge\ \frac{\sum_{j\in I} p_j}{\sum_{j\in I} w_j} \cdot \bar w - \max_{j\in I}\ p_j  
\ =\ \frac{\bar w}{\sum_{j \in I} w_j} \cdot \sum_{j\in I} p_j-\max_{j\in I}\ p_j. \]
\end{proof}

Let $\tilde H$ be a light Eulerian multi-subset of $E[V\setminus V(B)]$ with $\tilde H\cap \delta(L)=\emptyset$ for all $L\in \Lscr_{\ge 2}$.
Let  $\tilde W_0 = V(B)$ and let $\tilde W_1,\ldots,\tilde W_k$ be the vertex sets of the connected components of $(V\setminus V(B),\tilde H)$, 
ordered so that $\ell(\tilde W_1)\ge\cdots\ge \ell(\tilde W_k)$. 
For a connected multi-subgraph $D$ of $G$ we define the \emph{index} of $D$ to be
\begin{equation*}
\mathrm{ind}(D) \ := \ \min\{j\in\{0,\ldots,k\}: V(D)\cap \tilde W_j\not=\emptyset\}.
\end{equation*}
The following is the main lemma that we will use to find a better initialization $\tilde H'$.

\begin{lemma}\label{lemma:improved_initialization}
Let $D$ be a subtour, i.e., a connected and Eulerian multi-subgraph of $G$, such that $V(D) \cap V(B) = \emptyset$, $E(D)\cap \delta(L)=\emptyset$ for all $L\in \Lscr_{\ge 2}$, and such that
\begin{equation}\label{eq:atsp_D_light}
c(E(D)) \ \le \ \frac{2}{2+\epsilon'} \cdot \ell(V(D)),
\end{equation}
and
\begin{equation}\label{eq:atsp_D_large}
\ell(V(D)) \ > \ (1+\epsilon') \cdot \ell(\tilde W_{\mathrm{ind}(D)}).
\end{equation}
Then we can compute in polynomial time a light  Eulerian multi-subset $\tilde H'$ of $E[V \setminus V(B)]$ such that $\tilde H'\cap \delta(L)=\emptyset$ for all $L\in \Lscr_{\ge 2}$ and \eqref{eq:potentialincrease} holds.
\end{lemma}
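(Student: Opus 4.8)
The plan is to combine the subtour $D$ with some of the existing components $\tilde W_j$ to form a new initialization $\tilde H'$ whose potential is provably larger. First I would form the candidate component: let $\tilde W_{\mathrm{ind}(D)}, \dots$ be the components, and consider merging $D$ together with a well-chosen subset of the ``small'' components $\tilde W_j$ (those with $\ell(\tilde W_j) \le \ell(V(D))$, say $j \ge \mathrm{ind}(D)$) into a single new connected component $D'$; the remaining components of $\tilde H$ stay untouched. Concretely, I would use the knapsack lemma (Lemma~\ref{lemma:knapsack}) to select among the edges (or small components) a subset whose total $\ell$-weight lands in a target window, maximizing the ``profit'' which here will be $\ell$-weight itself, so that the merged component $D'$ has $\ell$-weight close to but not exceeding roughly $\ell(V(D))$ plus a controlled amount. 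The light-ness of $\tilde H'$ follows from \eqref{eq:atsp_D_light}: the cost of $E(D') = E(D)\cup(\text{edges of the merged }\tilde W_j\text{'s})$ is at most $c(E(D)) + \sum c(E(\tilde W_j)) \le \frac{2}{2+\epsilon'}\ell(V(D)) + \sum \ell(\tilde W_j) \le \ell(V(D'))$, where the last inequality needs $\frac{2}{2+\epsilon'}\ell(V(D)) \le \ell(V(D)) - (\text{overlap correction})$ — this is exactly where the constant $\frac{2}{2+\epsilon'}$ and the value of $p = \log_{1+\epsilon'}\bigl(\frac{2+\epsilon'}{\epsilon'}\bigr)$ are tuned.

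The heart of the argument is the potential increase. The function $t \mapsto t^{1+p}$ is strictly convex, so replacing several components of sizes $\ell(V(D)\cap \tilde W_j)$-ish contributions by one big component of size $\ell(V(D'))$ increases $\sum \ell(\cdot)^{1+p}$, provided $\ell(V(D'))$ is genuinely larger than each of the pieces it absorbs. Assumption \eqref{eq:atsp_D_large}, $\ell(V(D)) > (1+\epsilon')\ell(\tilde W_{\mathrm{ind}(D)})$, guarantees exactly this ``genuinely larger by a factor $1+\epsilon'$'' gap against the largest component that $D$ touches; since the components are sorted in decreasing $\ell$-order, every $\tilde W_j$ with $j \ge \mathrm{ind}(D)$ is also dominated. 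I would quantify the convexity gain: if we merge a family of reals $s_1, \dots, s_m$ with $\sum s_i = S$ and each $s_i \le S/(1+\epsilon')$, then $S^{1+p} - \sum s_i^{1+p} \ge (\text{something}) \cdot S^{1+p}$; with $p$ chosen as above, the ``something'' is bounded below by a fixed positive constant times... and then one multiplies through by the lower bound $S = \ell(V(D')) \ge \ell(V(D)) \ge \frac{1}{C n}\ell(V\setminus V(B))$ coming from Lemma~\ref{lemma:regularization_of_l} (this is the only place the additive $\frac{\epsilon'}{n}\sum 2y_u$ term in $\ell$ is needed, to keep $\ell(V(D))$ from being arbitrarily close to $0$). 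Raising to the power $1+p$ then yields the claimed bound $\Phi(\tilde H') - \Phi(\tilde H) > \bigl(\frac{1}{Cn}\ell(V\setminus V(B))\bigr)^{1+p}$ after possibly enlarging $C$.

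I also need to check the bookkeeping side-conditions: $\tilde H'$ is Eulerian (it is a disjoint union of Eulerian subgraphs — $D$, the merged $\tilde W_j$'s joined together appropriately, and the untouched $\tilde W_j$'s; since $D$ and the $\tilde W_j$'s may share vertices, the merge is automatic and the union of Eulerian graphs on overlapping vertex sets is Eulerian), $\tilde H' \subseteq E[V\setminus V(B)]$ (both $E(D)$ and $\tilde H$ avoid $V(B)$ by hypothesis), and $\tilde H'\cap\delta(L) = \emptyset$ for all $L \in \Lscr_{\ge 2}$ (inherited from $E(D)\cap\delta(L)=\emptyset$ and $\tilde H\cap\delta(L)=\emptyset$). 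The main obstacle, I expect, is getting the convexity estimate sharp enough with the \emph{specific} choice $p = \log_{1+\epsilon'}\bigl(\frac{2+\epsilon'}{\epsilon'}\bigr)$: one has to verify that this exponent is precisely large enough that absorbing components each smaller by a factor $1+\epsilon'$ gives a gain that, even in the worst case (many tiny pieces, or one piece nearly as large as the new total), still exceeds the single-vertex-sized floor $\bigl(\frac{1}{Cn}\ell(V\setminus V(B))\bigr)^{1+p}$. Managing the interaction between \eqref{eq:atsp_D_light} (which limits how big $D'$ can be relative to $\ell(V(D'))$) and \eqref{eq:atsp_D_large} (which forces the gap) while routing everything through the knapsack lemma to handle the discrete choice of which components to absorb is the delicate part; the rest is routine.
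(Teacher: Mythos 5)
Your high-level plan is the right one — merge $D$ with the existing components and use the knapsack lemma plus the regularization term in $\ell$ to force a lower bound on $\ell(\tilde W_i)$ — and you correctly identify that the ``overlap correction'' to the lightness bound is what the knapsack budget controls. But there are two genuine gaps. First, the set of components you may merge with $D$ is not ``the small ones'' or ``those with $j\ge\mathrm{ind}(D)$''; it must be exactly $I=\{j : V(D)\cap\tilde W_j\ne\emptyset\}$, the components that $D$ actually touches, since otherwise the resulting $D^*$ is disconnected. Moreover, even among these, one keeps the edges of $\tilde H[\tilde W_j]$ only for a knapsack-chosen subset $J\subseteq I$; for $j\in I\setminus J$ the vertices $\tilde W_j\cap V(D)$ are absorbed into $D^*$ but the edges $\tilde H[\tilde W_j]$ are dropped and $\tilde W_j\setminus V(D)$ becomes singletons. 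This partition of each $\tilde W_j$ into $\tilde W_j\cap V(D)$ (knapsack weight) and $\tilde W_j\setminus V(D)$ (knapsack profit) is essential and does not appear in your write-up, where you treat profit as ``$\ell$-weight itself.''

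Second, the potential increase is not a generic convexity gain, and the estimate you sketch — ``if $\sum_i s_i = S$ and each $s_i \le S/(1+\epsilon')$ then $S^{1+p}-\sum_i s_i^{1+p} \ge (\text{something})\cdot S^{1+p}$'' — does not apply here, because $\ell(V(D^*))$ is in general strictly smaller than $\sum_{j\in I}\ell(\tilde W_j)$ (we absorb each $\tilde W_j$ only partially). What actually closes the argument is a direct algebraic chain: one peels off a factor $\ell(V(D))^p > \bigl((1+\epsilon')\ell(\tilde W_i)\bigr)^p = \frac{2+\epsilon'}{\epsilon'}\ell(\tilde W_i)^p$ using \eqref{eq:atsp_D_large} and the identity $(1+\epsilon')^p = \frac{2+\epsilon'}{\epsilon'}$; then the remaining linear factor, bounded below via the knapsack profit guarantee and the split $\ell(V(D)) = \sum_{j\in I}\ell(\tilde W_j\cap V(D))$, cancels the prefactor $\frac{2+\epsilon'}{\epsilon'}$ exactly, after which $\ell(\tilde W_i)\ge\ell(\tilde W_j)$ for $j\in I$ yields $\ell(V(D^*))^{1+p} > \ell(\tilde W_i)^{1+p} + \sum_{j\in I}\ell(\tilde W_j)^{1+p}$. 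The floor in \eqref{eq:potentialincrease} then comes from $\ell(\tilde W_i)^{1+p}$ via Lemma~\ref{lemma:regularization_of_l}, not from $\ell(V(D))$. You rightly flag the delicacy of this step, but it is precisely the step that needs to be carried out explicitly; a convexity heuristic alone does not deliver it.
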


\begin{proof}
Let $I:=\{j\in\{0,\ldots,k\}: V(D)\cap \tilde W_j\not=\emptyset\}$ and $i:=\min I =\mathrm{ind}(D)$.
We have $i > 0$ because $V(D) \cap V(B)= \emptyset$.
We will compute a subset $J$ of $I$ and replace the components $\tilde H[\tilde W_j]$ for $j \in I$
by one new component that is the union of $E(D)$ and all $\tilde H[\tilde W_j]$ with $j\in J$. 
More precisely, we set 
\begin{equation*}
\tilde H ' \ :=\ \bigcup_{h\in \{1,\ldots,k\}\setminus I} \tilde H[\tilde W_h]\ \cupp\ E(D)\ \cupp\ \bigcup_{j\in J} \tilde H[\tilde W_j].
\end{equation*}
See Figure~\ref{fig:improved_initialization}.
Let $D^*$ be the connected component of $(V, \tilde H')$ with edge set
\begin{equation*}
 E(D) \cupp \bigcup_{j\in J} \tilde H[\tilde W_j].
\end{equation*}
We will choose $J$ such that
\begin{equation}\label{eq:knapsack_constraint}
 \sum_{j\in J} \ell(\tilde W_j \cap V(D)) \ \le \ \frac{\epsilon'}{2+\epsilon'} \cdot \ell(V(D)).
\end{equation}

 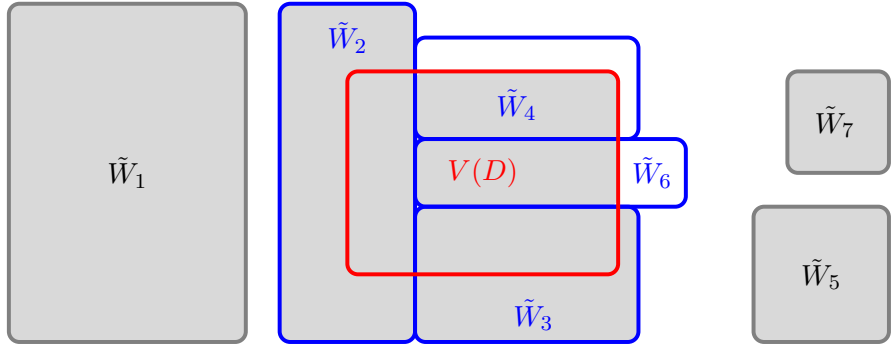
\begin{figure}
 \begin{center}
  \begin{tikzpicture}[scale=0.9]
  \tikzset{vertex/.style={fill=black, circle,inner sep=0em,minimum size=4pt }}
  \tikzset{Iset/.style={rounded corners, line width=1.5pt, blue}}
  \tikzset{Jset/.style={rounded corners, line width=1.5pt, blue, fill =gray, fill opacity=0.3}}
  \tikzset{Dset/.style={rounded corners, line width=1.5pt,red}}
  \tikzset{FillSet/.style={rounded corners, draw=none, fill=gray, fill opacity=0.3}}
  \tikzset{OtherSet/.style={rounded corners, line width=1.5pt,gray, fill=gray, fill opacity=0.3}}
  
  \draw[OtherSet] (-2,2) rectangle (1.5,7) {};
  \draw[OtherSet] (9,2) rectangle (11,4) {};
  \draw[OtherSet] (9.5,4.5) rectangle (11,6) {};
  
  \draw[FillSet] (4,4) rectangle (7,6) {};
  \draw[Jset] (2,2) rectangle (4,7) {};
  \draw[Jset] (4,2) rectangle (7.3,4) {};
  \draw[Iset] (4,4) rectangle (8,5) {};
  \draw[Iset] (4,5) rectangle (7.3,6.5) {};
  \draw[Dset] (3,3) rectangle (7,6) {};
  
  \begin{scope}[black]
   \node () at (-0.25,4.5) {$\tilde W_1$};
   \node () at (10,3) {$\tilde W_5$};
   \node () at (10.2,5.25) {$\tilde W_7$};
  \end{scope}
   \begin{scope}[blue]
    \node() at (3,6.5) {$\tilde W_2$};
    \node() at (5.75,2.4) {$\tilde W_3$};
    \node() at (5.5,5.5) {$\tilde W_4$};
    \node() at (7.5,4.5) {$\tilde W_6$};
  \end{scope}
   \begin{scope}[red]
    \node () at (5,4.5) {$V(D)$};
  \end{scope}

  \end{tikzpicture}
  \end{center}
  \caption{Illustration of the proof of Lemma~\ref{lemma:improved_initialization}.
  The gray and blue rectangles show the partition of $V\setminus V(B)$ into $\tilde W_1,\dots, \tilde W_7$.
  In red we see the vertex set $V(D)$ of the given connected graph $D$.
  The rectangles with blue boundary show the sets $\tilde W_i$ with $i\in I$.
  In this example $I=\{2,3,4,6\}$.
  The filled areas show vertex sets of connected components of $(V\setminus V(B),\tilde H')$.
  In this example we have $J=\{2,3\}$.
  The connected components $\tilde H[\tilde W_1]$, $\tilde H[\tilde W_5]$, and $\tilde H[\tilde W_7]$ remain unchanged
  and we get a new component $D^*$ with vertex set $V(D) \cup \tilde W_2 \cup \tilde W_3$; we also get singleton components (without edges)
  for all vertices in $\tilde W_4 \setminus V(D)$ and $\tilde W_6 \setminus V(D)$.
  \label{fig:improved_initialization}
  }
 \end{figure}  

We first show that then $c(E(D^*)) \le \ell(V(D^*))$, which implies that $\tilde H'$ is light. 
Indeed, using \eqref{eq:atsp_D_light} in the first inequality and \eqref{eq:knapsack_constraint} in the last inequality,
\begin{align*}
 c(E(D^*)) \ \le \ & \frac{2}{2+\epsilon'}\cdot  \ell(V(D)) + \sum_{j\in J} \ell(\tilde W_j) \\
 = \ &   \frac{2}{2+\epsilon'} \cdot \ell(V(D)) + \sum_{j\in J} \ell(\tilde W_j\setminus V(D)) + \sum_{j\in J} \ell(\tilde W_j\cap V(D)) \\
 \le \ & \frac{2}{2+\epsilon'} \cdot \ell(V(D)) + \sum_{j\in J} \ell(\tilde W_j\setminus V(D)) + \frac{\epsilon'}{2+\epsilon'} \cdot \ell(V(D)) \\
 = \ & \ell\left(V(D^*)\right).
\end{align*}
We conclude the proof by showing that we can choose $J$ such that \eqref{eq:knapsack_constraint} and \eqref{eq:potentialincrease} hold.
To this end, we would like to make the new component, spanning $V(D) \cup\bigcup_{j\in J}\tilde W_j$, 
as large as possible. More precisely, we want to maximize $\sum_{j\in J} \ell(\tilde W_j\setminus V(D))$ subject to \eqref{eq:knapsack_constraint}.
This is a knapsack problem: the items are indexed by $I$, and item $j\in I$ has weight $w_j = \ell(\tilde W_j \cap V(D))$ and profit $p_j = \ell(\tilde W_j\setminus V(D))$.
Since $\sum_{j\in I} \ell(\tilde W_j \cap V(D)) = \ell(V(D))$, the weight limit $\bar w = \frac{\epsilon'}{2+\epsilon'}\cdot\ell(V(D))$ is an $\frac{\epsilon'}{2+\epsilon'}$ fraction of the total weight of all items. 
Since any item $j\in I$ has profit at most $\ell(\tilde W_j\setminus V(D))\le \ell(\tilde W_j)\le \ell(\tilde W_i)$, Lemma~\ref{lemma:knapsack} yields a set $J$ with \eqref{eq:knapsack_constraint} and 
\begin{equation*}
\sum_{j\in J} \ell(\tilde W_j\setminus V(D)) \ \ge \ \frac{\epsilon'}{2+\epsilon'} \cdot \sum_{j\in I} \ell(\tilde W_j\setminus V(D)) \ - \ \ell(\tilde W_i).
\end{equation*}

Finally we show \eqref{eq:potentialincrease}.
Using \eqref{eq:atsp_D_large} in both strict inequalities, $(1+\epsilon')^p = \sfrac{2+\epsilon'}{\epsilon'}$ in the second equation,
and $\ell(\tilde W_i)\ge \ell(\tilde W_j)$ for all $j\in I$ in the last inequality, we obtain
\begin{align*}
 \ell(V(D^*))^{1+p} \ =& \ \left( \ell(V(D)) + \sum_{j\in J} \ell(\tilde W_j\setminus V(D)) \right)^{1+p}  \\
  \ \ge& \ \ell(V(D))^{p} \cdot \left( \ell(V(D)) + \frac{\epsilon'}{2+\epsilon'}\sum_{j\in I} \ell(\tilde W_j\setminus V(D)) - \ell(\tilde W_i) \right) \\
 \ >& \ \left(\big.(1+\epsilon') \cdot \ell(\tilde W_i)\right)^{p} \cdot \\
 &\qquad \left( \frac{2}{2+\epsilon'}\ell(V(D)) + \frac{\epsilon'}{2+\epsilon'}\ell(V(D)) + \frac{\epsilon'}{2+\epsilon'}\sum_{j\in I} \ell(\tilde W_j\setminus V(D)) - \ell(\tilde W_i) \right) \\
  \ >& \ \left(\big.(1+\epsilon') \cdot \ell(\tilde W_i)\right)^{p} \cdot \\
  & \qquad \left( \frac{\epsilon'}{2+\epsilon'} \cdot \ell(\tilde W_i) +\frac{\epsilon'}{2+\epsilon'} \ell(V(D)) + \frac{\epsilon'}{2+\epsilon'}\sum_{j\in I} \ell(\tilde W_j\setminus V(D)) \right) \\
  \ =& \ \frac{2+\epsilon'}{\epsilon'} \cdot \ell(\tilde W_i)^{p} \cdot
  \left( \frac{\epsilon'}{2+\epsilon'} \cdot \ell(\tilde W_i) + \frac{\epsilon'}{2+\epsilon'}\sum_{j\in I} \ell(\tilde W_j) \right) \\
  \ \ge& \ \ell(\tilde W_i)^{1+p} + \sum_{j\in I} \ell(\tilde W_j)^{1+p} .
\end{align*}
and hence
\begin{align*}
 \Phi (\tilde H') - \Phi(\tilde H) \ \ge \ \ell(V(D^*))^{1+p} - \sum_{j\in I} \ell(\tilde W_j)^{1+p} 
 \ > \ \ell(\tilde W_i)^{1+p}. 
 \end{align*}
Since $\tilde W_i$ contains at least one vertex, by Lemma~\ref{lemma:regularization_of_l},
$\ell(\tilde W_i)\ge \frac{1}{C\cdot n} \cdot \ell(V\setminus V(B))$ for the constant $C$ from Lemma~\ref{lemma:regularization_of_l}.
\end{proof}

The two different ways how we obtain $D$ during Svensson's algorithm are described by Lemma~\ref{lemma:better_initialization_1} and Lemma~\ref{lemma:better_initialization_2}. See also Figure~\ref{fig:application_improved_initialization}.

 \begin{figure}[t]
 \begin{center}
  \begin{tikzpicture}[scale=0.9]
  
  \begin{scope}[draw=none, fill=gray, fill opacity =0.3]
   \fill (-0.2,3.0) ellipse (1.2 and 1.5);
   \fill[red] (3.5,4.5) ellipse (2 and 1.3);
   \fill (3.5,1.5) ellipse (1.8 and 1.2);
   
   \fill (7.5,5.1) ellipse (1.3 and 0.8);
   \fill (7.5,3.4) ellipse (1.2 and 0.7);
   \fill (7.5,1.9) ellipse (1.0 and 0.6);
   \fill (7.5,0.5) ellipse (1.0 and 0.6);
   
   \fill (10,4.9) ellipse (0.7 and 0.5);
   \fill (10,3.5) ellipse (0.7 and 0.5);
   \fill (10,1.9) ellipse (0.7 and 0.5);
   \fill (10,0.5) ellipse (0.7 and 0.5);
  \end{scope}

  \begin{scope}[black]

   \node () at  (-0.2,3.0) {\footnotesize $\tilde W_0 = V(B)$};
   \node () at  (2.8,4.6) {\footnotesize $\tilde W_1$};
   \node () at  (2.8,1.9) {\footnotesize $\tilde W_2$};
   
   \node () at  (7.5,5.1) {\footnotesize $\tilde W_3$};
   \node () at  (7.8,3.4) {\footnotesize $\tilde W_4$};
   \node () at  (7.7,1.9) {\footnotesize $\tilde W_5$};
   \node () at  (7.5,0.5) {\footnotesize $\tilde W_6$};
   
   \node () at  (10.2,4.9) {\footnotesize $\tilde W_7$};
   \node () at  (10.0,3.5) {\footnotesize $\tilde W_8$};
   \node () at  (10.2,1.9) {\footnotesize $\tilde W_9$};
   \node () at  (10.2,0.5) {\footnotesize $\tilde W_{10}$};
  \end{scope}
   
  \begin{scope}[black, very thick]
    \draw plot [smooth cycle] coordinates {(9.5,4.6) (9.6,5.1) (8.3,5.3) (8.1,4.5)};
    \draw plot [smooth cycle] coordinates {(9.7,1.9) (9.6, 0.4) (8.0, 0.5)};
    \draw plot [smooth cycle] coordinates {(6.7, 2.1) (6.7, 1.7) (4.0,1.8) (3.5, 2.3) (4.2,2.4) };
    \draw[blue] plot [smooth cycle] coordinates {(6.8, 0.9) (6.8, 0.4) (5.4, 0.5) (3.0, 0.8) (3.7,1.4) (5.0, 1.0) };
   \begin{scope}[red]
    \draw plot [smooth cycle] coordinates { (3.5,5) (4, 5.4) (6.5,5.2) (6.8,4.6) };
    \draw plot [smooth cycle] coordinates { (3.5,3.5) (3.5, 4.0) (6.8,3.8) (6.8,3.1) };
   \end{scope}
  \end{scope}
  \end{tikzpicture}
  \end{center}
  \caption{Illustration of Lemma~\ref{lemma:better_initialization_1} and Lemma~\ref{lemma:better_initialization_2}.
  Here the filled ellipses show the partition $\tilde W_0,\dots, \tilde W_{10}$ of $V$.
  The curves show a possible solution $F$ to some instance $(\Iscr, H)$ of Subtour Cover; the set $H$ is not shown here.
  In red we see a subgraph $D$ as in Lemma~\ref{lemma:better_initialization_1}: here the red curves are the graph $F_1$ and 
  $D$ is the union of $F_1$ and $\tilde H[\tilde W_1]$. 
  In blue we see a subgraph $D$ as in Lemma~\ref{lemma:better_initialization_2}: here $D$ is a single connected component of $F$
  and in this example we have $\textrm{ind}(D)=2$.
  \label{fig:application_improved_initialization}
  }
 \end{figure}
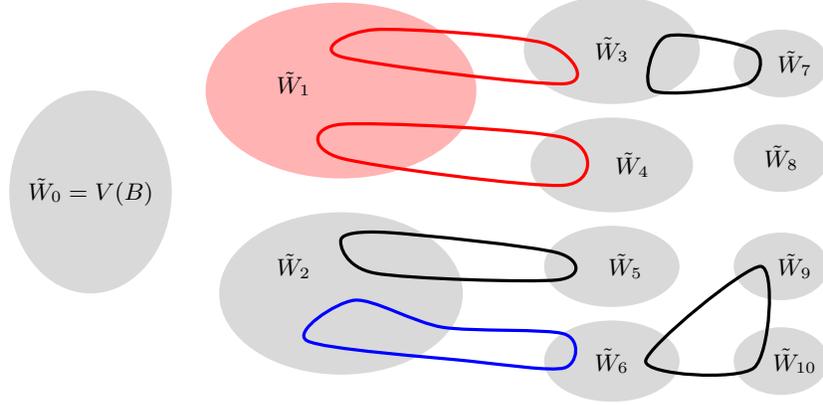  

\begin{lemma}\label{lemma:better_initialization_1}
Let $\Ascr$ be an $(\alpha, \kappa, \beta)$-algorithm for Subtour Cover.
Let $F$ be the output of $\Ascr$ applied to an instance $(\Iscr, B, H)$.
For $i\in\{0,\ldots,k\}$ let the graph $F_i$ be the union of the connected components $D'$
of $(V,F)$ with $\mathrm{ind}(D')=i$.

Suppose we have $c(E(F_i)) > \ell(\tilde W_i)$ for some $i\in \{0,\dots, k\}$.
Then the union 
\[D := \left(\tilde W_i \cup V(F_i),\ \tilde H[\tilde W_i] \cup E(F_i)\right) \]
of $\tilde H[\tilde W_i]$ and $F_i$ fulfills the conditions of Lemma~\ref{lemma:improved_initialization}, i.e.\ $D$ is  a connected Eulerian multi-subgraph of $G$ with $V(D) \cap V(B) = \emptyset$,
 $E(D)\cap \delta(L)=\emptyset$ for all $L\in \Lscr_{\ge 2}$,\eqref{eq:atsp_D_light} and \eqref{eq:atsp_D_large}.
\end{lemma}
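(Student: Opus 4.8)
The plan is to verify the five listed properties of $D$ one at a time, and the first observation is that the hypothesis forces $i\ge 1$. The connected components $D'$ of $(V,F)$ with $\mathrm{ind}(D')=0$ are exactly those meeting $\tilde W_0=V(B)$, so $F_0$ coincides with the graph $D_B$ of Lemma~\ref{lemma:properties_subtour_cover_l}\eqref{item:backbone_comp}; hence $c(E(F_0))\le\ell(V(B))=\ell(\tilde W_0)$, which contradicts $c(E(F_i))>\ell(\tilde W_i)$ when $i=0$. So $i\ge 1$, and then $\tilde W_i\subseteq V\setminus V(B)$ while every component of $F_i$, having index $\ge 1$, avoids $\tilde W_0=V(B)$; therefore $V(D)\cap V(B)=\emptyset$.

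Next I would settle the structural conditions. Both $\tilde H[\tilde W_i]$ (a connected component of the Eulerian graph $(V\setminus V(B),\tilde H)$) and $F_i$ (a vertex-disjoint union of connected components of the Eulerian graph $(V,F)$) are Eulerian, so $D$ is Eulerian; and $D$ is connected because $(\tilde W_i,\tilde H[\tilde W_i])$ is connected and, by definition of the index, every component of $F_i$ meets $\tilde W_i$. For $L\in\Lscr_{\ge 2}$ we have $\tilde H[\tilde W_i]\cap\delta(L)=\emptyset$ by the hypothesis on $\tilde H$, and each component $D'$ of $F_i$ has $V(D')\cap V(B)=\emptyset$, so condition~$(iii)$ of Definition~\ref{def:subtour_cover} (in contrapositive form) gives $E(D')\cap\delta(L)=\emptyset$; thus $E(D)\cap\delta(L)=\emptyset$. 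Finally, $\mathrm{ind}(D)=i$: $V(D)$ meets $\tilde W_i$, whereas it avoids $\tilde W_0,\dots,\tilde W_{i-1}$ because $\tilde W_i$ is disjoint from these and every component of $F_i$ (having index $i$) also avoids $\tilde W_0,\dots,\tilde W_{i-1}$.

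For the quantitative conditions \eqref{eq:atsp_D_light} and \eqref{eq:atsp_D_large} I would abbreviate $a:=\ell(\tilde W_i)$ and $b:=\ell\bigl(V(F_i)\setminus\tilde W_i\bigr)$, so that $\ell(V(D))=a+b$. Summing Lemma~\ref{lemma:properties_subtour_cover_l}\eqref{item:lightness} over the vertex-disjoint components of $F_i$ gives $c(E(F_i))\le\frac{1}{2(1+\epsilon')}\,\ell(V(F_i))$, and since $\ell(V(F_i))\le\ell(\tilde W_i)+b=a+b$, the hypothesis $c(E(F_i))>a$ yields the key estimate $b>(1+2\epsilon')a$. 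This immediately gives \eqref{eq:atsp_D_large}, since $\ell(V(D))=a+b>(1+\epsilon')a=(1+\epsilon')\ell(\tilde W_{\mathrm{ind}(D)})$. For \eqref{eq:atsp_D_light}, lightness of $\tilde H$ gives $c(\tilde H[\tilde W_i])\le a$, so $c(E(D))\le c(\tilde H[\tilde W_i])+c(E(F_i))\le a+\frac{1}{2(1+\epsilon')}(a+b)$, and it remains to verify the purely numerical inequality $a+\frac{1}{2(1+\epsilon')}(a+b)\le\frac{2}{2+\epsilon'}(a+b)$. Clearing denominators, this reduces to $\bigl(2+3\epsilon'+2(\epsilon')^2\bigr)\,a\le(2+3\epsilon')\,b$, which follows from $b>(1+2\epsilon')a$ because $(2+3\epsilon')(1+2\epsilon')=2+7\epsilon'+6(\epsilon')^2\ge 2+3\epsilon'+2(\epsilon')^2$.

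I expect the only genuine obstacle to be this last step: the naive bound $c(E(D))\le c(\tilde H[\tilde W_i])+c(E(F_i))$, with the best available estimates $c(\tilde H[\tilde W_i])\le\ell(\tilde W_i)$ and $c(E(F_i))\le\frac{1}{2(1+\epsilon')}\ell(V(F_i))$, is on its own too weak, since then $\ell(\tilde W_i)$ enters with a coefficient exceeding the target $\frac{2}{2+\epsilon'}$. The hypothesis $c(E(F_i))>\ell(\tilde W_i)$ is exactly what compensates: via Lemma~\ref{lemma:properties_subtour_cover_l}\eqref{item:lightness} it forces $\ell(V(F_i)\setminus\tilde W_i)>(1+2\epsilon')\ell(\tilde W_i)$, and once that surplus is available the $\epsilon'$-arithmetic closes both \eqref{eq:atsp_D_light} and \eqref{eq:atsp_D_large}. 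Everything else is bookkeeping about connected components and the index function.
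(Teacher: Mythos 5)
Your proof is correct and follows essentially the same approach as the paper: you deduce $i>0$ from Lemma~\ref{lemma:properties_subtour_cover_l}\eqref{item:backbone_comp}, apply Lemma~\ref{lemma:properties_subtour_cover_l}\eqref{item:lightness} componentwise to $F_i$, and combine this with the hypothesis $c(E(F_i))>\ell(\tilde W_i)$ to obtain both \eqref{eq:atsp_D_light} and \eqref{eq:atsp_D_large}. The paper streamlines the arithmetic by observing that $\ell(\tilde W_i)$ and $c(E(F_i))$ are each at most $\frac{1}{2+2\epsilon'}\ell(V(D))$, so the sum is at most $\frac{2}{2+2\epsilon'}\ell(V(D))\le\frac{2}{2+\epsilon'}\ell(V(D))$, whereas your $a,b$ parameterization reaches the same conclusion via an explicit polynomial comparison.
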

\begin{proof}
Let $i\in\{0,\ldots,k\}$ such that $c(E(F_i)) > \ell(\tilde W_i)$.
Note that $i>0$ because $c(E(F_0))\le \ell(\tilde W_0)$ by Lemma~\ref{lemma:properties_subtour_cover_l}~\eqref{item:backbone_comp}.
This implies $V(D) \cap V(B) = \emptyset$ and $E(D)\cap \delta(L)=\emptyset$ for all $L\in \Lscr_{\ge 2}$.
Moreover, we have
\begin{equation*}
\frac{1}{2+2\epsilon'} \cdot \ell(V(D)) \ \ge \ \frac{1}{2+2\epsilon'} \cdot \ell(V(F_i)) 
\ \ge \ c(E(F_i)) \ > \ \ell(\tilde W_i), 
\end{equation*}
where the second inequality holds by Lemma~\ref{lemma:properties_subtour_cover_l}~\eqref{item:lightness}.
This shows \eqref{eq:atsp_D_large} and implies 
\begin{equation*}
 c(E(D)) = c(\tilde H[\tilde W_i] \cup E(F_i)) \ \le \ \ell(\tilde W_i) + c(E(F_i)) \ \le \ \frac{2}{2+2\epsilon'} \cdot \ell(V(D)).
\end{equation*}
Therefore also \eqref{eq:atsp_D_light} holds.
\end{proof}

\begin{lemma}\label{lemma:better_initialization_2}
Let $\Ascr$ be an $(\alpha, \kappa, \beta)$-algorithm for Subtour Cover.
Let $F$ be the output of $\Ascr$ applied to an instance $(\Iscr, B, H)$.
Suppose $(V,F)$ has a connected component $D$ with $\mathrm{ind}(D) > 0$ and 
\[l(V(D)) > (1+\epsilon') \cdot l(\tilde W_{\mathrm{ind}(D)}). \]
Then $D$ fulfills the conditions of Lemma~\ref{lemma:improved_initialization}, i.e.\ $D$ is  a connected Eulerian multi-subgraph of $G$ with $V(D) \cap V(B) = \emptyset$,
 $E(D)\cap \delta(L)=\emptyset$ for all $L\in \Lscr_{\ge 2}$, \eqref{eq:atsp_D_light} and \eqref{eq:atsp_D_large}.
\end{lemma}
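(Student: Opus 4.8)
The plan is to verify, one by one, the conditions listed in the statement; each is a direct consequence of the definition of a Subtour Cover solution (Definition~\ref{def:subtour_cover}) together with Lemma~\ref{lemma:properties_subtour_cover_l}, so essentially no computation is needed.

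First I would establish the structural properties. Since $F$ is the output of an $(\alpha,\kappa,\beta)$-algorithm for Subtour Cover, it satisfies property~$(i)$ of Definition~\ref{def:subtour_cover}, i.e.\ $(V,F)$ is Eulerian; hence each of its connected components is connected and Eulerian (every vertex of a component has equal in- and out-degree within the component, because all edges incident to it lie in the component), so $D$ is a connected Eulerian multi-subgraph of $G$. Next, by assumption $\mathrm{ind}(D) = \min\{j\in\{0,\dots,k\} : V(D)\cap\tilde W_j \ne \emptyset\} > 0$, and since $\tilde W_0 = V(B)$ this means $V(D)\cap V(B) = V(D)\cap\tilde W_0 = \emptyset$. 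Then I would invoke property~$(iii)$ of Definition~\ref{def:subtour_cover} in contrapositive form: it states that if a connected component of $(V,F)$ crosses some $L\in\Lscr_{\ge 2}$ then that component meets $V(B)$, so from $V(D)\cap V(B)=\emptyset$ we obtain $E(D)\cap\delta(L)=\emptyset$ for all $L\in\Lscr_{\ge 2}$.

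It remains to check the two inequalities. Inequality~\eqref{eq:atsp_D_large} is verbatim the hypothesis of the lemma, so there is nothing to prove. For inequality~\eqref{eq:atsp_D_light}, I would apply Lemma~\ref{lemma:properties_subtour_cover_l}~\eqref{item:lightness}, which — using that $V(D)\cap V(B)=\emptyset$ — gives $c(E(D)) \le \frac{1}{2(1+\epsilon')}\cdot\ell(V(D))$; since $\frac{1}{2(1+\epsilon')} = \frac{1}{2+2\epsilon'} \le \frac{2}{2+\epsilon'}$ (equivalently $2+\epsilon' \le 4+4\epsilon'$, which holds for every $\epsilon' \ge 0$), we conclude $c(E(D)) \le \frac{2}{2+\epsilon'}\cdot\ell(V(D))$, which is~\eqref{eq:atsp_D_light}.

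I do not expect a genuine obstacle here: this lemma is a bookkeeping step that re-expresses the guarantees of the Subtour Cover algorithm (as distilled in Lemma~\ref{lemma:properties_subtour_cover_l}) in exactly the form required by Lemma~\ref{lemma:improved_initialization}. The only two spots that warrant a moment's care are the direction in which property~$(iii)$ of Definition~\ref{def:subtour_cover} is used (we need its contrapositive, which uses the already-established fact $V(D)\cap V(B)=\emptyset$), and the elementary constant comparison $\frac{1}{2(1+\epsilon')} \le \frac{2}{2+\epsilon'}$ that bridges the bound of Lemma~\ref{lemma:properties_subtour_cover_l}~\eqref{item:lightness} to~\eqref{eq:atsp_D_light}.
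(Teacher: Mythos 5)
Your proof is correct and follows essentially the same route as the paper's: derive $V(D)\cap V(B)=\emptyset$ from $\mathrm{ind}(D)>0$, then $E(D)\cap\delta(L)=\emptyset$ from property~$(iii)$ of Definition~\ref{def:subtour_cover}, and finally obtain \eqref{eq:atsp_D_light} from Lemma~\ref{lemma:properties_subtour_cover_l}~\eqref{item:lightness} via the elementary inequality $\frac{1}{2+2\epsilon'}\le\frac{2}{2+\epsilon'}$. Your write-up is a bit more explicit than the paper's (which just says ``because $\mathrm{ind}(D)>0$'' for both structural facts), but it is the same argument.
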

\begin{proof}
We have \eqref{eq:atsp_D_large} by assumption.
Moreover, $V(D) \cap V(B) = \emptyset$ and $E(D)\cap \delta(L)=\emptyset$ for all $L\in \Lscr_{\ge 2}$
because $\mathrm{ind}(D) > 0$. 
Since $D$ is a connected component of $(V,F)$ that does not intersect the backbone, Lemma~\ref{lemma:properties_subtour_cover_l}~\eqref{item:lightness} implies
\begin{equation*}
 c(E(D)) \ \le \ \frac{1}{2+2\epsilon'} \cdot l(V(D)) \ \le \ \frac{2}{2+\epsilon'} \cdot l(V(D)),
\end{equation*}
implying \eqref{eq:atsp_D_light}.
\end{proof}

\subsection{Svensson's algorithm}
In this section we prove Lemma~\ref{lemma:result_svensson}.
To this end we consider Algorithm~\ref{algo:svensson_algo}, essentially due to Svensson~\cite{Sve15}.
We maintain an Eulerian edge set $H$ which is initialized with $H = \tilde H$.
Then we iterate the following steps.
First, we call the given algorithm for Subtour Cover,
then we try to find an improved initialization $\tilde H'$ as discussed in the previous section,
and finally, if we could not find a better initialization, we extend the set $H$.
The careful update of $H$ in step~3 of Algorithm~\ref{algo:svensson_algo} is illustrated
in Figure~\ref{figsvensson}.

In addition to our definition of $\Phi$, the other main difference to the version of this algorithm in \cite{SveTV18} are the properties of $C$ in step~(3c). This is inspired by a remark in \cite{Sve15}.
In order to make this work for vertebrate pairs, we exploit our slightly stronger definition of the Subtour Cover problem
(see the proof of Lemma~\ref{lemma:atsp_f_edges}).

To implement step~(3c), consider each edge $e=(v,w) \in \delta^+(V(Z))$ and compute a shortest $w$-$v$-path $P$ in 
$\big(V \setminus V(B),\ E[V\setminus V(B)] \setminus \big( \cup_{L\in \Lscr_{\ge 2}} \delta(L) \big) \big)$ 
and check if $c(e) + c(E(P)) \le \frac{1}{2\alpha} \cdot \ell(\tilde W_{\mathrm{ind}(Z)})$.

Note that adding $E(C)$ to $X$ in step~(3c) decreases the number of connected components of $(V,H\cupp F\cupp X)$,
and adding edges to $H$ in step~(3d) decreases the number of connected components of $(V,H)$.
Thus the procedure terminates after a polynomial number of steps. 

\allowbreak
\begin{algorithm2e}[H]
\vspace*{4mm} 
\KwIn{a vertebrate pair $(\Iscr, B)$ with $\Iscr=(G, \Lscr,x,y)$,\newline 
      a light Eulerian multi-subset $\tilde H \subseteq E[V\setminus V(B)]$ with $\tilde H\cap \delta(L)=\emptyset$ for all $L\in \Lscr_{\ge 2}$, \newline
      $\alpha, \kappa,\beta \ge 0$, $\epsilon > 0$, and \newline 
      an $(\alpha, \kappa, \beta)$-algorithm $\Ascr$ for Subtour Cover
}
\vspace*{3mm}
\KwOut{ either $\tilde H'$ as in Lemma~\ref{lemma:result_svensson}~\textit{(b)} or $H$ as in Lemma~\ref{lemma:result_svensson}~\textit{(a)}}
\vspace*{8mm}
Let $\tilde W_0:=V(B)$ and let $\tilde W_1, \dots, \tilde W_k$ be the vertex sets of the 
connected components of $(V\setminus V(B), \tilde H)$ such that $\ell(\tilde W_1) \ge \ell(\tilde W_2) \ge \dots \ge \ell(\tilde W_k)$.\\[1mm]
Set $H := \tilde H$. \\[5mm]
While $(V,E(B)\cup H)$ is not connected, repeat the following: \\[1mm]
\begin{enumerate}
\item[\textbf{1.}] \textbf{Compute a solution to Subtour Cover:}
 \begin{enumerate}
  \item[(1a)] Apply $\Ascr$ to the Subtour Cover instance $(\Iscr,B,H)$ to obtain a solution $F'$.
  \item[(1b)] Let $F$ result from $F'$ by deleting all edges of connected components of $(V,F')$ whose vertex sets are contained in 
              a connected component of $(V,E(B)\cup H)$.
 \end{enumerate}
 \item[\textbf{2.}] \textbf{Try to find a better initialization $\bm{\tilde H'}$:} \\[1mm]
 For $i\in\{0,\ldots,k\}$ let the graph $F_i$ be the union of the connected components $D'$ of $(V,F)$ with $\mathrm{ind}(D')=i$.
 \begin{enumerate}
  \item[(2a)]  If for some $i\in\{0,\ldots,k\}$ we have $c(E(F_i)) > \ell(\tilde W_i)$, 
               apply Lemma~\ref{lemma:improved_initialization} to $D=(\tilde W_i \cup V(F_i), \tilde H[\tilde W_i] \cup E(F_i))$
               to obtain an edge set $\tilde H'$. Then return $\tilde H '$.
  \item[(2b)] If $(V,F)$ has a connected component $D$ with $\ell(V(D)) > (1+\epsilon')\cdot \ell(\tilde W_{\mathrm{ind}(D)})$ and $\mathrm{ind}(D) > 0$, apply Lemma~\ref{lemma:improved_initialization} to obtain an edge set $\tilde H'$.
  Then return~$\tilde H '$.
 \end{enumerate}
\item[\textbf{3.}] \textbf{Extend $\bm{H}$:}
  \begin{enumerate}
   \item[(3a)] Set $X:=\emptyset$.
   \item[(3b)] Select the connected component $Z$ of $(V,E(B) \cupp H\cupp F\cupp X)$ for which $\mathrm{ind}(Z)$ is largest.
   \item[(3c)]  If there is a cycle $C$ in $G[V\setminus V(B)]$ with
               \begin{itemize}
                \item $E(C)\cap\delta(V(Z)) \not=\emptyset$,
                \item $E(C)\cap \delta(L)=\emptyset$ for all $L\in \Lscr_{\ge 2}$, and 
                \item $c(E(C)) \le \frac{1}{2\alpha} \cdot \ell(\tilde W_{\mathrm{ind}(Z)})$,
               \end{itemize}
          then add $E(C)$ to $X$ and go to step~(3b).
   \item[(3d)]  Add the edges of $(V,F\cupp X)[V(Z)]$ to $H$.
  \end{enumerate}
\end{enumerate}
Return $H$.
\vspace*{4mm}
\caption{Svensson's Algorithm \label{algo:svensson_algo}
}
\end{algorithm2e}
\allowbreak
\newpage

\begin{figure}
\begin{center}
\begin{tikzpicture}[thick, xscale=0.89, yscale=-0.89]
\tikzset{set/.style={circle, draw=none, fill=gray, fill opacity =0.3, text=black, text opacity=1.0, inner sep=1}}

\node[set, minimum size=44] (w0)  at ( -8, 2) {\tiny $\tilde W_0 = V(B)$};
\node[set, minimum size=41] (w1)  at ( -5.8, 2) {\tiny $\tilde W_1$};
\node[set, minimum size=38] (w2)  at ( -3.8, 2) {\tiny $\tilde W_2$};
\node[set, minimum size=35] (w3)  at ( -1.9, 2) {\tiny $\tilde W_3$};
\node[set, minimum size=32] (w4)  at ( -0.1, 2) {\tiny $\tilde W_4$};
\node[set, minimum size=29] (w5)  at ( 1.5, 2) {\tiny $\tilde W_5$};
\node[set, minimum size=26] (w6)  at ( 3.0, 2) {\tiny $\tilde W_6$};
\node[set, minimum size=23] (w7)  at ( 4.4, 2) {\tiny $\tilde W_7$};
\node[set, minimum size=20] (w8)  at ( 5.7, 2) {\tiny $\tilde W_8$};
\node[set, minimum size=17] (w9)  at ( 6.9, 2) {\tiny $\tilde W_9$};

\begin{scope}[line width = 0.3mm, ]
  \draw[darkred] plot [smooth cycle] coordinates {(-3.5,2.6) (-2,3.1)
  (-0.1,2.4) (2.0,3.2)  (3.9,3.2) (6.8,2.3)(3.5,3.6) (-1.5,3.6)};
  
  \draw[darkred] plot [smooth cycle] coordinates {(-1.8, 1.6)  (-0.1, 0.8) (1.6, 1.6) (-0.1, 1.1)  };
  
  \draw[darkgreen] plot [smooth cycle] coordinates { (0.4,2.1) (0.3, 1.5) (1.0,1.8) (1.2,2.4)  };
\end{scope}
\begin{scope}[line width= 0.3mm, dashed]
  \draw[darkred] plot [smooth cycle] coordinates {(4.8,1.7) (4.7,2.3) (5.4,2.2) (5.3,1.8) };
  
  \draw[darkred] plot [smooth cycle] coordinates {  (-5.7, 1.4) (-0.4,0.4) (2.7,1.7) (-0.4, 0.1) };
  
  \draw[blue] plot [smooth cycle] coordinates { (3.0,1.7) (4.4,1.3) (5.7,1.7) (4.4,1.1)  };
\end{scope}
\end{tikzpicture}
\caption{\label{figsvensson} 
An illustration of step~3 in the first iteration of Svensson's algorithm.
The edge set $F$ is shown in red. 
First the component $Z$ with vertex set $\tilde W_7\cup \tilde W_8$ is considered, with $\mathrm{ind}(Z)=7$.
We may find the blue cycle $C$ with $c(E(C))\le \frac{1}{2\alpha} \ell(\tilde W_7)$.
After adding $E(C)$ to $X$, the component $Z$ with vertex set $\tilde W_3\cup \tilde W_5$ is considered next, with $\mathrm{ind}(Z)=3$.
Then we may find the green cycle $C'$ with $c(E(C'))\le \frac{1}{2\alpha} \ell(\tilde W_3)$.
Then $E(C')$ is added to $X$, and now $(V,H\cupp F\cupp X)$ has three connected components.
The component $Z$ with vertex set $\tilde W_2\cup \tilde W_3\cup \tilde W_4\cup \tilde W_5\cup \tilde W_9$ is considered next.
Suppose there is no cycle $C''$ connecting it to the rest and with $c(E(C''))\le \frac{1}{2\alpha} \ell(\tilde W_2)$.
Then the edges drawn as solid curves are added to $H$, concluding the first iteration.
}
\end{center}
\end{figure}
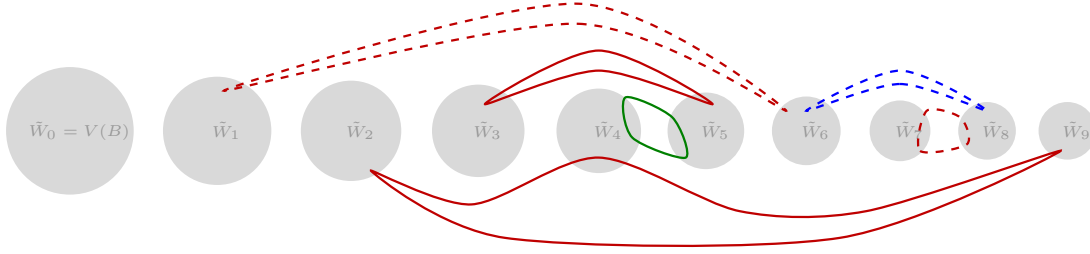

The following observation implies that $(\Iscr, B, H)$ is indeed an instance of Subtour Cover in step~(1a) of Algorithm~\ref{algo:svensson_algo}.

\begin{lemma}
As long as $(V,E(B) \cup H)$ is not connected in Algorithm~\ref{algo:svensson_algo}, $H$ is an Eulerian multi-subset of $E[V\setminus V(B)]$ with $H\cap \delta(L)=\emptyset$ for all $L\in \Lscr_{\ge 2}$.
\end{lemma}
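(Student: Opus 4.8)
The plan is to prove the invariant by induction on the iterations of the while-loop in Algorithm~\ref{algo:svensson_algo}. At the start, $H = \tilde H$, and by hypothesis $\tilde H$ is a light Eulerian multi-subset of $E[V\setminus V(B)]$ with $\tilde H\cap\delta(L)=\emptyset$ for all $L\in\Lscr_{\ge 2}$; so the base case is immediate. For the inductive step I would assume $H$ has the three claimed properties at the beginning of an iteration that does not terminate (i.e., $(V,E(B)\cup H)$ is not connected) and verify that the set $H$ is not modified except possibly in step~(3d), where we add the edges of $(V,F\cupp X)[V(Z)]$; I then need to check that after this addition $H$ remains Eulerian, remains contained in $E[V\setminus V(B)]$, and still avoids $\delta(L)$ for every $L\in\Lscr_{\ge 2}$.

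For the ``Eulerian'' part, first note that $H$ is Eulerian by the induction hypothesis, $F$ is Eulerian (as the restriction, obtained in step~(1b), of the output $F'$ of $\Ascr$ to a union of connected components, using property~$(i)$ of Definition~\ref{def:subtour_cover}), and each cycle added to $X$ in step~(3c) is Eulerian. Hence $(V,F\cupp X)$ is Eulerian. Restricting an Eulerian graph to the vertex set $V(Z)$ of a connected component of $(V,E(B)\cupp H\cupp F\cupp X)$ again yields an Eulerian graph, because $\delta(V(Z))\cap (H\cup F\cup X)=\emptyset$ by the choice of $Z$ as a connected component; so $(V,F\cupp X)[V(Z)]$ is Eulerian, and adding an Eulerian edge set to the Eulerian set $H$ keeps $H$ Eulerian.

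For the containment $H\subseteq E[V\setminus V(B)]$ and the disjointness from $\cup_{L\in\Lscr_{\ge 2}}\delta(L)$, I would argue that the component $Z$ selected in step~(3b) satisfies $V(Z)\cap V(B)=\emptyset$ and $V(Z)\subseteq L$ or $V(Z)\cap L=\emptyset$ for every $L\in\Lscr_{\ge 2}$; then $(V,F\cupp X)[V(Z)]\subseteq E[V(Z)]\subseteq E[V\setminus V(B)]\setminus(\cup_{L\in\Lscr_{\ge 2}}\delta(L))$, which is exactly what we need. The key obstacle — and the place I expect to spend the real effort — is establishing that $Z$ with $\mathrm{ind}(Z)>0$ (i.e. $V(Z)\cap V(B)=\emptyset$) is selected in step~(3b). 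This is where condition~$(iii)$ of Definition~\ref{def:subtour_cover} enters: if some connected component $D$ of $(V,F)$ crossed $\delta(L)$ for some $L\in\Lscr_{\ge 2}$, then $V(D)\cap V(B)\ne\emptyset$, so $D$ would already be absorbed into a component of $(V,E(B)\cup H)$ and deleted in step~(1b) — unless it is the backbone component. More carefully, after step~(1b) no component of $(V,F)$ with $V\setminus V(B)$-vertices crosses any $\delta(L)$; combined with $H\cap\delta(L)=\emptyset$ (induction hypothesis), the component $Z$ of maximal index, which by steps~(2a)--(2b) not returning has $c(E(F_i))\le\ell(\tilde W_i)$ for all $i$ and bounded lightness, has $\mathrm{ind}(Z)>0$, so $V(Z)$ avoids $V(B)$ and is laminar-consistent with every $L\in\Lscr_{\ge 2}$. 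I would also need that the cycles $C$ added in step~(3c) satisfy $E(C)\cap\delta(L)=\emptyset$ for $L\in\Lscr_{\ge 2}$ and lie in $G[V\setminus V(B)]$ — but this is explicit in the conditions imposed on $C$ in step~(3c), so it requires no extra work. Putting these together yields $(V,F\cupp X)[V(Z)]\subseteq E[V\setminus V(B)]\setminus\bigl(\cup_{L\in\Lscr_{\ge 2}}\delta(L)\bigr)$, completing the induction.
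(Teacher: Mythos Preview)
Your inductive structure and the treatment of the Eulerian property are fine and match the paper. The gap is in your argument that $\mathrm{ind}(Z)>0$ when step~(3d) is executed. You invoke steps~(2a)--(2b) not returning and the cost bounds $c(E(F_i))\le\ell(\tilde W_i)$, but these bounds say nothing about the index of the component $Z$ selected in~(3b); they are irrelevant here. You also claim that a component $D$ of $(V,F')$ with $V(D)\cap V(B)\ne\emptyset$ ``would already be absorbed into a component of $(V,E(B)\cup H)$ and deleted in step~(1b)'' --- this is false: step~(1b) deletes only components whose vertex set is \emph{contained in} a single component of $(V,E(B)\cup H)$, and a component touching $V(B)$ may well connect several such components and survive.

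The correct argument, and the one the paper uses, does not rule out $\mathrm{ind}(Z)=0$. Instead: since $Z$ is chosen in~(3b) to have \emph{largest} index among the components of $(V,E(B)\cupp H\cupp F\cupp X)$, the event $\mathrm{ind}(Z)=0$ forces every such component to have index $0$. But $E(B)$ already connects all of $V(B)=\tilde W_0$, so there is exactly one component of index $0$; hence $(V,E(B)\cupp H\cupp F\cupp X)$ is connected, $V(Z)=V$, and after~(3d) the set $H$ contains all of $F\cupp X$, making $(V,E(B)\cup H)$ connected. Thus the hypothesis of the lemma (``as long as $(V,E(B)\cup H)$ is not connected'') fails precisely when $\mathrm{ind}(Z)=0$ occurs in~(3d), and you only need to verify the three properties of $H$ in the case $\mathrm{ind}(Z)>0$. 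That case then goes through essentially as you sketched: $V(Z)\cap V(B)=\emptyset$, the $F$-edges inside $Z$ belong to components of $(V,F)$ disjoint from $V(B)$ and hence avoid every $\delta(L)$ with $L\in\Lscr_{\ge 2}$ by property~(iii) of Definition~\ref{def:subtour_cover}, and the $X$-edges avoid these cuts and $V(B)$ by the explicit conditions in~(3c).
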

\begin{proof}
At the beginning of the algorithm we set $H := \tilde H$ and thus $H$ is a an Eulerian multi-subset of $E[V\setminus V(B)]$ with $H\cap \delta(L)=\emptyset$ for all $L\in \Lscr_{\ge 2}$.
The cycles that we find in step~(3c) neither contain a vertex from the backbone nor an edge in $\delta(L)$ for any $L\in \Lscr_{\ge 2}$ by construction.
Moreover, by the definition of the Subtour Cover problem (Definition~\ref{def:subtour_cover}), we have $F_i \cap \delta(L)=\emptyset$ for every $i\in \{1,\dots, k\}$ and $L\in \Lscr_{\ge 2}$. By the choice of $Z$ in step~3 of Algorithm~\ref{algo:svensson_algo}, the component $Z$ contains edges from $F_0$ only if $(V, E(B) \cupp H\cupp F \cupp X)$ is connected, and in this case $(V, E(B) \cupp H)$ becomes connected when the edges in $F\cupp X$ are added to $H$.
\end{proof}

Also notice that step~(1b) maintains all properties required for the output of an $(\alpha,\kappa,\beta)$-algorithm for Subtour Cover.
Hence, the computation of $F$ in step~1 (including both step~(1a) and step~(1b)) is an  $(\alpha,\kappa,\beta)$-algorithm for Subtour Cover.
Therefore, we can apply Lemma~\ref{lemma:better_initialization_1} for step~(2a) and Lemma~\ref{lemma:better_initialization_2} for step~(2b)
to show that the application of Lemma~\ref{lemma:improved_initialization} is indeed possible.

We conclude that if Algorithm~\ref{algo:svensson_algo} returns a (multi-)set $\tilde H'$ in step~2, 
then $\tilde H'$ is a multi-set as in Lemma~\ref{lemma:result_svensson}~\textit{(b)}.

Now suppose the algorithm does not terminate in step~2.
Since $H$ remains Eulerian throughout the algorithm and  $(V, E(B)\cup H)$ is connected at the end of 
Algorithm~\ref{algo:svensson_algo}, the returned edge set $H$ is a solution for the vertebrate pair $(\Iscr, B)$.
It remains to show the upper bound \eqref{eq:cost_bound_H_svensson} on the cost of $H$.
Initially we have $c(H) = c(\tilde H) \le \ell(V\setminus V(B))$.
We bound the cost of the $X$-edges and the cost of the $F$-edges added to $H$ separately.

\begin{lemma}\label{lemma:atsp_x_edges}
The total cost of all $X$-edges that are added to $H$ is at most $\frac{1}{2\alpha} \cdot \ell(V\setminus V(B))$.
\end{lemma}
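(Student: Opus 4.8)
The plan is to charge the cost of the $X$-edges added during a single iteration of the while-loop to the sets $\tilde W_j$, exploiting the inner loop structure of step~3, and then argue that over all iterations each set $\tilde W_j$ (equivalently each vertex) is charged only once. First I would fix one iteration and track step~(3c): each time we find a cycle $C$ and add $E(C)$ to $X$, we have the bound $c(E(C)) \le \frac{1}{2\alpha}\cdot\ell(\tilde W_{\mathrm{ind}(Z)})$ where $Z$ is the component with largest index currently in $(V,E(B)\cupp H\cupp F\cupp X)$. The key combinatorial observation is that adding $E(C)$ strictly decreases the number of connected components and, more importantly, connects the component $Z$ of index $r:=\mathrm{ind}(Z)$ to a component of strictly smaller index (since $C$ meets $\delta(V(Z))$ and lies in $V\setminus V(B)$, it reaches another component, which by maximality of $\mathrm{ind}(Z)$ has index $<r$); after this merge no component of index exactly $r$ is ever re-selected as $Z$ again within this iteration, because any component containing a vertex of $\tilde W_r$ now also contains vertices of a lower-indexed set and hence has index $<r$. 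Consequently, within one iteration, each index $r\in\{1,\dots,k\}$ serves as $\mathrm{ind}(Z)$ in step~(3c) for at most one cycle addition, so the total cost of $X$-edges added in that iteration is at most $\sum_{r=1}^{k}\frac{1}{2\alpha}\ell(\tilde W_r) = \frac{1}{2\alpha}\ell(V\setminus V(B))$.

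The remaining point is that the $\tilde W_j$ and the ordering $\ell(\tilde W_1)\ge\cdots\ge\ell(\tilde W_k)$ are fixed once and for all at the start of Algorithm~\ref{algo:svensson_algo} (they are defined from $\tilde H$, not from the running $H$), and the while-loop only terminates either by returning some $\tilde H'$ in step~2 — in which case no $X$-edges from the current iteration are kept — or by reaching connectivity of $(V,E(B)\cup H)$. So I would split into the two cases: if the algorithm exits via step~2, then the relevant $X$-edges are exactly those added in iterations that ran step~3 to completion; if it exits via the while-condition, likewise. In either case I claim that across \emph{all} iterations that run step~(3d), each index $r$ is used in step~(3c) at most once in total. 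This is because once step~(3d) merges $Z$ into the rest (adding $(V,F\cupp X)[V(Z)]$ to $H$), $H$ permanently connects $\tilde W_{\mathrm{ind}(Z)}$ to a lower-indexed set, so in all future iterations the recomputed component containing any vertex of that set has index strictly below $\mathrm{ind}(Z)$ from the outset, and similarly the cycles from (3c) are preserved in $H$ only via (3d). Hence the per-iteration argument globalizes, and the total cost of all $X$-edges added to $H$ is at most $\frac{1}{2\alpha}\cdot\ell(V\setminus V(B))$.

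The main obstacle I anticipate is making the "each index used at most once" claim fully rigorous across the interaction between iterations: one must be careful that the edges added to $X$ in step~(3c) but belonging to a component $Z$ that is \emph{not} ultimately selected in (3d) (because a larger-index $Z$ is processed first and then the loop exits via step~2) are not actually added to $H$, so they do not need to be charged at all — it is only the $X$-edges inside the finally-chosen $V(Z)$ in step~(3d), in iterations that complete step~3, that count. Once one observes that the monotone decrease of the number of components (noted just before the algorithm) bounds the number of (3c) additions per iteration and that the indices of selected components are strictly decreasing within the inner loop, the bookkeeping closes. I would also remark that $\mathrm{ind}(Z)=0$ can occur only when $(V,E(B)\cupp H\cupp F\cupp X)$ is already connected, so step~(3c) with a nontrivial cycle is never invoked with $\mathrm{ind}(Z)=0$, and $\ell(\tilde W_0)=\ell(V(B))$ is never charged here — this is consistent with the statement, which only involves $\ell(V\setminus V(B))$.
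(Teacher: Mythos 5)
Your approach is the same as the paper's: charge each cycle $C$ added to $X$ in step~(3c) to the index $\mathrm{ind}(Z)$ of the selected component $Z$ at that time, note $c(E(C))\le \frac{1}{2\alpha}\ell(\tilde W_{\mathrm{ind}(Z)})$, and argue that each index $r\in\{1,\dots,k\}$ is charged at most once by a cycle that ultimately ends up in $H$, giving a total of at most $\sum_{r=1}^k \frac{1}{2\alpha}\ell(\tilde W_r) = \frac{1}{2\alpha}\ell(V\setminus V(B))$. The within-one-iteration half of your argument (indices of selected $Z$'s strictly decrease) is fine and matches the paper.

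However, there is a gap in how you globalize across iterations. You justify ``each index $r$ is used at most once in total'' by saying that once step~(3d) adds $(V,F\cupp X)[V(Z)]$ to $H$, ``$H$ permanently connects $\tilde W_{\mathrm{ind}(Z)}$ to a lower-indexed set.'' For the $Z$ of step~(3d) this is simply false: $\mathrm{ind}(Z)$ is by definition the \emph{smallest} $j$ with $\tilde W_j\subseteq V(Z)$, so step~(3d) connects $\tilde W_{\mathrm{ind}(Z)}$ only to higher-indexed sets, and its $H$-component still has index exactly $\mathrm{ind}(Z)$ afterwards. Also, as you yourself anticipate, the stronger claim ``each index $r$ is used in step~(3c) at most once in total'' is not true: a cycle found while $\mathrm{ind}(Z)=r$ need not lie inside the final $V(Z)$; it is then discarded, $\tilde W_r$'s $H$-component is unchanged, and a later iteration can again select a component of index $r$. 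What makes the bound go through is a statement about the \emph{intermediate} $Z'$ of step~(3c) whose cycle $C$ is actually committed: $C$ crosses $\delta(V(Z'))$ and, by maximality of $\mathrm{ind}(Z')$, reaches a component $Y'$ with $\mathrm{ind}(Y')<\mathrm{ind}(Z')$; since $V(C)$, $V(Z')$, $V(Y')$ all lie in the final $V(Z)$, step~(3d) permanently connects $\tilde W_{\mathrm{ind}(Z')}$ to $\tilde W_{\mathrm{ind}(Y')}$ in $H$, and no component selected at any later point can have index $\mathrm{ind}(Z')$. This — phrased in the paper via ``the cycle marks $\mathrm{ind}(Z)$'' — is the statement that carries the proof; your write-up replaces it with a claim about the final $Z$ of step~(3d) that is incorrect as stated.
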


\begin{proof}
A cycle $C$ that is selected in step~(3c) and will later be added to $H$ connects $Z$ with another connected component $Y$
with $\mathrm{ind}(Y) < \mathrm{ind}(Z)$. We say that it marks $\mathrm{ind}(Z)$. 
It has cost at most $\frac{1}{2\alpha}\cdot \ell(\tilde W_{\mathrm{ind}(Z)})$.
No cycle added later can mark $\mathrm{ind}(Z)$ because the new connected component of $(V,H\cupp F\cupp X)$ containing $Y\cup Z$
will have smaller index by the choice of $Z$.
Hence the total cost of the added cycles is at most 
$\frac{1}{2\alpha} \cdot \sum_{i=1}^k \ell(\tilde W_i) = \frac{1}{2\alpha} \cdot \ell(V\setminus V(B))$.
\end{proof}

\begin{lemma}\label{lemma:atsp_f_edges}
The total cost of all $F$-edges that are added to $H$ is at most $\ell(V)$.
\end{lemma}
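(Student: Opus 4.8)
The plan is to bound the cost of the $F$-edges added in step~(3d) iteration by iteration, and to charge this cost to the sets $\tilde W_0=V(B),\tilde W_1,\dots,\tilde W_k$ in such a way that $\tilde W_j$ is charged at most $\ell(\tilde W_j)$ in total; since $\sum_{j=0}^k\ell(\tilde W_j)=\ell(V(B))+\ell(V\setminus V(B))=\ell(V)$, this gives the claim.

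First I would analyse a single iteration. Let $Z$ be the component of $(V,E(B)\cupp H\cupp F\cupp X)$ handled in step~(3d) and put $i:=\mathrm{ind}(Z)$. The $F$-edges added are precisely the edges of the connected components $D$ of $(V,F)$ with $V(D)\subseteq V(Z)$. Because the algorithm did not return in step~2, tests~(2a) and~(2b) both fail, so $c(E(F_j))\le\ell(\tilde W_j)$ for all $j$ and $\ell(V(D))\le(1+\epsilon')\ell(\tilde W_{\mathrm{ind}(D)})$ for every component $D$ of $(V,F)$ with $\mathrm{ind}(D)>0$. Combining the latter with Lemma~\ref{lemma:properties_subtour_cover_l}~\eqref{item:lightness} gives $c(E(D))\le\tfrac{1}{2(1+\epsilon')}\ell(V(D))\le\tfrac12\ell(\tilde W_{\mathrm{ind}(D)})$ for every added component $D$ with $V(D)\cap V(B)=\emptyset$. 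If an added component meets $V(B)$, then $V(B)\subseteq V(Z)$, hence $i=0$; as $Z$ is a component of maximal index this forces $(V,E(B)\cupp H\cupp F\cupp X)$ to be connected, so this occurs only in the final iteration, and there the components meeting $V(B)$ together cost at most $\ell(V(B))$ by Lemma~\ref{lemma:properties_subtour_cover_l}~\eqref{item:backbone_comp}.

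Next I would set up the charging. I charge each added component $D$ with $\mathrm{ind}(D)=j\ge 1$ to the set $\tilde W_j$, and I charge the single $V(B)$-contribution $\le\ell(V(B))$ to $\tilde W_0$. The point is that when step~(3d) glues together all components of $(V,E(B)\cupp H)$ contained in $V(Z)$ into one component of index $i$, every index that was the index of such a component, except $i$ itself, is permanently eliminated: no component of $(V,E(B)\cupp H)$ can ever again have that index, because $\tilde W_j$ then lies in a component of index $\le i<j$ and component indices only decrease (this is the phenomenon already used in Lemma~\ref{lemma:atsp_x_edges}). Using this together with the clean-up in step~(1b) — every component of $(V,F)$ meets at least two components of $(V,E(B)\cupp H)$ — and the maximality of $\mathrm{ind}(Z)$, I would argue that each index $j$ is charged in at most one iteration; within that iteration the total charge to $\tilde W_j$ is at most $c(E(F_j))\le\ell(\tilde W_j)$ (the factor $\tfrac12$ in the per-component estimate leaves room even if a short double-counting argument is needed here). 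Summing the charges over all $j$ then bounds the total cost of added $F$-edges by $\ell(V)$.

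The main obstacle is exactly the claim that an index is charged in at most one iteration — equivalently, that the per-iteration charges amortize correctly. This is where the precise mechanics of step~3 must be used with care: the deletion in step~(1b), the choice of $Z$ as a component of maximal index, and the fact that the multiset of component indices of $(V,E(B)\cupp H)$ shrinks over time. The per-iteration cost estimates above and the final summation are routine once Lemma~\ref{lemma:properties_subtour_cover_l} and the failure of the step-2 tests are in hand.
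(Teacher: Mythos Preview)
Your charging scheme and the use of the failed tests~(2a) and~(2b) for the per-iteration bounds are fine, and you correctly identify that the crux is the amortization claim ``each index $j$ is charged in at most one iteration.'' But the ingredients you list---step~(1b), the maximality of $\mathrm{ind}(Z)$, and the monotone shrinking of the multiset of $(V,E(B)\cup H)$-component indices---are \emph{not} sufficient to prove this. Here is a structural counterexample: take $k=4$, and in iteration~$1$ suppose $F$ has a component of index~$1$ on $\tilde W_1\cup\tilde W_4$ and a component of index~$2$ on $\tilde W_2\cup\tilde W_3$; then $Z^1=\tilde W_2\cup\tilde W_3$ and index~$2$ is charged. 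After merging, the $(V,E(B)\cup H)$-components are $\tilde W_1$, $\tilde W_2\cup\tilde W_3$, $\tilde W_4$ (and $V(B)$). In iteration~$2$ the Subtour Cover solution can contain a component $D$ connecting $\tilde W_2$ with $\tilde W_4$; this $D$ crosses two $H$-components (so~(1b) does not remove it), does not meet $\tilde W_0$ or $\tilde W_1$, and hence $\mathrm{ind}(D)=2$. If $Z^2$ picks it up, index~$2$ is charged again. Nothing in your listed ingredients rules this out, and iterating the construction shows an index can be charged arbitrarily often, so the ``factor~$\tfrac12$ slack'' does not save you either.

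What actually blocks this double charge is the interplay of step~(3c) and step~(2b), which you do not invoke for the amortization. The component $D$ above is Eulerian and crosses $\delta(V(Z^1))$, so it contains a cycle $C$ in $G[V\setminus V(B)]$ with $E(C)\cap\delta(V(Z^1))\ne\emptyset$ and $E(C)\cap\delta(L)=\emptyset$ for all $L\in\Lscr_{\ge 2}$. If $c(E(C))\le\frac{1}{2\alpha}\ell(\tilde W_{\mathrm{ind}(Z^1)})$, then step~(3c) in iteration~$1$ would have found this cycle (it exists in $G$ independently of the iteration) and $Z^1$ would not have been finalized as it was. If instead $c(E(C))>\frac{1}{2\alpha}\ell(\tilde W_{\mathrm{ind}(Z^1)})\ge\frac{1}{2\alpha}\ell(\tilde W_{\mathrm{ind}(D)})$, then Lemma~\ref{lemma:circuits_are_light} gives $\ell(V(D))\ge\ell(V(C))>(1+\epsilon')\ell(\tilde W_{\mathrm{ind}(D)})$, so step~(2b) would have fired in iteration~$2$. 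This dichotomy is the missing idea; once you have it, the bound $c(E(F_j^{t}))\le\ell(\tilde W_j)$ from~(2a) sums to $\ell(V)$ exactly as in your plan.
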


\begin{proof}
Let $Z^t$ denote $Z$ at the end of iteration $t$ of the while-loop.
Let $F^t_i$ be the graph $F_i$ in iteration $t$ if the set of edges of $F_i$ is nonempty and is added to $H$ at the end of this iteration, 
and let $F^t_i=\emptyset$ otherwise.

For $i=0,\ldots,t$ the total cost of $F^t_i$ is $c(E(F^t_i)) \le \ell(\tilde W_i)$ by step~(2a).
We claim that for any $i$, at most one of the $F^t_i$ is nonempty.
Then summing over all $i$ and $t$ concludes the proof.

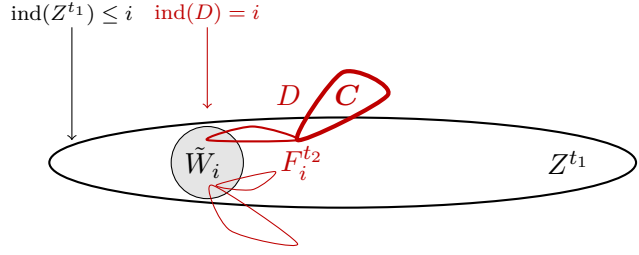
\begin{figure}
\begin{center}
 \begin{tikzpicture}[scale=0.6]
  \tikzset{BackwardEdge/.style={->, >=latex, line width=1pt,darkgreen}}
  \tikzset{ForwardEdge/.style={->, >=latex, line width=1pt,darkred}}
  \tikzset{NeutralEdge/.style={->, >=latex, line width=1pt,gray}}

\draw[thick] (6,1) ellipse (6.5 and 1) {};
\node at (11,1) {$Z^{t_1}$};
\node at (0,4.3) {\scriptsize $\mathrm{ind}(Z^{t_1})\le i$};
\draw[->] (0,4) to (0,1.5);
\draw[fill=gray, fill opacity =0.2] (3,1) ellipse (0.8 and 0.8) {};
\node at (2.9,1) {$\tilde W_i$};

\draw[darkred] plot [smooth cycle] coordinates {(3.1,0.5) (3.5,-0.5) (5,-0.8) };
\draw[darkred] plot [smooth cycle] coordinates {(3.2,0.5) (4,0.4) (4.5,0.8) };
\draw[darkred, ultra thick] plot [smooth cycle] coordinates {(5,1.5) (7,2.5) (6,3) };
\draw[darkred, thick] plot [smooth cycle] coordinates {(3,1.5) (5,1.5) (4,1.8) };
\node[darkred] at (6.1,2.5) {{\boldmath $C$}};
\node[darkred] at (4.8,2.5) {$D$};
\node[darkred] at (5.1,1) {$F^{t_2}_i$};
\node[darkred] at (3,4.3) {\scriptsize $\mathrm{ind}(D)=i$};
\draw[darkred,->] (3,4) to (3,2.2);

 \end{tikzpicture}
\end{center}
\caption{Proof of Lemma~\ref{lemma:atsp_f_edges}: an example of the graph $F^{t_2}_i$ is shown in red.
The circuit $C$ contains an edge of $\delta(V(Z^{t_1}))$, and $D$ is the connected component containing $C$.
\label{fig:astp_f_edges}}
\end{figure}

Suppose there are $t_1<t_2$ such that $F^{t_1}_i\not=\emptyset$ and $F^{t_2}_i\not=\emptyset$.
We have $i>0$ because otherwise the algorithm would terminate after iteration $t_1$ by the choice of $Z^{t_1}$.
Then $V(F^{t_1}_i)\subseteq V(Z^{t_1})$ and thus $\tilde W_i\subseteq V(Z^{t_1})$. See Figure~\ref{fig:astp_f_edges}.
Moreover, $F^{t_2}_i$ contains a vertex of $\tilde W_i$ and is not completely contained in $Z^{t_1}$ by step~(1b) of the algorithm.
Thus, $F^{t_2}_i$ contains a cycle $C$ with $E(C)\cap\delta(V(Z^{t_1})) \not=\emptyset$.
We have $V(F^{t_2}_i) \cap V(B) =\emptyset$ (since $i>0$).
This implies that $C$ is a cycle in $G[V\setminus V(B)]$ and
$E(C) \cap \delta(L) \subseteq E(F^{t_2}_i) \cap \delta(L) =\emptyset$ for all $L\in \Lscr_{\ge 2}$ because 
 $F$ is a solution to Subtour Cover.

If $c(E(C)) \le \frac{1}{2\alpha} \cdot \ell(\tilde W_{\mathrm{ind}(Z^{t_1})})$,
due to step~(3c), this is a contradiction to reaching step~(3d) in iteration $t_1$ and adding $Z^{t_1}$ there.
Otherwise, let $D$ be the connected component of $F_i^{t_2}$ containing $C$.
Note that $\mathrm{ind}(D) = i \ge \mathrm{ind}(Z^{t_1})$.

Since $C$ is a cycle with $E(C)\cap \delta(L) = \emptyset$ for all $L\in \Lscr_{\ge 2}$,
we can apply Lemma~\ref{lemma:circuits_are_light} to obtain
\begin{equation*}
 \frac{1}{(1+\epsilon')\cdot 2\alpha}\cdot \ell(V(C))\ \ge\ c(E(C)) \ > \ \frac{1}{2\alpha} \cdot \ell(\tilde W_{\mathrm{ind}(Z^{t_1})}) \ \ge \ 
 \frac{1}{2\alpha} \cdot \ell(\tilde W_{\mathrm{ind}(D)}).
\end{equation*}
This shows
\begin{equation*}
 \ell(V(D))\ \ge\ \ell(V(C)) \ >\ (1+\epsilon')\cdot \ell(\tilde W_{\mathrm{ind}(D)}).
\end{equation*}
Due to step~(2b), this is a contradiction to reaching step~(3d) in iteration $t_2$ and adding $F_i^{t_2}$ there.
\end{proof}
Using $c(\tilde H) \le \ell(V\setminus V(B))$, Lemma~\ref{lemma:atsp_x_edges}, and Lemma~\ref{lemma:atsp_f_edges}, we conclude that 
the cost of the returned edge set $H$ is at most $\ell(V(B)) + \left(2 + \sfrac{1}{2\alpha}\right) \cdot \ell(V\setminus V(B))$.
This concludes the proof of Lemma~\ref{lemma:result_svensson}.

\section{The main result}

We can now combine the results of the previous sections and obtain the following.

\begin{theorem}\label{thm:main_atsp}
 For every $\epsilon >0$ there is a polynomial-time algorithm that computes for every instance $(G,c)$ of ATSP
 a solution of cost at most $22+\epsilon$ times the cost of an optimum solution to \eqref{eq:atsp_lp}.
\end{theorem}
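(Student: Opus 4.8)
The plan is to compose the three main reduction results established in Sections~3--6, tracking the constants. First I would invoke Theorem~\ref{thm:subtourpartitioncover}, which gives a polynomial-time algorithm for Subtour Cover that, on every instance $(\Iscr,B,H)$, returns a solution $F$ with $c(F)\le 2\cdot\lp(\Iscr)+\sum_{v\in V\setminus V(B)}2y_v$ and with $c(E(D))\le 3\cdot\sum_{v\in V(D)}2y_v$ for every connected component $D$ of $(V,F)$ with $V(D)\cap V(B)=\emptyset$. In the terminology of Section~\ref{sect:solve_vertebrate_pairs} this is exactly an $(\alpha,\kappa,\beta)$-algorithm for Subtour Cover with $\alpha=3$, $\kappa=2$, and $\beta=1$.

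Next, fixing the desired $\epsilon>0$, I would apply Theorem~\ref{thm:reductiontosubtourpartitioncover} with these values. It yields a polynomial-time $(\kappa,\,4\alpha+\beta+1+\epsilon)$-algorithm for vertebrate pairs; substituting $\alpha=3$, $\kappa=2$, $\beta=1$ gives $4\alpha+\beta+1+\epsilon = 12+1+1+\epsilon = 14+\epsilon$, i.e.\ a polynomial-time $(2,\,14+\epsilon)$-algorithm for vertebrate pairs. Finally I would feed this into Theorem~\ref{thm:atspwithvertebratepairs}, which turns any polynomial-time $(\kappa,\eta)$-algorithm for vertebrate pairs into a polynomial-time algorithm computing, for every ATSP instance, a solution of cost at most $(3\kappa+\eta+2)$ times the value of \eqref{eq:atsp_lp}. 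With $\kappa=2$ and $\eta=14+\epsilon$ this bound equals $3\cdot 2 + (14+\epsilon) + 2 = 22+\epsilon$. Note that Theorem~\ref{thm:atspwithvertebratepairs} already incorporates the reduction to strongly laminar instances via Theorem~\ref{thm:stronglylaminar}, so no additional step is required, and the same chain shows that the integrality ratio of \eqref{eq:atsp_lp} is at most $22$.

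I do not expect any genuine obstacle here: all of the work is in Sections~3--6, and the only points requiring care are the arithmetic of the constants and the bookkeeping of $\epsilon$ — one must choose the fixed constant in Theorem~\ref{thm:reductiontosubtourpartitioncover} to be exactly the target accuracy $\epsilon$ of the statement, which is legitimate since that theorem holds for every fixed $\epsilon>0$. (One could invoke Remark~\ref{remark:improve_approx_ratio} to gain a further additive constant strictly less than $1$, but this refinement is not needed for the claimed bound $22+\epsilon$.)
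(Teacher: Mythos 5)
Your proposal is correct and matches the paper's own proof exactly: both compose Theorem~\ref{thm:subtourpartitioncover} (a $(3,2,1)$-algorithm for Subtour Cover), Theorem~\ref{thm:reductiontosubtourpartitioncover} (giving a $(2,14+\epsilon)$-algorithm for vertebrate pairs), and Theorem~\ref{thm:atspwithvertebratepairs} (yielding the factor $3\cdot 2 + (14+\epsilon) + 2 = 22+\epsilon$). The arithmetic and the bookkeeping of $\epsilon$ are handled the same way as in the paper.
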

\begin{proof}
 Theorem~\ref{thm:subtourpartitioncover} yields a $(3,2,1)$-algorithm for Subtour Cover and by  
 Theorem~\ref{thm:reductiontosubtourpartitioncover} this implies that there is a polynomial-time
 $(2, 14 +\epsilon)$-algorithm for vertebrate pairs. 
 Using Theorem~\ref{thm:atspwithvertebratepairs} we then obtain a polynomial-time algorithm that finds a solution of cost at most  $\left(22+\epsilon\right)\cdot \lp(\Iscr)$ for every ATSP instance $\Iscr$.
\end{proof}

As a consequence of Theorem~\ref{thm:main_atsp} we obtain the following.

\begin{corollary}\label{cor:main_atsp}
 The integrality ratio of \eqref{eq:atsp_lp} is at most $22$.
\end{corollary}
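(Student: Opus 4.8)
The plan is to derive Corollary~\ref{cor:main_atsp} directly from Theorem~\ref{thm:main_atsp}, which already establishes that for every $\epsilon > 0$ there is a polynomial-time algorithm producing a tour of cost at most $(22+\epsilon)\cdot\lp$, where $\lp$ denotes the optimum value of \eqref{eq:atsp_lp}. First I would recall the definition of the integrality ratio: it is the supremum over all ATSP instances $(G,c)$ with $\lp > 0$ of the quantity $\OPT(G,c)/\lp(G,c)$, where $\OPT(G,c)$ is the minimum cost of a tour. (The case $\lp = 0$ is degenerate — then $c\equiv 0$ on the support of any feasible $x$, and since $G$ is strongly connected, $\OPT = 0$ as well, so such instances contribute nothing.)

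The key step is then a limiting argument: fix an arbitrary instance $(G,c)$. For each $\epsilon > 0$, the algorithm of Theorem~\ref{thm:main_atsp} outputs some tour $F_\epsilon$ with $c(F_\epsilon)\le (22+\epsilon)\cdot\lp(G,c)$. Since $\OPT(G,c) \le c(F_\epsilon)$ for every $\epsilon$, we get $\OPT(G,c)\le (22+\epsilon)\cdot\lp(G,c)$ for all $\epsilon > 0$, and letting $\epsilon \to 0$ yields $\OPT(G,c)\le 22\cdot\lp(G,c)$. As this holds for every instance, the integrality ratio is at most $22$. No polynomial-time claim is needed here — we only use the \emph{existence} of the tour whose cost the algorithm certifies, so even the existential (non-algorithmic) content of Theorem~\ref{thm:main_atsp} suffices.

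There is essentially no hard part: the only thing to be slightly careful about is that the bound in Theorem~\ref{thm:main_atsp} holds for all $\epsilon > 0$ simultaneously (it does, since the theorem quantifies over $\epsilon$), which is exactly what licenses passing to the infimum $22$ rather than merely obtaining $22 + \epsilon$ for some fixed $\epsilon$. I would also remark that the relaxation \eqref{eq:atsp_lp} is feasible and bounded for every strongly connected $G$ (feasibility: e.g.\ a scaled circulation; boundedness: $c\ge 0$), so $\lp(G,c)$ is well-defined and finite, and the ratio is well-posed. That completes the proof.

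\begin{proof}
The integrality ratio of \eqref{eq:atsp_lp} is the supremum, over all ATSP instances $(G,c)$ with positive LP value, of $\OPT(G,c)/\lp(G,c)$, where $\OPT(G,c)$ denotes the minimum cost of a tour and $\lp(G,c)$ the optimum value of \eqref{eq:atsp_lp}. Fix any such instance $(G,c)$. By Theorem~\ref{thm:main_atsp}, for every $\epsilon > 0$ there is a tour of cost at most $(22+\epsilon)\cdot\lp(G,c)$; hence $\OPT(G,c)\le (22+\epsilon)\cdot\lp(G,c)$. Since this holds for every $\epsilon > 0$, we conclude $\OPT(G,c)\le 22\cdot\lp(G,c)$. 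As $(G,c)$ was arbitrary, the integrality ratio is at most $22$.
\end{proof}
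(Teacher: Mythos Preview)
Your proof is correct and takes essentially the same approach as the paper: both derive the bound directly from Theorem~\ref{thm:main_atsp} by observing that $\OPT \le (22+\epsilon)\cdot\lp$ for every $\epsilon>0$ forces $\OPT \le 22\cdot\lp$. The paper phrases this as a proof by contradiction while you phrase it as a direct limiting argument, but the content is identical.
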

\begin{proof}
 Suppose there is an instance $\Iscr$ of ATSP where $\sfrac{\opt(\Iscr)}{\lp(\Iscr)} > 22$.
 Then there exists $\epsilon >0 $ such that $\sfrac{\opt(\Iscr)}{\lp(\Iscr)} > 22 + \epsilon$.
 By Theorem~\ref{thm:main_atsp} we can compute an integral solution for $\Iscr$ with cost at most 
 $(22+\epsilon)\cdot \lp(\Iscr) < \opt(\Iscr)$, a contradiction.
\end{proof}

Using the observation from Remark~\ref{remark:improve_approx_ratio}, 
one could slightly improve Theorem~\ref{thm:main_atsp} and Corollary~\ref{cor:main_atsp},
but the improvement would be less than $1$.

Using the black-box reductions of \cite{FeiS07} and \cite{KohTV19}, our results immediately imply:
\begin{corollary}
There is a $(44+\epsilon)$-approximation algorithm for the path version of ATSP.
The integrality ratio of its classic LP relaxation is at most $85$. \qed
\end{corollary}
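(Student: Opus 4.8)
The plan is to obtain both claims as immediate consequences of the two black-box reductions announced in the introduction, applied to the two results just established: Theorem~\ref{thm:main_atsp}, which gives a polynomial-time $(22+\epsilon)$-approximation algorithm for ATSP relative to \eqref{eq:atsp_lp}, and Corollary~\ref{cor:main_atsp}, which bounds the integrality ratio of \eqref{eq:atsp_lp} by $22$. The two halves of the corollary are proved by two different reductions, because one is about approximation algorithms (and may lose an additive $\epsilon$) while the other is an exact inequality between integrality ratios.

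For the approximation guarantee I would invoke the reduction of Feige and Singh~\cite{FeiS07}: for every $\alpha \ge 1$ and every $\delta > 0$, a polynomial-time $\alpha$-approximation algorithm for (cycle) ATSP yields a polynomial-time $(2\alpha+\delta)$-approximation algorithm for the path version, where the start and end vertices are prescribed and distinct. Plugging in the algorithm of Theorem~\ref{thm:main_atsp} with $\alpha = 22 + \sfrac{\delta}{3}$ and renaming $\delta$ as $\epsilon$ gives a polynomial-time $(44+\epsilon)$-approximation algorithm for the path version of ATSP, for every $\epsilon > 0$. Here I would note that the guarantee of Theorem~\ref{thm:main_atsp} is stated relative to the LP value and hence a fortiori relative to the optimum, which is exactly the hypothesis required by the Feige--Singh reduction.

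For the integrality-ratio bound I would instead use the reduction of Köhne, Traub, and Vygen~\cite{KohTV19}: if the integrality ratio of \eqref{eq:atsp_lp} is at most $\rho$, then the integrality ratio of the classic LP relaxation of the path version is at most $4\rho - 3$. Combining this with the bound $\rho \le 22$ from Corollary~\ref{cor:main_atsp} yields an upper bound of $4\cdot 22 - 3 = 85$ on the integrality ratio of the path LP. Using this second reduction rather than Feige--Singh here is essential: the latter loses an additive $\delta$ and is phrased for approximation algorithms, whereas the former is an exact inequality between the two integrality ratios, so it transfers the clean constant $22$ into the clean constant $85$.

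The only real work, and hence the main point to get right, is bookkeeping the constants: one must check that the Feige--Singh transformation incurs a factor of exactly $2$ up to an arbitrarily small additive term (rather than something larger), that the $\epsilon$'s compose as claimed, and that the Köhne--Traub--Vygen inequality is precisely $\rho_{\mathrm{path}} \le 4\rho_{\mathrm{cycle}} - 3$. Granting those two cited statements verbatim, the corollary follows with no further computation, so this last step is essentially a citation rather than a new argument.
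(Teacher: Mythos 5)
Your proposal matches the paper's intended argument: the corollary is stated with no further proof precisely because it follows by citing the two black-box reductions, Feige--Singh for the factor-$(2+\delta)$ transfer of the $(22+\epsilon)$-approximation (giving $44+\epsilon$) and K\"ohne--Traub--Vygen's bound $\rho_{\mathrm{path}} \le 4\rho - 3$ applied to $\rho \le 22$ (giving $85$). Your constant bookkeeping and the choice of which reduction to use for which claim are exactly right.
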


Using our new algorithm for ATSP not only as a black-box, one can achieve even better bounds~\cite{Tra19}:
there is a $(43+\epsilon)$-approximation algorithm  for the path version of ATSP and 
the integrality ratio of its classic LP relaxation is at most $43$.
Moreover, our improved version of Svensson's algorithm yields a $(13+\epsilon)$-approximation algorithm for the special case of unit weights,
improving on Svensson's~\cite{Sve15} factor $27$. See~\cite{Tra19} for details.

Although we have reduced the upper bounds on the integrality ratios substantially and proved matching approximation ratios,
 the remaining gaps to the known lower bound of 2 on the integrality ratios \cite{ChaGK06,KohTV19}, let alone to the inapproximability lower bound of $\frac{75}{74}$ \cite{KarLS15}, are much larger than
for Symmetric TSP and Symmetric Path TSP, for which approximation ratios of $\frac{3}{2}-10^{-36}$ \cite{KarKO20,TraVZ20} have been obtained, improving on \cite{Chr76,Ser78,Hoo91,AKS15,Seb13,Vyg16,GotV18,SebZ19,TraV19a,Zen19}.
(For the unit weight special cases of Symmetric TSP and Symmetric Path TSP the best known approximation ratios are $\frac{7}{5}$ and $\frac{7}{5}+\epsilon$ \cite{SebV14,TraVZ20}, improving on \cite{AKS15,GhaSS11,MomS16,Muc14}.)
Improving the upper bounds further remains interesting.

\subsubsection*{Acknowledgment}
We thank the three anonymous reviewers for their careful reading and useful remarks.

%

\begin{thebibliography}{XX}

\bibitem{AKS15}
An, H.-C., Kleinberg, R., and Shmoys, D.B. [2015]:
\newblock Improving Christofides' algorithm for the s-t path TSP.
\newblock Journal of the ACM 62 (2015), Article 34

\bibitem{AnaO15}
Anari, N, and Oveis Gharan, S. [2015]:
\newblock Effective-resistance-reducing flows, spectrally thin trees, and asymmetric TSP.
\newblock Proceedings of the 53rd Annual IEEE Symposium on Foundations of Computer Science (FOCS 2012), 20--39

\bibitem{AsaXX17}
Asadpour, A., Goemans, M.X., M\k{a}dry, A., Oveis Gharan, S., and Saberi, A. [2017]: 
\newblock An $O(\log n/ \log \log n)$-approximation algorithm for the asymmetric traveling salesman problem. 
\newblock Operations Research 65 (2017), 1043--1061

\bibitem{Bla03}
Bl\"aser, M. [2003]:
\newblock A new approximation algorithm for the asymmetric TSP with triangle inequality.
\newblock Proceedings of the Fourteenth Annual ACM-SIAM Symposium on Discrete Algorithms (SODA 2003), 638--645 

\bibitem{ChaGK06}
Charikar, M., Goemans, M.X., and Karloff, H. [2006]:
\newblock On the integrality ratio for the asymmetric traveling salesman problem. 
\newblock Mathematics of Operations Research 31 (2006), 245--252

\bibitem{Chr76}
Christofides, N. [1976]:
\newblock Worst-case analysis of a new heuristic for the traveling salesman problem.
\newblock Technical Report 388, Graduate School of Industrial Administration, Carnegie-Mellon University, Pittsburgh 1976

\bibitem{FeiS07}
Feige, U., and Singh, M. [2007]:
\newblock Improved approximation algorithms for traveling salesperson tours and paths in directed graphs.
\newblock Proceedings of the 10th International Workshop on Approximation Algorithms for Combinatorial Optimization Problems; LNCS 4627 
(M.~Charikar, K.~Jansen, O.~Reingold, J.D.P.~Rolim, eds.),
Springer, Berlin 2007, pp.\ 104--118

\bibitem{FriGM82}
Frieze, A.M., Galbiati, G., and Maffioli, F. [1982]: 
\newblock On the worst-case performance of some algorithms for the asymmetric traveling salesman problem. 
\newblock Networks 12 (1982), 23--39

\bibitem{GotV18}
Gottschalk, C., and Vygen, J. [2018]:
\newblock Better $s$-$t$-tours by Gao trees. 
\newblock Mathematical Programming B 172 (2018), 191--207

\bibitem{Hoo91}
Hoogeveen, J.A. [1991]:
\newblock Analysis of Christofides' heuristic: some paths are more difficult than cycles.
\newblock Operations Research Letters 10 (1991), 291--295

\bibitem{KapXX05}
Kaplan, H., Lewenstein, M., Shafrir, N., and Sviridenko, M. [2005]:
\newblock Approximation algorithms for asymmetric TSP by decomposing directed regular multigraphs.
\newblock Journal of the ACM 52 (2005), 602--626

\bibitem{KarKO20}
Karlin, A.R., Klein, N., and Oveis Gharan, S. [2021]:
\newblock A (slightly) improved approximation algorithm for metric TSP.
\newblock arXiv:2007.01409
\newblock Proceedings of the 53rd Annual ACM Symposium on the Theory of Computing (2021), to appear.
	

\bibitem{KarLS15}
Karpinski, M., Lampis, M., and Schmied, R. [2015]:
\newblock New inapproximability bounds for TSP.
\newblock Journal of Computer and System Sciences 81 (2015) 1665--1677

\bibitem{KohTV19}
K\"ohne, A., Traub, V., and Vygen, J. [2020]:
\newblock The asymmetric traveling salesman path LP has constant integrality ratio. 
\newblock Mathematical Programming 183 (2020), pp.\ 379--395


\bibitem{MomS16}
M\"omke, T., and Svensson, O. [2016]:
\newblock Removing and adding edges for the traveling salesman problem.
\newblock Journal of the ACM 63 (2016), Article 2

\bibitem{Muc14}
Mucha, M. [2014]:
\newblock $\frac{13}{9}$-approximation for graphic TSP.
\newblock Theory of Computing Systems 55 (2014), 640--657


\bibitem{GhaSS11}
Oveis Gharan, S., Saberi, A., and Singh, M. [2011]:
\newblock A randomized rounding approach to the traveling salesman problem.
\newblock Proceedings of the 52nd Annual IEEE Symposium on Foundations of Computer Science (FOCS 2011), 550--559

\bibitem{Sch03}
Schrijver, A. [2003]:
\newblock Combinatorial Optimization: Polyhedra and Efficiency.
\newblock Springer, Berlin 2003

\bibitem{Seb13}
Seb\H{o}, A. [2013]:
\newblock Eight fifth approximation for TSP paths.
\newblock In: Integer Programming and Combinatorial Optimization;
Proceedings of the 16th International IPCO Conference; LNCS 7801
(J.~Correa, M.X.~Goemans, eds.),
Springer, Berlin 2013, pp.\ 362--374

\bibitem{SebV14}
Seb\H{o}, A., and Vygen, J. [2014]:
\newblock Shorter tours by nicer ears:
7/5-approximation for graphic TSP, 3/2 for the path version, and 4/3 for two-edge-connected subgraphs.
\newblock Combinatorica 34 (2014), 597--629

\bibitem{SebZ19}
Seb\H{o}, A., and van Zuylen, A. [2019]:
\newblock The salesman's improved paths trough forests.
\newblock Journal of the ACM 66 (2019), Article 28

\bibitem{Ser78}
Serdjukov, A.I. [1978]:
\newblock Some extremal bypasses in graphs [in Russian].
\newblock Upravlyaemye Sistemy 17 (1978), 76--79

\bibitem{Sve15}
Svensson, O. [2015]:
\newblock Approximating ATSP by relaxing connectivity.
\newblock Proceedings of the 56th Annual IEEE Symposium on Foundations of Computer Science (FOCS 2015), 1--19

\bibitem{SveTV16}
Svensson, O., Tarnawski, J., and V\'egh, L. [2018]:
\newblock Constant factor approximation for ATSP with two edge weights.
\newblock Mathematical Programming 172 (2018), 371--397

\bibitem{SveTV18}
Svensson, O., Tarnawski, J., and V\'egh, L. [2020]:
\newblock A constant-factor approximation algorithm for the asymmetric traveling salesman problem.
\newblock Journal of the ACM 67 (2020), Article 37

\bibitem{Tra19}
Traub, V. [2020]:
\newblock Approximating Algorithms for Traveling Salesman Problems.
\newblock Dissertation, University of Bonn.

\bibitem{TraV19a}
Traub, V., and Vygen, J. [2019]:
\newblock Approaching 3/2 for the s-t-path TSP. 
\newblock Journal of the ACM 66 (2019), Article 14 

\bibitem{TraVZ20}
Traub, V., Vygen, J., and Zenklusen, R. [2020]:
\newblock Reducing Path TSP to TSP
\newblock Proceedings of the 52nd Annual ACM Symposium on Theory of Computing (STOC 2020), 14--27

\bibitem{Vyg16}
Vygen, J. [2016]:
\newblock Reassembling trees for the traveling salesman.
\newblock SIAM Journal on Discrete Mathematics 30 (2016), 875--894 

\bibitem{Zen19}
Zenklusen, R. [2019]:
\newblock A 1.5-approximation for path TSP.
\newblock Proceedings of the 30th ACM-SIAM Symposium on Discrete Algorithms (SODA 2019),  1539--1549
\end{thebibliography}

\end{document}